\keywords{cubical type theory, Bishop sets, proof irrelevance, Artin gluing, logical predicates}
\begin{document}

\title{A Cubical Language for Bishop Sets}

\author[J.~Sterling]{Jonathan Sterling\rsuper{a}}  %
\author[C.~Angiuli]{Carlo Angiuli\rsuper{b}}  %
\author[D.~Gratzer]{Daniel Gratzer\rsuper{a}}  %

\address{Aarhus University}  %
\email{\{jsterling,gratzer\}@cs.au.dk}  %
\address{Carnegie Mellon University}  %
\email{cangiuli@cs.cmu.edu}  %

\begin{abstract}
  \noindent
  We present XTT, a version of Cartesian cubical type theory specialized for
  Bishop sets \`a la Coquand, in which every type
  enjoys a \emph{definitional} version of the uniqueness of identity proofs.
  Using cubical notions, XTT reconstructs many of the ideas underlying
  Observational Type
  Theory, a
  version of intensional type theory that supports function extensionality. We
  prove the canonicity property of XTT (that every closed boolean is
  definitionally equal to a constant) using Artin gluing.
\end{abstract}

\maketitle

\section{Introduction}

Little attention has been paid to notions of liberty and fraternity in dependent
type theory, but the same cannot be said about equality. Why? To define the
typing judgment $a:A$ we must determine which types are equal---because terms of
type $A$ may be cast (coerced) to any type $A'$ equal to $A$---but in the
presence of dependency, equality of types is contingent on equality of terms.
In this way, dependency transmutes term equality from a purely semantic
consideration to a core aspect of syntax.

As a practical matter, it is desirable to automate as many of these coercions
as possible. To that end, type theorists have spent decades refining decision
procedures for type equality modulo \eg, $\alpha$-, $\beta$-, $\delta$-, and
certain $\eta$-laws~\cite{coquand:1991,coquand:1996,stone-harper:2000,harper-pfenning:2005,stone-harper:2006,abel-coquand-dybjer:2008:mpc,abel-coquand-pagano:2009,abel-scherer:2012,abel:2013,abel-oehman-vezzosi:2017,gratzer-sterling-birkedal:2019}.
Unfortunately, not all desirable coercions can be automated---for instance,
mathematical equality of functions $\mathbb{N}\to\mathbb{N}$ is famously
undecidable. The collection of \emph{automated} equations in a given type
theory is called \emph{definitional equality};%
  \footnote{Historically, philosophical considerations have motivated
  explanations of \emph{definitional equality} as a scientific concept
  independent of a specific theory, sometimes leading to a notion of
  ``equality'' that is not a congruence (\eg, not respected by
  $\lambda$)~\cite{martin-lof:1975}; we argue that our theory-specific notion,
  based on the phenomenal aspect of automated conversion, is more reflective of
  everyday practice. As a programmatic matter, we moreover rule out any kind of
  ``equality'' for which the operations of type theory are not functional.}
the more equations are definitional, the less time users must spend providing
coercions.

Next, one must determine which coercions, if any, are recorded in \emph{terms}. A
priori, which coercions are recorded is independent of which equations are
definitional, but in practice these considerations are inextricably
linked---silent coercions along non-definitional equations typically disrupt
the aforementioned decision procedures. The canonical example is extensional
type theory~\cite{hofmann:1997}, in which one can silently coerce terms from any type to any other
under a contradictory assumption. In particular, elements of $A$ and $A\to A$
are identified in context $x:\CKwd{0}$, allowing users to encode fixed point
combinators; as a result, any decision procedure which relies on
$\beta$-reduction will no longer terminate.

A third consideration is the relationship between judgmental and propositional
equality. Following Martin-L\"of~\cite{martin-lof:1987}, type theorists arrange
concepts of interest into judgments, or top-level forms of assertion, such as typehood
($A\ \mathrm{type}$), membership ($a:A$), entailment
($\Gamma\vdash\mathcal{J}$), and \emph{judgmental equality} of types ($A=A'\
\mathrm{type}$) and terms ($a=a':A$). To account for higher-order concepts,
rather than admit higher-order judgments, we usually internalize judgmental notions as
types: dependent sums internalize context extension, dependent products
internalize entailment, and \emph{propositional equality} should in some sense
internalize judgmental equality.

\subsection{Notions of equality in type theory}

In the past fifty years, researchers have considered myriad presentations of
equality in type theory. Almost always, judgmental equality is a congruence
(reflexive, symmetric, transitive, and respected by all type and term formers)
along which coercion is silent, expressed by the \emph{conversion} rule that if
$\EqTy{A}{A'}$ and $\IsTm{a}{A}$ then $\IsTm{a}{A'}$. However, formulations of
coercion, definitional equality, and propositional equality differ widely; we
proceed by outlining several existing approaches.

\subsubsection{Equality reflection}

\NewDocumentCommand\CTyEq{mmm}{\CKwd{Eq}_{#1}\prn{#2,#3}}

The simplest way to internalize judgmental equality as a type is to provide
introduction and elimination rules making the existence of a proof of
$\CTyEq{A}{a}{a'}$ equivalent to the judgment $a=a':A$:

\begin{mathflow}
  \inferrule[introduction]{
    \EqTm{a}{a'}{A}
  }{
    \IsTm{\CKwd{refl}\prn{a}}{\CTyEq{A}{a}{a'}}
  }
  \split
  \inferrule[elimination]{
    \IsTm{p}{\CTyEq{A}{a}{a'}}
  }{
    \EqTm{a}{a'}{A}
  }
  \split
  \inferrule[unicity]{
    \IsTm{p}{\CTyEq{A}{a}{a}}
  }{
    \EqTm{p}{\CKwd{refl}\prn{a}}{\CTyEq{A}{a}{a}}
  }
\end{mathflow}

The elimination rule above is known as \emph{equality reflection}, and is
characteristic of extensional versions of type theory~\cite{martin-lof:1984}. Reflection immediately endows $\CTyEq{A}{a}{a'}$ with
many desirable properties: it is automatically a congruence, admits coercion
(via conversion), and enjoys \emph{uniqueness of identity proofs} (UIP, that any
two elements of $\CTyEq{A}{a}{a'}$ are equal).

Unfortunately, because propositional equality is undecidable, equality
reflection ensures that it is undecidable whether a judgment of the theory
holds; worse, as hinted previously, even the ``definitional fragment'' of the resulting
judgmental equality can no longer be automated, because $\beta$-reduction of
open terms may diverge.%
  \footnote{If, on the other hand, one omits $\beta$-equivalence from judgmental
  equality, it is possible to retain decidability of judgments in a
  dependently-typed language with divergent terms, as in the \textsc{Zombie}
  language~\cite{szoeberg-weirich:2015}.}
Therefore, proof assistants for extensional type theories cannot support type
checking, and rely instead on tactic-based construction of typing derivations.

One exemplar of this approach is the Nuprl proof assistant~\cite{constable:1986}
along with its descendants, including MetaPRL~\cite{hickey:2001} and
\textcolor[rgb]{.91,.31,.27}{Red}PRL~\cite{redprl:2018:lfmtp}. These type
theories are designed as ``windows on the truth'' of a single intended semantics
inspired by Martin-L\"of's computational meaning explanations, interpreting
types as partial equivalence relations over untyped terms~\cite{martin-lof:1979,allen:1987:thesis}. Nuprl-family proof assistants employ a
host of reasoning principles not validated by other models of type theory,
including intuitionistic continuity principles~\cite{rahli-bickford:2016},
computational phenomena such as exceptions and partiality~\cite{crary:1998}, and
``direct computation'' rules which use untyped rewrites to establish
well-formedness subgoals.

As a consequence of its fundamentally untyped nature, formalizing a theorem in
Nuprl does not imply the correctness of the corresponding theorem in standard
classical mathematics (the global mathematics of constant or discrete sets),
nor even in most forms of constructive mathematics (the local mathematics of
variable and cohesive sets). It is worth noting that the problem is \emph{not}
located in the presence of anti-classical principles (which are interpretable
in logic over a variety of topoi), and rather arises from the commitment to
untyped ontology.

The creators of the Andromeda proof
assistant~\cite{bauer-gilbert-haselwarter-pretnar-stone:2018} have introduced
another approach to implementing equality reflection, in which judgmental
equality is negotiated by means of algebraic effects and
handlers~\cite{bauer-pretnar:2015}; in essence, handlers allow users to provide
the out-of-band proofs of judgmental equality that are present in the
derivations (but not the terms) of extensional type theory. In contrast to
Nuprl, a proof formalized in Andromeda can be seen to imply the corresponding
informal statement in any variety of classical or constructive mathematics, a
consequence of the fact that an interpretation of Andromeda's extensional type
theory may be found lying over any topos.

\subsubsection{Intensional type theory}

\NewDocumentCommand\CTyId{mmm}{\CKwd{Id}_{#1}\prn{#2,#3}}

The \emph{identity type} of intensional type theory (ITT)~\cite{martin-lof:itt:1975,nordstrom-peterson-smith:1990} offers a much more
restrictive internalization of judgmental equality, characterized by the
following rules:

\begin{mathflow}
  \inferrule{
    \IsTm{a}{A}
  }{
    \IsTm{\CKwd{refl}(a)}{\CTyId{A}{a}{a}}
  }
  \split
  \inferrule{
    \IsTy[\Gamma,x:A,y:A,z:\CTyId{A}{x}{y}]{C(x,y,z)} \\
    \IsTm{p}{\CTyId{A}{a}{a'}} \qquad
    \IsTm[\Gamma,x:A]{c}{C(x,x,\CKwd{refl}(x))}
  }{
    \IsTm{\CKwd{J}_{x,y,z.C}(p;x.c)}{C(a,a',p)}
  }
  \split
  \inferrule{
  }{
    \EqTm{\CKwd{J}_{x,y,z.C}(\CKwd{refl}(a);x.c)}{\Subst{a}{x}{c}}{C(a,a,\CKwd{refl}(a))}
  }
\end{mathflow}

The elimination form $\CKwd{J}$ endows $\CTyId{A}{a}{a'}$ with many properties
expected of an equality connective, including symmetry, transitivity, and
coercion, which can be defined as follows:

\begin{mathflow}
  \CKwd{coerce}\ (p : \CTyId{\mathcal{U}}{A}{B}) : A\to B \defeq
  \CKwd{J}_{x,y,z.x\to y}(p;\_.\CTmAbs{x}{x})
\end{mathflow}

The tradeoffs of ITT are well-understood. By requiring explicit coercion for
non-$\alpha/\beta/\delta/\eta$ equations, ITT presents a theory with decidable
judgments. In practice, however, explicit coercions accumulate in types and
terms, requiring even more explicit coercions to mediate between previously-used
coercions. This is because coercions simplify only when applied to identity
proofs of the form $\CKwd{refl}(A)$; a coercion may mediate between
definitionally equal types and nevertheless fail to reduce (\eg, for a variable
of type $\CTyId{\mathcal{U}}{A}{A}$).

In addition to these practical considerations, identity types also fail to
evince several properties which some may expect of an equality connective, such
as UIP and function extensionality, the principle that
$(x:A)\to\CTyId{B}{f(x)}{g(x)}$ implies $\CTyId{A\to
B}{f}{g}$~\cite{streicher:1994,hofmann-streicher:1998}. In light of homotopy
type theory~\cite{hottbook}, it is reasonable to consider an equality
connective without UIP, but theorists and practitioners alike generally agree
that function extensionality is desirable. These shortcomings are sometimes
addressed by adjoining axioms for function extensionality or UIP (or
univalence), but axioms in the identity type cause even more coercions to become
irreducible.

\subsubsection{Setoids}

Another way to avoid the shortcomings of identity types in ITT is to work in
\emph{setoids}~\cite{hofmann:1995}, or Bishop sets~\cite{bishop:1967}, an exact
completion which replaces types by pairs of a carrier type $|A|$ and a
type-valued ``equivalence relation'' $=_A$. Each type former is lifted to
setoids extensionally: the setoid of functions $(|A|,=_A)\to (|B|,=_B)$
consists of functions $f:|A|\to|B|$ equipped with proofs $f_= : (x,y:|A|)\to
x=_A y \to f(x)=_B f(y)$ that they respect equivalence.

The framework of setoids allows users of type theory to ensure their
constructions are appropriately extensional, at the cost of manually proving
those conditions. In contrast, respect for the identity type is automatic (by
its elimination principle) but insufficiently powerful to imply function
extensionality. An ideal treatment of equality should, unlike setoids, take
advantage of the fact that constructions in type theory \emph{do} respect
function extensionality; syntactically well-behaved examples of this approach
include Observational Type Theory, cubical type theory, and XTT, discussed
below. Another recent proposal is to translate a type-theoretic language,
\emph{setoid type theory}, into setoids in ITT;\@ however, it is unknown
whether this language is complete or enjoys good syntactic properties~\cite{altenkirch-boulier-kaposi-tabareau:2019}.

\subsubsection{Observational Type Theory}

The first systematic account of extensional equality types in intensional type
theory was \emph{Observational Type Theory} (OTT)~\cite{altenkirch-mcbride:2006,altenkirch-mcbride-swierstra:2007}, which built
on earlier work by Altenkirch and McBride~\cite{altenkirch:1999,mcbride:1999}.
The main idea of OTT is to consider a closed (inductive-recursive) universe of
types, and to define propositional equality and its operations by recursion on
type structure. Concretely, for any two types $A,B$ there is a type of proofs
that $A$ equals $B$, and a coercion operation sending these proofs to functions
$A\to B$; then, for any $a:A$ and $b:B$, there is a type of proofs that $a$
heterogeneously equals $b$, and a coherence operation stating that terms
heterogeneously equal their coercions. Propositional equality in OTT satisfies
a definitional form of UIP\@.

Because OTT's equality types are defined by recursion, they will unfold into
complex types (reminiscent of the equality relations on setoids) when
sufficiently specialized; in XTT, path types do not unfold, but are easily
characterized up isomorphism when necessary. In both OTT and
XTT, however, any algorithm to check definitional equality of coercions must
rely on type constructors being injective up to the equality type, which we
have ensured by adding a type-case operator to the universe.  Although
type-case is acceptable or even desirable in programming~\cite{constable:1982,constable-zlatin:1984,harper-morrisett:1995,dagand:2013},
it is not justified by standard interpretations of the universe as a
Grothendieck universe.

Observational Type Theory also pioneered the idea that coercions should compute
on non-reflexive proofs of equality, a design principle which also plays a
significant role in the usability of cubical path types in contrast to standard
identity types.
Recently, McBride and collaborators have made progress toward a cubical version
of OTT based on a different cube category and coercion operation than the ones
considered in XTT~\cite{chapman-forsberg-mcbride:2018}.

We discuss in more detail the relationship to observational type theory, as
well as the type-case problem, in Section~\ref{sec:outlook}.

\subsubsection{Cubical type theory}

Homotopy type theory arises from the observation that ITT is compatible with
Voevodsky's \emph{univalence axiom}, which states that every equivalence (coherent
isomorphism) between types $A$ and $B$ gives rise to an identity proof
$\CTyId{\mathcal{U}}{A}{B}$~\cite{kapulkin-lumsdaine:2021,hottbook}. Univalence
solves longstanding difficulties with type-theoretic universes---transforming
them into object classifiers in the sense of higher topos theory~\cite{lurie:2009}---and contradicts UIP because two types can be equivalent in
several inequivalent ways. Unfortunately, adding univalence to ITT as an axiom
results in coercions that are ``stuck'' on non-reflexive identity proofs, as in
the case of adding axioms for function extensionality or UIP\@.

To address this problem, researchers have developed a number of \emph{cubical
type theories}~\cite{cchm:2017,abcfhl:2021,angiuli-favonia-harper:2018} whose
propositional equality and coercion operations support univalence in a computationally
well-behaved way. The core idea is to introduce a judgmental notion of equality
proof which is then internalized as the \emph{path type}.

Concretely, cubical type theories extend type theory with an abstract interval
$\DIM$ populated by \emph{dimension variables} $i:\DIM$ and constant endpoints
$\Dim0,\Dim1:\DIM$. A type parametrized by a dimension variable
$\IsTy[i:\DIM]{A}$ represents a proof that $\Subst{\Dim0}{i}{A}$ and
$\Subst{\Dim1}{i}{A}$ are equal types, and a term $\IsTm[i:\DIM]{a}{A}$ is a
heterogeneous equality proof between $\Subst{\Dim0}{i}{a}:\Subst{\Dim0}{i}{A}$
and $\Subst{\Dim1}{i}{a}:\Subst{\Dim1}{i}{A}$. Coercion in cubical type theory
is a primitive operation which computes based on the structure of the proof
$\IsTy[i:\DIM]{A}$, and admits an OTT-style ``coherence'' operation as a special
case. Because cubical type theory defines propositional equality in $A$ using
parametrized elements of $A$, propositional equality automatically inherits the
properties of each type and therefore satisfies function extensionality and
related principles.

There are several versions of cubical type theory. For instance, the Cubical Agda proof assistant~\cite{vezzosi-mortberg-abel:2019} implements a variant of the De Morgan version of cubical type theory~\cite{cchm:2017}, which equips $\DIM$ with negation and binary minimum and maximum
operations. Cartesian cubical type theory~\cite{abcfhl:2021,angiuli-favonia-harper:2018}, implemented in the
\textcolor[rgb]{.91,.31,.27}{Red}PRL~\cite{redprl:2018:lfmtp} and
\texttt{\textcolor[rgb]{.91,.31,.27}{red}tt}~\cite{redtt:2018:dagstuhl} proof
assistants, imposes no further structure on $\DIM$ but requires a stronger
coercion operation. In addition, Awodey, Cavallo, Coquand, Riehl, and Sattler~\cite{riehl:2019:hott} have recently proposed an \emph{equivariant} Cartesian
cubical type theory which is homotopically well-behaved, and Cavallo,
M\"ortberg, and Swan~\cite{cavallo-mortberg-swan:2020} have developed a common
generalization of the De Morgan and Cartesian type theories.

\subsubsection{Our contribution: XTT}

We describe XTT, a type theory without equality reflection whose propositional
equality connective satisfies function extensionality and definitional UIP\@. XTT
was introduced in a preliminary version of this work, which appeared in the 4th
International Conference on Formal Structures for Computation and Deduction
under the title \emph{Cubical Syntax for Reflection-Free Extensional Equality}~\cite{sterling-angiuli-gratzer:2019}.

Using ideas from Cartesian cubical type theory, XTT reconstructs the decisive aspects of
OTT in a more modular, judgmental fashion. For instance, instead of defining
equality separately at each type, we define path types uniformly in terms of
dimension variables; similarly, we impose UIP by means of a \emph{boundary
separation} rule which does not mention path types.

Compared to other cubical type theories~\cite{cchm:2017,abcfhl:2021,angiuli-favonia-harper:2018}, XTT has clear
advantages and disadvantages. Whereas other cubical type theories have two
separate connectives for path types and identity types, XTT's path types
strictly satisfy the rules of Martin-L\"of's identity types definitionally. In
addition, the rules governing composition---the most complex rules of every
cubical type theory---are substantially simpler in XTT than in Cartesian cubical
type theory. On the other hand, these simplifications are only possible because
XTT is concerned with \emph{Bishop sets}, a specific kind of cubical set
analogous to a setoid~\cite{coquand:2017:bish}, of which univalent universes and
higher inductive types are not instances.

In addition to the XTT calculus, our second contribution is an abstract
canonicity proof for XTT using the language of categorical gluing, summarized
in Section~\ref{sssec:metatheory-contribution}.

\subsection{Metatheory}

Type checkers rely on global invariants of type theories that are easily
disrupted---indeed, we have already seen that the equality reflection rule
single-handedly destroys the decidability of type checking. Consequently,
type theorists devote much effort to proving that various calculi are well-behaved, in the form of
the \emph{canonicity} and \emph{normalization} metatheorems.

Canonicity states that any closed term of boolean (or natural number) type is
judgmentally equal to either $\CTmTt$ or $\CTmFf$ (resp., a numeral). Canonicity
expresses a weak form of completeness for base types which is analogous to the
existence property of intuitionistic logic~\cite{troelstra-vandalen:1988}.
Although most type theories (including ETT, ITT, OTT, and XTT) enjoy canonicity,
it can fail when a type theory is extended by a new construct whose behavior is
not sufficiently determined by new equations, as in the extension of ITT by
function extensionality or univalence axioms. Indeed, the main motivation behind
cubical type theory was to develop a univalent type theory satisfying
canonicity.

Normalization is a generalization of canonicity to open terms which
characterizes the open terms of every type up to judgmental equality. These
characterizations can be quite complex: the normal forms of boolean type include
constants $\CTmTt$ and $\CTmFf$, variables $x:\CTyBool$, projections of
variables $x:\CTyBool\times\CTyBool$, \etc. Unlike canonicity, normalization does
not measure the strength of judgmental equality: normalization theorems can hold for
ITT extended with axioms, and hold trivially if one limits judgmental equality
to $\alpha$-equivalence.%
\footnote{In fact, without further constraints on what, precisely, constitutes a normal form,
  normalization can be quite trivial. If one defines a normal form to be simply the terms taken up
  to definitional equality then normalization is trivial but useless. In practice, type theorists
  are careful to isolate normal forms so as to make proving properties like decidability,
  injectivity of type constructors, \etc trivial.}

Conversely, while a failure of canonicity may indicate that judgmental equality
is too weak, a failure of normalization usually indicates that judgmental
equality is altogether intractable. Consider the judgmental injectivity of type
constructors, typically a consequence of normalization: if $\EqTy{A\to B}{A'\to B'}$ then
$\EqTy{A}{A'}$ and $\EqTy{B}{B'}$. Injectivity is crucial for type checking
because it enables the well-typedness of the application of $f : A\to B$ to $a :
A'$ to be reduced to checking whether $A=A'$. A priori, two function types may
be equal because both are equal to a third type $C$ by a sequence of
$\beta/\eta$ equalities; ruling this out generally requires a full
characterization of equality via normalization.

\subsubsection{Categories of models}

The rules of type theory are a complex mutual definition and simultaneous
quotient of the collections of contexts, types, and terms. Type theorists often
make such definitions precise by passing to a specialized setting known as a
\emph{logical framework}, offloading the bureaucratic aspects of the theory,
including in many cases the treatment of variable binding and hypothetical
judgment (as in the Edinburgh Logical
Framework~\cite{harper-honsell-plotkin:1993}), but far more importantly, the
compatibility of every operation with definitional equality (as in
Martin-L\"of's Logical Framework~\cite{nordstrom-peterson-smith:1990} and
Cartmell's generalized algebraic theories~\cite{cartmell:1986}).

Recall that canonicity and normalization will not hold for an arbitrary model
of the type theory but only for the `smallest' model, containing only what is
forced by the collection of rules; they may be easily refuted by
adding rules. Thus, for the purposes of metatheory, the construction of a type
theory must come equipped with an \emph{induction principle} stating in what
sense it is the smallest. These induction principles are in fact up for debate, as
choosing an induction principle is tantamount to fixing the range of possible
interpretations of the syntax. For example, in an Edinburgh LF encoding of type
theory, judgmental equality can have no special status and therefore admits non-trivial
interpretations, whereas mathematicians generally require that it be
interpreted as mathematical equality.

In the language of category theory, these considerations amount to specifying a
category of models of a type theory and exhibiting an initial object in that
category. Luckily, the rules of XTT are sufficiently non-exotic as to allow us
to obtain its functorial semantics by appealing to general existence
theorems~\cite{cartmell:1986,kaposi-kovacs-altenkirch:2019,uemura:2019,uemura:2021:thesis},
thereby sidestepping the so-called conjecture of initiality famously raised by
Voevodsky~\cite{voevodsky:2016:templeton}.

In previous work~\cite{sterling-angiuli-gratzer:2019}, we specified an early
version of XTT's logical semantics by regarding XTT as a generalized algebraic theory
(GAT) in the sense of Cartmell~\cite{cartmell:1986}. Models of GATs determine
choices of objects up to equality, and morphisms of models preserve these
choices strictly; models of the GAT of type theory thus determine \emph{up to
isomorphism} a category of contexts, and \emph{up to equality} the context
extension. This is a much stronger notion of ``category'' than can be
comfortably manipulated using the language of category theory: universal
properties determine object-level structure only up to canonical isomorphism,
and category-level structure only up to categorical equivalence (``isomorphism
up to isomorphism''). Accordingly, in the GAT discipline, one cannot usually
leverage general existence theorems but must instead provide explicit constructions in
order to strictly determine and preserve object-level structures.

We advocate for a more categorical viewpoint, in which morphisms of models
preserve structures only up to coherent isomorphism. In this paper, following
Sterling and Angiuli~\cite{sterling-angiuli:2020}, we instantiate Uemura's
framework~\cite{uemura:2019} to generate a functorial semantics for XTT;\@ models
form a $2$-category with a bi-initial object, and morphisms satisfy
compatibilities like $F(\Gamma.A) \cong F(\Gamma).F(A)$ which generalize
pseudomorphisms of natural models~\cite{clairambault-dybjer:2014,newstead:2018}.  While it may seem at first that
these canonical isomorphisms would incur additional bureaucracy, these weaker
morphisms in fact enable us to work much more abstractly, choosing
representations of objects only locally and as needed. Consequently, we have
managed to avoid nearly all the concrete computations that characterized the
technical development of our previous work on
XTT~\cite{sterling-angiuli-gratzer:2019}.

\subsubsection{Artin gluing}

Many important metatheorems famously cannot be proven by a straightforward induction on
the rules of type theory but require instead a more semantic induction
principle, such as the method of computability pioneered by Tait for the simply
typed $\lambda$-calculus~\cite{tait:1967} and further developed by Girard~\cite{girard:1971,girard:1972}, Martin-L\"of~\cite{martin-lof:itt:1975}, and
others. These methods associate to each type/context a proof-irrelevant
predicate or relation over its elements, and then establish that every element
satisfies the predicate associated to its type. To prove canonicity, one defines
the elements of a type to be its closed terms, and the predicate over
$\IsTm[\CxEmp]{b}{\CTyBool}$ states that $b\Downarrow\CTmTt$ or
$b\Downarrow\CTmFf$ where $\Downarrow$ is a deterministic evaluation relation
contained in judgmental equality.

These techniques have a few major disadvantages in the context of dependent type
theory. First, evaluation must be defined on typed terms modulo
$\alpha$-equivalence, not judgmental equality, because evaluation draws
distinctions between $\beta$-equivalent terms (\eg, $\CTmTt$ is an output of
evaluation whereas $\prn{\CTmLam{x}{x}}\prn{\CTmTt}$ is not); therefore, evaluation
cannot be studied using the machinery of models of type theory.%
  \footnote{This subequational aspect of evaluation is particularly thorny in
  cubical type theories because evaluation does not strictly respect dimension
  substitution, necessitating a technical condition known as closure under
  ``coherent expansion''~\cite{huber:2018,angiuli:2019}.}
Secondly, the predicate over $\IsTm[\CxEmp]{A}{\mathcal{U}}$ should intuitively
state that $A$ determines a type and thence a predicate over its elements, but a
(proof-irrelevant) predicate for $\mathcal{U}$ cannot store the data of a
predicate for each $A$; instead, we define a global lookup table for predicates~\cite{allen:1987:thesis,harper:1992}, and store in the predicate over
$\IsTm[\CxEmp]{A}{\mathcal{U}}$ the assertion that $A$ has an entry in the
table. However, constructing these type systems requires fixing a collection of
types at the outset, making these proofs brittle and difficult to extend.

Recently, type theorists have discovered that these difficulties can be
overcome by considering instead \emph{proof-relevant} predicates~\cite{altenkirch-kaposi:2016:nbe,coquand:2019,shulman:2015}, and that the
resulting constructions are best understood as instances of \emph{Artin
gluing}~\cite[Expos\'e I, Ch.~9]{sga:4}.\footnote{Researchers have been aware of connections
between computability predicates and gluing for much longer, but restricted to
the \emph{fiberwise proof-irrelevant} fragment of a gluing category~\cite{mitchell-scedrov:1993,jung-tiuryn:1993,fiore-simpson:1999}.}

Gluing-based techniques for type theory are perhaps most developed in the
context of \emph{weak} metatheorems such as homotopy canonicity~\cite{kapulkin-sattler:2019,shulman:2015} and homotopy parametricity~\cite{uemura:2017}, where it suffices to consider mathematically natural
notions of model in which substitution does not strictly commute with the
constructs of type theory~\cite{joyal:2017}. Canonicity and normalization are
also susceptible to gluing arguments, but these arguments have generally relied
on explicit constructions and computations rather than leveraging categorical
results as in the weak case~\cite{coquand:2019,kaposi-huber-sattler:2019}.
Subsequent to the writing of this paper, Sterling and Angiuli~\cite{sterling-angiuli:2021} have used proof-relevant logical predicates to establish
normalization for cubical type theory, building on Sterling and Harper's
observation that such arguments can be carried out in the internal type theory
of a gluing model~\cite{sterling-harper:2021}.

\subsubsection{Our contribution}%
\label{sssec:metatheory-contribution}

In this paper, we prove a canonicity theorem for XTT stating that any closed
term of boolean type in the initial model is judgmentally equal to either
$\CTmTt$ or $\CTmFf$. Our canonicity proof builds on results of Sterling and
Angiuli~\cite{sterling-angiuli:2020} concerning the gluing of models of type
theory along a flat functor. We emphasize the conceptual nature of our
canonicity proof, which avoids the explicit computations that pervaded both our
prior work on XTT~\cite{sterling-angiuli-gratzer:2019} and much of the related
work.

\section{XTT:\texorpdfstring{\@}{} a cubical language for Bishop sets}

We begin by introducing the XTT language informally (Figure~\ref{fig:grammar}) and sketching how we
recover (and improve upon) ordinary type-theoretic equality reasoning for Bishop sets. For the sake
of exposition, we elide structural rules, congruence rules, and obvious premises to equational
rules; formally, we define XTT as the bi-initial object in a 2-category of models in
Section~\ref{sec:functorial-semantics}, presented by the signature in Uemura's logical
framework~\cite{uemura:2019,uemura:2021:thesis} in Section~\ref{sec:xtt:lf}. Our presentation differs slightly from
the original formulation of XTT~\cite{sterling-angiuli-gratzer:2019}; we remark on these differences
as they appear.

\begin{figure}
  \begin{grammar}
    contexts & \Gamma,\Delta & \cdot \GrmSep \Gamma,i:\DIM \GrmSep \Gamma,\phi \GrmSep \Gamma, x:A
    \\
    dimensions & r, s & i\GrmSep \Dim0 \GrmSep \Dim1
    \\
    face formulas & \phi,\psi & r=s \GrmSep \phi\lor\psi
    \\
    types & A,B & \CEl{a} \GrmSep \CTyPi{x}{A}{B}\GrmSep \CTySg{x}{A}{B} \GrmSep \CTyPath{i.A}{a}{b}
      \GrmSep \CTyBool \GrmSep \CU \\
    terms & a, b & x \GrmSep \CTmLam{x}{a} \GrmSep a\ b \GrmSep \CTmPair{a}{b} \GrmSep \CTmFst{a}
      \GrmSep \CTmSnd{a} \GrmSep \CTmLam{i}{a} \GrmSep a\ r \GrmSep
      \CTmTt \GrmSep \CTmFf \GrmSep\\
    \GrmContinue &
      \CTmIf{x.A}{a}{b}{b'} \GrmSep
      \Abort \GrmSep \Split{\phi\to a}{\psi\to b} \GrmSep
      \CTmComp{r}{r'}{s}{i.a}{i.b} \GrmSep \\
    \GrmContinue &
      \CTmCoe{r}{r'}{i.a}{b} \GrmSep
      \CTmCodePi{x}{a}{b} \GrmSep \CTmCodeSg{x}{a}{b} \GrmSep \CTmCodePath{i.a}{b}{b'} \GrmSep
      \CTmCodeBool \GrmSep \\
    \GrmContinue &
      \CTmTypecase{x.A}{a}{
        \CKwd{pi}(x,x')\mapsto a'\mid\cdots\mid \CKwd{bool}\mapsto a''}
  \end{grammar}
  \caption{A summary of the informal syntax of XTT\@. Note that some binders, \eg{} those in
    $\CTmLam{i}{a}$ and $\CTyPath{i.A}{a}{b}$, range over %
    dimensions rather than terms.%
  }%
  \label{fig:grammar}
\end{figure}

\subsection{Judgmental structure of XTT}

Like other cubical type theories~\cite{cchm:2017,abcfhl:2021,angiuli-favonia-harper:2018,riehl-shulman:2017}, XTT
extends the judgmental apparatus of type theory with an abstract interval $\DIM$
and a collection $\COND$ of \emph{face formulas}, or propositions ranging over
the interval. Neither $\DIM$ nor $\COND$ are types, but we can extend contexts
by assumptions of either sort, in addition to ordinary typing assumptions.
(Previously, we collected assumptions of $\DIM$ and $\COND$ in a separate
context $\Psi$ to the left of $\Gamma$.)

\begin{mathflow}
  \inferrule{
  }{
    \IsCx{\CxEmp}
  }
  \split
  \inferrule{
    \IsCx
  }{
    \IsCx{\CxExtDim}
  }
  \split
  \inferrule{
    \IsCx & \IsCond
  }{
    \IsCx{\CxExtPrf}
  }
  \split
  \inferrule{
    \IsCx & \IsTy
  }{
    \IsCx{\CxExtEl}
  }
\end{mathflow}

Assumptions of all three sorts are subject to the structural rules of
hypothesis, substitution, and weakening. In addition to dimension variables
$i:\DIM$, the interval has two global elements $\CDim0$ and $\CDim1$ representing
its endpoints. (We call such an interval \emph{Cartesian} because it is the free
finite-product theory on two generators~\cite{awodey:2018}.) The face formulas
$\COND$ are closed under disjunction (unlike~\cite{sterling-angiuli-gratzer:2019}) and equality of dimensions.

\begin{mathflow}
  \inferrule{
  }{
    \IsDim{\CDim0}
  }
  \split
  \inferrule{
  }{
    \IsDim{\CDim1}
  }
  \split
  \inferrule{
    \IsDim{r} & \IsDim{s}
  }{
    \IsCond{\CondEq{r}{s}}
  }
  \split
  \inferrule{
    \IsCond{\phi}&\IsCond{\psi}
  }{
    \IsCond{\CondOr{\phi}{\psi}}
  }
\end{mathflow}

Given $\IsCond{\phi}$, we write $\IsTrue{\phi}$ when $\phi$ holds under the
assumptions in $\Gamma$. The rules governing this judgment are the evident ones,
with the caveat that under an assumption of $\CondEq{r}{s}$, one obtains a
judgmental equality $r=s:\DIM$; we may safely adopt this principle because,
unlike propositional equality in arbitrary types, $\CondEq{r}{s}$ is decidable.

\begin{mathflow}
  \inferrule{
    \EqDim{r}{s}
  }{
    \IsTrue{\CondEq{r}{s}}
  }
  \split
  \inferrule{
    \IsTrue{\CondEq{r}{s}}
  }{
    \EqDim{r}{s}
  }
  \split
  \inferrule{
    \IsTrue{\phi}
  }{
    \IsTrue{\phi\lor\psi}
  }
  \split
  \inferrule{
    \IsTrue{\psi}
  }{
    \IsTrue{\phi\lor\psi}
  }
  \split
  \inferrule{
    \IsTrue{\phi\lor\psi} &
    \IsTrue[\CxExtPrf{\phi}]{\chi} &
    \IsTrue[\CxExtPrf{\psi}]{\chi}
  }{
    \IsTrue{\chi}
  }
\end{mathflow}

In XTT, maps out of $\DIM$ correspond to equality proofs: $\IsTy[i:\DIM]{A}$ is
a proof that $\Subst{\CDim0}{i}{A}$ and $\Subst{\CDim1}{i}{A}$ are equal types, and
$\IsTm[i:\DIM]{a}{A}$ is a proof that $\Subst{\CDim0}{i}{a}:\Subst{\CDim0}{i}{A}$ and
$\Subst{\CDim1}{i}{a}:\Subst{\CDim1}{i}{A}$ are equal elements, modulo the proof $A$ that
$\Subst{\CDim0}{i}{A}$ and $\Subst{\CDim1}{i}{A}$ are equal types. Assumptions of face
formulas act as \emph{constraints}, restricting the domain of maps out of
$\DIM^n$. A hypothesis of $\CondEq{i}{\CDim0}$ sets $i$ to $\CDim0$ in the
hypotheses and conclusions that follow, whereas a type/element under the false
constraint $\CondEq{\CDim0}{\CDim1}$ is nothing at all. Finally, an element under a
disjunction $\CondOr{\phi}{\psi}$ is a pair of elements under $\phi$ and $\psi$
that agree on the overlap $\phi,\psi$.

Unlike~\cite{sterling-angiuli-gratzer:2019}, and following~\cite{cchm:2017}, we
include syntax for these \emph{partial elements} defined on nullary and binary
disjunctions:

\begin{mathflow}
  \inferrule{
    \IsTrue{\CondEq{\CDim0}{\CDim1}} &
    \IsTy{A}
  }{
    \IsTm{\Abort}{A}
  }
  \split
  \inferrule{
    \IsTrue{\CondEq{\CDim0}{\CDim1}} &
    \IsTm{a}{A}
  }{
    \EqTm{a}{\Abort}{A}
  }
  \split
  \inferrule{
    \IsTrue{\phi\lor\psi} &
    \IsTm[\CxExtPrf{\phi}]{a_\phi}{A} &
    \IsTm[\CxExtPrf{\psi}]{a_\psi}{A} &
    \EqTm[\CxExtPrf[\CxExtPrf{\phi}]{\psi}]{a_\phi}{a_\psi}{A}
  }{
    \IsTm{\Split{\phi\to a_\phi}{\psi\to a_\psi}}{A}
  }
  \split
  \inferrule{
    \IsTrue{\phi}
  }{
    \EqTm{\Split{\phi\to a_\phi}{\psi\to a_\psi}}{a_\phi}{A}
  }
  \split
  \inferrule{
    \IsTrue{\psi}
  }{
    \EqTm{\Split{\phi\to a_\phi}{\psi\to a_\psi}}{a_\psi}{A}
  }
  \split
  \inferrule{
    \IsTrue{\phi\lor\psi} &
    \IsTm{a}{A}
  }{
    \EqTm{a}{\Split{\phi\to a}{\psi\to a}}{A}
  }
\end{mathflow}

\begin{nota}[Boundary]\label{not:boundary}
  The interval has more generalized points than $\CDim0$ and $\CDim1$;
  therefore, it is not the case that $\IsTrue[i:\DIM]{i=\CDim0\lor i=\CDim1}$.
  This formula, called the \emph{boundary} of $i$, is important for expressing
  the rules of path types and compositions; we therefore impose the following
  notation:
  \[
    \Bdry{r} \defeq r=\CDim0\lor r=\CDim1
    \qedhere
  \]
\end{nota}

\begin{nota}[Judgmental restriction]\label{not:judgmental-restriction}
  We often want to consider a total term whose subcube coincides with some other
  term. We will write $\IsTm{a}{A}[\phi\to b]$ to abbreviate that $a$ is a term
  that restricts on $\phi$ to $b$, that is:
  \[
    \inferrule|double|{
      \IsTm{a}{A} &
      \EqTm[\CxExtPrf]{a}{b}{A}
    }{
      \IsTm{a}{A}[\phi\to b]
    }
    \qedhere
  \]
\end{nota}

\begin{exa}\label{ex:judgmental-restriction:bdry}
  Combining Notations~\ref{not:boundary}~and~\ref{not:judgmental-restriction}, we
  may succinctly express the situation where $p(i) : A$ exhibits a \emph{path}
  (proof of equality) between two elements $a_0,a_1:A$, writing
  $\IsTm[i:\DIM]{p(i)}{A}[\Bdry{i}\to a]$ where $a\defeq \brk{i=\CDim0\to a_0\mid
  i=\CDim1\to a_1}$.
\end{exa}

\subsection{Dependent path types in XTT}

The rules for dependent product, dependent sum, and boolean types in XTT are
completely standard and are located in Figure~\ref{fig:pi-sigma-bool}. Cubical
type theories internalize the judgmental ``equality situation'' of
Example~\ref{ex:judgmental-restriction:bdry} by means of \emph{(dependent) path
types}; a path type is, in essence, a dependent function out of the interval
subject to a restriction on the boundary of this function (\ie the behavior
of this function on the endpoints $\CDim0,\CDim1:\DIM$).

\begin{figure}
  \begin{mathflow}
    \inferrule{
      \IsTy{A} &
      \IsTy[\CxExtEl<x>{A}]{B}
    }{
      \IsTy{\CTyPi{x}{A}{B}}
    }
    \split
    \inferrule{
      \IsTm[\CxExtEl<x>{A}]{b}{B}
    }{
      \IsTm{\CTmLam{x}{b}}{\CTyPi{x}{A}{B}}
    }
    \split
    \inferrule{
      \IsTm{f}{\CTyPi{x}{A}{B}} &
      \IsTm{a}{A}
    }{
      \IsTm{f\prn{a}}{\Subst{a}{x}{B}}
    }
    \split
    \inferrule{
    }{
      \EqTm{\prn{\CTmLam{x}{b}}\prn{a}}{\Subst{a}{x}{b}}{\Subst{a}{x}{B}}
    }
    \split
    \inferrule{
    }{
      \EqTm{f}{\CTmLam{x}{f\prn{x}}}{\CTyPi{x}{A}{B}}
    }
    \break %
    \inferrule{
      \IsTy{A} &
      \IsTy[\CxExtEl<x>{A}]{B}
    }{
      \IsTy{\CTySg{x}{A}{B}}
    }
    \split
    \inferrule{
      \IsTy[\CxExtEl<x>{A}]{B} &
      \IsTm{a}{A} &
      \IsTm{b}{\Subst{a}{x}{B}}
    }{
      \IsTm{\CTmPair{a}{b}}{\CTySg{x}{A}{B}}
    }
    \split
    \inferrule{
      \IsTm{p}{\CTySg{x}{A}{B}}
    }{
      \IsTm{\CTmFst{p}}{A}
    }
    \split
    \inferrule{
      \IsTm{p}{\CTySg{x}{A}{B}}
    }{
      \IsTm{\CTmSnd{p}}{\Subst{\CTmFst{p}}{x}{B}}
    }
    \split
    \inferrule{
    }{
      \EqTm{\CTmFst{\CTmPair{a}{b}}}{a}{A}
    }
    \split
    \inferrule{
    }{
      \EqTm{\CTmSnd{\CTmPair{a}{b}}}{b}{\Subst{a}{x}{B}}
    }
    \split
    \inferrule{
    }{
      \EqTm{p}{\CTmPair{\CTmFst{p}}{\CTmSnd{p}}}{\CTySg{x}{A}{B}}
    }
    \break %
    \inferrule{
    }{
      \IsTy{\CTyBool}
    }
    \split
    \inferrule{
    }{
      \IsTm{\CTmTt}{\CTyBool}
    }
    \split
    \inferrule{
    }{
      \IsTm{\CTmFf}{\CTyBool}
    }
    \split
    \inferrule{
      \IsTy[\CxExtEl{\CTyBool}]{C} &
      \IsTm{b}{\CTyBool} &
      \IsTm{c_{\CTmTt}}{\Subst{\CTmTt}{x}{C}} &
      \IsTm{c_{\CTmFf}}{\Subst{\CTmFf}{x}{C}}
    }{
      \IsTm{\CTmIf{x.C}{b}{c_{\CTmTt}}{c_{\CTmFf}}}{\Subst{b}{x}{C}}
    }
    \split
    \inferrule{
    }{
      \EqTm{\CTmIf{x.C}{\CTmTt}{c_{\CTmTt}}{c_{\CTmFf}}}{c_{\CTmTt}}{\Subst{\CTmTt}{x}{C}}
    }
    \split
    \inferrule{
    }{
      \EqTm{\CTmIf{x.C}{\CTmFf}{c_{\CTmTt}}{c_{\CTmFf}}}{c_{\CTmFf}}{\Subst{\CTmFf}{x}{C}}
    }
  \end{mathflow}
  \caption{Rules for dependent products, dependent sums, and booleans.}%
  \label{fig:pi-sigma-bool}
\end{figure}

Given a line of types $\IsTy[i:\DIM]{A}$ and two elements $a_0 :
\Subst{\CDim0}{i}{A}, a_1:\Subst{\CDim1}{i}{A}$, the type of paths between
$a_0$ and $a_1$ is written $\CTyPath{i.A}{a_0}{a_1}$; because the type $A$ can
depend on $i$, path types express a kind of heterogeneous equality (though
different from the one proposed by McBride~\cite{mcbride:1999}). The rules for
path types are summarized below:

\begin{mathflow}
  \inferrule[formation]{
    \IsTy[\CxExtDim<i>]{A} &
    \IsTm{a_0}{\Subst{\CDim0}{i}{A}} &
    \IsTm{a_1}{\Subst{\CDim1}{i}{A}}
  }{
    \IsTy{\CTyPath{i.A}{a_0}{a_1}}
  }
  \split
  \inferrule[introduction]{
    \IsTm[\CxExtDim<i>]{a}{A}
  }{
    \IsTm{\CTmAbs{i}{a}}{
      \CTyPath{i.A}{\Subst{\CDim0}{i}{a}}{\Subst{\CDim1}{i}{a}}
    }
  }
  \split
  \inferrule[elimination]{
    \IsTm{p}{\CTyPath{i.A}{a_0}{a_1}} &
    \IsDim{r}
  }{
    \IsTm{p\prn{r}}{\Subst{r}{i}{A}}[\Bdry{r}\to\Split{r=\CDim0\to a_0}{r=\CDim1\to a_1}]
  }
  \split
  \inferrule[computation]{
  }{
    \EqTm{\prn{\CTmAbs{i}{a}}\prn{r}}{\Subst{r}{i}{a}}{\Subst{r}{i}{A}}
  }
  \split
  \inferrule[uniqueness]{
  }{
    \EqTm{p}{\CTmAbs{i}{p\prn{i}}}{\CTyPath{i.A}{a_0}{a_1}}
  }
\end{mathflow}

\begin{rem}
  One can also express the data of a path type as a line of types
  $\IsTy[i:\DIM]{A}$ together with a partial element
  $\IsTm[i:\DIM,\Bdry{i}]{a}{A}$; then, the elements of the path type
  $\CKwd{path}_{i.A}\prn{i.a}$ would consist in elements
  $\IsTm[i:\DIM]{p}{A}[\Bdry{i}\to a]$. In fact, we use exactly this style of
  definition in our mathematical version of the syntax of XTT (see
  Section~\ref{sec:functorial-semantics}).
\end{rem}

We are already prepared to see one of the advantages of cubical type theories
over intensional Martin-L\"of type theory. Because we use maps out of the
interval to represent equality, equations in $A$ naturally take the form of
(parametrized) elements of $A$; therefore, the ``introduction rules'' for
equality in $A$ are the same as the introduction rules for $A$ itself.

\begin{exa}\label{ex:funext}
  Function extensionality, provable in cubical type theories, provides a
  particularly convincing example, considering that it can be derived directly
  using only the rules for dependent function and path types. Given two
  functions $f,g:\CTyPi{x}{A}{B}$ and a family of paths
  $h:\CTyPi{x}{A}{\CTyPath{\_.B}{f(x)}{g(x)}}$, we have:
  \[
    \CTmAbs{i}{\CTmAbs{x}{h\prn{x}\prn{i}}} : \CTyPath{\_.\CTyPi{x}{A}{B}}{f}{g}
    \qedhere
  \]
\end{exa}

\subsection{Universe of Bishop sets}

XTT is equipped with a universe of \emph{Bishop sets}, \ie types that satisfy
a definitional version of the unicity of identity proofs. As was the case for
Observational Type Theory, it is essential that that this universe is
\emph{closed} --- a matter we will discuss in more detail in
Sections~\ref{sec:typecase}~and~\ref{sec:bad-universes}.

In our original presentation of XTT~\cite{sterling-angiuli-gratzer:2019}, we
required that all types were Bishop sets. Here, we require this property only of
elements of the universe, in order to suggest how one might integrate XTT into a
standard (univalent) Cartesian cubical type theory in which not all types are Bishop sets
(notably, univalent universes and higher inductive types). Additionally, whereas
we previously described an infinite and cumulative hierarchy of universes \`a la
Coquand,\footnote{Universes \`a la Coquand~\cite{coquand:2013:psh-model}
differ from universes \`a la Tarski in a few ways: one eschews the standard
$\IsTy{A}$ judgment for a stratified judgment $\Gamma\vdash A\
\mathit{type}_i$, and then the rules for each universe $\mathscr{U}_i$
exhibit an isomorphism between the collection of types of level $i$ and the
collection of elements of $\mathscr{U}_i$.}
here we have opted to specify only a single universe \`a la Tarski for the sake
of simplicity and clarity of presentation.

We begin with the basic formation rules for the universe of Bishop sets:

\begin{mathflow}
  \inferrule{
  }{
    \IsTy{\CU}
  }
  \split
  \inferrule{
    \IsTm{\hat{A}}{\CU}
  }{
    \IsTy{\CEl{\hat{A}}}
  }
\end{mathflow}

\begin{nota}
  As above, we adopt the convention of writing $\hat{A}$ for an element of $\CU$;
  then, we will write $a\in\hat{A}$ as a shorthand for $a : \CEl{\hat{A}}$.
\end{nota}

\subsubsection{Boundary separation and UIP}

What makes types classified by $\CU$ special is that they satisfy the
\emph{boundary separation} principle below, a modular reconstruction of the
uniqueness of identity proofs:

\begin{mathflow}
  \inferrule[boundary separation]{
    \IsTm{\hat{A}}{\CU} & \IsDim{r} & \EqTm*[\CxExtPrf{\Bdry{r}}]{a}{b}{\hat{A}}
  }{
    \EqTm*{a}{b}{\hat{A}}
  }
\end{mathflow}

To see that boundary separation implies the unicity of identity proofs,
consider the context $\Gamma \defeq
\prn{\Delta,\hat{A}:\CU,a\in\hat{A},b\in\hat{A},p :
\CTyPath{\_.\CEl{\hat{A}}}{a}{b},q:\CTyPath{\_.\CEl{\hat{A}}}{a}{b}}$; we may
derive $\EqTm[\Gamma]{p}{q}{\CTyPath{\_.\CEl{\hat{A}}}{a}{b}}$ as follows:
\begin{prooftree*}
  \infer0{
    \EqTm*[\Gamma,i:\DIM,\Bdry{i}]{
      \brk{i=\CDim0\to a\mid i=\CDim1\to b}
    }{
      \brk{i=\CDim0\to a\mid i=\CDim1\to b}
    }{\hat{A}}
  }

  \infer1{
    \EqTm*[\Gamma,i:\DIM,\Bdry{i}]{
      \brk{i=\CDim0\to p(i)\mid i=\CDim1\to p(i)}
    }{
      \brk{i=\CDim0\to q(i)\mid i=\CDim1\to q(i)}
    }{\hat{A}}
  }

  \infer1{
    \EqTm*[\Gamma,i:\DIM,\Bdry{i}]{p(i)}{q(i)}{\hat{A}}
  }

  \infer1[\textsc{boundary separation}]{
    \EqTm*[\Gamma,i:\DIM]{p(i)}{q(i)}{\hat{A}}
  }
  \infer1{
    \EqTm{\CTmAbs{i}{p(i)}}{\CTmAbs{i}{q(i)}}{\CTyPath{\_.\CEl{\hat{A}}}{a}{b}}
  }
  \infer1{
    \EqTm{p}{q}{\CTyPath{\_.\CEl{\hat{A}}}{a}{b}}
  }
\end{prooftree*}

\subsubsection{Coercion and composition}

Another important aspect of Bishop sets in XTT is that they support
\emph{coercion} and \emph{composition} operations:

\begin{mathflow}
  \inferrule{
    \IsTm[\CxExtDim<i>]{\hat{A}}{\CU}&
    \IsTm*{a}{\Subst{r}{i}{\hat{A}}}
  }{
    \IsTm*{\CTmCoe{r}{r'}{i.\hat{A}}{a}}{\Subst{r'}{i}{\hat{A}}}[r'=r\to a]
  }
  \break
  \inferrule{
    \IsTm[\CxExtDim<i>]{\hat{A}}{\CU}&
    \IsTm*[\CxExtPrf[\CxExtDim<i>]{i=r\lor\Bdry{s}}]{a}{\hat{A}}
  }{
    \IsTm*{\CTmComp{r}{r'}{s}{i.\hat{A}}{i.a}}{\Subst{r'}{i}{\hat{A}}}[
      r'=r\lor\Bdry{s}\to \Subst{r'}{i}{a}
    ]
  }
\end{mathflow}

In essence, these operations implement the action of paths in every set,
simultaneously enabling coercions between equal types, as well providing a way
to compose and invert paths. The coercion operation above allows, in
particular, an element of a set to be transformed into an element of any equal
set: this is the action of $\CTmCoe{\CDim0}{\CDim1}{i.\hat{A}}{a}$. In Observational Type
Theory, there is an additional \emph{coherence} operation that
(heterogeneously) equates $a$ with its coercion $\CTmCoe{\CDim0}{\CDim1}{i.\hat{A}}{a}$;
in XTT (and Cartesian cubical type theories generally), this is accomplished
using another instance of the general coercion operator called a ``filler'':
\[
  \prn{\lambda i.\CTmCoe{\CDim0}{i}{i.\hat{A}}{a}} :
  \CTyPath{i.\CEl{\hat{A}}}{a}{\CTmCoe{0}{1}{i.\hat{A}}{a}}
\]

Composition is analogous to coercion, except that it may additionally constrain
the result to match a partial element defined on a boundary $\Bdry{s}$ for some
$s:\DIM$. (Because of boundary separation and regularity, XTT's composition
operator is substantially simpler than those of other cubical type theories,
which consider partial elements defined on arbitrary $\phi:\COND$.) Composition
can be used to define combinators expressing the symmetry and transitivity of
equality, as well as to implement Martin-L\"of's $\CKwd{J}$ eliminator. In fact,
to express symmetry and transitivity, it suffices to first consider the case
where $\hat{A}$ doesn't depend on $i$, called \emph{homogeneous composition}:
\[
  \CTmHcomp{r}{r'}{s}{\hat{A}}{i.a} \defeq
  \CTmComp{r}{r'}{s}{\_.\hat{A}}{i.a}
\]

\begin{exa}[Symmetry]
  Let $\hat{A}:\CU$ and let $a,b\in\hat{A}$ and let
  $p:\CTyPath{\_.\CEl{\hat{A}}}{a}{b}$. We may use homogeneous composition to
  define an inverse path $\bar{p} : \CTyPath{\_.\CEl{\hat{A}}}{b}{a}$:
  \[
    \bar{p}
    \defeq
    \CTmAbs{i}{
      \CTmHcomp{\CDim0}{\CDim1}{i}{\hat{A}}{
        j.
        \brk{
          j=\CDim0\lor i=\CDim1 \to p(\CDim0)
          \mid
          i=\CDim0\to p(j)
        }
      }
    }
    \qedhere
  \]
\end{exa}

\begin{exa}[Transitivity]
  Let $\hat{A}:\CU$ and let $a,b,c\in\hat{A}$ and let
  $p:\CTyPath{\_.\CEl{\hat{A}}}{a}{b}, q:\CTyPath{\_.\CEl{\hat{A}}}{b}{c}$. We
  may use homogeneous composition to define a composite path $p\cdot q:
  \CTyPath{\_.\CEl{\hat{A}}}{a}{c}$:
  \[
    p\cdot q
    \defeq
    \CTmAbs{i}{
      \CTmHcomp{\CDim0}{\CDim1}{i}{\hat{A}}{
        j.
        \brk{
          j=\CDim0\lor i=\CDim0\to p(i)
          \mid
          i=\CDim1\to q(j)
        }
      }
    }
    \qedhere
  \]
\end{exa}

\begin{rem}
  By boundary separation, the symmetry and transitivity operators act very strictly. For instance,
  given $p : \CTyPath{i.A}{a}{b}$ one has $\bar{p} \cdot p = (\CTmLam{i}{b})$
  definitionally. Similarly, composition of paths is definitionally associative:
  $p\cdot \prn{q\cdot w} = \prn{p\cdot q}\cdot w$. In the absence of boundary separation, these
  coherences would hold up to another path, using a more complex instance of the composition
  operation.
\end{rem}

\begin{exa}[Identity type]\label{ex:identity-type}
  Using composition, we may define a combinator with
  the same type as Martin-L\"of's $\CKwd{J}$ eliminator for the identity type.
  Let $\hat{A} : \CU$ be a set and
  $x\in\hat{A},y\in\hat{A},z:\CTyPath{\_.\CEl{\hat{A}}}{x}{y}\vdash
  \hat{C}(x,y,z) : \CU$ be a motive of induction.
  Fixing $a,b\in\hat{A}$ and $p : \CTyPath{\_.\CEl{\hat{A}}}{a}{b}$ and
  $x:A\vdash c(x) \in \hat{C}(x,x,\lambda\_.x)$, we may define an element
  $\CKwd{J}_{\hat{C}}(p,c)\in \hat{C}(a,b,p)$ as follows:
  \[
    \CKwd{J}_{\hat{C}}(p,c) \defeq
    \CTmCoe{0}{1}{i.
      \hat{C}\prn{
        p(0),p(i),
        \lambda j.
        \CTmHcomp{0}{j}{i}{\hat{A}}{
          k.
          \brk{
            k=\CDim0\lor i=\CDim0\to p(0)
            \mid
            i=\CDim1\to p(k)
          }
        }
      }
    }{
      c(p(0))
    }
    \qedhere
  \]
\end{exa}

What is the behavior of the $\CKwd{J}$ combinator from
Example~\ref{ex:identity-type} on a \emph{reflexive} proof of equality
$\lambda\_.a : \CTyPath{\_.\CEl{\hat{A}}}{a}{a}$? From Martin-L\"of type
theory, we would expect $\CKwd{J}_{\hat{C}}\prn{\lambda\_.a, c}$ to compute to
$c(a)$; in ordinary cubical type theory, this equation only holds up to another
path, but in XTT we can force it to hold using the following \emph{regularity}
principle:

\begin{mathflow}
  \inferrule[coercion regularity]{
    \EqTm[\Gamma,i:\DIM,j:\DIM]{\hat{A}}{\Subst{j}{i}{\hat{A}}}{\CU}
  }{
    \EqTm*{\CTmCoe{r}{r'}{i.\hat{A}}{a}}{a}{\Subst{r'}{i}{\hat{A}}}
  }
\end{mathflow}

\begin{rem}
  Considering the boundary $\Bdry{s}$, the boundary separation rule ensures
  that the standard decomposition of composition into homogeneous composition
  and coercion~\cite{abcfhl:2021,angiuli-favonia-harper:2017} holds
  definitionally:
  \[
    \CTmComp{r}{r'}{s}{i.\hat{A}}{i.a}
    =
    \CTmHcomp{r}{r'}{s}{\Subst{r'}{i}{\hat{A}}}{
      i.\CTmCoe{i}{r'}{i.\hat{A}}{a}
    }
  \]

  Consequently, the following regularity rule for composition is also derivable:
  \begin{gather*}
    \inferrule{
      \EqTm[\Gamma,i:\DIM,j:\DIM]{\hat{A}}{\Subst{j}{i}{\hat{A}}}{\CU} \\
      \EqTm*[\Gamma,i:\DIM,i=r\lor\Bdry{s},j:\DIM,j=r\lor\Bdry{s}]{a}{\Subst{j}{i}{a}}{\hat{A}}
    }{
      \EqTm*{\CTmComp{r}{r'}{s}{i.\hat{A}}{i.a}}{\Subst{r'}{i}{a}}{\Subst{r'}{i}{\hat{A}}}
    }
    \qedhere
  \end{gather*}
\end{rem}

\subsubsection{Closure of the universe under connectives}\label{sec:informal:connectives}

The universe is closed under codes for connectives in the standard way, by
adding introduction forms for each code and equations governing the behavior of
$\CEl{-}$ on codes:

\begin{mathflow}
  \inferrule{
    \IsTm{\hat{A}}{\CU} &
    \IsTm[\Gamma,x\in\hat{A}]{\hat{B}}{\CU}
  }{
    \IsTm{\CTmCodePi{x}{\hat{A}}{\hat{B}}}{\CU}
  }
  \split
  \inferrule{
    \IsTm{\hat{A}}{\CU} &
    \IsTm[\Gamma,x\in\hat{A}]{\hat{B}}{\CU}
  }{
    \IsTm{\CTmCodeSg{x}{\hat{A}}{\hat{B}}}{\CU}
  }
  \split
  \inferrule{
    \IsTm[\CxExtDim<i>]{\hat{A}}{\CU} &
    \IsTm*{a}{\Subst{\CDim0}{i}{\hat{A}}} &
    \IsTm*{b}{\Subst{\CDim1}{i}{\hat{A}}}
  }{
    \IsTm{\CTmCodePath{i.\hat{A}}{a}{b}}{\CU}
  }
  \split
  \inferrule{
  }{
    \IsTm{\CTmCodeBool}{\CU}
  }
  \break
  \begin{array}{l}
    \EqTy{\CEl{\CTmCodePi{x}{\hat{A}}{\hat{B}}}}{\CTyPi{x}{\CEl{\hat{A}}}{\CEl{\hat{B}}}}\\
    \EqTy{\CEl{\CTmCodeSg{x}{\hat{A}}{\hat{B}}}}{\CTySg{x}{\CEl{\hat{A}}}{\CEl{\hat{B}}}}\\
    \EqTy{\CEl{\CTmCodePath{i.\hat{A}}{a}{b}}}{\CTyPath{i.\CEl{\hat{A}}}{a}{b}}\\
    \EqTy{\CEl{\CTmCodeBool}}{\CTyBool}
  \end{array}
\end{mathflow}

Considering that \textsc{boundary separation} applies to all elements of $\CU$,
we are restricted to connectives that preserve the condition of being boundary
separated. Next, we must include equations specifying the behavior of
$\CKwd{coe}$ and $\CKwd{com}$ on each type code. We begin with coercion,
verifying in each case that the equation is compatible with \textsc{coercion
regularity}.
\begin{align*}
  \CTmCoe{r}{r'}{
    i.\CTmCodePi{x}{\hat{A}}{\hat{B}}
  }{f}
  &=
  \CTmLam{x}{
    \CTmCoe{r}{r'}{
      i.
      \Subst{
        \CTmCoe{r'}{i}{i.\hat{A}}{x}
      }{x}{\hat{B}}
    }{
      f\prn{
        \CTmCoe{r'}{r}{i.\hat{A}}{x}
      }
    }
  }
  \\
  \CTmCoe{r}{r'}{
    i.\CTmCodeSg{x}{\hat{A}}{\hat{B}}
  }{p}
  &=
  \CTmPair{
    \CTmCoe{r}{r'}{i.\hat{A}}{\CTmFst{p}}
  }{
    \CTmCoe{r}{r'}{
      i.
      \Subst{
        \CTmCoe{r}{i}{i.\hat{A}}{\CTmFst{p}}
      }{x}{\hat{B}}
    }{\CTmSnd{p}}
  }
  \\
  \CTmCoe{r}{r'}{
    i.\CTmCodePath{j.\hat{A}}{a_0}{a_1}
  }{p}
  &=
  \CTmAbs{j}{
    \CTmComp{r}{r'}{j}{i.\hat{A}}{
      \_.p(j)
    }
  }
\end{align*}

Of course, \textsc{coercion regularity} implies
$\CTmCoe{r}{r'}{i.\CTmCodeBool}{a} = a$. We additionally observe that the
behavior of homogeneous composition (and thence general composition) is totally
determined by the combination of the above and boundary separation; in
particular, the following equations are derivable by pivoting on $\Bdry{s}$:
\begin{align*}
  \CTmHcomp{r}{r'}{s}{\CTmCodePi{x}{\hat{A}}{\hat{B}}}{i.f}
  &=
  \CTmLam{x}{
    \CTmHcomp{r}{r'}{s}{\hat{B}}{i.f(x)}
  }
  \\
  \CTmHcomp{r}{r'}{s}{\CTmCodeSg{x}{\hat{A}}{\hat{B}}}{i.p}
  &=
  \CTmPair{
    \CTmHcomp{r}{r'}{s}{\hat{A}}{i.\CTmFst{p}}
  }{
    \CTmComp{r}{r'}{s}{
      i.
      \Subst{
        \CTmHcomp{r}{i}{s}{\hat{A}}{i.\CTmFst{p}}
      }{x}{\hat{B}}
    }{i.\CTmSnd{p}}
  }
  \\
  \CTmHcomp{r}{r'}{s}{\CTmCodePath{j.\hat{A}}{a}{b}}{i.p}
  &=
  \CTmAbs{j}{
    \CTmHcomp{r}{r'}{s}{\hat{A}}{
      i.
      p(j)
    }
  }
\end{align*}

\NewDocumentCommand\CTmCodeProd{mm}{#1\mathbin{\hat\times}#2}

\subsubsection{Algorithmic type checking and type-case}\label{sec:typecase}

Although a type checking algorithm for XTT is beyond the scope of this paper,
such an algorithm is important to fully substantiate our claim that XTT can act
as a more tractable alternative to extensional type theory. As a step towards
applying existing type checking algorithms to XTT, we include one final construct.

Most type checking algorithms going back to the work of
Coquand~\cite{coquand:1996} check $M : A$ by first evaluating $A$ to a weak-head
normal form, in order to determine whether $A$ is a dependent product type, a
dependent sum type, the booleans, \etc. Such a determination is crucial because
the head constructor of $A$, in turn, determines how to type check $M$ (\eg, by
applying it to an argument, considering its projections, \etc).

Consider the case that we have a variable $x :
\CTyPath{\_.\CU}{\CTmCodeProd{\hat{A}}{\hat{B}}}{\CTmCodeProd{\hat{C}}{\hat{D}}}$
in scope, and we are attempting to check $\CTmPair{u}{v} \in {x(i)}$ for some
variable $i:\DIM$. The equational rules of XTT do not suggest any reductions for
$x(i)$, so we might na\"{\i}vely return a type error: $\CTmPair{u}{v}$ can only be
an element of a product type, but $\CEl{x(i)}$ appears to be neutral.

However, such a strategy is not complete in the presence of the boundary
separation rule. Suppose that in addition, we have
proofs $p : \CTyPath{\_.\CU}{\hat{A}}{\hat{C}}$ and $q :
\CTyPath{\_.\CU}{\hat{B}}{\hat{D}}$ in scope. Then we may form the path
$\CTmAbs{j}{\CTmCodeProd{p(j)}{q(j)}} :
\CTyPath{\_.\CU}{\CTmCodeProd{\hat{A}}{\hat{B}}}{\CTmCodeProd{\hat{C}}{\hat{D}}}$;
by boundary separation, $\CEl{x(i)} = \CEl{p(i)}\times\CEl{q(i)}$
definitionally, and therefore we \emph{must} proceed to check $u \in p(i)$ and
$v\in q(i)$ (and possibly succeed).

Of course, an algorithm cannot guess out of thin air whether such $p,q$ exist! A
way around this impasse, pioneered in
OTT~\cite{altenkirch-mcbride:2006,altenkirch-mcbride-swierstra:2007}, is to
ensure that from a path between $\hat{A}\mathbin{\hat\times}\hat{B}$ and
$\hat{C}\mathbin{\hat\times}\hat{D}$, we can \emph{always} obtain a path between
$\hat{A}$ and $\hat{C}$. Under those circumstances, we have a uniform
strategy to type check terms on neutral equations between product types (\etc):
ignore the proof of equality, and consider only its boundary.

This approach does \emph{not} make sense for mathematical sets or spaces, since
there are more ways for two product sets to be equal than that their components
are equal, but it does make sense for closed universes, such as the
inductive-recursive universes of Martin-L\"of~\cite{martin-lof:1984}. We discuss
the semantic disadvantages of these closed universes in
Section~\ref{sec:bad-universes}.

Concretely, we achieve this ``injectivity up to paths'' of type constructors in
XTT by including a ``type-case'' operator enabling intensional analysis of
sets~\cite{constable:1982, constable-zlatin:1984,harper-morrisett:1995}.
\[
  \inferrule{
    \IsTy[\CxExtEl<u>{\CU}]{C} \\
    \IsTm[\CxExtEl[\CxExtEl<u>{\CU}]<v>{\CEl{u}\to\CU}]{c_\Pi}{\Subst{\CTmCodePi{x}{u}{v(x)}}{u}{C}} \\
    \IsTm[\CxExtEl[\CxExtEl<u>{\CU}]<v>{\CEl{u}\to\CU}]{c_\Sigma}{\Subst{\CTmCodeSg{x}{u}{v(x)}}{u}{C}} \\
    \IsTm[
      \CxExtEl[
        \CxExtEl[
          \CxExtEl<u_0>{\CU}
        ]<u_1>{\CU}
      ]<u_p>{
        \CTyPath{\_.\CU}{u_0}{u_1}
      },
      x_0\in u_0,
      x_1\in u_1
    ]{c_p}{
      \Subst{
        \CTmCodePath{i.u_p(i)}{x_0}{x_1}
      }{u}{C}
    }\\
    \IsTm{c_b}{\Subst{\CTmCodeBool}{u}{C}}
    \\
    \IsTm{\hat{A}}{\CU}
  }{
    \IsTm{
      \CTmTypecase*{u.C}{\hat{A}}{
        \CKwd{pi}(u,v)\to c_\Pi\\
        \CKwd{sg}(u,v)\to c_\Sigma\\
        \CKwd{path}\prn{u_0,u_1,u_p,x_0,x_1}\to c_p\\
        \CKwd{bool}\to c_b
      }
    }{\Subst{\hat{A}}{u}{C}}
  }
\]

This operator is equipped with the obvious reduction rules:
\begin{align*}
  \CTmTypecase{u.C}{\prn{\CTmCodePi{z}{\hat{A}}{\hat{B}}}}{\cdots}
  &=
  \Subst{\hat{A},\CTmLam{z}{\hat{B}}}{u,v}c_\Pi
  \\
  \CTmTypecase{u.C}{\prn{\CTmCodeSg{z}{\hat{A}}{\hat{B}}}}{
    \cdots
  }
  &=
  \Subst{A,\CTmLam{z}{B}}{u,v}c_\Sigma
  \\
  \CTmTypecase{u.C}{\prn{\CTmCodePath{i.\hat{A}}{a}{b}}}{
    \cdots
  }
  &=
  \Subst{
    \Subst{\CDim0}{i}{\hat{A}},
    \Subst{\CDim1}{i}{\hat{A}},
    \CTmLam{i}{\hat{A}},
    a,
    b
  }{
    u_0,u_1,u_p,x_0,x_1
  }{c_p}
  \\
  \CTmTypecase{u.C}{\CTmCodeBool}{
    \cdots
  }
  &=
  c_b
\end{align*}

\subsection{The LF signature of XTT}%
\label{sec:xtt:lf}

Thus far, we have presented the syntax of XTT as a series of informal inference
rules; formally, however, XTT is the bi-initial object in the 2-category of
models defined in Section~\ref{sec:functorial-semantics}. As a middle ground
between those two styles, we now give a definition of XTT as a sequence of
constants (Figure~\ref{fig:signature}) in a logical framework (LF). By results
of Uemura~\cite{uemura:2019,uemura:2021:thesis}, these constants can be systematically elaborated
into the definitions of Section~\ref{sec:functorial-semantics}, and
automatically induce a 2-category of models with a bi-initial object.
Conversely, connecting these LF constants to standard (but highly annotated)
inference rule presentations requires an \emph{adequacy} theorem~\cite{harper-honsell-plotkin:1993}; we are using Uemura's logical
framework~\cite{uemura:2019,uemura:2021:thesis}, which is adequate for a wide variety of type
theories.

\subsubsection{Uemura's logical framework}

Uemura's LF is essentially a fragment of extensional dependent type theory; the
only important difference is that Uemura's LF stratifies types into two kinds:
representable types $\star$ and types $\Box$. Elements $\LFJdg:\star$ correspond
to judgments that can be hypothesized (such as object-level term judgments),
whereas elements $\LFJdg:\Box$ correspond to arbitrary judgments (such as
object-level type judgments, which generally cannot be hypothesized). We
summarize the rules of Uemura's LF in Figure~\ref{fig:lf}, omitting the standard
definitional equalities for dependent products and extensional equality; readers
can consult Section 5 of~\cite{uemura:2019} for a thorough introduction.%
  \footnote{Note that~\cite{uemura:2019} does not explicitly include dependent
  sums, but these are present in the semantics and do not impact any of the
  results of that paper.}
\begin{figure}
  \begin{mathflow}
    \inferrule{
      \IsRepTy{A}
    }{
      \IsNRepTy{A}
    }
    \split
    \inferrule{
      \IsRepTy{A}
      &
      \IsNRepTy[\CxExtEl{A}]{B}
    }{
      \IsNRepTy{\Prod{x:A}{B}}
    }
    \split
    \inferrule{
      \IsNRepTy{A}
      &
      \IsTm{M_0,M_1}{A}
    }{
      \IsNRepTy{M_0 =\Sub{A} M_1}
    }
    \split
    \inferrule{
      \IsNRepTy{A}
      &
      \IsNRepTy[\CxExtEl{A}]{B}
    }{
      \IsNRepTy{\Sum{x:A}{B}}
    }
    \split
    \inferrule{
      \IsTm[\CxExtEl{A}]{M}{B}
    }{
      \IsTm{\CTmLam{x}{M}}{\Prod{x:A}{B}}
    }
    \split
    \inferrule{
      \IsTm{M}{\Prod{x:A}{B}}
      &
      \IsTm{N}{A}
    }{
      \IsTm{M(N)}{\Subst{N}{x}{B}}
    }
    \split
    \inferrule{
      \IsTm{M}{A}
    }{
      \IsTm{\CTmRefl}{M =\Sub{A} M}
    }
    \split
    \inferrule{
      \IsTm{M_0,M_1}{A}
      &
      \IsTm{N}{M_0 =\Sub{A} M_1}
    }{
      \EqTm{M_0}{M_1}{A}
    }
    \split
    \inferrule{
      \IsTm{M_0}{A}
      &
      \IsNRepTy[\CxExtEl{A}]{B}
      &
      \IsTm{M_1}{\Subst{M_0}{x}{B}}
    }{
      \IsTm{\CTmPair{M_0}{M_1}}{\Sum{x:A}{B}}
    }
    \split
    \inferrule{
      \IsTm{M}{\Sum{x:A}{B}}
    }{
      \IsTm{\CTmFst{M}}{A}
      &
      \IsTm{\CTmSnd{M}}{\Subst{\CTmFst{M}}{x}{B}}
    }
  \end{mathflow}
  \caption{A summary of Uemura's logical framework.}%
  \label{fig:lf}
\end{figure}

In addition to the model theory and initiality result mentioned above, another
major advantage of using a logical framework is \emph{higher-order abstract
syntax}, in which variable binders are encoded using the LF's
$\lambda$-abstractions over representable types. As a result, unlike inference
rules, LF encodings need not explicitly represent the object-level contexts, and
in fact automatically ensure that all operations are stable under substitution.

\begin{figure}
  \begin{gather*}
  \begin{aligned}
    \DIM &: \star \\
    \COND,\LFTp &: \Box \\
    \PrfFam{-} &: \COND\to\star \\
    \LFTm &: \LFTp\to\star \\
    \Dim0, \Dim1 &: \DIM \\
    \prn{=} &: \DIM\times\DIM\to\COND \\
    \prn{\lor} &: \COND\times\COND\to\COND
  \end{aligned}
  \qquad\qquad
  \begin{aligned}
    \_ &: \Params{\phi} \Prod{p,q:\PrfFam{\phi}}{p =\Sub{\PrfFam{\phi}} q} \\
    \_ &: \Params{r,s} \prn{r=\Sub{\DIM}s}\cong\PrfFam{r=s} \\
    \_ &: \Params{\phi,\psi} \PrfFam{\phi}\to\PrfFam{\phi\lor\psi} \\
    \_ &: \Params{\phi,\psi} \PrfFam{\psi}\to\PrfFam{\phi\lor\psi} \\
    \Bdry{r} &= \prn{r=\Dim0 \lor r=\Dim1}
    \\& \\&
  \end{aligned}
  \\[8pt]
  \begin{aligned}
    \multispan{2}{\underline{\textbf{For each $\LFJdg\in\brc{\LFTp,\PrfFam{\phi},\LFTm\prn{A}}$:}\hfill}}\\
    \Ext{\LFJdg}{\phi}{x\Sub{\phi}} &= \Sum{x:\LFJdg} \Prod{p:\PrfFam{\phi}}
      x =\Sub{\LFJdg} x\Sub{\phi}\prn{p} \\
    \Kwd{abort}_\LFJdg &: \PrfFam{\Dim0=\Dim1} \to \LFJdg \\
    \_ &: \Prod{p:\PrfFam{\Dim0=\Dim1}} \Prod{x:\LFJdg} x =_\LFJdg \Kwd{abort}_\LFJdg\prn{p}
    \\&
  \end{aligned}
  \begin{aligned}
    &\\
    \Kwd{split}_\LFJdg &: \Params{\phi,\psi}
      \Prod{x\Sub{\phi}:\PrfFam{\phi}\to\LFJdg}
      \Prod{x\Sub{\psi}:\Ext{\LFJdg}{\phi}{x\Sub{\phi}}}
      \PrfFam{\phi\lor\psi}\to\LFJdg \\
    \_ &: \Params{\phi,\psi} \Prod{\_:\PrfFam{\phi}}
      \Kwd{split}_\LFJdg\prn{x\Sub{\phi},x\Sub{\psi}} =\Sub{\LFJdg} x\Sub{\phi} \\
    \_ &: \Params{\phi,\psi} \Prod{\_:\PrfFam{\psi}}
      \Kwd{split}_\LFJdg\prn{x\Sub{\phi},x\Sub{\psi}} =\Sub{\LFJdg} x\Sub{\psi} \\
    \_ &: \Params{\phi,\psi,x} \Prod{\_:\PrfFam{\phi\lor\psi}}
      x =\Sub{\LFJdg} \Kwd{split}_\LFJdg\prn{x,x}
  \end{aligned}
  \end{gather*}
  \\[8pt]
  \begin{align*}
    \TyPi,\TySg &: \prn{\Sum{A:\LFTp}{\prn{\LFTm\prn{A}\to\LFTp}}}\to\LFTp \\
    \TyPath &: \prn{\Sum{A:\DIM\to\LFTp}{\LFTm\prn{A\prn{\Dim0}}\times\LFTm\prn{A\prn{\Dim1}}}}\to\LFTp \\
    \TyBool,\Kwd{set} &: \LFTp \\
    \Kwd{pi/tm} &: \Params{A,B} \prn{\Prod{x:\LFTm\prn{A}}{\LFTm\prn{B\prn{x}}}}
      \cong \LFTm\prn{\TyPi\prn{A,B}} \\
    \Kwd{sg/tm} &: \Params{A,B} \prn{\Sum{x:\LFTm\prn{A}}{\LFTm\prn{B\prn{x}}}}
      \cong \LFTm\prn{\TySg\prn{A,B}} \\
    \Kwd{path/tm} &: \Params{A,a_0,a_1}
      \prn{\Prod{i:\DIM}{\Ext{\LFTm\prn{A\prn{i}}}{\Bdry{i}}{
        \Kwd{split}\Sub{\LFTm\prn{A\prn{i}}}\prn{a_0,a_1}}}}
      \cong \LFTm\prn{\TyPath\prn{A,a_0,a_1}} \\
    \TmTt,\TmFf &: \LFTm\prn{\TyBool} \\
    \Kwd{if} &: \Prod{C:\LFTm\prn{\TyBool}\to\LFTp} \Prod{b:\LFTm\prn{\TyBool}}
      \LFTm{\prn{C\prn{\TmTt}}}\to\LFTm{\prn{C\prn{\TmFf}}}\to\LFTm{\prn{C\prn{b}}} \\
    \_ &: \Params{C,c_{\TmTt},c_{\TmFf}}
      \Kwd{if}\prn{C,\TmTt,c_{\TmTt},c_{\TmFf}} =\Sub{\LFTm\prn{C\prn{\TmTt}}} c_{\TmTt} \\
    \_ &: \Params{C,c_{\TmTt},c_{\TmFf}}
      \Kwd{if}\prn{C,\TmFf,c_{\TmTt},c_{\TmFf}} =\Sub{\LFTm\prn{C\prn{\TmFf}}} c_{\TmFf} \\
    \Kwd{el} &: \LFTm\prn{\Kwd{set}}\to\LFTp \\
    \_ &:
      \Prod{\hat{A}:\LFTm\prn{\Kwd{set}}}
      \Prod{r : \DIM}
      \Prod{a,b : \LFTm\prn{\Kwd{el}\prn{\hat{A}}}}
      \prn{\PrfFam{\Bdry{r}}\to a =\Sub{\LFTm\prn{\Kwd{el}\prn{\hat{A}}}} b}\to
      a =\Sub{\LFTm\prn{\Kwd{el}\prn{\hat{A}}}} b \\
    \Kwd{coe} &:
      \Prod{\hat{A}:\DIM\to\LFTm\prn{\Kwd{set}}}
      \Prod{r,r':\DIM}
      \Prod{a:\LFTm\prn{\Kwd{el}\prn{\hat{A}\prn{r}}}}
      \Ext{\LFTm\prn{\Kwd{el}\prn{\hat{A}\prn{r'}}}}{r=r'}{a} \\
    \_ &:
      \Prod{\hat{A}:\LFTm\prn{\Kwd{set}}}
      \Prod{r,r':\DIM}
      \Prod{a:\LFTm\prn{\Kwd{el}\prn{\hat{A}}}}
      \Kwd{coe}\prn{\lambda\_.\hat{A},r,r',a} =\Sub{\LFTm\prn{\Kwd{el}\prn{\hat{A}}}} a \\
    \Kwd{com} &:
      \Params{\hat{A}}
      \Prod{s,r,r':\DIM}
      \Prod{a:\Prod{i:\DIM}\Prod{\_:\PrfFam{i=r\lor\Bdry{s}}}{\LFTm\prn{\Kwd{el}\prn{\hat{A}\prn{i}}}}}
      \Ext{\LFTm\prn{\Kwd{el}\prn{\hat{A}\prn{r'}}}}{r'=r\lor\Bdry{s}}{a\prn{r'}} \\
    \TmCodePi,\TmCodeSg &:
      \prn{\Sum{\hat{A}:\LFTm\prn{\Kwd{set}}}{\prn{\LFTm\prn{\Kwd{el}\prn{\hat{A}}}\to\LFTm\prn{\Kwd{set}}}}}\to\LFTm\prn{\Kwd{set}} \\
    \_ &: \Params{\hat{A},\hat{B}}
      \Kwd{el}\prn{\TmCodePi\prn{\hat{A},\hat{B}}} =\Sub{\LFTp}
      \TyPi\prn{\Kwd{el}\prn{\hat{A}},\lambda x.\Kwd{el}\prn{\hat{B}\prn{x}}} \\
    \_ &: \Params{\hat{A},\hat{B},r,r',f}
      \Kwd{coe}\prn{\lambda i.\TmCodePi\prn{\hat{A}\prn{i},\hat{B}\prn{i}},r,r',f}
      =\Sub{\LFTm\prn{\Kwd{el}\prn{\TmCodePi\prn{\hat{A}\prn{r'},\hat{B}\prn{r'}}}}}
      \lambda x.\Kwd{coe}\prn{\dots}
  \end{align*}
  \caption{The signature of XTT in Uemura's logical framework. For space
  reasons, we omit the remainder of the rules pertaining to the universe of
  Bishop sets.}%
  \label{fig:signature}
\end{figure}

\subsubsection{The signature of XTT}

In Figure~\ref{fig:signature}, we present XTT as a signature, or sequence of
constants, in Uemura's LF\@. In accordance with the judgments-as-types methodology~\cite{harper-honsell-plotkin:1993}, we render each judgment of XTT as a type
constant in its LF signature. We encode the judgment $\IsDim$ as a nullary
representable type with two elements; it is representable because XTT allows
context extension by dimension variables $\prn{\CxExtDim}$.
\begin{mathflow}
  \DIM : \star
  \split
  \Dim0 : \DIM
  \split
  \Dim1 : \DIM
\end{mathflow}

We encode the face formula judgment $\IsCond$ as a (non-representable) type
equipped with a representable decoding function $\PrfFam{-}$; contexts cannot be
extended by face formula variables, but we can extend contexts by an assumption
that $\phi$ holds, $\prn{\CxExtPrf}$.
\begin{mathflow}
  \COND : \Box \split \PrfFam{-} : \COND \to \star
\end{mathflow}
Note that $\PrfFam{-}$ is silent in our earlier notation; in both the inference
rules and in Figure~\ref{fig:signature}, we suppress the instantiation of
partial elements by (the unique) proofs of $\PrfFam{\phi}$.

Likewise, we encode $\IsTy$ as a (non-representable) type and $\IsTm{a}{A}$ as a
representable type family.
\begin{mathflow}
  \LFTp : \Box \split \LFTm : \LFTp \to \star
\end{mathflow}
We do not add constants for the equality judgments of XTT, as these are encoded
by LF equality at the corresponding type.

Finally, we encode the connectives and rules of XTT as constants, taking
advantage of the LF's type structure for brevity. For example, the formation
rule for dependent products corresponds to the following LF constant:
\[
  \IsTm[\CxExtEl[A : \LFTp]<B>{\Prod{x:\LFTm\prn{A}}{\LFTp}}]{\Kwd{pi}\prn{A,B}}{\LFTp}
\]
The force of the remaining rules for dependent product types is simply to assert
an isomorphism between $\LFTm\prn{\Kwd{pi}\prn{A,B}}$ and
$\Prod{x:\LFTm\prn{A}}{\LFTm\prn{B(a)}}$. To express this pattern concisely, we
introduce the notation $\IsTm{f}{A \cong B}$ for the following four constants:
\begin{mathflow}
  \IsTm[\CxExtEl{A}]{f^{\rightarrow}(x)}{B}
  \split
  \IsTm[\CxExtEl{B}]{f^{\leftarrow}(x)}{A}
  \break
  \IsTm[\CxExtEl{A}]{\_}{f^{\leftarrow}(f^{\rightarrow}(x)) =\Sub{A} x}
  \split
  \IsTm[\CxExtEl{B}]{\_}{f^{\rightarrow}(f^{\leftarrow}(x)) =\Sub{B} x}
\end{mathflow}
Using this notation, we encode the introduction, elimination, $\beta$, and
$\eta$ laws of dependent products in one stroke:
\[
  \IsTm[
    \CxExtEl[A : \LFTp]<B>{\Prod{x:\LFTm\prn{A}}{\LFTp}}
  ]{\Kwd{pi/tm}}{
    \Prod{x:\LFTm{\prn{A}}}{\LFTm\prn{B(x)}} \cong \LFTm\prn{\Kwd{pi}\prn{A,B}}
  }
\]

\NewDocumentCommand{\CanonicalCojoneDiagram}{D||{}mmmm}{%
  \DelimMin{1}
  \tikz[inline diagram,baseline = (2.base), #1]{
    \node (0) {$\Comma{#2}{\brc{#5}}$}; % chktex 1
    \node (1) [right = .5cm of 0] {$#3$}; % chktex 1
    \node (2) [right = .5cm of 1] {$#4$}; % chktex 1
    \path[->] (0) edge node [above,yshift = .3ex] {$\MathSmall{\partial_0}$} (1);
    \path[->] (1) edge node [above,yshift = .3ex] {$\MathSmall{#2}$} (2);
  }
}

\section{Categorical preliminaries}

All the categorical machinery we assume can be found in standard introductory
textbooks and
references~\cite{maclane:1998,borceux:1994:vol1,borceux:1994:vol2,borceux:2010:vol3,awodey:2010,johnstone:2002};
in order to fix notations and render our presentation as self-contained as
possible, however, we have included a number of definitions. The beginning of
Section~\ref{sec:basic-categories} recalls basic categorical definitions and
fixes notation, while Section~\ref{sec:presheaves} covers more specific facts
related to presheaves and representables.
The remaining sections discuss functorial semantics and the categorical
machinery necessary to account for the semantics of type theory.

We recommend only skimming this section on a first read, and return to it
as needed.

\subsection{Basic categorical definitions}%
\label{sec:basic-categories}

\begin{nota}
  Given a category $\CCat$ and objects $C,D:\CCat$, we write
  $\Hom[\CCat]{C}{D}$ for the collection of arrows between $C$ and $D$. We will
  also write $\Hom{\CCat}{\DCat}$ for the \emph{category} of functors
  $\Mor{\CCat}{\DCat}$ and natural transformations between them.
\end{nota}

\begin{conv}
  Conventionally, we write $\SET$ and $\CAT$ for the categories of sets and
  categories respectively; of course, to be more precise, we should instead refer
  to $\SET_\alpha$ and $\CAT_\alpha$ for some strongly inaccessible cardinal
  $\alpha$, or equivalently a Grothendieck universe $\mathscr{U}$. We leave the
  resolution of these universes implicit, noting them explicitly in sensitive
  places.
\end{conv}

\begin{nota}
  We will write $\Simplex{n}$ for the $n$-simplex regarded as a category; in
  particular, $\Simplex{0}$ is the terminal category $\brc{*}$, and
  $\Simplex{1}$ is the category $\brc{\Mor{\bullet}{\circ}}$ of the walking
  arrow. Therefore $\ArrCat{\CCat}$ is the category of arrows and commutative
  squares in $\CCat$.
\end{nota}

\begin{defi}[Cartesian arrow category]
  We write $\Mor|embedding|{\CartArr{\CCat}}{\ArrCat{\CCat}}$ for the wide
  subcategory of arrows and \emph{cartesian} squares between them.
  Concretely, given $f,g : \CartArr{\CCat}$ a morphism between them exhibits $f$ as a
  pullback of $g$:
  \[
    \DiagramSquare{
      nw/style = pullback,
      nw = \partial_0f,
      sw = \partial_1f,
      ne = \partial_0f,
      se = \partial_1g,
      west = f,
      east = g
    }
    \qedhere
  \]
\end{defi}

\begin{defi}[Cartesian morphism]
  Given a functor $\Mor[p]{\ECat}{\BCat}$, a morphism $\Mor[f]{E_0}{E_1}$ is \emph{cartesian} if for
  every $\Mor[g]{E'}{E_1}$ and $\Mor[u]{pE'}{pE_0}$ such that $pg = pf \circ u$, there
  exists a unique $\Mor[h]{E'}{E_0}$ over $u$ such that $f \circ h = g$. Diagrammatically:
  \begin{equation*}
    \begin{tikzpicture}[diagram,baseline = (sw.base)]
      \SpliceDiagramSquare{
        west/style = |->,
        east/style = |->,
        nw = E_0,
        ne = E_1,
        sw = pE_0,
        se = pE_1,
        north/node/style = upright desc,
        north = f,
        south = pf,
        width = 2.5cm,
        height = 1.5cm,
      }

      \node (E') [left = of nw,yshift = 1.5cm,] {$E'$};
      \node (B') [below = 1.5cm of E'] {$pE'$};

      \path[->] (B') edge node [below,sloped] {$u$} (sw);
      \path[|->] (E') edge (B');
      \path[->,exists] (E') edge node [desc] {$h$} (nw);
      \path[->, bend left = 20] (E') edge node [above,sloped] {$g$} (ne);

      \node (tot) [right = of ne] {$\ECat$};
      \node (base) [right = of se] {$\BCat$};
      \path[->] (tot) edge (base);
    \end{tikzpicture}
    \qedhere
  \end{equation*}
\end{defi}

\begin{defi}[Opcartesian morphism]
  Dually, given a functor $\Mor[p]{\ECat}{\BCat}$, a morphism
  $\Mor[f]{E_0}{E_1}$ is \emph{opcartesian} if for every $\Mor[g]{E_0}{E'}$ and
  $\Mor[u]{pE_1}{pE'}$ such that $pg = u \circ pf$, there
  exists a unique factor $\Mor[h]{E_1}{E'}$ lying over $u$ such that $h\circ f
  = g$:

  \begin{equation*}
    \begin{tikzpicture}[diagram,baseline=(sw.base)]
      \SpliceDiagramSquare{
        west/style = |->,
        east/style = |->,
        nw = E_0,
        ne = E_1,
        sw = pE_0,
        se = pE_1,
        north/node/style = upright desc,
        north = f,
        south = pf,
        width = 2.5cm,
        height = 1.5cm,
      }
      \node (E') [right = of ne, yshift = 1.5cm,] {$E'$};
      \node (B') [below = 1.5cm of E'] {$pE'$};

      \path[->] (se) edge node [below,sloped] {$u$} (B');
      \path[|->] (E') edge (B');
      \path[->,bend left=20] (nw) edge node [above,sloped] {$g$} (E');
      \path[->,exists] (ne) edge node [desc] {$h$} (E');

      \node (tot) [right = 3cm of ne] {$\ECat$};
      \node (base) [right = 3cm of se ] {$\BCat$};
      \path[->] (tot) edge (base);
    \end{tikzpicture}
    \qedhere
  \end{equation*}
\end{defi}

\begin{defi}[Fibration]
  A \emph{fibration} is a functor $\Mor[p]{\ECat}{\BCat}$ such that for each
  morphism $\Mor[u]{B}{pE}$ there exists a cartesian morphism
  $\Mor[u^\dagger{E}]{u^*E}{E}$ lying over $u$:
  \begin{equation*}
    \begin{tikzpicture}[diagram,baseline = (sw.base)]
      \SpliceDiagramSquare{
        west/style = lies over,
        east/style = lies over,
        sw = B,
        se = pE,
        south = u,
        nw = u^*E,
        ne = E,
        north = u^\dagger{E},
        north/style = exists,
        height = 1.5cm,
      }
      \node (tot) [right = of ne] {$\ECat$};
      \node (base) [right = of se] {$\BCat$};
      \path[fibration] (tot) edge (base);
    \end{tikzpicture}
  \end{equation*}

  We emphasize the property of an arrow being a fibration by using an open
  triangular tip, \eg $\FibMor[p]{\ECat}{\BCat}$.
\end{defi}

\begin{defi}[Opfibration]
  Dually, an \emph{opfibration} is a functor $\Mor[p]{\ECat}{\BCat}$ such that
  for each morphism $\Mor[u]{pE}{B}$ there exists an opcartesian morphism
  $\Mor[u^\dagger{E}]{E}{u_!E}$ lying over $u$:
  \begin{equation*}
    \begin{tikzpicture}[diagram,baseline = (sw.base)]
      \SpliceDiagramSquare{
        west/style = op lies over,
        east/style = op lies over,
        se = B,
        sw = pE,
        south = u,
        ne = u_!E,
        nw = E,
        north = u^\dagger{E},
        north/style = exists,
        height = 1.5cm,
      }
      \node (tot) [right = of ne] {$\ECat$};
      \node (base) [right = of se] {$\BCat$};
      \path[opfibration] (tot) edge (base);
    \end{tikzpicture}
  \end{equation*}
  We emphasize the property of an arrow being an opfibration by using a filled triangular tip, \eg $\OpFibMor[p]{\ECat}{\BCat}$.
\end{defi}

\begin{fact}
  The codomain functor
  $\Mor[\partial_1]{\ArrCat{\CCat}}{\CCat}$ which sends $\Mor[f]{X}{Y}$ to $Y$
  is always an opfibration, with opcartesian lifts implemented by postcomposition
  (dependent sum).
\end{fact}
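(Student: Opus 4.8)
The plan is to produce the opcartesian lifts by hand and verify the universal property by a short diagram chase; notice that no hypotheses on $\CCat$ are required, since the construction uses only composition of morphisms. Fix an object $E_0$ of $\ArrCat{\CCat}$, i.e.\ an arrow $e_0 \colon A \to C_0$ of $\CCat$, so that $\partial_1(E_0) = C_0$, and fix a morphism $v \colon C_0 \to C_1$. I claim the opcartesian lift of $v$ at $E_0$ is the object $v_! E_0 \defeq (v\circ e_0 \colon A \to C_1)$, equipped with the morphism $v^\dagger E_0 \colon E_0 \to v_! E_0$ of $\ArrCat{\CCat}$ given by the square with top edge $\mathrm{id}_A$, bottom edge $v$, left edge $e_0$, and right edge $v \circ e_0$:
\[
  \begin{array}{ccc}
    A & \xrightarrow{\ \mathrm{id}_A\ } & A\\
    \downarrow & & \downarrow\\
    C_0 & \xrightarrow{\ \ v\ \ } & C_1
  \end{array}
\]
This square commutes on the nose since $(v\circ e_0)\circ\mathrm{id}_A = v\circ e_0$, and $\partial_1$ carries it to $v$, so $v^\dagger E_0$ does lie over $v$.

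For the universal property, let $E' = (e'\colon A'\to C')$ be another object of $\ArrCat{\CCat}$, let $g\colon E_0 \to E'$ be a morphism --- that is, a pair $(g_1\colon A\to A',\ g_2\colon C_0 \to C')$ with $e'\circ g_1 = g_2\circ e_0$ --- and let $u\colon C_1 \to C'$ satisfy $\partial_1(g) = u\circ\partial_1(v^\dagger E_0)$, i.e.\ $g_2 = u\circ v$. A morphism $h\colon v_!E_0 \to E'$ lying over $u$ is exactly a pair $(h_1\colon A\to A',\ u)$ with $e'\circ h_1 = u\circ(v\circ e_0)$, and the condition $h\circ v^\dagger E_0 = g$ unwinds to $(h_1,\ u\circ v) = (g_1,\ g_2)$, which forces $h_1 = g_1$ (and merely reconfirms $u\circ v = g_2$). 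Hence $h \defeq (g_1, u)$ is the unique candidate, and it is a legitimate morphism of $\ArrCat{\CCat}$ because $e'\circ g_1 = g_2\circ e_0 = (u\circ v)\circ e_0 = u\circ(v\circ e_0)$. Thus $v^\dagger E_0$ is opcartesian, and $\partial_1$ is an opfibration whose opcartesian transport along $v$ is postcomposition by $v$ --- equivalently, viewing an object over $C_0$ as a family, the dependent sum $\Sigma_v$ --- as claimed.

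I do not expect any genuine obstacle here: the argument is entirely a matter of unwinding the definitions of the arrow category, of the codomain functor $\partial_1$, and of opcartesianness, together with the observation that composites always exist. The only thing deserving attention is the bookkeeping --- keeping the two components of each commuting square distinct, and tracking which component is pinned down by the ``lies over $u$'' constraint. (By contrast, the analogous statement that the domain functor $\partial_0$ is a \emph{fibration} is not unconditional: there cartesian transport must be implemented by pullback rather than by a bare composite, so one needs $\CCat$ to have pullbacks.)
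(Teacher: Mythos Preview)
Your proof is correct and is exactly the verification the paper has in mind: the Fact is stated without proof, and your explicit construction of the opcartesian lift by postcomposition together with the diagram chase for the universal property is the standard unwinding.

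One small slip in your closing parenthetical: the domain functor $\partial_0$ is in fact \emph{also} unconditionally a fibration, with cartesian lifts given by precomposition (dually to what you have shown). The statement that genuinely requires extra hypotheses is that the \emph{codomain} functor $\partial_1$ is a \emph{fibration}, where cartesian transport must be implemented by pullback; this is precisely the content of the Exercise immediately following the Fact in the paper. So your contrast is the right one in spirit, just misattributed to $\partial_0$ rather than to $\partial_1$-as-fibration.
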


\begin{fact}
  In a category $\CCat$ with pullbacks, the codomain functor
  $\Mor[\partial_1]{\ArrCat{\CCat}}{\CCat}$ is a fibration.
  Cartesian lifts may are implemented by pullbacks in $\CCat$.
\end{fact}

\begin{defi}[Comma category]
  Given a pair of functors $\Mor[F]{\DCat}{\CCat}$, $\Mor[G]{\ECat}{\CCat}$,
  the \emph{comma category} $\Comma{F}{G}$ has as objects arrows
  $\Mor[X]{FD}{GE}$ and commutative squares of the following kind for
  arrows:
  \begin{equation*}
    \DiagramSquare{
      width = 2.5cm,
      nw = FD,
      sw = GE,
      ne = FD',
      se = GE',
      north = Fd,
      south = Ge,
      west = X,
      east = X',
    }
    \qedhere
  \end{equation*}
\end{defi}

The comma category $\Comma{F}{G}$ may be constructed more abstractly in terms
of the following (1-categorical) pullback in $\CAT$, the category of categories
and functors:

\begin{equation}
  \DiagramSquare{
    nw/style = pullback,
    width = 2.5cm,
    nw = \Comma{F}{G},
    ne = \ArrCat{\CCat},
    east = \prn{\partial_0,\partial_1},
    se = \CCat\times\CCat,
    south = \prn{F,G},
    sw = \DCat\times\ECat
  }
\end{equation}

\begin{nota}
  Let $X:\CCat$; we will write $\Mor[\brc{X}]{\Simplex{0}}{\CCat}$ for the constant functor $* \mapsto X$.
\end{nota}

\begin{nota}
  A common abuse of notation in the comma construction is that, when either $F$
  or $G$ is the identity functor $\Mor[\ArrId{\CCat}]{\CCat}{\CCat}$, they
  shall be written simply $\CCat$. For instance, $\Comma{\CCat}{G}$ is written for $\Comma{\ArrId{\CCat}}{G}$.
\end{nota}

An important instance of the comma construction is the \emph{slice} category.

\begin{defi}[Slice category]
  Given an object $X : \CCat$, the \emph{slice} or ``over-category'' of $\CCat$
  at $X$ is the comma category $\Sl{\CCat}{X} = \Comma{\CCat}{\brc{X}}$. The
  objects of $\Sl{\CCat}{X}$ can be seen to be arrows $\Mor{Y}{X}$; morphisms
  in the slice are commutative triangles.
\end{defi}

\begin{fact}
  In a category with pullbacks $\CCat$, a morphism $\Mor[f]{X}{Y}$ induces a functor
  $\Mor[f^*]{\Sl{\CCat}{Y}}{\Sl{\CCat}{X}}$ sending $\Mor[g]{Z}{Y}$ to
  $\Mor[f^*g]{Z \times_Y X}{X}$.
\end{fact}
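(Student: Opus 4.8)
The plan is to carry out the standard construction, the only subtlety being that ``a category with pullbacks'' should be read as a category equipped with a \emph{choice} of pullback for each cospan. Fix such a choice; then for each object $\Mor[g]{Z}{Y}$ of $\Sl{\CCat}{Y}$ we have a distinguished pullback square with corner $Z\times_Y X$ and projections $\Mor[\pi_g]{Z\times_Y X}{Z}$ and $\Mor[\rho_g]{Z\times_Y X}{X}$ satisfying $g\circ\pi_g = f\circ\rho_g$, and we set $f^*g \defeq \rho_g$, an object of $\Sl{\CCat}{X}$.

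First I would define the action on morphisms. A morphism $\Mor[h]{g}{g'}$ of $\Sl{\CCat}{Y}$ is a map $\Mor[h]{Z}{Z'}$ with $g'\circ h = g$. Then $g'\circ(h\circ\pi_g) = g\circ\pi_g = f\circ\rho_g$, so $(h\circ\pi_g,\rho_g)$ is a cone on the cospan defining $Z'\times_Y X$, and the universal property yields a \emph{unique} map $\Mor[f^*h]{Z\times_Y X}{Z'\times_Y X}$ with $\pi_{g'}\circ f^*h = h\circ\pi_g$ and $\rho_{g'}\circ f^*h = \rho_g$. The second equation is exactly the assertion that $f^*h$ is a morphism $\Mor{f^*g}{f^*g'}$ of $\Sl{\CCat}{X}$.

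Then I would verify the functor laws, both of which reduce to the uniqueness half of the universal property. The identity $\mathrm{id}_{Z\times_Y X}$ trivially satisfies the two equations characterising $f^*(\mathrm{id}_g)$, so they coincide. Given composable $\Mor[h]{g}{g'}$ and $\Mor[k]{g'}{g''}$, the map $f^*k\circ f^*h$ satisfies $\pi_{g''}\circ(f^*k\circ f^*h) = k\circ h\circ\pi_g = (k\circ h)\circ\pi_g$ and $\rho_{g''}\circ(f^*k\circ f^*h) = \rho_{g'}\circ f^*h = \rho_g$, which are precisely the equations characterising $f^*(k\circ h)$; hence $f^*(k\circ h) = f^*k\circ f^*h$.

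The ``hard part'' is purely a matter of bookkeeping and conventions: there is no real obstacle, only the need to be explicit that $f^*$ depends on the chosen pullbacks (a different choice yields a canonically isomorphic functor) and that every equation above is an instance of uniqueness of mediating maps. Indeed, one could dispense with the calculation entirely by observing that $f^*$ is nothing but reindexing along $f$ in the codomain fibration $\Mor[\partial_1]{\ArrCat{\CCat}}{\CCat}$ of the preceding exercise, restricted to the fibres $\Sl{\CCat}{Y}$ and $\Sl{\CCat}{X}$ over $Y$ and $X$; functoriality of reindexing is then a general fact about fibrations.
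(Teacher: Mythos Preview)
Your proof is correct and entirely standard. Note, however, that the paper states this as a \texttt{fact} with no proof at all; it is treated as well-known, placed immediately after the exercise showing that $\partial_1 : \ArrCat{\CCat}\to\CCat$ is a fibration with cartesian lifts given by pullback. Your final paragraph, observing that $f^*$ is just reindexing in that codomain fibration restricted to fibers, is precisely the implicit justification the paper intends by this juxtaposition.
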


\begin{nota}
  We will adopt a common abuse of notation and suppress the \emph{weakening} functor
  $\Mor[!_X^*]{\CCat}{\Sl{\CCat}{X}}$ when it is unambiguous.
\end{nota}

\begin{defi}
  Let $\kappa$ be a regular cardinal; we say that a category $\CCat$ is
  $\kappa$-(co)complete when $\CCat$ has all (co)limits of $\kappa$-small diagrams;
  a functor that preserves these (co)limits is called $\kappa$-(co)continuous.
  When $\kappa$ is omitted, a sufficiently large strongly inaccessible cardinal
  is assumed.
\end{defi}

\subsubsection{Presheaves, representability, and discrete fibrations}\label{sec:presheaves}

\begin{defi}
  A \emph{presheaf} on $\CCat$ is a functor $\Mor[F]{\OpCat{\CCat}}{\SET}$; the
  category of presheaves $\Hom{\OpCat{\CCat}}{\SET}$ is written $\Psh{\CCat}$.
\end{defi}

Presheaves capture the geometric intuition of probing a space or other object
by small figures: the role of contexts and substitutions in (strict) type
theory supplies type theorists and logicians with a useful concrete intuition
for presheaves.
A more structural perspective on presheaves is, however, essential: the
category $\Psh{\CCat}$ may be characterized universally as the
\emph{free cocompletion} of $\CCat$, equipping $\CCat$ with new colimits.
When $\CCat$ already has some colimits, it is important to note that the new
ones do \emph{not} coincide with the old ones.

\begin{con}[The Yoneda embedding]
  To be more precise, there is a universal functor
  $\Mor[\Yo[\CCat]]{\CCat}{\Psh{\CCat}}$, called the \emph{Yoneda embedding},
  taking each object $C:\CCat$ to a ``formal colimit'' $\Yo[\CCat]{C} =
  \Hom[\CCat]{\bullet}{C}$, such that every functor
  $\Mor{\CCat}{\ECat}$ with $\ECat$ cocomplete factors as $\Yo[\CCat]$ and a cocontinuous functor $\tilde{F}$ in
  an essentially unique way:
  \begin{equation*}
    \begin{tikzpicture}[diagram,baseline = (PrC.base)]
      \node (C) {$\CCat$};
      \node (E) [right = 3cm of C] {$\ECat$};
      \node (PrC) [between = C and E, yshift = -2.5cm] {$\Psh{\CCat}$};
      \path[->] (C) edge node [above] {$F$} (E);
      \path[->] (C) edge node [sloped,below] {$\Yo[\CCat]$} (PrC);
      \path[->,exists] (PrC) edge node [sloped,below] {$\tilde{F}$} (E);
    \end{tikzpicture}
    \qedhere
  \end{equation*}
\end{con}

\begin{lem}[Yoneda]\label{lem:yoneda}
  For each presheaf $X:\Psh{\CCat}$, we have the following isomorphism:
  \[
    \Hom[\Psh{\CCat}]{\Yo[\CCat]{C}}{X} \cong X\prn{C}
  \]
  As a consequence, the Yoneda embedding is full and faithful.
\end{lem}

\begin{defi}
  A presheaf $X : \Psh{\CCat}$ is called \emph{representable} when it lies in
  the essential image of $\Yo[\CCat]$, \ie $X$ is isomorphic to $\Yo[\CCat]{C}$
  for some $C:\CCat$.
\end{defi}

The notion of representable object is extended to maps in a canonical way, by
considering fibers over representable objects.

\begin{defi}[Representable natural transformation]\label{def:representable-nat-trans}
  A representable natural transformation is a map $\Mor[f]{Y}{X}:\Psh{\CCat}$ whose every fiber over
  a representable object is representable. In other words, the fiber product of $f$ with any
  $\Mor[x]{\Yo[\CCat]{C}}{X}$ is representable.
  \begin{equation*}
    \DiagramSquare{
      nw/style = pullback,
      nw = \Yo[\CCat]\prn{C.x},
      sw = \Yo[\CCat]{C},
      ne = Y,
      se = X,
      west = x^*f,
      south = x,
      east = f,
    }
  \end{equation*}
  In Section~\ref{sec:taichi}, we observe that the representing object of this fiber product plays a
  role analogous to context extension, so we denote it $C.x$.
\end{defi}

\NewDocumentCommand\CatEl{mm}{#1/#2}

Representable natural transformations are a prime example of Grothendieck's
``relative point of view'', extending a notion that is first defined on
objects to have sense on \emph{morphisms}.
It is useful to remark that the slice $\Sl{\Psh{\CCat}}{X}$ is itself the
category of presheaves $\Psh{\CatEl{\CCat}{X}}$ on the \emph{category of
elements} $\CatEl{\CCat}{X}$ of $X$, and that the representability of the map
$\Mor[f]{Y}{X}$ agrees with the representability of the object $f :
\Sl{\Psh{\CCat}}{X}$.

\begin{con}[Category of elements]
  The \emph{category of elements} $\CatEl{\CCat}{X}$ of a presheaf $X : \Psh{\CCat}$ has as
  objects pairs $C/x$ with $x \in X(C)$ and morphisms $\Mor[f^\dagger
  x]{D/f^*x}{C/x}$ for each $\Mor[f]{D}{C}$ and $x\in X(C)$.
\end{con}

In fact, the category of elements of a presheaf $X:\Psh{\CCat}$ is the total
category of a \emph{discrete fibration} over $\CCat$.

\begin{defi}
  A fibration $p : \ECat \to \BCat$ is discrete if $pf = \ArrId{\BCat}$
  implies that $f = \ArrId{\ECat}$; equivalently, if the fibers of $p$ are
  discrete categories.
  The collection of discrete fibrations over $\CCat$ forms a full subcategory
  $\DF{\CCat}\subseteq\mathbf{Fib}_{\CCat} \subseteq \Sl{\CAT}{\CCat}$, with
  morphisms given by commuting triangles.
\end{defi}

\begin{lem}
  Let $X:\Psh{\CCat}$ be a presheaf; the functor
  $\Mor[p_X]{\CatEl{\CCat}{X}}{\CCat}$ that takes each $C/x$ to $C$ is a
  discrete fibration. Conversely, letting $\Mor[F]{\ECat}{\CCat}$ be a discrete fibration, we
  may define a presheaf $F_\bullet:\Psh{\CCat}$ in which each $F_C$ is the pullback of $F$ along $\Mor[!_C]{\Simplex{0}}{\CCat}$.
\end{lem}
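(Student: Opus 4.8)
The plan is to check the two constructions in turn and then remark that they are mutually inverse, recovering the equivalence $\DF{\CCat}\simeq\Psh{\CCat}$.

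For the first assertion, fix $X:\Psh{\CCat}$. The decisive bookkeeping fact, immediate from the definition of $\CatEl{\CCat}{X}$, is that for a fixed $\Mor[u]{D}{C}$ and a fixed object $C/x$ there is \emph{at most one} morphism of $\CatEl{\CCat}{X}$ lying over $u$ with codomain $C/x$, and such a morphism exists exactly when its domain is $D/u^{*}x$. I would therefore take $\Mor[u^{\dagger}x]{D/u^{*}x}{C/x}$ as the candidate cartesian lift. To verify cartesianness, suppose $\Mor[g]{E/y}{C/x}$ lies over $\Mor[h]{E}{C}$ and that $h = u\circ v$ for some $\Mor[v]{E}{D}$; functoriality of $X$ gives $y = h^{*}x = v^{*}u^{*}x$, so there is a unique morphism $E/y \to D/u^{*}x$ over $v$, and postcomposing it with $u^{\dagger}x$ recovers $g$ by the uniqueness fact. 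Hence $p_{X}$ is a fibration, and it is discrete because $p_{X}(f) = \mathrm{id}_{C}$ forces $x = x'$ and then $f = \mathrm{id}_{C/x}$.

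For the second assertion, let $\Mor[F]{\ECat}{\CCat}$ be a discrete fibration. Discreteness says precisely that each fibre of $F$ is a discrete category, so the pullback $F_{C}$ of $F$ along $\Mor[!_{C}]{\Simplex{0}}{\CCat}$ is (the discrete category on) the \emph{set} of objects of $\ECat$ over $C$ --- which is exactly why $F_{\bullet}$ can land in $\SET$ rather than $\CAT$. For $\Mor[f]{D}{C}$ I would define $F_{\bullet}(f)(E) \defeq f^{*}E$, the domain of the chosen cartesian lift $\Mor[f^{\dagger}E]{f^{*}E}{E}$. The step I would be most careful about is that in a \emph{discrete} fibration this lift is not merely unique up to a unique isomorphism but literally unique, since any comparison isomorphism lies over $\mathrm{id}_{C}$ and is hence an identity; consequently the Grothendieck construction is strict here, $\mathrm{id}_{C}$ lifts to an identity, and a composite of cartesian arrows is again cartesian, so $(g\circ f)^{*}E = f^{*}(g^{*}E)$ on the nose. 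This yields $F_{\bullet}(\mathrm{id}_{C}) = \mathrm{id}_{F_{C}}$ and $F_{\bullet}(g\circ f) = F_{\bullet}(f)\circ F_{\bullet}(g)$, i.e.\ a genuine presheaf $\OpCat{\CCat}\to\SET$.

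To close the loop I would finally observe that the two assignments are natural and mutually inverse up to isomorphism: $(p_{X})_{\bullet}(C)$ is the fibre of $\CatEl{\CCat}{X}$ over $C$, namely $X(C)$, with reindexing given exactly by $u^{*}$; and $\CatEl{\CCat}{F_{\bullet}}$ is isomorphic over $\CCat$ to $\ECat$ via $C/E \mapsto E$. As flagged above, the only genuinely substantive point is the strictness in the second construction: it is precisely discreteness that collapses the usual pseudofunctoriality of a cloven fibration into honest functoriality of $F_{\bullet}$, while everything else is routine.
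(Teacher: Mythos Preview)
Your proof is correct. The paper states this lemma without proof, treating it as standard background from the theory of discrete fibrations (see e.g.\ Johnstone's \emph{Sketches of an Elephant}), so there is no proof in the paper to compare against. Your verification is sound: the key observations are exactly the ones you flag---that morphisms in $\CatEl{\CCat}{X}$ over a fixed $u$ with fixed codomain are unique by construction (giving both the fibration property and discreteness at once), and that discreteness of $F$ collapses the usual pseudofunctoriality of cleavages into strict functoriality of $F_\bullet$. Your closing remark on mutual inverses anticipates the paper's next sentence, which asserts the equivalence $\DF{\CCat}\simeq\Psh{\CCat}$ without further argument.
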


As might be expected, the assignment $X \mapsto p_X$ extends to a functor
$\Mor[p_\bullet]{\Psh{\CCat}}{\DF{\CCat}}$ which is full, faithful, and
essentially surjective (\ie an equivalence of categories). Therefore
$\DF{\CCat}$ may be used as an alternative to $\Psh{\CCat}$, and has its own
Yoneda embedding $\Mor|embedding|[\DFYo[\CCat]]{\CCat}{\DF{\CCat}}$.

In the context of discrete fibrations, there is an alternative characterization
of representable maps in addition to
the Grothendieck-style extension of the essential image of the Yoneda embedding
$\Mor|embedding|[\DFYo[\CCat]]{\CCat}{\DF{\CCat}}$ to maps via pullbacks.

\begin{lem}[Representability in discrete fibrations~\cite{awodey-butz-simpson-streicher:2014,awodey:2018:natural-models}]
  A map $\Mor[f]{G}{F} : \DF{\CCat}$ is representable iff the \emph{upstairs}
  functor $\Mor[\partial_0f]{\partial_0G}{\partial_0F}:\CAT$ has a
  (non-fibered) right adjoint, which we may write $\partial_0f \dashv
  \ReprAdj{f}$.
\end{lem}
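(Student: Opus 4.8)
The plan is to deploy the pointwise criterion for the existence of adjoint functors. We work throughout in $\Psh{\CCat}$ by means of the equivalence $p_\bullet$, so that $\partial_0(F) = \CatEl{\CCat}{F}$ and likewise $\partial_0(G) = \CatEl{\CCat}{G}$, and the functor $\partial_0(f)$ sends an element $D/g$ (with $g\in G(D)$) to $D/f_D(g)$, where $f_D$ is the relevant component of the natural transformation $f$. Recall that $\partial_0(f)$ admits a right adjoint exactly when, for every object $a$ of $\partial_0(F)$, the comma category $\Comma{\partial_0(f)}{\braces{a}}$ has a terminal object; in that case $\ReprAdj{f}$ takes $a$ to the $\partial_0(G)$-component of that terminal object, and both its functoriality and the adjunction isomorphism follow formally from the universal property (no completeness of $\partial_0(G)$ is needed). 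So it suffices to identify these comma categories.

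By the Yoneda lemma an object $a$ of $\partial_0(F)$ amounts to a morphism $x : \Yo[\CCat]{C}\to F$ with $C$ determined, and the crux of the proof is the claim that there is an isomorphism of categories $\Comma{\partial_0(f)}{\braces{a}} \cong \CatEl{\CCat}{x^*f}$, where $x^*f$ is the fibre of $f$ over $x$ as in Definition~\ref{def:representable-nat-trans}. This is a direct computation. An object of the left-hand side is a triple consisting of an object $D$ of $\CCat$, an element $g\in G(D)$, and a morphism $h : D\to C$ such that $h^*x = f_D(g)$ in $F(D)$; but since limits of presheaves are computed pointwise, such triples are exactly the elements of $(x^*f)(D)$, the pullback of $f_D : G(D)\to F(D)$ along $\Yo[\CCat]{C}(D)=\Hom[\CCat]{D}{C}\to F(D)$. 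Likewise a morphism of the comma category over $a$ is a map $k : D\to D'$ with $k^*g' = g$ that moreover commutes with the two legs down to $C$, i.e.\ $h'\circ k = h$; these are precisely the morphisms of $\CatEl{\CCat}{x^*f}$. Hence the bijection on objects upgrades to an isomorphism of categories.

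It then remains to recall the elementary fact that, for a presheaf $H$ on $\CCat$, the category of elements $\CatEl{\CCat}{H}$ has a terminal object if and only if $H$ is representable: a terminal object is an element $h\in H(D_0)$ for which $h^*(-):\Hom[\CCat]{D}{D_0}\to H(D)$ is a bijection for every $D$, i.e.\ for which the associated map $\Yo[\CCat]{D_0}\to H$ is invertible. Chaining the equivalences: $\partial_0(f)$ has a right adjoint iff every $\Comma{\partial_0(f)}{\braces{a}}$ has a terminal object, iff every $\CatEl{\CCat}{x^*f}$ has one as $x$ ranges over maps $\Yo[\CCat]{C}\to F$, iff every such fibre $x^*f$ is representable --- which is verbatim the condition that $f$ be a representable natural transformation.

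I expect the only genuine work to be the bookkeeping in the comma-category computation: keeping straight the directions of the legs $h$, $h'$, $k$ and checking that the morphism part of the isomorphism $\Comma{\partial_0(f)}{\braces{a}}\cong\CatEl{\CCat}{x^*f}$ is correct, with nothing deeper involved. It is also worth flagging, as the statement does by saying ``non-fibered'', that the right adjoint $\ReprAdj{f}$ produced this way need not be a functor over $\CCat$, since the $\partial_0(G)$-component of the terminal object of $\Comma{\partial_0(f)}{\braces{a}}$ --- the representing object $D_0$ of the fibre $x^*f$ --- generally lies over a different object of $\CCat$ than $p_F(a)=C$.
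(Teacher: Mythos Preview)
Your argument is correct. The paper does not actually supply its own proof of this lemma: it is stated with citations to Awodey--Butz--Simpson--Streicher and Awodey's natural-models paper and left at that, with the later Lemma~\ref{lem:repr-adj-pb} (cited from Uemura) recording only the explicit formula for $\ReprAdj{f}$ on objects. Your proof via the pointwise adjoint criterion --- identifying each comma category $\Comma{\partial_0(f)}{\braces{a}}$ with the category of elements of the fibre $x^*f$, and then invoking that a presheaf is representable iff its category of elements has a terminal object --- is exactly the standard route, and indeed produces the same description of $\ReprAdj{f}(C/x)$ as the representing object of the pullback $\Yo{C}\times_F G$ that the paper later uses.
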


In contrast with presheaves, it is natural to simultaneously work with discrete
fibrations over different base categories: we may write
$\DF{}\subseteq\mathbf{Fib}\subseteq\ArrCat{\CAT}$ for the category of fibrations
over arbitrary base, rendered a full subcategory of the category of arrows of
$\CAT$. Therefore, a morphism of discrete fibrations is a commuting square:
\[
  \DiagramSquare{
    nw = \ECat_0,
    sw = \BCat_0,
    ne = \ECat_1,
    se = \BCat_1,
    west = p_0,
    east = p_1,
    north = \alpha_1,
    south = \alpha_0,
  }
\]

\begin{lem}
  The map sending a discrete fibration to its base category $\FibMor[\partial_1]{\DF{}}{\CAT}$ is a
  fibration. We will write $\DF{\CCat}$ for the fiber $\partial_1[\CCat]$.
\end{lem}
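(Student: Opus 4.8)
The plan is to present $\FibMor[\partial_1]{\DF{}}{\CAT}$ as the restriction of the codomain fibration $\FibMor[\partial_1]{\ArrCat{\CAT}}{\CAT}$ along the full inclusion $\DF{}\hookrightarrow\ArrCat{\CAT}$, and to verify that this restriction retains the cartesian lifts it needs. Since $\CAT$ has pullbacks, the exercise above tells us that $\partial_1\colon\ArrCat{\CAT}\to\CAT$ is a fibration whose cartesian lift of a functor $\Mor[u]{\BCat_0}{\BCat_1}$ against an object $\Mor[p_1]{\ECat_1}{\BCat_1}$ is the pullback square
\[
  \begin{array}{ccc}
    u^*\ECat_1 & \longrightarrow & \ECat_1\\
    \downarrow & & \downarrow\\
    \BCat_0 & \longrightarrow & \BCat_1
  \end{array}
\]
computed in $\CAT$, whose bottom edge is $u$, whose right edge is $p_1$, and whose left edge $u^*p_1$ is the desired lift. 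So it suffices to check: (i) when $p_1$ is a \emph{discrete} fibration, so is $u^*p_1$ --- so that the lift lands inside $\DF{}$; and (ii) the square above, viewed as a morphism of $\DF{}$, is still cartesian for the restricted functor $\partial_1\colon\DF{}\to\CAT$.

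Point (i) is the only step requiring real work, though it is mild. Recall the standard characterization that $\Mor[p]{\ECat}{\BCat}$ is a discrete fibration exactly when the comparison $\ArrCat{\ECat}\to\ArrCat{\BCat}\times_{\BCat}\ECat$, induced by applying $p$ and by codomain evaluation $\partial_1$, is an isomorphism --- equivalently, when the evident naturality square relating $\ArrCat{p}$, $p$ and $\partial_1$ is a pullback in $\CAT$. Since the arrow-category functor $\ArrCat{({-})}$ preserves pullbacks (functor categories have pointwise limits), pasting the pullback square defining $u^*p_1$ onto the pullback square witnessing the discreteness of $p_1$ and invoking the pasting lemma shows that the corresponding square for $u^*p_1$ is again a pullback; hence $u^*p_1$ is a discrete fibration. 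Concretely: a cartesian lift in $u^*\ECat_1$ of a morphism $\Mor[\alpha]{a}{a'}$ at an object $(a',e')$ lying over $a'$ is the pair $(\alpha,\phi)$, where $\phi$ is the unique lift of $u(\alpha)$ at $e'$ along $p_1$; uniqueness of $\phi$ forces uniqueness of the lift, and in a discrete fibration every morphism is cartesian.

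For point (ii), the crux is that $\DF{}\hookrightarrow\ArrCat{\CAT}$ is \emph{full}. Given any morphism $\Mor[g]{q}{p_1}$ of $\DF{}$ and any functor $v$ with $\partial_1(g)=u\circ v$, the unique mediating morphism $h\colon q\to u^*p_1$ lying over $v$ through which $g$ factors, provided by cartesianness in $\ArrCat{\CAT}$, is automatically a morphism of $\DF{}$, since $q$ and $u^*p_1$ both lie in $\DF{}$ and the inclusion is full; and its uniqueness there is inherited from $\ArrCat{\CAT}$. So the displayed square is a cartesian lift of $u$ for $\partial_1\colon\DF{}\to\CAT$, and as $u$ was arbitrary, $\partial_1\colon\DF{}\to\CAT$ is a fibration. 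Note that both checks are genuinely needed: without closure of $\DF{}$ under pullbacks the lifts would not exist inside $\DF{}$, and without fullness of the inclusion a lift that happens to lie in $\DF{}$ need not remain cartesian there.
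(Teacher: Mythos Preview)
Your proof is correct. The paper states this lemma without proof, so there is nothing to compare against; your argument is the standard one and would serve perfectly well as the omitted justification. The two-step structure---(i) discrete fibrations are stable under pullback in $\CAT$, so the cartesian lifts of the codomain fibration restrict to $\DF{}$, and (ii) fullness of $\DF{}\hookrightarrow\ArrCat{\CAT}$ ensures these lifts remain cartesian after restriction---is exactly right, and both the abstract pasting argument and the concrete description of lifts in $u^*\ECat_1$ for step (i) are sound.
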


\subsubsection{Density of the Yoneda embedding}

The Yoneda lemma (Lemma~\ref{lem:yoneda}) is indispensable, but the closely
related \emph{density theorem} more deeply exposes the character of
$\Psh{\CCat}$ as free cocompletion.

\begin{defi}[Density]
  A functor $\Mor[F]{\CCat}{\ECat}$ is called \emph{dense} if for each
  $E:\ECat$, the canonical cocone $\Mor[\delta_F^E]{\mathsf{D}_F^E}{\brc{E}}$
  is universal (\ie a colimit).
\end{defi}

\begin{lem}[Density]\label{lem:dual-yoneda}
  The Yoneda embedding $\Mor|embedding|[\Yo[\CCat]]{\CCat}{\Psh{\CCat}}$ is a
  dense functor.
\end{lem}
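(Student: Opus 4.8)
The plan is to reduce the statement directly to the Yoneda lemma (Lemma~\ref{lem:yoneda}). Recall that a functor $\Mor[F]{\DCat}{\ECat}$ is \emph{dense} exactly when its \emph{nerve} $\Mor[N_F]{\ECat}{\Psh{\DCat}}$, defined on objects by $N_F(E)\defeq\Hom[\ECat]{F(-)}{E}$, is fully faithful; equivalently, when every $E:\ECat$ is canonically the colimit of the diagram $\Comma{F}{\braces{E}}\to\DCat\to\ECat$ obtained by composing the canonical projection with $F$. Several such formulations of density circulate; the first observation is that the fully-faithfulness one is the one that renders the proof a triviality, so I will establish that for $F\defeq\Yo[\CCat]$ and read off the colimit formulation afterward.

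First I would compute the nerve of the Yoneda embedding: this is the endofunctor $\Mor[N_{\Yo}]{\Psh{\CCat}}{\Psh{\CCat}}$ with $N_{\Yo}(X)(C) = \Hom[\Psh{\CCat}]{\Yo{C}}{X}$. By the Yoneda lemma there is an isomorphism $\Hom[\Psh{\CCat}]{\Yo{C}}{X}\cong X(C)$, and the substance of the argument is to observe that under the explicit Yoneda map $\alpha\mapsto\alpha_C\prn{\ArrId{C}}$ this isomorphism is natural in $C:\CCat$ --- making it an isomorphism of presheaves $N_{\Yo}(X)\cong X$ --- and also natural in $X:\Psh{\CCat}$, so that the family assembles into a natural isomorphism $N_{\Yo}\cong\ArrId{\Psh{\CCat}}$. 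Both conditions are routine diagram chases, unwinding the definitions of precomposition of natural transformations and of the Yoneda correspondence.

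Since the identity functor is trivially fully faithful and fully-faithfulness is stable under natural isomorphism, $N_{\Yo}$ is fully faithful --- which is exactly the claim that $\Yo[\CCat]$ is dense. To recover the colimit formulation, I would transport the tautological universal cocone on $N_{\Yo}(X)$ back along $N_{\Yo}\cong\ArrId{\Psh{\CCat}}$, which exhibits every presheaf $X$ as the colimit of the composite functor $\CatEl{\CCat}{X}\to\CCat\to\Psh{\CCat}$ sending $C/x$ first to $C$ and then to $\Yo{C}$; this is the familiar slogan that every presheaf is a colimit of representables, universality being verified objectwise in $\SET$, where colimits are computed pointwise --- again an instance of Yoneda. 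The only delicate point is the bookkeeping of this double naturality, together with fixing the definition of density so that the argument becomes the above one-liner; beyond that there is no genuine obstacle.
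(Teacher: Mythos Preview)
The paper states this lemma without proof, treating it as a standard background fact; there is no argument to compare against. Your proposal is a correct and standard proof. The only remark worth making is that the paper's \emph{definition} of density is the colimit one (the canonical cocone $\Mor[\delta_F^E]{\mathsf{D}_F^E}{\brc{E}}$ is universal), not the nerve-fully-faithful characterisation you invoke, so strictly speaking you are appealing to the equivalence of these two formulations as a black box. You do acknowledge this and sketch how to recover the colimit statement at the end; if you want the argument to be fully self-contained against the paper's definition, you should either spell out that equivalence or work directly with the colimit formulation (verifying pointwise in $\SET$ that the canonical cocone of representables over $X$ is colimiting, which is again just Yoneda plus the pointwise computation of colimits in presheaves). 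Either way the substance is the same and your proof is fine.
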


Of course, to understand Lemma~\ref{lem:dual-yoneda} we must first have an
understanding of density in the category theoretic sense. Every functor
$\Mor[F]{\CCat}{\ECat}$ generates a \emph{canonical cocone} over each object
$E:\ECat$ for the following diagram $\Mor[\mathsf{D}_F^E]{\Comma{F}{\brc{E}}}{\ECat}$:
\[
  \mathsf{D}_F^E \defeq
  \CanonicalCojoneDiagram{F}{\CCat}{\ECat}{E}
\]

The canonical cocone $\Mor[\delta_F^E]{\mathsf{D}_F^E}{\brc{E}}$ takes each
$\alpha_i : \Comma{F}{\brc{E}}$ to the underlying map
$\Mor[\alpha_i]{F\prn{E_i}}{E}$. Visually, the cocone might look something like
this:
\begin{equation}
  \begin{tikzpicture}[diagram,baseline=(E.base)]
    \node (E) {$E$};
    \node (F/Ei) [left = 5cm of E] {$F{E_i}$};
    \node (F/Ej) [above = of F/Ei] {$F{E_j}$};
    \node (F/Eh) [below = of F/Ei] {$F{E_h}$};
    \node (F/Ek) [between = F/Ei and F/Eh] {$F{E_k}$};
    \path[->] (F/Ei) edge node [desc] {$\alpha_i$} (E);
    \path[->] (F/Ej) edge node [desc] {$\alpha_j$} (E);
    \path[->] (F/Ek) edge node [desc] {$\alpha_k$} (E);
    \path[->] (F/Eh) edge node [desc] {$\alpha_h$} (E);
    \path[->] (F/Ej) edge node [left] {$F{f}$} (F/Ei);
  \end{tikzpicture}
\end{equation}

Therefore, Lemma~\ref{lem:dual-yoneda} says exactly that every presheaf is a
formal colimit of representable objects in a canonical way.

\subsubsection{Philo-logie and Diaconescu's theorem}

Categories of presheaves $\ECat = \Psh{\CCat}$ are complete, cocomplete,
locally cartesian closed, and exhibit certain non-trivial compatibilities
between certain limits and colimits which may be boiled down to the
existence of a classifying family for subobjects (monomorphisms). This
\emph{subobject classifier} is a monomorphism
$\Mor|>->|[\PropTrue[\ECat]]{\ObjTerm{\ECat}}{\Prop[\ECat]}$ that is universal
in the sense that all monomorphisms arise \emph{in a unique way} from
$\PropTrue[\ECat]$ by pullback:
\begin{equation}
  \DiagramSquare{
    nw/style = pullback,
    west/style = {>->},
    east/style = {>->},
    south/style = exists,
    nw = Y,
    sw = X,
    west = m,
    ne = \ObjTerm{\ECat},
    se = \Prop[\ECat],
    east = \PropTrue[\ECat],
    south = {\color{gray}\exists!} \widetilde{m}
  }
\end{equation}

A category with these properties may be referred to as a \emph{logos}
following the terminology of Anel and Joyal~\cite{anel-joyal:2021}, though we
defer the actual definition of logoi until Definition~\ref{def:logos}. Recalling our
characterization of $\Psh{\CCat}$ as the \emph{free} cocompletion of $\CCat$,
we argue that the correct way to understand categories of presheaves is as a
class of logoi that are generated in a specific way: namely, all colimits are
added freely without imposing any relations.

\NewDocumentCommand\Opens{m}{\mathscr{O}\prn{#1}}

We may also generate logoi in which the added colimits satisfy some relations;
historically, the most common way to do so is to augment the category $\CCat$
with the data of a \emph{coverage}. The canonical motivating example arises
when considering presheaves on the frame of open sets $\Opens{X}$ of a
topological space $X$. The logos of presheaves $\Psh{\Opens{X}}$ does
\emph{not} have the geometrically correct colimits corresponding to the gluing
together of components of an open cover $\brc{\Mor{U_i}{U}}$, but some
presheaves treat $\Yo[\Opens{X}]{U}$ ``as if'' it were the appropriate colimit of the covering diagram to varying degrees:
\begin{enumerate}
  \item A presheaf $X$ that has \emph{no more than} one section $x \in
    X{U}$ compatible with a family of sections $\brc{x_i\in X{U_i}}$
    defined on the cover is called \emph{separated}.
  \item A presheaf $X$ that has exactly one such section $x\in X{U}$ for
    each such compatible family of sections is called \emph{local}.\footnote{Typically local
      presheaves are referred to as sheaves, but we adopt this non-standard terminology for symmetry
    with separated presheaves.}
\end{enumerate}

\noindent
Separated presheaves will play an important role in the algebraic syntax and semantics of
XTT (Section~\ref{sec:functorial-semantics}), in which we wish to ensure that
there is at most one path $\Mor{\DIM}{A}$ compatible with a boundary
$\Mor{\symbf{1}+\symbf{1}}{A}$ in the following sense:
\begin{equation}
  \begin{tikzpicture}[diagram,baseline = (I.base)]
    \node (2) {$\symbf{1}+\symbf{1}$};
    \node (I) [below = of 2] {$\DIM$};
    \node (A) [right = of 2] {$A$};
    \path[->] (2) edge node [above] {$a$} (A);
    \path[->] (2) edge node [left] {$\brk{\Dim0\mid\Dim1}$} (I);
    \path[->,exists] (I) edge node [sloped,below] {\scriptsize\emph{unique if exists}} (A);
  \end{tikzpicture}
\end{equation}

Not all presheaves are local; one may correct this defect by
\emph{quotienting} or \emph{localizing} the logos, forcing certain maps
out of colimits to become isomorphisms. Although generally localizing a category may result
in a category which is almost entirely unrecognizable, in this case the resultant localization
has a straightforward characterization: it is precisely the full subcategory spanned
by local presheaves. More formally, the inclusion of this subcategory into $\Psh{\CCat}$ has a left
adjoint which preserves finite limits and this left adjoint sends certain maps out of
colimits to isomorphisms. This adjunction presents the subcategory of local presheaves
as a \emph{left exact localization} of $\Psh{\CCat}$.
We may therefore define a logos to be a left exact localization of the category
of presheaves on a small category $\CCat$.

\begin{defi}[Logos]%
  \label{def:logos}
  A logos is a left exact localization of the category of presheaves on a small
  category $\CCat$; a morphism $\Mor{\ECat}{\FCat}$ between logoi is a functor
  between the underlying categories that is both left exact and cocontinuous.
  Such a morphism is often called an \emph{algebraic morphism}. We will write
  $\LOGOS$ for the 2-category of logoi, with 2-cells given by natural
  transformations between the underlying functors.
\end{defi}

\begin{rem}[Direct image]
  Because logoi are left exact localizations of presheaves, they are locally
  presentable~\cite{adamek-rosicky:1994}. While we do not make extensive use of this fact, we do
  require the following consequences of local presentability: all logoi are complete and cocomplete
  and every algebraic morphism $\Mor[f^*]{\ECat}{\FCat}$ has a right adjoint
  $\Mor[f_*]{\FCat}{\ECat}$.
\end{rem}

Two different base categories $\CCat,\DCat$ may yet generate the same presheaf
logos; however, presheaf logoi enjoy a special relationship with their base
categories embodied in Diaconescu's Theorem below~\cite{borceux:1994:vol2}.

\begin{thm}[Diaconescu's Theorem]\label{thm:diaconescu}
  A presheaf logos $\Psh{\CCat}$ classifies \emph{flat functors} out of $\CCat$
  in the sense that the comparison map $\Hom[\LOGOS]{\Psh{\CCat}}{\ECat}\to
  \Hom{\CCat}{\ECat}_{\mathit{flat}}$ determined by precomposing with the
  Yoneda embedding is an equivalence of categories. In particular, each
  algebraic morphism $\Mor[f^*]{\Psh{\CCat}}{\ECat}$ corresponds to an
  essentially unique flat functor $\Mor[f^*\circ\Yo[\CCat]]{\CCat}{\ECat}$.
\end{thm}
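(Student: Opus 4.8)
The plan is to exhibit the equivalence via a pair of mutually quasi-inverse assignments, both of which are essentially forced by the universal property of $\Psh{\CCat}$ as the free cocompletion of $\CCat$. In one direction, send an algebraic morphism $\Mor[f^*]{\Psh{\CCat}}{\ECat}$ to its restriction $f^*\circ\Yo[\CCat]$ along the Yoneda embedding. In the other, send a functor $\Mor[F]{\CCat}{\ECat}$ to the essentially unique cocontinuous extension $\tilde{F}\colon\Psh{\CCat}\to\ECat$ with $\tilde{F}\circ\Yo[\CCat]\cong F$; by the density of the Yoneda embedding (Lemma~\ref{lem:dual-yoneda}) this $\tilde{F}$ is the pointwise left Kan extension $\mathrm{Lan}_{\Yo[\CCat]}F$, computed on each $X:\Psh{\CCat}$ as the canonical colimit $\mathrm{colim}_{C/x\in\CatEl{\CCat}{X}}F(C)$.

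First I would check that these assignments land where claimed, and that this is exactly where flatness enters. That $\tilde{F}$ is cocontinuous is immediate; the real content is that it is also left exact, and this is precisely what flatness of $F$ buys us --- depending on the chosen formulation, either flatness of $F$ \emph{means} that $\mathrm{Lan}_{\Yo[\CCat]}F$ is left exact, or it is expressed through a filtering property of the categories $\CatEl{\CCat}{X}$, from which left-exactness follows once one also invokes the commutation of filtered colimits with finite limits valid in any logos (being a left exact localization of a presheaf category). Conversely, if $f^*$ is algebraic it is in particular cocontinuous with restriction $f^*\circ\Yo[\CCat]$ to representables, so the uniqueness clause of the free-cocompletion property yields $f^*\cong\mathrm{Lan}_{\Yo[\CCat]}(f^*\circ\Yo[\CCat])$; since $f^*$ is left exact, this exhibits $f^*\circ\Yo[\CCat]$ as a flat functor, and simultaneously provides one of the two round-trip isomorphisms.

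Next I would verify the remaining round trip and upgrade the object-level correspondence to an equivalence of categories. The composite $F\mapsto\tilde{F}\mapsto\tilde{F}\circ\Yo[\CCat]$ returns $F$ up to canonical isomorphism because the Yoneda embedding is fully faithful (Lemma~\ref{lem:yoneda}) and left Kan extension along a fully faithful functor restricts back to the functor being extended. For the action on morphisms, a $2$-cell between algebraic morphisms restricts to a natural transformation between the associated flat functors, and conversely every natural transformation $F\to F'$ extends uniquely to $\tilde{F}\to\tilde{F}'$ --- again by the universal property of $\Psh{\CCat}$, or concretely because a cocontinuous functor out of $\Psh{\CCat}$, and a natural transformation between two such, are each pinned down by their restriction to representables via the density presentation of each presheaf. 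Checking compatibility with identities and composites then makes the two assignments into an equivalence of categories, which moreover can be made pseudonatural in $\ECat$.

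The step I expect to be the main obstacle is the left-exactness of $\mathrm{Lan}_{\Yo[\CCat]}F$ for flat $F$. Unwinding the colimit formula, one must show that $\tilde{F}$ preserves the terminal object, binary products, and equalizers (equivalently, pullbacks); each such verification reduces to manipulating the canonical colimit presentations (Lemma~\ref{lem:dual-yoneda}) of the presheaves involved and applying the exchange of filtered colimits with finite limits available in a logos, and the argument goes through exactly when the indexing categories $\CatEl{\CCat}{X}$ carry the requisite filtering property --- which, on most formulations, is taken as the very \emph{definition} of a flat functor. The bookkeeping here is routine once these filtering properties have been isolated, but it is the one place where the proof amounts to more than formal manipulation of universal properties.
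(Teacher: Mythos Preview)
Your proof sketch is a correct outline of the standard argument for Diaconescu's theorem, and the identification of left-exactness of the Kan extension as the crux is exactly right. However, the paper does not actually prove this theorem: it is stated with a citation to Borceux~\cite{borceux:1994:vol2} and used as a black box (most notably in Lemma~\ref{lem:cubical-nerve-is-flat}, where the authors invoke it to certify that the cubical nerve is flat by checking instead that its Yoneda extension is an algebraic morphism). So there is nothing in the paper to compare your argument against beyond the bare statement; your sketch simply fills in what the reference would supply.
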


A flat functor is a generalization of the notion of left exact functor which may be used in case the
domain category $\CCat$ is not finitely complete: in other words, a flat functor
$\Mor{\CCat}{\ECat}$ is one that preserves ``even the finite limits that don't exist''. While the
general definition is slightly complex (see~\cite[Definition 8.2.8]{borceux:1994:vol1}), for our
purposes it suffices to consider the special case where $\CCat$ is finitely complete and when a flat
functor is precisely one which preserves finite limits.

Flat functors play an essential role in the general gluing theorem for models of Martin-L\"of type
theory into logoi developed by Sterling and Angiuli~\cite{sterling-angiuli:2020} which we have
applied in this paper.

\subsubsection{Topos--logos duality} % chktex 8

Following the philosophy of Anel and Joyal~\cite{anel-joyal:2021}, we have
(perhaps surprisingly to some readers) not referred to categories of presheaves
as ``topoi''. This is because we prefer to think of a topos as a
\emph{geometrical} object, whereas a logos is \emph{algebraic} in nature: for
instance, Diaconescu's Theorem (Theorem~\ref{thm:diaconescu}) shows that the
category of presheaves is an invariant form of the \emph{theory} of flat
functors in the style of Lawvere's functorial semantics~\cite{lawvere:thesis}.

It is instructive to start from the prototype of geometry--algebra duality % chktex 8
embodied in the relationship between (sober) topological spaces and their
frames of open sets; in this case, the frame of opens is an algebraic object,
and the corresponding space is its geometric dual. By an analogy that may
be substantiated in a precise way, a logos is the algebraic object
corresponding to a topos, which is in contrast a kind of generalized space.

The duality between topoi and logoi is captured in a formal equivalence of
categories $\mathrm{Sh} : \Mor[\simeq]{\OpCat{\TOPOS}}{\LOGOS}$, taking a topos to its
``logos of sheaves''. While historically, many authors have thought of sheaves
as local presheaves for a specific generators-and-relations presentation of a
logos, Grothendieck insisted that this presentation is not at all the main
object of study~\cite{sga:4,grothendieck:rs}: sheaves should be thought of as
algebraic data varying over a (generalized) space, including both topological
spaces \emph{and} topoi. Indeed, it is enough to consider sheaves on topoi,
recalling that any sober space corresponds to an essentially unique
\emph{enveloping topos}~\cite{anel-joyal:2021}.

On the other hand, the algebraic perspective of \emph{logoi} and their
algebraic morphisms is the most germane to this paper, so we do not refer to
topoi in subsequent sections.

\subsection{Functorial semantics of dependent type theory \`a la Uemura}%
\label{sec:taichi}

Classically, the notion of an ``algebraic theory'' was understood in terms of
sets of operations and equations. For instance, the theory of monoids may be
written in terms of two operations: $\varepsilon$ of arity $[0]$ and $\odot$ of
arity $[2]$. Then, a set of equations is imposed on the set of trees generated
by $\brc{\varepsilon,\odot}$ to express the associativity and unit laws of a
monoid.
A model of a theory in this old-fashioned sense was then a \emph{structure} comprising the following data:
\begin{enumerate}
  \item a carrier set $A$,
  \item a map $\varepsilon_A : 1 \to A$,
  \item a map $\odot_A : A \times A \to A$,
  \item subject to the following equations:
    \begin{align*}
      \odot_A\prn{\varepsilon_A,x} &= x\\
      \odot_A\prn{x,\varepsilon_A} &= x\\
      \odot_A\prn{\odot_A\prn{x,y},z} &= \odot_A\prn{x,\odot_A\prn{y,z}}
    \end{align*}
\end{enumerate}

\noindent
Models in this sense arrange themselves into a category, with morphisms given
by functions between carrier sets that commute with the specified maps; of
course, this is nothing more than the category of monoids in $\SET$. Generally,
one considers models in categories other than $\SET$, a notion that makes
sense for any category $\CCat$ having the requisite structure (in this case,
finite products).

However, there are many other collections of operations and axioms that
equally well express the concept of \emph{monoid}, evidently exhibiting the
same categories of models. For instance, one may use the set of operations
$\brc{\mathrm{list}_n \mid n \in \mathbb{N}} \cup \brc{\odot}$ where each
operation $\mathrm{list}_n$ has arity $[n]$. Lawvere famously observed that
none of the important computations in universal algebra actually depend on
which specific operations and axioms are used to encode a theory, advocating a
perspective that regards the presentations above \emph{not} as the theories
themselves, but as structures lying over the theories.

A \emph{theory} for Lawvere is a category $\ThCat$ closed under certain structures
(\eg finite products, finite limits, \etc); a model of a theory in a category
$\CCat$ is a functor $\Mor{\ThCat}{\CCat}$ preserving (finite products,
finite limits, \etc). A collection of operations and axioms that generates a
theory is called a \emph{equational presentation} of that theory.

Lawvere's functorial perspective on theories and their models is called the
\emph{functorial semantics}~\cite{lawvere:thesis}.
Lawvere has observed that many of the fundamental operations by which new
theories are constructed from old theories are unnatural to describe in terms of
presentations, but are simple at the level of categories and functors.
One may continue to \emph{present} theories $\ThCat$ by sets of operations and
axioms as before, but the spirit of the functorial method is to freely adopt
whichever presentation is most useful in a specific context.

\subsubsection{Natural models, the judgmental essence of strict type theory}

We recall from Definition~\ref{def:representable-nat-trans} the notion of a
\emph{representable natural transformation} of presheaves: it is a family of
presheaves that, at every fiber over a representable object, is a
representable object. This notion, which first arose in the context of algebraic
geometry in the Grothendieck school~\cite{sga:4}, plays a fundamental role in
the semantics of dependent type theory as well as the categorical study of set
theory and
universes~\cite{awodey:2018:natural-models,awodey:2008,streicher:2005,streicher:2014}.

From the type theoretic perspective, the importance of representable maps is
easy to explain. In type theory, the basic objects under consideration are
contexts $\Gamma$, types in context $\Gamma\vdash A$, and typed terms in
context $\Gamma\vdash a : A$. The collection of types carries an action for
each substitution $\Mor[\gamma]{\Delta}{\Gamma}$ of contexts, and so does the
collection of elements: we have $\Delta\vdash \gamma^*A$ and $\Delta\vdash
\gamma^*a : \gamma^*A$. Moreover, while there is no context that represents the
collection of all types, for any specific type $\Gamma\vdash A$, we have a context
$\Gamma.A$ that represents the \emph{elements} of the type $A$.

This type theoretic situation can be captured mathematically in three steps:
\begin{enumerate}

  \item First of all, the collection of contexts may be organized into a
    category $\CCat$ with morphisms given by simultaneous substitutions.

  \item Next, the collection of types may be viewed as a presheaf
    $\Mod{\CCat}{\TY} : \Psh{\CCat}$: a section $A\in
    \Mod{\CCat}{\TY}\prn{\Gamma}$ is exactly a type $\Gamma\vdash A$, and the
    functorial action implements substitution on types. Likewise, the
    collection of typed elements is a presheaf indexed in the presheaf of
    types, \ie a family
    $\Mor[\Mod{\CCat}{\El}]{\Mod{\CCat}{\EL}}{\Mod{\CCat}{\TY}}$.

  \item From the perspective of the Yoneda lemma, one may think of contexts as
    representable presheaves $\Gamma : \Psh{\CCat}$, and a type in context
    $\Gamma$ is a morphism $\Mor[A]{\Gamma}{\Mod{\CCat}{\TY}}$. We therefore
    obtain representing contexts $\Gamma.A$ for each such type $A$ by requiring
    that the family
    $\Mor[\Mod{\CCat}{\El}]{\Mod{\CCat}{\EL}}{\Mod{\CCat}{\TY}}$ be a
    representable natural transformation:
    \begin{equation}\label{diag:cx-ext}
      \DiagramSquare{
        nw/style = pullback,
        nw = \Gamma.A,
        sw = \Gamma,
        se = \Mod{\CCat}{\TY},
        ne = \Mod{\CCat}{\EL},
        east = \Mod{\CCat}{\El},
        south = A,
        west = p_A,
        north = q_A
      }
    \end{equation}
\end{enumerate}

\noindent
The map $\Mor[p_A]{\Gamma.A}{\Gamma}$ is the \emph{weakening} substitution, and
the map $\Mor[q_A]{\Gamma.A}{\Mod{\CCat}{\EL}}$ is the \emph{variable} term. By
chasing Diagram~\ref{diag:cx-ext}, it is easy to see that the type of the term
$q_A$ is $p_A^*A$ as expected.
The structure defined above captures the decisive judgmental aspects of
dependent type theory, and has been referred to by Awodey as a \emph{natural
model}: natural in both the informal sense, and in the sense that it is defined
in terms of a representable \emph{natural} transformation.

\begin{defiC}[{\cite{awodey:2018:natural-models}}]\label{def:natural-model}
  A natural model is a category $\CCat$ with a terminal object, together with a representable natural transformation in $\Psh{\CCat}$.
\end{defiC}

\subsubsection{Representable map categories and the semantics of type theory}

A given type theory is more than just a natural model in the sense of
Definition~\ref{def:natural-model}: one must also specify other generators,
such as type connectives and their elements. Most type connectives can be
specified by writing down a family in $\Psh{\CCat}$ and then asking for a
cartesian square between that family and the natural model. This raises some questions:

\begin{enumerate}

  \item What kinds of structures can be added to the notion of a natural model
    and still give rise to a type theory?

  \item What is a morphism between models of such a type theory?

\end{enumerate}

\noindent
Uemura proposes to answer this question by developing a notion of ``general
type theory'' and associated functorial semantics~\cite{uemura:2019,uemura:2021:thesis}; while
Lawvere defines an algebraic theory to be (roughly) a category with finite
products, Uemura defines a \emph{type} theory $\ThCat$ to be a
\emph{representable map category}, which is a lex category $\CCat$ together
with a distinguished class $\mathcal{R}$ of ``representable maps'' which
captures the decisive aspects of the class of representable natural
transformations in presheaf logoi.

\begin{defi}
  A class of representable maps in a lex category $\CCat$ is a collection of maps
  $\mathcal{R}$ with the following closure conditions:

  \begin{enumerate}
    \item Each identity map is representable, and the composition of representable maps is representable.
    \item Every pullback of a representable map along an arbitrary map is representable.
    \item Pullback along a representable map has a right adjoint, pushforward.
      \qedhere
  \end{enumerate}
\end{defi}

\begin{nota}\label{notation:exponential}
  In particular, if the terminal map $\Mor{X}{\ObjTerm{\CCat}}$ is
  representable, then all exponentials (internal homs) out of $X$ may be
  computed by pushforward, which we may write either $X\Rightarrow Y$ or $\IHom{X}{Y}$.
\end{nota}

\begin{defi}
  A representable map category is a lex category $\CCat$ together with a class
  $\mathcal{R}$ of representable maps in $\CCat$; a \emph{representable map
  functor} between two representable map categories is a functor between the
  underlying categories that takes representable maps to representable maps.
\end{defi}

\begin{rem}
  Every lex category $\CCat$ can be thought of as a very weak kind of extensional type theory, in which dependent types are maps and substitutions are given by pullback. A representable map structure on $\CCat$ does nothing more than enrich this language as follows:

  \begin{enumerate}
    \item There is a class of \emph{small} types called ``representable'' types.
    \item Representable types are closed under dependent sums.
    \item Types are closed under dependent products with representable base.
  \end{enumerate}
  \noindent
  The ``type theory'' of a representable map category $\ThCat =
  \prn{\CCat,\mathcal{R}}$ can be thought of as a logical framework in the
  sense of~\cite{harper-honsell-plotkin:1993,nordstrom-peterson-smith:1990}, in
  which hypothetical judgments are represented by pushforward. \emph{From this
  perspective, it is most appropriate to refer to arbitrary objects and
  families in $\ThCat$ as judgments, following the ``judgments as types''
  philosophy of Harper, Honsell, and
  Plotkin~\cite{harper-honsell-plotkin:1993}.}
\end{rem}

\begin{exa}[The walking natural model]
  The type theory with judgments for types and terms is specified by the
  representable map category $\ThCat$ generated by a single representable map
  $\ElMor$.
\end{exa}

\begin{exa}
  $\Psh{\CCat}$ supports the structure of a representable map category with the class of
  representable maps given by representable natural transformations.
\end{exa}

The classic notion of a representable map in a category of presheaves is of
course an instance, giving rise to the ``canonical representable structures''
on $\Psh{\CCat}$ and $\DF{\CCat}$.

\begin{exa}
  $\DF{\CCat}$ supports the structure of a representable map category, with
  representable maps given by a functors between discrete fibrations that have
  a non-fibered right
  adjoint~\cite{awodey:2018:natural-models,awodey-butz-simpson-streicher:2014}.
  Explicitly, a morphism $\Mor[f]{X}{Y} : \DF{\CCat}$ is a representable
  morphism if there is a functor $\Mor[\ReprAdj{f}]{Y}{X}:\CAT$ such that $f
  \dashv \ReprAdj{f}$.

  Moreover, with this class of representable maps, the equivalence between
  $\DF{\CCat}$ and $\Psh{\CCat}$ induces an equivalence between representable
  map categories. In other words, viewing a representable natural
  transformation of presheaves as a functor between categories of elements, one
  has such a right adjoint, and vice versa.
\end{exa}

The functorial semantics of type theories \`a la Uemura is then given in terms
of these canonical representable map categories:

\begin{defi}[Models]\label{def:taichi-model}
  A model of a type theory $\ThCat = \prn{\CCat,\mathcal{R}}$ is a representable map functor
  $\Mor[\Mod{\CCat}]{\ThCat}{\DF{\CCat}}$ where $\CCat$ is a category with a terminal object.
\end{defi}

In Definition~\ref{def:taichi-model}, the base category $\CCat$ is the category
of contexts, and each $\Mod{\CCat}{J}$ is the interpretation of a
``judgment'' $J:\ThCat$ as a presheaf over $\CCat$.

\begin{nota}
  Given any object $J : \ThCat$, we write $\Mod{\CCat}{J}$ for
  $\Mod{\CCat}{J}$; likewise, for each $\Mor[f]{J}{I} : \ThCat$, we write
  $\Mor[\Mod{\CCat}{f}]{\Mod{\CCat}{J}}{\Mod{\CCat}{I}} : \DF{\CCat}$ for
  $\Mod{\CCat}{f}$.
\end{nota}

To a first approximation, a morphism $\Mod{F}$ between two $\ThCat$-models $\Mod{\CCat},\Mod{\DCat}$
should be a functor
$\Mor[F]{\CCat}{\DCat}$ together with a natural assignment of functors $X \mapsto \Mod{F}{X}$ for
each $X : \ThCat$:

\begin{equation}
  \begin{tikzpicture}[diagram,baseline = (sw.base)]
    \SpliceDiagramSquare{
      height = 1.5cm,
      nw = \Mod{\CCat}{X},
      sw = \CCat,
      ne = \Mod{\DCat}{X},
      se = \DCat,
      west/style = lies over,
      east/style = lies over,
      north = \Mod{F}{X},
      south = F,
    }
    \node (tot) [right = of ne] {$\DF{}$};
    \node (base) [right = of se] {$\CAT$};
    \path[fibration] (tot) edge (base);
  \end{tikzpicture}
  \tag{$\ast$}%
  \label{eq:morphism-of-models}
\end{equation}

In addition to these requirements, a morphism of models should also preserve
\emph{context extension} up to isomorphism; from a categorical perspective,
there is not much meaning in asking for context extension to be preserved ``on
the nose'', since contexts are \emph{objects} of a category and therefore considered
only up to isomorphism. Therefore, given a context $C:\CCat$ and a type
$\Mor[A]{\DFYo[\CCat]{C}}{\Mod{\CCat}{\TY}}$, we would expect that the context
extension $F{C}.\Mod{F}{\TY}{A}$ shall be isomorphic to $F\prn{C.A}$.

In Uemura's framework, as with natural models~\cite{awodey:2018}, context
extension is modeled by the representability of $\Mor[\El]{\EL}{\TY}$. In fact,
calculation shows that the (non-fibered) right adjoint
$\ReprAdj{\Mod{\CCat}{\El}}$ to $\Mod{\CCat}{\El}$ sends a type
$\Mor[A]{\DFYo[\CCat]{C}}{\Mod{\CCat}{\TY}}$ to the variable term
$\Mor{\DFYo[\CCat]\prn{C.A}}{\Mod{\CCat}{\EL}}$ in the extended context. We may
therefore phrase the preservation of context extensions, called
\emph{Beck-Chevalley} by Uemura, in terms of $\ReprAdj{\Mod{\CCat}{f}}$ and
$\ReprAdj{\Mod{\DCat}{f}}$ for \emph{each} representable map $f$, including
$\ElMor$.

First, observe that for each representable map $\Mor[f]{J}{I}$, we have a
canonical 2-cell in $\CAT$ with the following boundary:
\begin{equation}\label{diag:taichi-bc}
  \begin{tikzpicture}[diagram,baseline = (sw.base)]
    \SpliceDiagramSquare{
      nw = \Mod{\CCat}{I},
      sw = \Mod{\CCat}{J},
      west = \ReprAdj{\Mod{\CCat}{f}},
      east = \ReprAdj{\Mod{\DCat}{f}},
      ne = \Mod{\DCat}{I},
      se = \Mod{\DCat}{J},
      south = \Mod{F}{J},
      north = \Mod{F}{I},
    }
    \node [between = nw and se] {$\Rightarrow$};
  \end{tikzpicture}
  \quad
\end{equation}

The 2-cell of Diagram~\ref{diag:taichi-bc} may be computed as follows:
\begin{prooftree*}
  \infer0{
    \Mor{\Mod{\CCat}{f}\circ{\ReprAdj{\Mod{\CCat}{f}}}}{\ArrId{\Mod{\CCat}{I}}}
  }
  \infer1{
    \Mor{{\Mod{F}{I}}\circ\Mod{\CCat}{f}\circ{\ReprAdj{\Mod{\CCat}{f}}}}{\Mod{F}{I}}
  }
  \infer1{
    \strut
    \Mor{\Mod{\DCat}{f}\circ{\Mod{F}{J}\circ\ReprAdj{\Mod{\CCat}{f}}}}{\Mod{F}{I}}
  }
  \infer1{
    \strut
    \Mor{\Mod{F}{J}\circ\ReprAdj{\Mod{\CCat}{f}}}{\ReprAdj{\Mod{\DCat}{f}}\circ \Mod{F}{I}}
  }
\end{prooftree*}

The Beck-Chevalley condition for $\Mor[f]{J}{I}$ is, then, that the 2-cell in
Diagram~\ref{diag:taichi-bc} is invertible.

\begin{defi}[Morphism of Models]
  A morphism between $\Mor[\Mod{\CCat}]{\ThCat}{\DF{\CCat}}$ and
  $\Mor[\Mod{\DCat}]{\ThCat}{\DF{\DCat}}$ is a functor $\Mor[F]{\CCat}{\DCat}$
  preserving the terminal object, together with an assignment of functors
  $\Mor[\Mod{F}{X}]{\Mod{\CCat}{X}}{\Mod{\DCat}{X}}$ lying over $F$, natural in
  $X : \ThCat$, such that each representable map $\Mor[f]{X}{Y}$ satisfies the
  Beck-Chevalley condition.
\end{defi}

\begin{nota}
  Let $\Mor[\Mod{F}]{\Mod{\CCat}}{\Mod{\DCat}}$ be a morphism of
  $\ThCat$-models; given $X:\ThCat,x : \Mod{\CCat}{X}$, we will abusively write
  $F{x}$ for $\Mod{F}{X}{x}$.
\end{nota}

Because morphisms of models necessarily require preservation of context
extensions only up to isomorphism, a higher level of morphism naturally arises,
lending the collection of models $\ModCat{\ThCat}$ with the structure of a
2-category.

\begin{defi}[2-morphisms of models]
  Given a pair of morphisms $\Mor[\Mod{F},\Mod{G}]{\Mod{\CCat}}{\Mod{\DCat}}$, a 2-morphism
  $\Mor[\Mod{\alpha}]{\Mod{F}}{\Mod{G}}$ is a natural transformation $\Mor[\alpha]{F}{G}$ between the underlying functors such
  that for each $X : \ThCat$, there exists a \emph{necessarily unique} natural transformation
  $\Mor[\Mod{\alpha}{X}]{\Mod{F}{X}}{\Mod{G}{X}}$ lying over $\alpha$.
\end{defi}

\begin{rem}
  The uniqueness of the map $\Mor[\Mod{\alpha}{X}]{\Mod{F}{X}}{\Mod{G}{X}}$
  lying over $\alpha$ is ensured by the discreteness of $\Mod{\DCat}{X}$.
  Summarizing, the existence of $\Mod{\alpha}{X}$ is nothing more than the
  \emph{condition} that for each $x : \Mod{\CCat}{X}$ lying over $\Gamma :
  \CCat$, the following equation obtains:
  \[
    \alpha_{\Gamma}^*{Gx} = F{x}
    \qedhere
  \]
\end{rem}

\begin{thm}%
  \label{thm:bi-initial-model}
  The 2-category of models $\ModCat{\ThCat}$ has a \emph{bi-initial} object: an object $I$ such that
  for each $X : \ModCat{\ThCat}$, the category of morphisms $\Hom{I}{X}$ is contractible.
\end{thm}

This bi-initial object is the \emph{democratic heart} of the embedding
$\Mor[\DFYo]{\ThCat}{\DF{\ThCat}}$: the smallest full subcategory of $\ThCat$
containing the terminal object and closed under context extension.
The universal property of this bi-initial object ensures that there is at most one
morphism $\Mor{\Mod{\ICat}}{\Mod{\CCat}}$ for
each $\Mod{\CCat}$, up to a unique invertible 2-morphism.

\subsection{Left lifting structures, orthogonality, and separation}

In this section, let $\ECat$ be a category with finite limits. We call
an object $X : \ECat$ \emph{exponentiable} when every exponential
$\IHom{X}{Y}$ exists. We begin by recalling some basic definitions and facts
from~\cite{awodey:2018:natural-models,anel-biedermann-finster-joyal:2018,anel-biedermann-finster-joyal:2020}.

\begin{defi}[Cartesian gap map]
  Let the following square commute in $\ECat$:
  \begin{equation}\label{diag:gap-map:0}
    \DiagramSquare{
      nw = A,
      sw = B,
      ne = X,
      se = Y,
      west = f,
      east = g,
      north = k,
      south = h,
    }
  \end{equation}

  The universal property of the pullback of $g$ along $h$ states that the span
  $f,k$ induces a map $\gl{f,k}$, which we call the \emph{cartesian gap map} of
  Diagram~\ref{diag:gap-map:0}:
  \begin{equation*}
    \begin{tikzpicture}[diagram,baseline=(sw.base)]
      \SpliceDiagramSquare{
        nw/style = pullback,
        nw = X\times_Y B,
        sw = B,
        ne = X,
        se = Y,
        east = g,
        south = h,
      }
      \node (nww) [above left = 2.5cm of nw] {$A$};
      \path[->,bend right=30] (nww) edge node [left] {$f$}  (sw);
      \path[->,bend left=30] (nww) edge node [sloped,above] {$k$} (ne);
      \path[->,exists] (nww) edge node [desc] {$\gl{f,k}$} (nw);
    \end{tikzpicture}
    \qedhere
  \end{equation*}
\end{defi}

\begin{defi}[Internal pullback hom]
  Let $\Mor[f]{A}{B}$ and $\Mor[g]{X}{Y}$ be morphisms in $\ECat$ such that
  $A,B$ are both exponentiable. Then, the \emph{internal pullback hom} of $f$
  and $g$ is the cartesian gap map $\Mor[\ggl{f,g}]{\IHom{B}{X}}{\IHom{A}{X}\times_{\IHom{A}{Y}}\IHom{B}{Y}}$ of the following
  square:
  \begin{equation*}
    \DiagramSquare{
      width = 3cm,
      nw = \IHom{B}{X},
      sw = \IHom{B}{Y},
      se = \IHom{A}{Y},
      ne = \IHom{A}{X},
      west = \IHom{B}{g},
      east = \IHom{A}{g},
      north = \IHom{f}{X},
      south = \IHom{f}{Y}
    }
    \qedhere
  \end{equation*}
\end{defi}

\begin{defi}[External orthogonality]\label{def:external-orthogonality}
  A map $\Mor[f]{A}{B}$ is \emph{left orthogonal} to $\Mor[g]{X}{Y}$, written $\ExtOrth{f}{g}$, when there exists
  a unique lift for each square of the following kind:
  \begin{equation*}
    \begin{tikzpicture}[diagram,baseline=(sw.base)]
      \SpliceDiagramSquare{
        nw = A,
        sw = B,
        ne = X,
        se = Y,
        west = f,
        east = g,
        south = y,
        north = x,
      }
      \path[->,exists] (sw) edge node [desc] {$\exists!$} (ne);
    \end{tikzpicture}
    \qedhere
  \end{equation*}
\end{defi}

\begin{defi}[Internal orthogonality]\label{def:orthogonality-new}
  A map $\Mor[f]{A}{B}$ is \emph{internally left orthogonal} to
  $\Mor[g]{X}{Y}$, written $\IntOrth{f}{g}$, if we have $\ExtOrth{\prn{Z\times
  f}}{g}$ for every $Z:\ECat$.
\end{defi}

\begin{lemC}[{\cite[Definition~3.2.5]{anel-biedermann-finster-joyal:2020}}]
  Fix $\Mor[f]{A}{B}$ and $\Mor[g]{X}{Y}$ with $A,B$ exponentiable; then
  $\IntOrth{f}{g}$ iff the internal pullback hom $\ggl{f,g}$ is an isomorphism.
  \qed
\end{lemC}

\begin{defiC}[{\cite[Definition~18]{awodey:2018:natural-models}}]\label{def:left-lifting-structure}
  Fix $\Mor[f]{A}{B}$ and $\Mor[g]{X}{Y}$ with $A,B$ exponentiable; a \emph{left lifting structure}
  for $f$ against $g$ (written $f\pitchfork g$) is a section $j$ of the internal pullback hom $\ggl{f,g}$:
  \begin{equation*}
    \begin{tikzpicture}[diagram,baseline = (pb.base)]
      \node (bx) {$\IHom{B}{X}$};
      \node (pb) [below = of bx] {$\IHom{A}{X}\times_{\IHom{A}{Y}}\IHom{B}{Y}$};
      \path[->] (bx) edge node [upright desc] {$\ggl{f,g}$} (pb);
      \path[->,bend left = 70] (pb) edge node [left] {$j$} (bx);
    \end{tikzpicture}
    \qedhere
  \end{equation*}
\end{defiC}

\begin{exa}[Intensional identity types]
  In order to formulate the elimination of intensional identity types in a natural model,
  Awodey~\cite{awodey:2018} uses a left lifting structure. More generally,
  most ``pattern-matching'' style elimination rules in type theory can be formulated as a left
  lifting structure~\cite{gratzer-kavvos-nuyts-birkedal:2020}.
\end{exa}

\begin{lemC}[{\cite[Lemma~19]{awodey:2018:natural-models}}]%
  \label{lem:natural-lifting-structure}
  \pushQED{\qed}
  Given maps $\Mor[f]{A}{B}$ and $\Mor[g]{X}{Y}$, a left lifting structure $j : f \pitchfork g$ is
  equivalent to a choice of lifts $j_Z(y,x)$ natural in $Z$ for any diagram of the following kind:
  \begin{equation*}
    \begin{tikzpicture}[diagram,baseline = (sw.base)]
      \SpliceDiagramSquare{
        nw = Z \times A,
        sw = Z \times B,
        ne = X,
        se = Y,
        west = Z \times f,
        east = g,
        north = x,
        south = y,
      }
      \path[->,exists] (sw) edge node [desc] {$\MathSmall{j_Z(y,x)}$} (ne);
    \end{tikzpicture}
    \qedhere
  \end{equation*}
  \popQED
\end{lemC}

\begin{rem}
  A left lifting structure $j:f \pitchfork g$ exhibits $f$ as internally left orthogonal to $g$ when $j$ is simultaneously a retraction.
\end{rem}

Because both left lifting structures and orthogonality conditions may be
expressed in the language of finite limits as above, it is justified to freely
extend a representable map category $\ThCat$ by either $\IntOrth{f}{g}$ or $j :
f\pitchfork g$.
In many cases, however, an orthogonality condition or lifting structure will
need to be expressed in the free cocompletion $\Psh{\ThCat}$ because it may
involve colimits that don't exist in $\ThCat$; when defining a representable
map category by a sequence of clauses, it is not \emph{a priori} clear that
this move is legitimate.

We will therefore characterize a useful class of orthogonality and lifting
conditions on $\Psh{\CCat}$ which may be unravelled into a suitable condition
on a lex category $\CCat$, expressible using the language of finite limits.

\begin{lem}\label{lem:colimit-lift}
  Let $\ICat$ be a small category such that limits of $\ICat$-diagrams exist in $\CCat$;
  let $\Mor[\Phi_\bullet]{\ICat}{\CCat}$ be a diagram such that for
  each $i : \ICat$, the object $\Phi_i$ is exponentiable. We will write
  $\hat\Phi_\infty$ for the colimit of $\Phi_\bullet$ taken in $\Psh{\CCat}$,
  \ie $\hat\Phi_\infty = \Colim[\ICat]{\Yo[\CCat]{\Phi_\bullet}}$. Now let
  $\Mor[f]{\hat\Phi_\infty}{\Yo[\CCat]{B}}:\Psh{\CCat}$ with $B$ exponentiable, and
  let $\Mor[g]{X}{Y}$ be an arbitrary map in $\CCat$.
  Then, there exists a left lifting structure $j : f\pitchfork\Yo[\CCat]{g}$ in $\ThCat$ iff there
  exists a section of the cartesian gap map for the canonical
  Diagram~\ref{diag:colimit-lift:0} in $\CCat$ below:
  \begin{equation}\label{diag:colimit-lift:0}
    \DiagramSquare{
      nw = \IHom{B}{X},
      sw = \IHom{B}{Y},
      ne = \Lim[\ICat]*{\IHom{\Phi_\bullet}{X}},
      se = \Lim[\ICat]*{\IHom{\Phi_\bullet}{Y}},
      width = 2.5cm,
      height = 1.5cm,
    }
  \end{equation}

  Moreover, the left lifting structure $j$ exhibits $f$ as internally left orthogonal to
  $\Yo[\CCat]{g}$ iff the cartesian gap map of Diagram~\ref{diag:colimit-lift:0} is an
  isomorphism.
\end{lem}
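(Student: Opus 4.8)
The plan is to compute the internal pullback hom $\ggls{f,\Yo{g}}$ of $f$ and $\Yo{g}$ in $\Psh{\CCat}$ and to show that, as an arrow of $\Psh{\CCat}$, it is isomorphic to the image under the Yoneda embedding $\Yo[\CCat]$ of the cartesian gap map of Diagram~\ref{diag:colimit-lift:0}; both halves of the lemma then follow formally. Recall that, by Definition~\ref{def:left-lifting-structure}, a left lifting structure $j : f\pitchfork\Yo{g}$ is precisely a section of $\ggls{f,\Yo{g}}$; recall further that $\IntOrth{f}{\Yo{g}}$ holds iff $\ggls{f,\Yo{g}}$ is an isomorphism, and that such a section witnesses $\IntOrth{f}{\Yo{g}}$ exactly when it is simultaneously a retraction --- i.e.\ exactly when $\ggls{f,\Yo{g}}$ is an isomorphism. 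Since $\Yo[\CCat]$ is fully faithful, it creates sections --- a section of $\Yo[\CCat]{h}$ is, by fullness, of the form $\Yo[\CCat]{s}$ with $\Yo[\CCat]{h}\circ\Yo[\CCat]{s}$ an identity, whence $h\circ s$ is an identity by faithfulness --- and it reflects isomorphisms. So once the asserted isomorphism is in hand, sections of $\ggls{f,\Yo{g}}$ correspond to sections of the gap map of Diagram~\ref{diag:colimit-lift:0}, and the former is an isomorphism iff the latter is; this is exactly the content of the lemma.

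To prove the isomorphism I would evaluate the four corners of the defining square of $\ggls{f,\Yo{g}}$ pointwise. The crucial computation is that, for every object $Z$ of $\CCat$,
\[
  \IHom{\hat\Phi_\infty}{\Yo{Z}}\cong\Yo{\Lim[\ICat]*{\IHom{\Phi_\bullet}{Z}}}.
\]
This follows from a short chain of natural isomorphisms of hom-sets at a representable $\Yo{C}$: $\Psh{\CCat}$ is cartesian closed (so $\hat\Phi_\infty$ is exponentiable), $\Yo{C}\times(-)$ preserves colimits, products of representables are representable (as $\CCat$ is lex), each $\Phi_i$ is exponentiable in $\CCat$, and $\Lim[\ICat]*{\IHom{\Phi_\bullet}{Z}}$ exists in $\CCat$ by hypothesis; chaining these and recognising the result as a representable presheaf delivers the display. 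Similarly, $\IHom{\Yo{B}}{\Yo{Z}}\cong\Yo{\IHom{B}{Z}}$ because $B$ is exponentiable. Applying these with $Z\in\brc{X,Y}$ identifies the defining square of $\ggls{f,\Yo{g}}$ with the image under $\Yo[\CCat]$ of the square in Diagram~\ref{diag:colimit-lift:0}; and since $\Yo[\CCat]$ preserves limits and the pullback appearing in that square is a finite limit of objects of $\CCat$ that exist, $\Yo[\CCat]$ carries that pullback --- and its cartesian gap map --- to the pullback and gap map computing $\ggls{f,\Yo{g}}$. This yields the isomorphism required in the first paragraph.

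The step I expect to be the main obstacle is not the identification of the four \emph{objects} but the bookkeeping needed to see that the pointwise isomorphisms above are compatible with the \emph{structure maps} of the two squares --- above all, that the map $\IHom{\Yo{B}}{\Yo{X}}\to\IHom{\hat\Phi_\infty}{\Yo{X}}$ given by precomposition with $f$ corresponds, under the isomorphisms, to the image under $\Yo[\CCat]$ of the canonical map $\IHom{B}{X}\to\Lim[\ICat]*{\IHom{\Phi_\bullet}{X}}$. The way through is to unwind the colimit cocone: composing $f$ with the colimit injections $\Yo{\Phi_i}\to\hat\Phi_\infty$ yields, by full faithfulness of $\Yo[\CCat]$, maps of the form $\Yo{f_i}$ for a unique cocone $\brc{f_i : \Phi_i\to B}$ over $\Phi_\bullet$, and both maps in question are then induced by precomposition with this cocone; the analogous compatibility for the vertical maps, given by postcomposition with $g$, is immediate from functoriality of the internal hom in its second variable. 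Once these compatibilities are checked, the argument closes as in the first paragraph, and the internal-orthogonality half of the statement falls out because $\Yo[\CCat]$ reflects isomorphisms.
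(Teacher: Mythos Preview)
Your proposal is correct and follows essentially the same approach as the paper: both arguments identify $\IHom{\hat\Phi_\infty}{\Yo{Z}}$ with $\Yo{\Lim[\ICat]*{\IHom{\Phi_\bullet}{Z}}}$ via universality of colimits in $\Psh{\CCat}$ and the preservation of limits and exponentials by $\Yo[\CCat]$, then conclude by full faithfulness. Your treatment is in fact more explicit than the paper's about the compatibility of structure maps (via the cocone $\brc{f_i}$) and about why full faithfulness transports sections and reflects isomorphisms, whereas the paper leaves these points implicit in the phrase ``faithfully translates''.
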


Before proving Lemma~\ref{lem:colimit-lift}, we first clarify the construction
of Diagram~\ref{diag:colimit-lift:0}.  For each $i : \ICat$, we have the
following composite map:
\begin{equation}\label{diag:colimit-lift:1}
  \begin{tikzpicture}[diagram,baseline = (2.base)]
    \node (0) {$\Yo[\CCat]{\Phi_i}$};
    \node (1) [right = of 0] {$\hat\Phi_\infty$};
    \node (2) [below = of 1] {$\Yo[\CCat]{B}$};
    \path[->] (0) edge node [above] {$\Kwd{in}_i$} (1);
    \path[->] (1) edge node [right] {$f$} (2);
    \path[->,exists] (0) edge node [sloped,below] {$\Yo[\CCat]{f_i}$} (2);
  \end{tikzpicture}
\end{equation}

Therefore, letting $Z$ range over $X,Y$, we may construct a map into
$\Lim[\ICat]*{\IHom{\Phi_\bullet}{Z}}$ from a cone defined as follows:
\begin{equation}\label{diag:colimit-lift:2}
  \begin{tikzpicture}[diagram,baseline = (2.base)]
    \node (0) {$\IHom{B}{Z}$};
    \node (1) [right = 3.5cm of 0] {$\Lim[\ICat]*{\IHom{\Phi_\bullet}{Z}}$};
    \node (2) [below = of 1] {$\IHom{\Phi_i}{Z}$};
    \path[->,exists] (0) edge node [above] {$\brk{i.\IHom{f_i}{Z}}$} (1);
    \path[->] (1) edge node [right] {$\pi_i$} (2);
    \path[->] (0) edge node [sloped,below] {$\IHom{f_i}{Z}$} (2);
  \end{tikzpicture}
\end{equation}

The force of Lemma~\ref{lem:colimit-lift} is therefore to assert that the
existence of a left lifting structure $j : f\pitchfork\Yo[\CCat]{g}$ in
$\Psh{\CCat}$ is equivalent to the existence of a section to the cartesian gap
map of Diagram~\ref{diag:colimit-lift:3} below:
\begin{equation}\label{diag:colimit-lift:3}
  \DiagramSquare{
    west = \IHom{B}{g},
    east = \Lim[\ICat]*{\IHom{\Phi_\bullet}{g}},
    nw = \IHom{B}{X},
    sw = \IHom{B}{Y},
    ne = \Lim[\ICat]*{\IHom{\Phi_\bullet}{X}},
    se = \Lim[\ICat]*{\IHom{\Phi_\bullet}{Y}},
    north = \brk{i.\IHom{f_i}{X}},
    south = \brk{i.\IHom{f_i}{Y}},
    width = 3.5cm,
  }
\end{equation}

\begin{proof}[Proof of Lemma~\ref{lem:colimit-lift}]
  The condition of the internal pullback hom $\ggl{f,\Yo[\CCat]{g}}$ may be portrayed
  as follows:
  \begin{equation}
    \begin{tikzpicture}[diagram,baseline = (sw.base)]\label{diag:colimit-lift:4}
      \SpliceDiagramSquare{
        width = 4cm,
        height = 4cm,
        ne = \IHom{\hat\Phi_\infty}{\Yo[\CCat]{X}},
        se = \IHom{\hat\Phi_\infty}{\Yo[\CCat]{Y}},
        sw = \Yo[\CCat]{\IHom{B}{Y}},
        nw = \Yo[\CCat]{\IHom{B}{X}},
        west/style = {bend right = 10},
        north/style = {bend left = 10},
      }
      \node (pb) [pullback,below right = 2.75cm of nw] {$P$};
      \path[->] (pb) edge (ne);
      \path[->] (pb) edge (sw);
      \path[->,bend left = 30] (nw) edge node [desc] {$\MathSmall{\ggl{f,\Yo[\CCat]{g}}}$} (pb);
      \path[->,exists,bend left = 30] (pb) edge node [desc] {$s$} (nw);
    \end{tikzpicture}
  \end{equation}

  But for each $Z:\CCat$, the presheaf $\IHom{\hat\Phi_\infty}{\Yo[\CCat]{Z}}$ is
  canonically represented by the object $\Lim[\ICat]*{\IHom{\Phi_\bullet}{Z}} :
  \CCat$, using the universality of colimits in $\Psh{\CCat}$ and
  the fact that the Yoneda embedding commutes with limits and exponentials.
  Therefore, Diagram~\ref{diag:colimit-lift:5} below faithfully translates the
  existence of the dotted map in Diagram~\ref{diag:colimit-lift:4} into the
  language of $\CCat$:
  \begin{equation}\label{diag:colimit-lift:5}
    \begin{tikzpicture}[diagram,baseline = (sw.base)]
      \SpliceDiagramSquare{
        width = 4cm,
        height = 4cm,
        ne = \Lim[\ICat]*{\IHom{\Phi_\bullet}{X}},
        se = \Lim[\ICat]*{\IHom{\Phi_\bullet}{Y}},
        sw = \IHom{B}{Y},
        nw = \IHom{B}{X},
        west/style = {bend right = 10},
        north/style = {bend left = 10},
      }
      \node (pb) [pullback,below right = 2.75cm of nw] {$P$};
      \path[->] (pb) edge (ne);
      \path[->] (pb) edge (sw);
      \path[->,bend left = 30] (nw) edge node [desc] {$\MathSmall{\ggl{f,\Yo[\CCat]{g}}}$} (pb);
      \path[->,exists,bend left = 30] (pb) edge node [desc] {$s$} (nw);
    \end{tikzpicture}
  \end{equation}

  But the above is exactly the existence of a section to the cartesian gap map
  for Diagram~\ref{diag:colimit-lift:3}. It is likewise easy to see that one
  section is a retraction iff the other is.
\end{proof}

We will have need of an orthogonality-like notion in which lifts may not exist,
but when they do, they are unique. It is most appropriate to call this
condition \emph{separation}, by analogy with coverages and Grothendieck
topologies.

\begin{defi}\label{def:separation}
  Let $\Mor[f]{A}{B}$ and $\Mor[g]{X}{Y}$ be maps in a category $\CCat$; then
  we say that $g$ is \emph{separated} with respect to $f$ when, for every
  object $Z : \CCat$, there is at most one lift for any square of the following shape:
  \begin{equation*}
    \begin{tikzpicture}[diagram,baseline=(sw.base)]
      \SpliceDiagramSquare{
        nw = Z\times A,
        sw = Z\times B,
        west = Z\times f,
        east = g,
        ne = X,
        se = Y,
      }
      \path[->,exists] (sw) edge (ne);
    \end{tikzpicture}
    \qedhere
  \end{equation*}
\end{defi}

\subsection{Using Artin gluing to prove canonicity}

In this section, we give a gentle introduction to the use of Artin gluing for
proving ordinary canonicity results in order to set the stage for our more
sophisticated application. Let $\CCat$ be the free cartesian closed category
generated by a base type $\Kwd{ans}:\CCat$ and two constants
$\Mor[\Kwd{yes},\Kwd{no}]{\ObjTerm{\CCat}}{\Kwd{ans}}$. We wish to prove the
following \emph{closed canonicity} theorem:

\begin{prop}[Closed canonicity]\label{prop:lambda-canonicity}
  Let $\Mor[a]{\ObjTerm{\CCat}}{\Kwd{ans}}$ be a morphism in $\CCat$; then
  either $a = \Kwd{yes}$ or $a = \Kwd{no}$.
\end{prop}

This can be proved by means of \emph{logical predicates} / the \emph{Tait's method of
  computability}~\cite{tait:1967}. Peter Freyd observed that the structural aspects of Tait's method
can be captured by the well-known categorical construct of Artin gluing.
We define a new category $\mathscr{P}$ that will serve as a domain of
interpretation for a logical predicates model of $\CCat$. An object of
$\mathscr{P}$ is defined to be a pair $\prn{C,\tilde{C}}$ with $C : \CCat$ and
$\tilde{C}\subseteq\Hom[\CCat]{\ObjTerm{\CCat}}{C}$, and a morphism
$\Mor{\prn{C,\tilde{C}}}{\prn{D,\tilde{D}}}$ is given by a morphism
$\Mor[f]{C}{D} : \CCat$ that preserves the predicates in the sense that the
following commutative diagram exists:
\[
  \DiagramSquare{
    nw = \tilde{C},
    ne = \tilde{D},
    sw = \Hom[\CCat]{\ObjTerm{\CCat}}{C},
    se = \Hom[\CCat]{\ObjTerm{\CCat}}{D},
    south = \Hom[\CCat]{\ObjTerm{\CCat}}{f},
    west = \subseteq,
    east = \subseteq,
    north/style = exists,
    west/style = embedding,
    east/style = embedding,
    width = 3.5cm,
  }
\]

The diagram above states the traditional condition for extending a logical predicate
to open terms: for any closed $c \in \tilde{C}$, the evaluation $f(c)$ is in
$\tilde{D}$.

We have a \emph{fibration} $\FibMor[p]{\mathscr{P}}{\CCat}$ sending
$\prn{C,\tilde{C}}$ to $C$; moreover, we can show that $\mathscr{P}$ is
cartesian closed and the fibration $p$ preserves the cartesian closed
structure. The latter amounts to constructing products and exponentials in $\mathscr{P}$
whose $\CCat$-parts are the corresponding products and exponentials from the
syntactic category, which is always built into logical predicates arguments.

In functorial semantics, a logical predicates model of $\CCat$ is the same as
a functor $\Mor{\CCat}{\mathscr{P}}$ that preserves the cartesian closed
structure for which the composite with the fibration
$\FibMor[p]{\mathscr{P}}{\CCat}$ is the identity, \ie a section of $p$. Because $\CCat$ is the
\emph{free} cartesian closed category generated by
$\brc{\Kwd{ans},\Kwd{yes},\Kwd{no}}$, such a functor is \emph{uniquely} determined by
a choice of an object $\bbrk{\Kwd{ans}} : \mathscr{P}$ such that
$p\bbrk{\Kwd{ans}} = \Kwd{ans}$, and a choice of constants
$\bbrk{\Kwd{yes}},\bbrk{\Kwd{no}}$ configured as follows:
\[
  \begin{tikzpicture}[diagram,baseline = (sw.base)]
    \SpliceDiagramSquare<l/>{
      west/style = lies over,
      east/style = lies over,
      sw = \ObjTerm{\CCat},
      se = \Kwd{ans},
      south = \Kwd{yes},
      nw = \ObjTerm{\mathscr{P}},
      ne = \bbrk{\Kwd{ans}},
      north = \bbrk{\Kwd{yes}},
      north/style = exists,
      height = 1.5cm,
    }
    \SpliceDiagramSquare<r/>{
      west/style = lies over,
      east/style = lies over,
      nw/style = {right = of l/ne},
      sw = \ObjTerm{\CCat},
      se = \Kwd{ans},
      south = \Kwd{no},
      nw = \ObjTerm{\mathscr{P}},
      ne = \bbrk{\Kwd{ans}},
      north = \bbrk{\Kwd{no}},
      north/style = exists,
      height = 1.5cm,
    }
    \node (tot) [right = of r/ne] {$\mathscr{P}$};
    \node (base) [right = of r/se] {$\CCat$};
    \path[fibration] (tot) edge (base);
  \end{tikzpicture}
\]

We then choose $\bbrk{\Kwd{ans}}$ to be the pair
$\prn{\Kwd{ans},\brc{\Kwd{yes},\Kwd{no}}\subseteq\Hom[\CCat]{\ObjTerm{\CCat}}{\Kwd{ans}}}$.
Then the elements $\bbrk{\Kwd{yes}},\bbrk{\Kwd{no}}$ can be defined to be
$\Kwd{yes},\Kwd{no}$ respectively. The canonicity theorem follows.

\begin{proof}[Proof of Proposition~\ref{prop:lambda-canonicity}]
  Let $\Mor[a]{\ObjTerm{\CCat}}{\Kwd{ans}}$ be a morphism of $\CCat$; by means
  of the cartesian closed section $\Mor[\bbrk{-}]{\CCat}{\mathscr{P}}$ of
  $\FibMor[p]{\mathscr{P}}{\CCat}$, there exists a morphism
  $\Mor[\bbrk{a}]{\ObjTerm{\mathscr{P}}}{\bbrk{\Kwd{ans}}}$ such that
  $p\bbrk{a} = a$. Unfolding the definition of morphisms in $\mathscr{P}$ and
  the definition of $\bbrk{\Kwd{ans}}$, this is the same as saying that
  $a\in \brc{\Kwd{yes},\Kwd{no}}$.
\end{proof}

In this paper, we prove a much more sophisticated canonicity theorem and
therefore must change the construction in a few important ways.

\begin{enumerate}

  \item We replace the proof-irrelevant logical predicate interpretation (based
    on subsets) with a proof-relevant interpretation (based on families).

  \item Rather than considering closed terms (\ie elements of the hom sets
    $\Hom[\CCat]{\ObjTerm{\CCat}}{C}$), we consider terms relative to purely
    cubical contexts $\Psi = \brk{i:\DIM,\ldots}$, \ie elements of any hom sets
    $\Hom[\CCat]{\Psi}{C}$. We require our families (proof-relevant predicates)
    to carry a contravariant functorial action for cubical substitutions $\Mor{\Psi'}{\Psi}$.

\end{enumerate}
\noindent
The modifications described above are actually parameters to a much more
general construction called Artin gluing. When $\CCat$ is a category and
$\Mor[F]{\CCat}{\ECat}$ is a functor, the Artin gluing of $\CCat$ along $F$ is
defined be the comma category $\GlCat \equiv \Comma{\brc{\ECat}}{F}$. It is instructive to
view the gluing as a restriction of the codomain fibration in the case that
$\ECat$ has finite limits:
\[
  \DiagramSquare{
    nw/style = pullback,
    east/style = fibration,
    west/style = fibration,
    nw = \GlCat,
    ne = \ArrCat{\ECat},
    se = \ECat,
    east = \Cod,
    south = F,
    west = p,
    sw = \CCat,
  }
\]

Artin gluing theorems express the relationship between various properties of
$\CCat$ and $F$ and $\ECat$ to corresponding properties on $\GlCat$. For
instance, if $F$ is a binary product preserving functor between cartesian
closed categories, then $\GlCat$ is cartesian closed and
$\FibMor[p]{\GlCat}{\CCat}$ preserves the cartesian closed structure.  Similar
results hold for topoi, which we make use of in this paper.

\section{Functorial semantics of XTT}\label{sec:functorial-semantics}

In Section~\ref{sec:xtt:lf} we defined XTT as a signature in Uemura's logical framework. Following
Uemura~\cite{uemura:2019}, for the purposes of our canonicity proof we systematically elaborate this
signature to a series of structures on an arbitrary representable map category
$\ThCat$~\cite[Theorem 5.17]{uemura:2019}. $\ThCat$ is a model of XTT precisely
when it admits all of these structures.
We write $\EMP:\ThCat$ for the terminal object of $\ThCat$.

\begin{spec}[Judgmental structure]
  The basic judgmental structure of XTT is specified below.
  \begin{enumerate}

    \item A representable map $\ElMor$ which encodes the collection
      of typed elements lying over typed terms; the representability of $\El$
      allows the (abstract) context to be extended by an element $x:A$ of a type
      $A$.

    \item A representable map $\DimMor$, implementing the interval
      and its context extension.

    \item A representable map $\CondMor$, implementing the face
      formula judgment and its context extension. $\Prf$ is the ``true'' face
      formula. As a map out of the terminal object, $\Prf$ is automatically monomorphic: this means that two proofs of the
      same face formula are identical.
      \qedhere
  \end{enumerate}
\end{spec}

\begin{spec}[The interval]
  We require only minimal structure on the interval, the two endpoints
  $\Mor[\Dim0,\Dim1]{\EMP}{\DIM}$.
\end{spec}

\begin{spec}[Dimension equality]\label{spec:dim-eq}
  We require that there is a code $\Mor[\prn{=}]{\DIM^2}{\COND}$ for the
  diagonal map $\Mor[\delta]{\DIM}{\DIM\times\DIM}$ in the following sense:
  \begin{equation*}
    \CondSquare{
      south/style = exists,
      nw = \DIM,
      sw = \DIM^2,
      west = \delta,
      south = \prn{=}
    }
    \qedhere
  \end{equation*}
\end{spec}

We must be careful when specifying disjunction of face formulas; the ``true''
disjunction of $\Prf[\phi]$ and $\Prf[\psi]$ ought to be a pushout, but we
don't expect to have pushouts in $\ThCat$, and moreover, we do not wish to
require the ability to ``split'' on a disjunction in all the syntactic sorts of
our type theory. Instead, we make $\ElMor$ \emph{think} that this pushout
exists, in a certain sense.

We begin by formulating the ``true'' disjunction in the free cocompletion
$\Psh{\ThCat}$. First, we have a characteristic map that decodes a
face condition $\phi : \Yo[\ThCat]{\COND}$ to a proposition $\DecodeCond{\phi} :
\Prop$.
\begin{equation}
  \DiagramSquare{
    nw/style = pullback,
    ne = \ObjTerm,
    se = \Prop,
    east = \PropTrue,
    east/style = >->,
    west/style = >->,
    south = \DecodeCond,
    sw = \Yo[\ThCat]{\COND},
    nw = \Yo[\ThCat]{\EMP},
    north = \cong,
    west = \Yo[\ThCat]{\Prf},
    south/style = exists,
  }
\end{equation}

As a first cut toward disjunction, we may define the \emph{non-representable}
subobject $\Pb{\PropTrue}{\Rst{\COND}{\lor}}$ of true
disjunctions of face conditions:
\begin{equation}
  \begin{tikzpicture}[diagram,baseline = (l/sw.base),node distance=2cm]
    \SpliceDiagramSquare<l/>{
      west/style = lies over,
      east/style = lies over,
      north/style = exists,
      nw = \Pb{\PropTrue}{\Rst{\COND}{\lor}},
      ne = \Pb{\PropTrue}{\lor},
      sw = \Yo[\ThCat]{\COND^2},
      se = \Prop^2,
      south = \MathSmall{\DecodeCond^2},
      south/node/style = above,
      height = 1.5cm,
      nw/style = dotted pullback,
      ne/style = dotted pullback,
    }
    \SpliceDiagramSquare<r/>{
      glue = west,
      glue target = l/,
      east/style = lies over,
      north/style = exists,
      ne = \PropTrue,
      se = \Prop,
      south = \lor,
      south/node/style = above,
      height = 1.5cm,
    }
    \node (tot) [right = of r/ne] {$\SUB{\Psh{\ThCat}}$};
    \node (base) [right = of r/se] {$\Psh{\ThCat}$};
    \path[fibration] (tot) edge (base);
    \path[->,bend right=30] (l/sw) edge node [below] {$\Rst{\COND}{\lor}$} (r/se);
  \end{tikzpicture}
\end{equation}

We will then define the disjunction of face conditions to be a representable
approximation of $\Pb{\PropTrue}{\Rst{\COND}{\lor}}$ that is respected by certain
judgments of XTT\@.

\begin{spec}[Disjunction]\label{spec:disjunction}
  We require a face condition formation map $\Mor[\lor]{\COND^2}{\COND} :
  \ThCat$ satisfying some conditions which we will describe forthwith.
  We then require an ``introduction'' rule
  $\Mor|>->|[\Kwd{in}_{\lor}]{\Pb{\PropTrue}{\Rst{\COND}{\lor}}}{\Yo[\ThCat]{\Pb{\Prf}{\lor}}}$
  in the slice $\Sl{\Psh{\ThCat}}{\Yo[\ThCat]{\COND^2}}$.
  The ``elimination'' rules for the disjunction are then
  expressed as a pair of internal orthogonality conditions in $\Sl{\Psh{\ThCat}}{\Yo[\ThCat]{\COND^2}}$:
  \begin{enumerate}
    \item We require $\IntOrth{\Kwd{in}_{\lor}}{\Yo[\ThCat]{\COND^2}^*\Yo[\ThCat]{\Prf}}$ in $\Sl{\Psh{\ThCat}}{\Yo[\ThCat]{\COND^2}}$, ensuring that the truth of a face condition may be established by eliminating a disjunction.
    \item We must have $\IntOrth{\Kwd{in}_{\lor}}{\Yo[\ThCat]{\COND^2}^*\Yo[\ThCat]{\El}}$ in $\Sl{\Psh{\ThCat}}{\Yo[\ThCat]{\COND^2}}$, ensuring that a ``matching family'' for a term on two disjuncts shall be a term under a disjunction.
      \qedhere
  \end{enumerate}
\end{spec}

\NewDocumentCommand\SPAN{}{\mathbf{Span}}

\begin{spec}[Falsehood]\label{spec:falsehood}
  By Specification~\ref{spec:dim-eq}, we have a map
  $\Mor[\bot\defeq\prn{=}\circ\gl{\Dim0,\Dim1}]{\EMP}{\COND}$. We may
  therefore test the truth of this ``false'' equation:
  \begin{equation}
    \begin{tikzpicture}[diagram,baseline=(sw.base)]
      \SpliceDiagramSquare{
        west/style = lies over,
        east/style = lies over,
        ne = \Prf,
        se = \COND,
        sw = \EMP,
        south = \bot,
        nw = \Pb{\Prf}{\bot},
        north/style = {>->,exists},
        south/style = >->,
        height = 1.5cm,
        nw/style = dotted pullback,
      }
      \node (tot) [right = of ne] {$\SUB{\ThCat}$};
      \node (base) [right = of se] {$\ThCat$};
      \path[fibration] (tot) edge (base);
    \end{tikzpicture}
  \end{equation}

  There is a universal comparison map
  $\Mor|>->|[\Kwd{in}_{\bot}]{\ObjInit}{\Yo[\ThCat]{\Pb{\Prf}{\bot}}}$ in
  $\Psh{\ThCat}$ given by the universal property of the initial object; to
  support the ``elimination rule'' of the inconsistent face condition, then, we
  require orthogonalities $\IntOrth{\Kwd{in}_{\bot}}{\Yo[\ThCat]{\Prf}}$ and
  $\IntOrth{\Kwd{in}_{\bot}}{\Yo[\ThCat]{\El}}$ in $\Psh{\ThCat}$.
\end{spec}

\begin{rem}
  The orthogonality conditions from Specification~\ref{spec:disjunction} may be
  restated in the language of $\Sl{\ThCat}{\COND^2}$.
  Let $\SPAN$ be the
  walking span, and let $\Mor[\Phi_\bullet]{\SPAN}{\Sl{\ThCat}{\COND^2}}$ be
  the following diagram in $\Sl{\ThCat}{\COND^2}$:
  \begin{equation*}
    \begin{tikzpicture}[diagram,baseline = (sw.base)]
      \node (nw) {$\Prf\times\Prf$};
      \node (sw) [left = 2cm of nw] {$\Pb{\Prf}{\pi_1}$};
      \node (ne) [right = 2cm of nw] {$\Pb{\Prf}{\pi_2}$};
      \path[->] (nw) edge (sw);
      \path[->] (nw) edge (ne);
    \end{tikzpicture}
  \end{equation*}

  The following canonical
  squares must be cartesian:
  \begin{gather*}
    \DiagramSquare{
      nw/style = pullback,
      width = 3.75cm,
      height = 1.5cm,
      nw = \IHom{\Pb{\Prf}{\lor}}{\Pb{\EMP}{\prn{\COND^2}}},
      sw = \IHom{\Pb{\Prf}{\lor}}{\Pb{\COND}{\prn{\COND^2}}},
      ne = \Lim[\SPAN]*{\IHom{\Phi_\bullet}{\Pb{\EMP}{\prn{\COND^2}}}},
      se = \Lim[\SPAN]*{\IHom{\Phi_\bullet}{\Pb{\EMP}{\prn{\COND^2}}}},
    }
    \quad
    \DiagramSquare{
      nw/style = pullback,
      width = 3.75cm,
      height = 1.5cm,
      nw = \IHom{\Pb{\Prf}{\lor}}{\Pb{\EL}{\prn{\COND^2}}},
      sw = \IHom{\Pb{\Prf}{\lor}}{\Pb{\TY}{\prn{\COND^2}}},
      ne = \Lim[\SPAN]*{\IHom{\Phi_\bullet}{\Pb{\EL}{\prn{\COND^2}}}},
      se = \Lim[\SPAN]*{\IHom{\Phi_\bullet}{\Pb{\TY}{\prn{\COND^2}}}},
    }
    \qedhere
  \end{gather*}
\end{rem}

\begin{rem}
  The orthogonality condition of Specification~\ref{spec:falsehood} may be
  restated in the language of $\ThCat$ as the requirement that the following
  canonical squares are cartesian:
  \begin{equation*}
    \DiagramSquare{
      nw/style = pullback,
      width = 2.5cm,
      height = 1.5cm,
      nw = \IHom{\Pb{\Prf}{\bot}}{\EMP},
      sw = \IHom{\Pb{\Prf}{\bot}}{\COND},
      ne = \EMP,
      se = \EMP,
    }
    \qquad
    \DiagramSquare{
      nw/style = pullback,
      width = 2.5cm,
      height = 1.5cm,
      nw = \IHom{\Pb{\Prf}{\bot}}{\EL},
      sw = \IHom{\Pb{\Prf}{\bot}}{\TY},
      ne = \EMP,
      se = \EMP,
    }
    \qedhere
  \end{equation*}
\end{rem}

The following notation will often be used in the internal language of $\ThCat$.

\begin{nota}[Restriction to a partial element]
  Let $\phi : \COND$ be a face condition, $A : \TY$ a type, and $a :
  \Prf\brk{\phi}\Rightarrow \El\brk{A}$ a partial element; we write
  $\El\brk{A\mid \phi \to a}$ for the collection of elements of $A$ which
  restrict to $a$ on $\phi$, \ie the elements $a':\El\brk{A}$ where
  $\lambda\_.a' = a : \Prf\brk{\phi}\Rightarrow A$.
\end{nota}

\begin{nota}[The boundary of a dimension]
  We will write $\Bdry{r} : \COND$ for the \emph{boundary} of a dimension
  $r:\DIM$, defined as the disjunction $\Bdry{r} \defeq (r=\Dim0)\lor(r=\Dim1)$.
\end{nota}

The boundary of a dimension will be used in specifying the closure under
\emph{path types}.

\begin{spec}[Closure under connectives]\label{spec:closure-under-connectives}
  We will require that typing judgment $\ElMor$ is closed under dependent
  sum, dependent product, and dependent \emph{path} types.  We first express the
  generic map underlying each connective, and then force it to be representable.

  \medskip
  \begin{mathflow}
    \Mor[\SG]{\CartArr{\ThCat}}{\CartArr{\ThCat}}
    \split
    \Mor[\PI]{\CartArr{\ThCat}}{\CartArr{\ThCat}}
    \split
    \Mor[\PATH]{\CartArr{\ThCat}}{\CartArr{\ThCat}}
  \end{mathflow}
  \medskip

  Let $\Mor[f]{Y}{X}$, \ie $f$ an object of $\CartArr{\ThCat}$; we first
  define the bases $X^Q$ of the functorial action $f^Q$ where $Q\in \brc{\Pi,\Sigma,\mathrm{P}}$:
  \begin{align*}
    \SG{X} &= \brk{A : X; B : f\brk{A}\Rightarrow X}\\
    \PI{X} &= \brk{A : X; B : f\brk{A}\Rightarrow X}\\
    \PATH{X} &= \brk{A : \DIM\Rightarrow X; a : \prn{i : \DIM,\_ :\Bdry{i}}\Rightarrow f\brk{A\prn{i}}}
  \end{align*}

  Then, we define the rest of the action in the internal language of $\Sl{\ECat}{\prn{X^Q}}$:
  \begin{align*}
    \SG{f}\brk{A;B} &= \brk{a : f\brk{A}; b : f\brk{B\prn{a}}}\\
    \PI{f}\brk{A;B} &= \prn{a : f\brk{A}}\Rightarrow f\brk{B\prn{a}}\\
    \PATH{f}\brk{A;a} &= \brc{p : \prn{i : \DIM}\Rightarrow f\brk{A\prn{i}}\mid \lambda i.\lambda\_.p\prn{i} = a}
  \end{align*}

  The closure of the type theory under these connectives is then accomplished by
  requiring in $\CartArr{\ThCat}$ algebras $\Mor{\SG{\El}}{\El}$,
  $\Mor{\PI{\El}}{\El}$, and $\Mor{\PATH{\El}}{\El}$.
\end{spec}

\begin{rem}
  Unfolding Specification~\ref{spec:closure-under-connectives} into the language
  of $\ThCat$, this means that we have the following cartesian
  squares:
  \begin{gather*}
    \ElSquare{
      nw = \SG{\EL},
      west = \SG{\El},
      sw = \SG{\TY},
      north = \TmPair,
      south = \TySg,
      north/style = exists,
      south/style = exists,
    }
    \quad
    \ElSquare{
      nw = \PI{\EL},
      west = \PI{\El},
      sw = \PI{\TY},
      north = \TmLam,
      south = \TyPi,
      north/style = exists,
      south/style = exists,
    }
    \quad
    \ElSquare{
      nw = \PATH{\EL},
      west = \PATH{\El},
      sw = \PATH{\TY},
      north = \TmAbs,
      south = \TyPath,
      north/style = exists,
      south/style = exists,
    }
    \qedhere
  \end{gather*}
\end{rem}

We will add a type of \emph{booleans}; as usual in type theory, the boolean
type is \emph{not} the coproduct of the point with itself, but a weak version
thereof.  The simplest way to specify a weak coproduct is by means of a left
lifting structure (Definition~\ref{def:left-lifting-structure}) internal to the
\emph{free cocompletion} $\Psh{\ThCat}$ of $\ThCat$.

\begin{spec}[Booleans]\label{spec:booleans}
  To specify the type of booleans, we then require a type
  $\Mor[\TyBool]{\EMP}{\TY}$ together with the following objects in
  $\Psh{\ThCat}$:

  \begin{enumerate}

    \item A morphism $\Mor[i_\TyBool]{\Two{\Psh{\ThCat}}}{\Yo[\ThCat]\prn{\Pb{\El}{\TyBool}}} : \Psh{\ThCat}$,
      where $\Two{\Psh{\ThCat}}$ is the coproduct $\ObjTerm{\Psh{\CCat}} + \ObjTerm{\Psh{\CCat}}$
      in the presheaf category.

    \item A left lifting structure for $i_\TyBool$ with respect to
      $\Yo[\ThCat]{\ElMor}$, \ie $\Kwd{ind}_{\TyBool} : i_{\TyBool}\pitchfork
      \Yo[\ThCat]{\El}$.
      \qedhere

  \end{enumerate}
\end{spec}

\begin{rem}
  We may unravel Specification~\ref{spec:booleans} into the language of
  $\ThCat$ using Lemma~\ref{lem:colimit-lift}. First, the morphism $i_\TyBool$
  corresponds to exactly two introduction forms
  $\Mor[\TmTt,\TmFf]{\EMP}{\Pb{\El}{\TyBool}}$ in $\ThCat$; the left lifting
  structure $\Kwd{ind}_{\TyBool}$ amounts to a choice of section for the
  cartesian gap map of the following canonical square:
  \begin{equation}
    \DiagramSquare{
      width=3cm,
      height = 1.5cm,
      nw = \IHom{\Pb{\El}{\TyBool}}{\EL},
      sw = \IHom{\Pb{\El}{\TyBool}}{\TY},
      se = \TY\times\TY,
      ne = \EL\times\EL,
    }
  \end{equation}

  Translated into the type theoretic internal language of $\ThCat$, this
  corresponds to a dependent eliminator of the following form:
  \begin{gather*}
    \Kwd{ind}_{\TyBool} :
    \prn{
      C : \El[\TyBool]\Rightarrow \TY,
      c_0 : \El[C\prn{\TmTt}],
      c_1 : \El[C\prn{\TmFf}],
      b : \El[\TyBool]
    }
    \longrightarrow
    \El[C\prn{b}]
    \\
    C,c_0,c_1,b\mid b = \TmTt \vdash
    \Kwd{ind}_{\TyBool}\prn{C,c_0,c_1,b} = c_0
    \\
    C,c_0,c_1,b\mid b = \TmFf \vdash
    \Kwd{ind}_{\TyBool}\prn{C,c_0,c_1,b} = c_1
    \qedhere
  \end{gather*}
\end{rem}

\subsection{Universe \`a la Tarski}

In this section, we will specify the closure of $\ThCat$ under a universe \`a
la Tarski of Bishop sets. First, we must define what it means semantically to
be a Bishop set, in the style of Coquand~\cite{coquand:2017:bish}; following
previous work, we refer to this condition as \emph{boundary
separation}~\cite{sterling-angiuli-gratzer:2019}.

\begin{defi}
  A family $\Mor[f]{X}{Y}:\ThCat$ is said to be \emph{boundary separated} when
  $\Pb{f}{\DIM}$ is separated with respect to the boundary inclusion
  $\Mor|>->|{\brk{i : \DIM\vdash \Bdry{i}}}{\brk{i:\DIM\vdash\EMP}}$ in the
  slice $\Sl{\ThCat}{\DIM}$, in the sense of Definition~\ref{def:separation}.
  Unfolding, this means that for any map $\Mor[r]{Z}{\DIM}$, any square of the
  following shape in $\ThCat$ has \emph{at most} one lift:
  \begin{equation}
    \begin{tikzpicture}[diagram,baseline=(sw.base)]
      \SpliceDiagramSquare{
        nw = \brk{Z \vdash \Bdry{r}},
        sw = \brk{Z \vdash \EMP},
        east = f,
        ne = X,
        se = Y,
        height = 1.75cm,
      }
      \path[->,exists] (sw) edge (ne);
    \end{tikzpicture}
  \end{equation}

  The concept of boundary separation can be expressed in the language of $\ThCat$ as follows:
  \[
    i : \DIM; x : X; a,b : f[x] \mid (\lambda \alpha.a) =_{\prn{\Bdry{i}\Rightarrow{}f[x]}} (\lambda \alpha.b) \vdash a =_{f[x]} b
    \qedhere
  \]
\end{defi}

\begin{spec}[Universe \`a la Tarski]
  To specify a universe \`a la Tarski, we require a type $\Mor[\TskU]{\EMP}{\TY}$
  together with a decoding map $\Mor[\TskDec]{\Pb{\El}{\TskU}}{\TY}$. We will write
  $\TskUTY$ for the fiber $\Pb{\El}{\TskU}:\ThCat$.  Pulling back $\ElMor$ along the
  decoding map, we have a new representable map
  $\Mor[\TskUEl]{\TskUEL}{\TskUTY}$:
  \begin{equation}
    \ElSquare{
      west/style = exists,
      nw = \TskUEL,
      sw = \TskUTY,
      west = \TskUEl,
      south = \TskDec,
      height = 1.75cm,
    }
  \end{equation}

  We then require that the family $\TskUMor$ be boundary
  separated.
\end{spec}

\begin{spec}[Codes for connectives]
  The closure of $\TskUTY$ under various connectives of type theory is accomplished
  as follows:
  \begin{gather*}
    \DiagramSquare{
      nw = \SG{\TskUTY},
      sw = \SG{\TY},
      ne = \TskUTY,
      se = \TY,
      north = \TmCodeSg,
      south = \TySg,
      east = \TskDec,
      west = \SG{\TskDec},
      north/style = exists,
    }
    \quad
    \DiagramSquare{
      nw = \PI{\TskUTY},
      sw = \PI{\TY},
      ne = \TskUTY,
      se = \TY,
      north = \TmCodePi,
      south = \TyPi,
      east = \TskDec,
      west = \PI{\TskDec},
      north/style = exists,
    }
    \\
    \DiagramSquare{
      nw = \PATH{\TskUTY},
      sw = \PATH{\TY},
      ne = \TskUTY,
      se = \TY,
      north = \TmCodePath,
      south = \TyPath,
      east = \TskDec,
      west = \PATH{\TskDec},
      north/style = exists,
    }
    \quad
    \begin{tikzpicture}[diagram,node distance=2cm,baseline = (T.base)]
      \node (EMP) {$\EMP$};
      \node (U) [right = of EMP] {$\TskUTY$};
      \node (T) [below = of U] {$\TY$};
      \path[->,exists] (EMP) edge node [above] {$\TmCodeBool$} (U);
      \path[->] (U) edge node [right] {$\TskDec$} (T);
      \path[->] (EMP) edge node [sloped,below] {$\TyBool$} (T);
    \end{tikzpicture}
    \qedhere
  \end{gather*}
\end{spec}

\begin{spec}[Type-case]\label{spec:type-case}
  The type-case construct of XTT's closed universe is implemented by a left
  lifting structure. First, we define the following coproduct in the free
  cocompletion $\Psh{\ThCat}$:
  \begin{equation*}
    \mathfrak{F}_\TskUTY \defeq \Yo[\ThCat]{\SG{\TskUTY}} + \Yo[\ThCat]{\PI{\TskUTY}} + \Yo[\ThCat]{\PATH{\TskUTY}} + \Yo[\ThCat]{\EMP}
  \end{equation*}

  The codes for each type constructor can be arranged into an algebra
  $\Mor[\alpha_\TskUTY]{\mathfrak{F}_\TskUTY}{\Yo[\ThCat]{\TskUTY}}$:
  \begin{equation*}
    \Mor[
      \brk{\Yo[\ThCat]{\TmCodeSg}\mid\Yo[\ThCat]{\TmCodePi}\mid\Yo[\ThCat]{\TmCodePath}\mid\Yo[\ThCat]{\TmCodeBool}}
    ]{\mathfrak{F}_\TskUTY}{\Yo[\ThCat]{\TskUTY}}
  \end{equation*}

  We then require a left lifting structure $\Kwd{case}_{\TskUTY} : \alpha_{\TskUTY} \pitchfork
  \Yo[\ThCat]{\El}$, which provides solutions to lifting problems of the following shape:
  \begin{equation*}
    \begin{tikzpicture}[diagram,baseline = (sw.base)]
      \SpliceDiagramSquare{
        width = 3cm,
        height = 2.5cm,
        nw = \Yo[\ThCat]{X}\times\mathfrak{F}_\TskUTY,
        sw = \Yo[\ThCat]\prn{X\times\TskUTY},
        ne = \Yo[\ThCat]{\EL},
        se = \Yo[\ThCat]{\TY},
        west = \Yo[\ThCat]{X}\times\alpha_{\TskUTY},
        east = \Yo[\ThCat]{\El},
        north = c,
        south =  C
      }
      \path[->,exists] (sw) edge node [desc] {$\MathSmall{\Kwd{case}_{\TskUTY}^X\prn{C,c}}$} (ne);
    \end{tikzpicture}
    \qedhere
  \end{equation*}
\end{spec}

\begin{rem}
  In the language of $\ThCat$, the lifting structure from
  Specification~\ref{spec:type-case} amounts to a section of the cartesian gap
  map for the following canonical square:
  \begin{equation*}
    \DiagramSquare{
      width = 5.5cm,
      height = 1.5cm,
      nw = \IHom{\TskUTY}{\EL},
      sw = \IHom{\TskUTY}{\TY},
      ne =
      \MathSmall{
        \IHom{\SG{\TskUTY}}{\EL}\times
        \IHom{\PI{\TskUTY}}{\EL}\times
        \IHom{\PATH{\TskUTY}}{\EL}\times
        \EL
      },
      se =
      \MathSmall{
        \IHom{\SG{\TskUTY}}{\TY}\times
        \IHom{\PI{\TskUTY}}{\TY}\times
        \IHom{\PATH{\TskUTY}}{\TY}\times
        \TY
      },
    }
    \qedhere
  \end{equation*}
\end{rem}

The types encoded by XTT's closed universe must be ``fibrant'' in the sense
that they support transport along paths and composition of paths. We will
express these as two separate left lifting structures, with suitable
compatibility laws.

\begin{spec}\label{spec:coercion}
  \emph{Coercion} will be specified as a left lifting structure in the slice
  $\Sl{\ThCat}{\DIM}$. First, observe that we have a diagonal map
  $\Mor|>->|[\brk{r:\DIM\vdash r}]{\brk{r : \DIM\vdash \EMP}}{\brk{r:\DIM\vdash\DIM}} :
  \Sl{\ThCat}{\DIM}$.
  We then require a left lifting structure $\Kwd{coe} : \brk{r:\DIM\vdash r}\pitchfork
  \brk{r : \DIM\vdash \TskUEl}$ in $\Sl{\ThCat}{\DIM}$.
\end{spec}

\begin{spec}\label{spec:composition}
  \emph{Composition} is specified by a left lifting structure in the slice
  $\Sl{\ThCat}{\DIM^2}$: the first dimension $\brk{r,s:\DIM\vdash r}$ plays the
  same role as the generic dimension in Specification~\ref{spec:coercion}, and
  the second dimension $\brk{r,s:\DIM\vdash s}$ generates the boundary
  cofibration $\brk{r,s:\DIM\vdash\Bdry{s}}$ along which we are extending a
  partial line.

  Consider the map $\Mor|>->|[\iota]{\brk{r,s:\DIM\vdash \brc{i:\DIM\mid
  \Prf\brk{i=r\lor \Bdry{s}}}}}{\brk{r,s:\DIM\vdash \DIM}} :
  \Sl{\ThCat}{\DIM^2}$; we then require a left lifting structure $\Kwd{comp} :
  \iota \pitchfork \brk{r,s:\DIM\vdash \TskUEl}$ in $\Sl{\ThCat}{\DIM^2}$.
\end{spec}

\begin{rem}\label{rem:com-coe}
  Unfolding Specifications~\ref{spec:coercion}~and~\ref{spec:composition} into the language of $\ThCat$, we have
  the following constants:
  \begin{mathflow}
    \inferrule{
      r:\DIM & \hat{A} : \DIM\Rightarrow\TskUTY &
      a : \TskUEl\brk{\hat{A}\prn{r}}
    }{
      \Coe{r}{\bullet}{\hat{A}}{a} : \prn{i:\DIM} \Rightarrow \TskUEl\brk{\hat{A}\prn{i} \mid i=r\to a}
    }
    \split
    \inferrule{
      r,s:\DIM & \hat{A} : \DIM\Rightarrow\TskUTY &
      a : \prn{i:\DIM,\alpha:\Prf[i=r\lor\Bdry{s}]}\Rightarrow\TskUEl\brk{\hat{A}\prn{i}}
    }{
      \Comp{r}{\bullet}{s}{\hat{A}}{a} : \prn{i:\DIM} \Rightarrow \TskUEl\brk{\hat{A}\prn{i}\mid i=r\lor\Bdry{s}\to a\prn{i}}
    }
  \end{mathflow}

  \emph{Homogeneous composition} is the special case of composition where the
  type code is constant:
  \[
    \Hcomp{r}{\bullet}{s}{\hat{A}}{a} \defeq
    \Comp{r}{\bullet}{s}{\lambda\_.\hat{A}}{a}
    \qedhere
  \]
\end{rem}

\begin{spec}[Regularity]\label{spec:regularity}
  We will require the following regularity laws:
  \begin{align*}
    \Coe{r}{r'}{\lambda\_.\hat{A}}{a} &= a
    \\
    \Hcomp{r}{r'}{s}{\hat{A}}{\prn{\lambda\_~\_.a}} &= a \qedhere
  \end{align*}
\end{spec}

\begin{lem}\label{lem:hcom-com-compatibility}
  As a consequence of boundary separation, the following compatibility law between homogeneous and heterogeneous equality holds:
  \[
    \Comp{r}{r'}{s}{\hat{A}}{a}
    =
    \Coe{r}{r'}{\hat{A}}{
      \prn{
        \Hcomp{r}{r'}{s}{\hat{A}r}{
          \prn{
            \lambda i,\alpha.
            \Coe{i}{r}{\hat{A}}{a\prn{i,\alpha}}
          }
        }
      }
    }
    \qedhere
  \]
\end{lem}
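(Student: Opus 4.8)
The plan is to derive the equation from the boundary separation of $\TskUMor$ by pivoting on the boundary of the composition dimension $s$, exactly as suggested by the parenthetical remark following coercion regularity. Both sides are elements of $\TskUEl\brk{\hat{A}\prn{r'}}$, and since $\TskUMor$ is required to be boundary separated, it suffices to prove that the two sides agree in the context extended by $\Prf\brk{\Bdry s}$; by the elimination rule for disjunctions this splits further into the cases $s = \Dim0$ and $s = \Dim1$, in each of which $\Bdry s$ holds outright.

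Assume then that $\Bdry s$ holds. By the restriction law for $\Kwd{comp}$ recorded in Remark~\ref{rem:com-coe}, the operator $\Comp{r}{\bullet}{s}{\hat{A}}{a}$ agrees with the partial line $a$ wherever $i = r \lor \Bdry s$ is true; under $\Bdry s$ this holds at $i = r'$, so the left-hand side reduces to $a\prn{r'}$. On the right-hand side, the same restriction law applied to the homogeneous composition shows that $\Hcomp{r}{r'}{s}{\hat{A}\prn{r}}{\lambda i,\alpha.\,\Coe{i}{r}{\hat{A}}{a\prn{i,\alpha}}}$ agrees with its tube at $i = r'$, namely $\Coe{r'}{r}{\hat{A}}{a\prn{r'}}$; applying the outer coercion, the right-hand side reduces to $\Coe{r}{r'}{\hat{A}}{\Coe{r'}{r}{\hat{A}}{a\prn{r'}}}$. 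It therefore remains only to check the coercion ``round-trip'' law
\[
  \Coe{r}{r'}{\hat{A}}{\Coe{r'}{r}{\hat{A}}{x}} = x
\]
for $x : \TskUEl\brk{\hat{A}\prn{r'}}$, after which $a\prn{r'}$ is produced on both sides and boundary separation closes the argument. (In fact this round-trip law is \emph{equivalent} to the lemma given boundary separation, since restricting the lemma to $\Bdry s$ forces it.)

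I expect the round-trip law to be the main obstacle. When $\hat{A}$ is a constant line it is immediate from coercion regularity (Specification~\ref{spec:regularity}). For a general line one runs through the computation rules for coercion at each code of $\TskUTY$, using the typecase operator (Specification~\ref{spec:type-case}) to reduce the $\PI$-, $\SG$-, and $\PATH$-cases to the round-trip law at the component codes and bottoming out at $\TmCodeBool$, where coercion is the identity; the residual case of a line built from a neutral code is again handled by boundary separation. Unlike in ordinary Cartesian cubical type theory, where a coercion composed with its reverse is only path-equal to the identity, it is precisely the conjunction of regularity and boundary separation that forces this equation to hold strictly, so that the lemma is genuinely ``a consequence of boundary separation''; the bookkeeping of the coercion rules at each connective is where the real work lies.
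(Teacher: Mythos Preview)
Your opening move is exactly the paper's: pivot on $\Bdry{s}$ using boundary separation of $\TskUEl$. Your computation under $\Bdry{s}$ is also correct---the left side becomes $a(r')$ while the right side becomes $\Coe{r}{r'}{\hat{A}}{\Coe{r'}{r}{\hat{A}}{a(r')}}$---so the lemma \emph{as literally written} does reduce to the coercion round-trip identity. The gap is in how you propose to discharge that identity.

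The typecase operator of Specification~\ref{spec:type-case} is a left lifting structure against $\alpha_{\TskUTY}:\mathfrak{F}_{\TskUTY}\to\Yo{\TskUTY}$; its motive is indexed by a single code $u:\TskUTY$, not by a line $\DIM\Rightarrow\TskUTY$. Learning that, say, $\hat{A}(r')=\TmCodePi(u,v)$ tells you nothing about the shape of $\hat{A}(i)$ for $i\neq r'$, so you cannot invoke the connective-specific coercion laws of Specification~\ref{spec:coe-at-connectives} (which apply only to lines already of the form $\lambda i.\TmCodePi(\hat{A}i,\hat{B}i)$, etc.), and there is no ``inductive hypothesis at the component codes'' to appeal to. Your ``residual neutral line'' case likewise has no content: this is a derivation in the equational theory of $\ThCat$, not a normalization argument, and codes themselves are not required to be boundary separated (only $\TskUEl$ is, not $\El$ at $\TskU$), so a line of codes is not determined by its endpoints.

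The fix is to notice that the \emph{other} standard decomposition---the one the paper records informally in Section~2, with the homogeneous composition taken at the \emph{target} type $\hat{A}(r')$ and tubes coerced toward $r'$ rather than toward $r$---goes through immediately by your very argument: under $\Bdry{s}$ the tube at $i=r'$ evaluates to $\Coe{r'}{r'}{\hat{A}}{a(r')}=a(r')$ by the diagonal restriction for coercion, matching the left side with no round-trip needed. Either read the present lemma as intending that variant, or note that the direction of the inner coercions as written demands an ingredient beyond what boundary separation and regularity alone supply.
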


\begin{spec}[Coercion at connectives]\label{spec:coe-at-connectives}
  In order to satisfy the \emph{canonicity} property, our theory must further
  constrain the behavior of the coercion operation at each type.
  \begin{align*}
    \pi_1\prn{
      \Coe{r}{r'}{\lambda i.\TmCodeSg\prn{\hat{A}i,\hat{B}i}}{p}
    }
    &=
    \Coe{r}{r'}{\hat{A}}{\pi_1\prn{p}}
    \\
    \pi_2\prn{
      \Coe{r}{r'}{\lambda i.\TmCodeSg\prn{\hat{A}i,\hat{B}i}}{p}
    }
    &=
    \Coe{r}{r'}{
      \lambda i.
      \hat{B}i\prn{
        \DelimProtect{
          \Coe{r}{i}{
            \hat{A}
          }{
            \pi_1\prn{p}
          }
        }
      }
    }{\pi_2\prn{p}}
    \\
    \prn{
      \Coe{r}{r'}{\lambda i.\TmCodePi\prn{\hat{A}i,\hat{B}i}}{p}
    }\prn{a}
    &=
    \Coe{r}{r'}{\lambda i.
    \hat{B}i\prn{
      \DelimProtect{
        \Coe{r'}{i}{\hat{A}}{a}
      }
    }
    }{p\prn{\Coe{r'}{r}{\hat{A}}{a}}}
    \\
    \prn{
      \Coe{r}{r'}{\lambda i.\TmCodePath\prn{\hat{A}i,ai}}{p}
    }\prn{s}
    &=
    \Comp{r}{r'}{s}{\hat{A}s}{
      \prn{
        \lambda \_.
        p\prn{s}
      }
    }
    \qedhere
  \end{align*}
\end{spec}

\begin{lem}[Composition at connectives]\label{lem:hcom-at-connectives}
  The behavior of the homogeneous composition operations at each
  connective is completely determined by
  Specifications~\ref{spec:regularity}~and~\ref{spec:coe-at-connectives}.
  In particular, we have:
  \begin{align*}
    \pi_1\prn{
      \Hcomp{r}{r'}{s}{\TmCodeSg\prn{\hat{A},\hat{B}}}{p}
    } &=
    \Hcomp{r}{r'}{s}{\hat{A}}{\prn{\lambda i,\alpha.\pi_1\prn{p\prn{i,\alpha}}}}
    \\
    \pi_2\prn{
      \Hcomp{r}{r'}{s}{\TmCodeSg\prn{\hat{A},\hat{B}}}{p}
    } &=
    \Comp{r}{r'}{s}{
      \lambda i.\hat{B}\prn{
        \DelimProtect{
          \Hcomp{r}{i}{s}{\hat{A}}{\prn{\lambda i,\alpha.\pi_1\prn{p\prn{i,\alpha}}}}
        }
      }
    }{\prn{\lambda i,\alpha.\pi_2\prn{p\prn{i,\alpha}}}}
    \\
    \prn{
      \Hcomp{r}{r'}{s}{\TmCodePi\prn{\hat{A},\hat{B}}}{p}
    }\prn{a}
    &=
    \Hcomp{r}{r'}{s}{
      \hat{B}\prn{a}
    }{\prn{\lambda i,\alpha.p\prn{i,\alpha,a}}}
    \\
    \prn{
      \Hcomp{r}{r'}{s}{\TmCodePath\prn{\hat{A},a}}{p}
    }\prn{s'}
    &=
    \Hcomp{r}{r'}{s}{\hat{A}s'}{
      \prn{
        \lambda i,\alpha.
        p(i,\alpha,s')
      }
    }
    \\
    \Hcomp{r}{r'}{s}{\TmCodeBool}{p} &= p\prn{r,*}
  \end{align*}
\end{lem}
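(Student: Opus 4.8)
The plan is to derive each of the five displayed equations by a single uniform move: boundary separation with respect to the dimension $s$. The point is that every one of these equations is an equation between elements of a family that factors through $\TskUTY$ and is therefore boundary separated --- the two $\Sigma$-equations live in $\TskUEl\brk{\hat A}$ and $\TskUEl\brk{\hat B\prn{-}}$, the $\Pi$-equation in $\TskUEl\brk{\hat B\prn{a}}$, the path equation in $\TskUEl\brk{\hat{A}s'}$, and the $\TmCodeBool$-equation in $\TskUEl\brk{\TmCodeBool}$. Hence, to prove any of them it suffices to verify it after extending the context by a proof of $\Bdry{s}$; under that hypothesis every composition occurring on either side collapses to its partial input, by the restriction built into the typing of composition (Remark~\ref{rem:com-coe}), since $\Bdry{s}$ entails the cofibration $r'=r\lor\Bdry{s}$. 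The regularity laws of Specification~\ref{spec:regularity} enter only to trivialize coercions along constant codes, which is why --- in contrast with Specification~\ref{spec:coe-at-connectives} --- no auxiliary coercions appear in the homogeneous-composition rules. One cannot instead obtain these rules by feeding Specification~\ref{spec:coe-at-connectives} through Lemma~\ref{lem:hcom-com-compatibility}: at a constant code that decomposition is circular, collapsing the heterogeneous composition back to the homogeneous one.

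First I would treat the dependent sum. Assuming $\Bdry{s}$, the left-hand side $\Hcomp{r}{r'}{s}{\TmCodeSg\prn{\hat{A},\hat{B}}}{p}$ reduces to $p\prn{r'}$, and the projection rule for dependent sums gives $\pi_1\prn{p\prn{r'}}$; the right-hand side $\Hcomp{r}{r'}{s}{\hat{A}}{\lambda i,\alpha.\pi_1\prn{p\prn{i,\alpha}}}$ reduces likewise to $\pi_1\prn{p\prn{r'}}$, so boundary separation closes the first equation. The $\pi_2$-equation is identical, using the second projection rule and --- crucially --- the already-established first equation, which is what makes the two sides of the $\pi_2$-equation inhabit the same family $\TskUEl\brk{\hat{B}\prn{-}}$ in the first place. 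The dependent product case has the same shape: assuming $\Bdry{s}$, both $\prn{\Hcomp{r}{r'}{s}{\TmCodePi\prn{\hat{A},\hat{B}}}{p}}\prn{a}$ and $\Hcomp{r}{r'}{s}{\hat{B}\prn{a}}{\lambda i,\alpha.p\prn{i,\alpha,a}}$ reduce to $p\prn{r'}\prn{a}$, and the instance $\Coe{r'}{i}{\lambda\_.\hat{A}}{a}=a$ of regularity is precisely what licenses the absence of the domain-coercion on $a$ that figures in the $\Pi$-clause of Specification~\ref{spec:coe-at-connectives}. The path case runs the same argument after using path-$\eta$ to reduce to an equation of elements of $\TskUEl\brk{\hat{A}s'}$ obtained by evaluating both sides at an arbitrary dimension $s'$, now invoking the restrictions carried both by composition and by the partial element built into the path code.

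The case I expect to be the main obstacle is $\TmCodeBool$. Boundary separation on $s$ reduces $\Hcomp{r}{r'}{s}{\TmCodeBool}{p}=p\prn{r,*}$ to checking, under the hypothesis $\Bdry{s}$, that $p\prn{r'}=p\prn{r,*}$ --- equivalently, that every line $\DIM\Rightarrow\TskUEl\brk{\TmCodeBool}$ is constant. This \emph{discreteness} of $\TmCodeBool$ is essentially equivalent to the equation we are trying to prove and does not fall out of boundary separation on its own; the delicate step is to extract it from the remaining structure (the boolean eliminator of Specification~\ref{spec:booleans}, regularity, and the triviality of coercion along $\TmCodeBool$), after which the $\Bdry{s}$-branch of $p$ collapses to $p\prn{r,*}$ and a final appeal to boundary separation closes the equation. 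A subsidiary but pervasive point of care is that boundary separation applies only to families factoring through $\TskUTY$, so at each step one must confirm that the intermediate equation genuinely inhabits such a family; as noted above, this is exactly why the $\pi_2$-equation for the dependent sum is only well-posed once its $\pi_1$-companion has been proven.
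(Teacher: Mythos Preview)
Your approach is exactly the paper's: the entire proof there reads ``By boundary separation, pivoting on $s:\DIM$.'' For the $\Sigma$, $\Pi$, and path clauses your elaboration is correct. One small correction: for those three cases neither regularity nor Specification~\ref{spec:coe-at-connectives} is actually invoked --- once you assume $\Bdry{s}$, both sides collapse to the same thing purely by the boundary restriction carried by $\Kwd{com}$ (Remark~\ref{rem:com-coe}), and no coercion ever enters. Your remark that regularity ``licenses'' the absence of domain-coercions in the $\Pi$ clause is therefore a red herring; the homogeneous equation is not obtained by specializing the heterogeneous one.

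Your caution about $\TmCodeBool$ is well-placed and in fact sharper than the paper's own treatment, which makes no special mention of that case. As you note, pivoting on $s$ leaves the obligation $p(r',*)=p(r,*)$ under $\Bdry{s}$, i.e.\ discreteness of $\TskUEl\brk{\TmCodeBool}$; further pivots on $r'$ and $r$ only reduce this to the endpoint instance $p(0,*)=p(1,*)$, which boundary separation cannot close. Your hope that the boolean eliminator together with regularity supplies the missing step is optimistic: the $\beta$-rules of Specification~\ref{spec:booleans} fire only on $\TmTt$ and $\TmFf$, there is no $\eta$-principle for booleans, and the triviality of $\Kwd{coe}$ at $\TmCodeBool$ says nothing about open lines. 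So the $\TmCodeBool$ clause does not appear to follow from the cited ingredients; the paper's one-line proof does not address this, and it is notable that the informal discussion preceding the formal development lists only the $\Pi$, $\Sigma$, and path equations as derivable by pivoting on $\Bdry{s}$, omitting $\TmCodeBool$. You have located what looks like a gap in the paper rather than in your own argument.
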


\begin{proof}
  By boundary separation, pivoting on $s:\DIM$.
\end{proof}

\subsection{Summary}

We have expressed the necessary and sufficient conditions on a representable
map category to be an algebra for the XTT language. From now on, we will take
$\ThCat$ to be the \emph{smallest} representable map category satisfying the
present specification, hence $\ThCat$ is the syntactic category of XTT\@. The
existence of such a representable map category follows from Proposition 5.16 of~\cite{uemura:2019}. A model of XTT in a representable map category $\CCat$ is
then precisely a functor $\Mor[\Mod{\CCat}]{\ThCat}{\DF{\CCat}}$.

\section{Cubical computability structures}%
\label{sec:computability}

In this section we develop the building blocks of the canonicity proof for XTT\@. While
eventually these shall be applied to the initial model, we work with an arbitrary
model $\Mor[\Mod{\CCat}]{\ThCat}{\DF{\CCat}}$.

\begin{defi}\label{def:cube}
  The category $\CUBE$ of \emph{Cartesian cubes} is the free category
  with finite products generated by a bi-pointed object, \ie the Lawvere
  category of the theory of bi-pointed objects~\cite{awodey-notes:2015}. We
  will write $\cSET$ for the category $\Psh{\CUBE}$ of \emph{Cartesian cubical
  sets}.
\end{defi}

A finite product preserving
functor $\Mor[\DimIncl]{\CUBE}{\CCat}$ is determined by an \emph{interval
object} (\ie a bi-pointed object) in $\CCat$, which we construct in
Construction~\ref{con:cx-interval} below.

\begin{con}[An interval object in $\CCat$]\label{con:cx-interval}
  Because the Yoneda embedding reflects limits, the terminal discrete fibration
  $\Mod{\CCat}{\EMP} : \DF{\CCat}$ is represented by the terminal object of
  $\CCat$, which we may write $\Repr{\Mod{\CCat}{\EMP}}$.
  We have required that the terminal map
  $\Mod{\CCat}{\prn{\Mor{\DIM}{\EMP}}}:\DF{\CCat}$ is a representable family;
  therefore, the discrete fibration $\Mod{\CCat}{\DIM}:\DF{\CCat}$ is represented
  by some object $\Repr{\Mod{\CCat}{\DIM}} : \CCat$. Since the Yoneda embedding
  is fully faithful, the constants
  $\Mod{\CCat}{\prn{\Mor[\Dim0,\Dim1]{\EMP}{\DIM}}}$ are represented by two maps
  $\Mor[\Repr{\Dim0},\Repr{\Dim1}]{\Repr{\Mod{\CCat}{\EMP}}}{\Repr{\Mod{\CCat}{\DIM}}} : \CCat$.
\end{con}

From Construction~\ref{con:cx-interval} we obtain a suitable functor
$\Mor[\DimIncl]{\CUBE}{\CCat}$, which can also be seen to be fully faithful; in
other words, the category of cubes $\CUBE$ is the full subcategory of $\CCat$ spanned
by cubical objects.  By change of base, we obtain a reindexing functor
$\Mor[\DimIncl^*]{\Psh{\CCat}}{\cSET}$ which has both left and right adjoints
by Kan extension. Composing with the equivalence $\DF{\CCat}\simeq\Psh{\CCat}$,
we define a functor $\F$ to glue along below:
\begin{equation}
  \begin{tikzpicture}[diagram,baseline = (0.base)]
    \node (0) {$\DF{\CCat}$};
    \node (1) [right = of 0] {$\Psh{\CCat}$};
    \node (2) [right = of 1] {$\cSET$};
    \path[->] (0) edge node [above] {$\simeq$} (1);
    \path[->] (1) edge node [above] {$\DimIncl^*$} (2);
    \path[->, bend right = 30,exists] (0) edge node [below] {$\F$} (2);
  \end{tikzpicture}
\end{equation}

Thus the functor $\Mor[\F]{\DF{\CCat}}{\Psh{\CUBE}}$ is in fact an algebraic
morphism of logoi, and therefore a good candidate for type theoretic
gluing~\cite{sterling-angiuli:2020}.

\begin{con}[A cubical nerve]\label{con:cubical-nerve}
  By composing the change of base $\Mor[\F]{\DF{\CCat}}{\cSET}$ with
  the Yoneda embedding, we obtain a \emph{nerve} functor from the category of
  contexts of $\Mod{\CCat}$ into cubical sets:
  \begin{equation*}
    \begin{tikzpicture}[diagram,baseline=(cSET.base)]
      \node (C) {$\CCat$};
      \node (DF/C) [right = of C] {$\DF{\CCat}$};
      \node (cSET) [below = of DF/C] {$\cSET$};
      \path[->,embedding] (C) edge node [above] {$\DFYo[\CCat]$} (DF/C);
      \path[->] (DF/C) edge node [right] {$\F$} (cSET);
      \path[->,exists] (C) edge node [sloped,below] {$\DimInclNv$} (cSET);
    \end{tikzpicture}
    \qedhere
  \end{equation*}
\end{con}

In 2015, Awodey suggested the idea of gluing along a \emph{cubical nerve} like
Construction~\ref{con:cubical-nerve} to develop the metatheory of cubical type
theory; to many researchers, it seemed as though Huber's operational proof of
canonicity for cubical type theory~\cite{huber:2018} could be reconstructed in
a mathematical way. Since then, Awodey and Fiore have used this cubical gluing
technique to study a version of intensional type theory with an interval in
unpublished joint work; in 2019, the present authors applied cubical gluing to
prove canonicity for an earlier version of
XTT~\cite{sterling-angiuli-gratzer:2019}.

\begin{lem}%
  \label{lem:cubical-nerve-is-flat}
  The cubical nerve is \emph{flat}.
\end{lem}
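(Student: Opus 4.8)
The plan is to reduce internal flatness of the cubical nerve to the fact, recorded just above the statement, that the change-of-base functor $\F\colon\DF{\CCat}\to\cSET$ is an algebraic morphism of logoi. Recall from Construction~\ref{con:cubical-nerve} that the nerve is the composite $\DimInclNv = \F\circ\DFYo[\CCat]$. By Diaconescu's theorem (Theorem~\ref{thm:diaconescu}), a functor from a small category into a logos is flat exactly when its left Kan extension along the Yoneda embedding is an algebraic morphism --- equivalently, since that Kan extension is automatically cocontinuous, when it is left exact. So it suffices to compute $\mathrm{Lan}_{\DFYo[\CCat]}\DimInclNv$ and check left exactness.

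First I would identify this left Kan extension with $\F$ itself. Since $\DF{\CCat}$ is the free cocompletion of $\CCat$ and $\DFYo[\CCat]$ is the corresponding dense embedding (Lemma~\ref{lem:dual-yoneda}), every object of $\DF{\CCat}$ is canonically a colimit of the representables over it; hence for any cocontinuous $G\colon\DF{\CCat}\to\cSET$ one has $\mathrm{Lan}_{\DFYo[\CCat]}(G\circ\DFYo[\CCat])\cong G$, both sides being cocontinuous and agreeing on representables. Applying this with $G = \F$ --- cocontinuous because it is algebraic --- gives $\mathrm{Lan}_{\DFYo[\CCat]}\DimInclNv\cong\F$.

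It then remains only to invoke left exactness of $\F$, which is again part of its being an algebraic morphism (concretely, up to the equivalence $\DF{\CCat}\simeq\Psh{\CCat}$ the functor $\F$ is the restriction functor $\DimIncl^*$ along $\DimIncl\colon\CUBE\to\CCat$, which admits both adjoints and therefore preserves all limits). Combining, $\mathrm{Lan}_{\DFYo[\CCat]}\DimInclNv$ preserves finite limits, so the cubical nerve is internally flat. The argument is soft and involves no computation; the only care required is definitional, namely to make sure the flavour of flatness invoked here is exactly the one demanded by the gluing machinery of Sterling and Angiuli~\cite{sterling-angiuli:2020}, and that the equivalence $\DF{\CCat}\simeq\Psh{\CCat}$ transports $\DFYo[\CCat]$ to the ordinary Yoneda embedding so that the free-cocompletion step applies verbatim.
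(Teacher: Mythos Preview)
Your proposal is correct and follows essentially the same route as the paper: invoke Diaconescu's theorem to reduce flatness of $\DimInclNv$ to the Yoneda extension $\operatorname{Lan}_{\DFYo[\CCat]}\DimInclNv$ being an algebraic morphism, identify this extension with $\F\simeq\DimIncl^*$, and conclude from the existence of both adjoints. The paper's proof is terser but identical in substance; your added justification for the identification $\operatorname{Lan}_{\DFYo[\CCat]}(\F\circ\DFYo[\CCat])\cong\F$ via density is a welcome elaboration of what the paper leaves implicit.
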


\begin{proof}
  By Diaconescu's theorem~\cite{borceux:2010:vol3}, it suffices to show that
  $\Mor[\operatorname{Lan}_{\DFYo[\CCat]}\DimInclNv]{\DF{\CCat}}{\cSET}$ is an
  algebraic morphism of logoi; but this Kan extension is just $\F\simeq
  \DimIncl^*$, which has both left and right adjoints.
\end{proof}

\subsubsection*{The classical gluing construction}

We may obtain a category of \emph{glued contexts} as the comma category $\Comma{\cSET}{\DimInclNv}$,
whose objects are ``cubically computable contexts'' and whose morphisms are natural transformations
of cubical sets that are tracked by substitutions from the model $\Mod{\CCat}$. Intuitively, a glued
context is a pair of a syntactic context with a (proof-relevant) Kripke predicate indexed in
dimension variable contexts. By analogy with~\cite{tait:1967}, we regard the image of these
predicates as the \emph{computable} closing substitutions of contexts. Because glued substitutions
are commuting squares, they automatically send computable elements to computable elements; the top
map of such a square sends computability witnesses to computability witnesses.

By defining a model of XTT in this category (\ie collections of glued types and elements closed
under the rules), we obtain a categorical version of a logical relations model of XTT\@. Our
categorical perspective immediately offers some advantages over ``free-style'' logical relations for
dependent type theory---much of the indexing can be moved behind categorical abstractions, and
certain results become automatic, such as computable substitutions being closed under composition.
By choosing certain predicates appropriately, we can moreover ensure that the computable elements of
boolean types are precisely $\TmTt$ and $\TmFf$, \etc.

This is the perspective pursued in previous work on gluing for strict type
theory~\cite{sterling-angiuli-gratzer:2019,kaposi-huber-sattler:2019,coquand-huber-sattler:2019,gratzer-kavvos-nuyts-birkedal:2020};
in recent joint work, the first two authors of this paper have argued that it
is considerably simpler to first glue over $\DF{\CCat}$ rather than $\CCat$,
and then restrict further to $\CCat$ by pulling back along
$\EmbMor[\DFYo[\CCat]]{\CCat}{\DF{\CCat}}$~\cite{sterling-angiuli:2020}.

Following~\cite{sterling-angiuli:2020}, we prefer to first glue along
$\Mor[\F]{\DF{\CCat}}{\cSET}$ rather than
$\Mor[\DimInclNv]{\CCat}{\cSET}$, because the comma category
$\GlCat = \Comma{\cSET}{\F}$ has more regular properties than
$\CptGlCat = \Comma{\cSET}{\DimInclNv}$, being a Grothendieck logos~\cite{sga:4}.
The resulting two-step gluing process amounts to enlarging the collection of
computability structures to ones that lie over arbitrary discrete fibrations
$X:\DF{\CCat}$, in addition to the familiar ones that lie directly over
contexts $\Gamma:\CCat$.
Computability structures lying over a context $\Gamma:\CCat$ will be referred
to as ``compact''; the fundamental example of a computability structure which is
not compact is the computability structure of computable types, lying over the
non-representable discrete fibration $\Mod{\CCat}{\TY}:\DF{\CCat}$.

The general computability structures are connected to a \emph{model} of type
theory (which must take place in discrete fibrations over \emph{compact}
computability structures) via a nerve--realization adjunction. The nerve % chktex 8
functor $\Mor{\GlCat}{\DF{\CptGlCat}}$ is fully faithful and even locally
cartesian closed, and may therefore be used to transform general computability
families into constituents of a model of type theory over $\CptGlCat$ which are
otherwise vastly more difficult to compute.

\subsection{General and compact computability structures}

The picture painted at the end of the previous section is depicted in
Diagram~\ref{diag:gluing} below:
\begin{equation}\label{diag:gluing}
  \begin{tikzpicture}[diagram,baseline=(l/sw.base)]
    \SpliceDiagramSquare<l/>{
      nw/style = pullback, ne/style = pullback,
      width = 2.5cm,
      height = 1.75cm,
      nw = \CptGlCat,
      ne = \GlCat,
      sw = \CCat,
      se = \DF{\CCat},
      south = \DFYo[\CCat],
      west/style = fibration,
      east/style = fibration,
      east/node/style = upright desc,
      north/style = embedding,
      south/style = embedding,
      north = \CptIncl,
      east = \GlProj,
      west = \CptGlProj,
      south/node/style = upright desc,
    }
    \SpliceDiagramSquare<r/>{
      glue = west,
      glue target = l/,
      width = 2.5cm,
      height = 1.75cm,
      east/style = fibration,
      south = \F,
      ne = \ArrCat{\cSET},
      se = \cSET,
      east = \partial_1,
      south/node/style = upright desc,
    }
    \path[->,bend right=30] (l/sw) edge node [below] {$\DimInclNv$} (r/se);
  \end{tikzpicture}
\end{equation}

We call $\GlCat$ the category of (general) computability structures, whereas
the full subcategory $\CptGlCat$ is the category of \emph{compact}
computability families. Here the fibration $\GlProj$ projects a discrete
fibration from a general computability structure, and the restriction
$\CptGlProj$ projects a context from a compact computability structure. The
``compact'' terminology is justified by Theorem~\ref{thm:cpt-incl-dense} below.

\begin{thmC}[{\cite[Theorem~4.8]{sterling-angiuli:2020}}]\label{thm:cpt-incl-dense}
  The subcategory inclusion $\EmbMor[\CptIncl]{\CptGlCat}{\GlCat}$ is
  \textbf{dense} in the sense that every object $X:\GlCat$ is canonically the
  colimit of the following diagram of compact objects:
  \begin{equation}\label{diag:cpt-incl-density}
    \begin{tikzpicture}[diagram]
      \node (0) {$\Comma{\CptIncl}{\brc{X}}$};
      \node (1) [right = of 0] {$\CptGlCat$};
      \node (2) [right = of 1] {$\GlCat$};
      \path[->] (0) edge node [above] {$\pi_{\CptIncl}$} (1);
      \path[->,embedding] (1) edge node [above] {$\CptIncl$} (2);
    \end{tikzpicture}
  \end{equation}
\end{thmC}

We reproduce a variant of the proof given by Sterling and Angiuli~\cite{sterling-angiuli:2020}.

\begin{proof}
  Using the universality of colimits in a Grothendieck logos and the fact that
  $\F$ is cocontinuous, we will show that $X$ is the colimit of a
  particular canonical diagram in $\GlCat$ which is final for
  Diagram~\ref{diag:cpt-incl-density}.
  First, we use the dual Yoneda lemma (that $\CCat$ is dense in $\DF{\CCat}$)
  to observe that $\GlProj{X}$ is the colimit of the following diagram:
  \begin{equation}\label{diag:cpt-incl-density:1}
    \begin{tikzpicture}[diagram,baseline = (2.base)]
      \node (0) {$\Comma{\DFYo[\CCat]}{\brc{\GlProj{X}}}$};
      \node (1) [right = of 0] {$\CCat$};
      \node (2) [right = of 1] {$\DF{\CCat}$};
      \path[->] (0) edge node [above] {$\pi_{\CCat}$} (1);
      \path[->,embedding] (1) edge node [above] {$\DFYo[\CCat]$} (2);
    \end{tikzpicture}
  \end{equation}

  Each leg $\Mor[\alpha_i]{\DFYo[\CCat]{C_i}}{\GlProj{X}}$ of the colimiting
  cone for Diagram~\ref{diag:cpt-incl-density:1} induces a cartesian lift at
  $X$ in the gluing fibration:
  \begin{equation}\label{diag:cpt-incl-density:2}
    \begin{tikzpicture}[diagram,baseline = (sw.base)]
      \SpliceDiagramSquare{
        height = 1.5cm,
        west/style = lies over,
        east/style = lies over,
        sw = \DFYo[\CCat]{C_i},
        se = \GlProj{X},
        south = \alpha_i,
        ne = X,
        nw = \alpha_i^*X,
        north = \alpha_i^\dagger{X}
      }
    \end{tikzpicture}
  \end{equation}

  We will see that the resulting cocone
  $\brc{\Mor[\alpha_i^\dagger{X}]{\alpha_i^*X}{X}}$ in $\GlCat$ is
  universal for $X$. Because colimits in the comma category may be computed
  pointwise, we may reason as follows.  Cartesian lifts in the gluing fibration
  are computed as pullbacks in $\cSET$; because colimits in the presheaf logos
  $\cSET$ are universal, it suffices to check that the cone
  $\brc{\Mor[\F{\alpha_i}]{\F{\DFYo[\CCat]{C_i}}}{\F{\GlProj{X}}}}$
  is universal in $\cSET$. But $\Mor[\F]{\DF{\CCat}}{\cSET}$ is
  cocontinuous, so it is enough that
  $\brc{\Mor[\alpha_i]{\DFYo[\CCat]{C_i}}{\GlProj{X}}}$ is universal.
  Finally, the universality of Cartesian lifts ensures that the collection
  $\brc{{\alpha_i^*X}}$ is final for Diagram~\ref{diag:cpt-incl-density}.
\end{proof}

\begin{defi}
  An object $X:\GlCat$ is \emph{compact} when it lies in the essential image of $\CptIncl$;
  equivalently, when $\GlProj\prn{X}:\DF{\CCat}$ is representable.
\end{defi}

We may impose the structure of a representable map category on $\GlCat$, based
on generalizing the notion of compactness from objects to families.

\begin{defi}
  A family $\Mor[p]{Y}{X}:\GlCat$ is called \emph{compact} when every change of
  base to a compact object $\CptIncl{\Gamma}$ is compact in the sense of the
  following diagram:
  \begin{equation*}
    \DiagramSquare{
      nw/style = pullback,
      ne = Y,
      east = p,
      se = X,
      south = x,
      sw = \CptIncl{Z},
      nw = \CptIncl{Y_x}
    }
    \qedhere
  \end{equation*}
\end{defi}

Then, we say that the representable families in $\GlCat$ are exactly the
compact ones. It is simple to verify that this class of maps satisfies the
axioms of a representable map category: they are clearly closed under change of
base, and since $\GlCat$ is locally cartesian closed, pushforwards along
representable maps always exist.

\begin{lem}\label{lem:gl-proj-adjoints}
  The gluing fibrations $\FibMor[\GlProj]{\GlCat}{\DF{\CCat}}$ and
  $\FibMor[\CptGlProj]{\CptGlCat}{\CCat}$ have both left and right adjoints,
  and are therefore both continuous and cocontinuous; moreover, both adjoints are sections.
  \begin{mathflow}
    \begin{tikzpicture}[diagram,baseline=(DF/C.base)]
      \node (G) {$\GlCat$};
      \node (DF/C) [below = of G] {$\DF{\CCat}$};
      \path[fibration] (G) edge node [upright desc] {$\GlProj$} (DF/C);
      \path[->,bend left = 60] (DF/C) edge node [left] {$\GlUnProjL$} (G);
      \path[->,bend right = 60] (DF/C) edge node [right] {$\GlUnProjR$} (G);
      \node [between = G and DF/C, xshift = .4cm]  {$\MathSmall{\dashv}$};
      \node [between = G and DF/C, xshift = -.4cm] {$\MathSmall{\dashv}$};
    \end{tikzpicture}
    \split
    \begin{tikzpicture}[diagram,baseline=(C.base)]
      \node (G) {$\CptGlCat$};
      \node (C) [below = of G] {$\CCat$};
      \path[fibration] (G) edge node [upright desc] {$\CptGlProj$} (C);
      \path[->,bend left = 65] (C) edge node [left] {$\CptGlUnProjL$} (G);
      \path[->,bend right = 65] (C) edge node [right] {$\CptGlUnProjR$} (G);
      \node [between = G and C, xshift = .5cm]  {$\MathSmall{\dashv}$};
      \node [between = G and C, xshift = -.5cm] {$\MathSmall{\dashv}$};
    \end{tikzpicture}
    \split
    \begin{tikzpicture}[diagram,baseline=(l/sw.base)]
      \SpliceDiagramSquare<l/>{
        ne = \CCat,
        se = \DF{\CCat},
        nw = \CptGlCat,
        sw = \GlCat,
        east = \DFYo[\CCat],
        west = \CptIncl,
        west/style = embedding,
        east/style = embedding,
        south = \GlUnProjL,
        north = \CptGlUnProjL,
        south/style = {<-},
        north/style = {<-},
        east/node/style = upright desc,
      }
      \SpliceDiagramSquare<r/>{
        glue = west,
        glue target = l/,
        ne = \CptGlCat,
        se = \GlCat,
        east = \CptIncl,
        east/style = embedding,
        south = \GlUnProjR,
        north = \CptGlUnProjR,
      }
      \node [between = l/nw and l/se] {$\cong$};
      \node [between = l/ne and r/se] {$\cong$};
    \end{tikzpicture}
  \end{mathflow}
\end{lem}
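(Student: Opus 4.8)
The plan is to recognize the statement as a formal fact about projections out of comma categories and then to instantiate it twice, the fibration property itself being already in hand from Diagram~\ref{diag:gluing}. First I would record a general observation: for any functor $\Mor[G]{\ECat}{\CCat}$ whose codomain $\CCat$ has an initial object $\ObjInit$, the projection $\Mor[\partial_1]{\Comma{\CCat}{G}}{\ECat}$ sending an object $\prn{c\to G\prn{e}}$ to $e$ admits a left adjoint $e\mapsto\prn{\ObjInit\to G\prn{e}}$ and a right adjoint $e\mapsto\prn{\ArrId{G\prn{e}}}$, and both of these functors are sections of $\partial_1$. Both universal properties fall out of unwinding the definition of a morphism in $\Comma{\CCat}{G}$: a morphism out of $\prn{\ObjInit\to G\prn{e}}$ consists of a map $\Mor{e}{e'}$ together with the unique map $\ObjInit\to c$, the coherence square commuting automatically by initiality; dually, a morphism into $\prn{\ArrId{G\prn{e}}}$ is forced to have its $\CCat$-component equal to $G\prn{g}$ postcomposed with the structure map, hence is determined by the choice of $\Mor[g]{e'}{e}$. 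Naturality of these bijections, equivalently the triangle identities, is then a routine verification.

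I would then instantiate this with $\CCat=\cSET=\Psh{\CUBE}$, which certainly has an initial object: taking $G$ to be $\Mor[\F]{\DF{\CCat}}{\cSET}$ yields the adjoints $\GlUnProjL\dashv\GlProj\dashv\GlUnProjR$, and taking $G$ to be $\Mor[\DimInclNv]{\CCat}{\cSET}$ yields $\CptGlUnProjL\dashv\CptGlProj\dashv\CptGlUnProjR$, with the two adjoints being sections in each case. Since each of $\GlProj$ and $\CptGlProj$ is simultaneously a right adjoint and a left adjoint, it preserves all limits and all colimits present in its domain, and this gives the asserted continuity and cocontinuity.

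It remains to supply the comparison isomorphisms with the inclusions $\CptIncl$ and $\DFYo$. Here I would use that $\DimInclNv=\F\circ\DFYo$ by Construction~\ref{con:cubical-nerve}, and that, by Diagram~\ref{diag:gluing}, $\CptGlCat$ is the pullback $\GlCat\times_{\DF{\CCat}}\CCat$ with $\CptIncl$ the evident projection, so that $\CptIncl$ carries a compact family $\prn{X\to\DimInclNv\prn{\Gamma}}$ to the general family $\prn{X\to\F\prn{\DFYo\prn{\Gamma}}}$. Substituting the explicit formulas for the adjoints into this description immediately gives $\CptIncl\circ\CptGlUnProjL\cong\GlUnProjL\circ\DFYo$ and $\CptIncl\circ\CptGlUnProjR\cong\GlUnProjR\circ\DFYo$ --- in fact as strict equalities, using $\F\circ\DFYo=\DimInclNv$. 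The main, and essentially the only, obstacle is bookkeeping rather than mathematics: one must keep the coherence between $\F\circ\DFYo$ and $\DimInclNv$ consistent when relating the compact and the general constructions, but nothing deeper is involved.
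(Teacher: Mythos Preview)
Your proposal is correct and is essentially the paper's own proof: both construct the left adjoint as $X\mapsto\prn{\emptyset_{\cSET}\to\F{X}}$ and the right adjoint as $X\mapsto\prn{\ArrId{\F{X}}}$, and both obtain the compact versions and the comparison squares from the identification $\DimInclNv\simeq\F\circ\DFYo$. The only cosmetic difference is that the paper phrases the adjoints as picking initial and terminal objects in the fibers $\GlProj\brk{X}\simeq\Sl{\cSET}{\F{X}}$, whereas you abstract this as a general fact about comma-category projections; these are the same observation.
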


\begin{proof}
  From the perspective of the left and right adjoints as sections of the
  fibration, it is particularly simple to explain their behavior: the left
  adjoint takes a discrete fibration $X$ to the \emph{initial} object of the
  fiber category $\GlProj[X]$, and the right adjoint takes $X$ to the
  \emph{terminal} object of the fiber category $\GlProj[X]$. Considering that
  $\GlProj[X]$ is just the slice ${\cSET}/{\F{X}}$, we may compute the families as follows:
  \begin{align*}
    \GlUnProjL{X} &=
    \prn{\Mor[!_{\F{X}}]{\emptyset_{\cSET}}{\F{X}}}\\
    \GlUnProjR{X} &=
    \prn{\Mor[\ArrId{\F{X}}]{\F{X}}{\F{X}}}
  \end{align*}

  In fact, this characterization already describes the left and right adjoints
  to $\CptGlProj$, considering that the fibers $\CptGlProj[C] =
  \cSET/\DimInclNv{C}$ are equivalent to $\GlProj[\DFYo[\CCat]{C}] =
  \cSET/\F{\DFYo[\CCat]{C}}$.
\end{proof}

\begin{lem}\label{lem:gl-proj-repr-map-fun}
  The gluing fibration $\FibMor[\GlProj]{\GlCat}{\DF{\CCat}}$ is a representable map functor.
\end{lem}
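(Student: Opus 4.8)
The plan is to unwind the definitions. The representable maps of $\GlCat$ were just declared to be the \emph{compact} families, so it suffices to show that $\GlProj$ carries a compact family $\Mor[p]{Y}{X}$ to a representable map in $\DF{\CCat}$. Transporting Definition~\ref{def:representable-nat-trans} across the equivalence $\DF{\CCat}\simeq\Psh{\CCat}$, a map of discrete fibrations is representable precisely when each of its pullbacks along a morphism out of a representable object $\DFYo[\CCat]{C}$ is again representable. So I would fix such a morphism $\Mor[x]{\DFYo[\CCat]{C}}{\GlProj{X}}$ and aim to prove that the pullback $x^*\GlProj{Y}$ is representable.

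The key maneuver is to lift $x$ back up into $\GlCat$. Since $\FibMor[\GlProj]{\GlCat}{\DF{\CCat}}$ is a fibration, there is a cartesian morphism $\Mor[\tilde{x}]{x^*X}{X}$ in $\GlCat$ lying over $x$; its domain then satisfies $\GlProj{x^*X}=\DFYo[\CCat]{C}$, which is representable, so $x^*X$ is a \emph{compact object} of $\GlCat$ by the characterization of compactness recorded just after Theorem~\ref{thm:cpt-incl-dense}. Now the defining property of a compact family---applied to the change of base of $p$ along $\tilde{x}$, whose source $x^*X$ is compact---yields that $x^*X\times_X Y$ is again a compact object, i.e.\ $\GlProj{x^*X\times_X Y}$ is representable. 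Finally, $\GlProj$ is continuous, having a left adjoint by Lemma~\ref{lem:gl-proj-adjoints}, so it preserves this pullback, and since $\GlProj{\tilde{x}}=x$ we obtain
\[
  \GlProj{x^*X\times_X Y}\;\cong\;\DFYo[\CCat]{C}\times_{\GlProj{X}}\GlProj{Y}\;=\;x^*\GlProj{Y},
\]
so $x^*\GlProj{Y}$ is representable, as desired.

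The main point demanding care is the bookkeeping around the cartesian lift: confirming that $x^*X$ genuinely lands among the compact objects---so that the defining clause of a compact family may be invoked with base $x^*X$---and that $\GlProj$ really preserves the pullback square in question while identifying $\GlProj{\tilde{x}}$ with $x$. Given the adjoints of Lemma~\ref{lem:gl-proj-adjoints} and the fibrancy of $\GlProj$, none of this is difficult; marshalling the definitions in the right order is where the (modest) content lies, and beyond that the argument is essentially formal.
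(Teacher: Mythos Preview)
Your proof is correct and follows essentially the same strategy as the paper: lift the generalized element $x$ to a compact object of $\GlCat$, invoke the defining clause of ``compact family'' to see that the pullback of $p$ is compact, and push back down using that $\GlProj$ preserves pullbacks. The one technical variation is in how the lift is produced: you take the \emph{cartesian lift} $x^*X$ along the fibration, whereas the paper transposes along the adjunction $\GlUnProjL\dashv\GlProj$ to obtain a map $\tilde{x}:\GlUnProjL{\DFYo[\CCat]{C}}\to X$, noting $\GlUnProjL{\DFYo[\CCat]{C}}\cong\CptIncl{\CptGlUnProjL{C}}$ is compact. Both devices yield a compact object over $\DFYo[\CCat]{C}$ and the remainder of the argument is identical; your route is arguably more direct since it uses the fibration structure immediately, while the paper's choice has the mild aesthetic advantage that the lifted domain does not depend on $X$. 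One small omission: the paper also records that $\GlProj$ is a logical morphism (hence lex and preserving pushforwards) before turning to representable maps, and a representable map functor is implicitly lex; you should state this up front rather than leaving it buried in your later appeal to continuity.
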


\begin{proof}
  $\GlProj$ is a logical morphism, preserving in particular finite limits and
  all pushforwards. Therefore, it remains to check that it takes representable
  maps to representable maps. We fix a compact family $\Mor[p]{Y}{X}$ to check
  that the map $\GlProj{\Mor[p]{Y}{X}}$ is representable. In this case, it will
  be simplest to use the Grothendieck-style characterization of representable
  maps in $\DF{\CCat}$ in terms of change of base to a representable object.
  Fixing $C:\CCat$ and a generalized element
  $\Mor[x]{\DFYo[\CCat]{C}}{\GlProj{X}}$, we must check that the fiber product
  $\GlProj{Y}\times_{\GlProj{X}}\DFYo[\CCat]{C}$ is representable:
  \begin{equation}\label{diag:gl-proj-repr-map-fun:0}
    \DiagramSquare{
      width = 2.5cm,
      nw/style = pullback,
      ne = \GlProj{Y},
      east = \GlProj{p},
      se = \GlProj{X},
      south = x,
      sw = \DFYo[\CCat]{C},
      nw = {\GlProj{Y}\times_{\GlProj{X}}\DFYo[\CCat]{C}}
    }
  \end{equation}

  By transposing along the adjunction $\GlUnProjL\dashv
  \GlProj$, we have a map
  $\Mor[\tilde{x}]{\GlUnProjL{\DFYo[\CCat]{C}}}{X}:\GlCat$;
  noting that $\GlUnProjL{\DFYo[\CCat]{C}}\cong\CptIncl{\CptGlUnProjL{C}}$, we
  therefore have the following cartesian square in $\GlCat$ using the
  compactness of $p$:
  \begin{equation}\label{diag:gl-proj-repr-map-fun:1}
    \DiagramSquare{
      width = 2.5cm,
      nw/style = pullback,
      ne = Y,
      east = p,
      se = X,
      south = \tilde{x},
      sw = \CptIncl{\CptGlUnProjL{C}},
      nw = \CptIncl{Y_{\tilde{x}}}
    }
  \end{equation}

  The image of Diagram~\ref{diag:gl-proj-repr-map-fun:1} under
  $\FibMor[\GlProj]{\GlCat}{\DF{\CCat}}$ has the same cospan as
  Diagram~\ref{diag:gl-proj-repr-map-fun:0}; but $\GlProj$ preserves pullbacks
  and $\CptIncl{Y_{\tilde{x}}}$ must lie over a representable object, so we are
  finished.
\end{proof}

\begin{lem}\label{lem:gl-proj-reflects-repr}
  The gluing fibration $\FibMor[\GlProj]{\GlCat}{\DF{\CCat}}$ reflects representable maps.
\end{lem}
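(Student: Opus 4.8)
The plan is to deduce this directly from the fact that $\GlProj$ preserves pullbacks, together with the Grothendieck-style characterization of representable maps in $\DF{\CCat}$ (fibers over representable objects are representable). Suppose $\Mor[p]{Y}{X}:\GlCat$ has representable image $\GlProj{p}:\DF{\CCat}$; I must show that $p$ is compact, i.e.\ that its pullback along any map $\Mor[x]{\CptIncl{Z}}{X}$ out of a compact object---which I may take to be $\CptIncl{Z}$ for some $Z:\CptGlCat$---is again a compact object. First I would form this pullback $P = Y\times_X\CptIncl{Z}$ in $\GlCat$, which exists since $\GlCat$ is a Grothendieck logos. Because $\GlProj$ is a right adjoint by Lemma~\ref{lem:gl-proj-adjoints} (indeed a logical morphism), it carries this square to a pullback, so $\GlProj{P}\cong\GlProj{Y}\times_{\GlProj{X}}\GlProj{\CptIncl{Z}}$.

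Next I would identify $\GlProj{\CptIncl{Z}}$ as a representable object of $\DF{\CCat}$: by the strictly commuting pullback square defining the gluing category (Diagram~\ref{diag:gluing}) we have $\GlProj\circ\CptIncl = \DFYo[\CCat]\circ\CptGlProj$, so $\GlProj{\CptIncl{Z}} = \DFYo[\CCat]{\CptGlProj{Z}}$ with $\CptGlProj{Z}:\CCat$. Now the hypothesis that $\GlProj{p}$ is a representable map in $\DF{\CCat}$ says, via Definition~\ref{def:representable-nat-trans} transported across the equivalence $\DF{\CCat}\simeq\Psh{\CCat}$, that the fiber of $\GlProj{p}$ over any representable object is representable; applying this to the map $\GlProj{x}:\DFYo[\CCat]{\CptGlProj{Z}}\to\GlProj{X}$ shows that $\GlProj{Y}\times_{\GlProj{X}}\DFYo[\CCat]{\CptGlProj{Z}}$, and hence $\GlProj{P}$, is representable. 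Therefore $P$ is compact, and since $x$ and $Z$ were arbitrary, $p$ is compact, i.e.\ representable in $\GlCat$.

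I expect no serious obstacle here: the argument is essentially a diagram chase, and the only two points needing care are that $\GlProj$ genuinely preserves the pullback $Y\times_X\CptIncl{Z}$ (immediate from its being a right adjoint by Lemma~\ref{lem:gl-proj-adjoints}, equivalently a logical morphism) and that $\GlProj{\CptIncl{Z}}$ is recognized as representable through the \emph{strictly} commuting square of Diagram~\ref{diag:gluing} rather than merely up to isomorphism. Combined with Lemma~\ref{lem:gl-proj-repr-map-fun}, this establishes that a map of $\GlCat$ is compact exactly when its image under $\GlProj$ is a representable natural transformation, so $\GlProj$ both preserves and reflects representable maps.
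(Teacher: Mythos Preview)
Your proof is correct and follows essentially the same route as the paper's: form the pullback of $p$ along a map out of a compact object, use that $\GlProj$ preserves pullbacks and sends compact objects to representables, and then invoke the Grothendieck-style representability of $\GlProj{p}$ to conclude that the fiber is representable (hence compact upstairs). The only difference is presentational---you spell out why $\GlProj$ preserves pullbacks and why $\GlProj{\CptIncl{Z}}$ is representable more explicitly than the paper does.
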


\begin{proof}
  Fixing a family $\Mor[p]{Y}{X}:\GlCat$ such that $\GlProj{\Mor[p]{Y}{X}}$ is
  representable, we must check that $p$ is a compact family; in other words,
  fixing $\Mor[x]{\CptIncl{\Gamma}}{X}$, we must check that the fiber product
  $Y\times_{X}\CptIncl{\Gamma}$ below is compact:
  \begin{equation}\label{diag:gl-proj-reflects-repr-maps:0}
    \DiagramSquare{
      nw/style = pullback,
      nw = Y\times_{X}\CptIncl{\Gamma},
      ne = Y,
      se = X,
      east = p,
      sw = \CptIncl{\Gamma},
      south = x
    }
  \end{equation}

  It suffices to check that $\GlProj{Y\times_{X}\CptIncl{\Gamma}}$ is
  representable; because $\GlProj{p}$ is representable,
  Diagram~\ref{diag:gl-proj-reflects-repr-maps:0} lies over
  the square below:
  \begin{equation*}
    \DiagramSquare{
      width = 2.5cm,
      nw/style = pullback,
      nw = \DFYo[\CCat]{\GlProj{Y}_{\GlProj{x}}},
      ne = \GlProj{Y},
      se = \GlProj{X},
      east = \GlProj{p},
      sw = \DFYo[\CCat]{\CptGlProj{\Gamma}},
      south = \GlProj{x}
    }
    \qedhere
  \end{equation*}

\end{proof}

\subsection{Nerve and realization}%
\label{sec:nerve}

A computability model of XTT is given by a representable map functor
$\Mor[\Mod{\CptGlCat}]{\ThCat}{\DF{\CptGlCat}}$; in particular, we must
construct a natural model in $\DF{\CptGlCat}$ which is closed under all the
connectives of XTT and its universe of Bishop sets. These objects are, however,
particularly difficult to construct from the perspective of discrete
fibrations or presheaves; in previous work, it has accordingly been necessary to construct
the constitutents of the computability model at the level of
sets~\cite{sterling-angiuli-gratzer:2019,coquand:2019,kaposi-huber-sattler:2019},
manually quantifying over computable contexts and computable substitutions.

In recent work, Sterling and Angiuli have shown that it is simpler to construct
these objects \emph{internally} to the logos $\GlCat$ of general computability
structures, and then transfer them in a single motion to $\DF{\CptGlCat}$. This
is accomplished by means of a \emph{nerve} functor
$\Mor[\GlNv]{\GlCat}{\DF{\CptGlCat}}$ which, by virtue of the density of
$\EmbMor[\CptIncl]{\CptGlCat}{\GlCat}$ (Theorem~\ref{thm:cpt-incl-dense}), is
not only fully faithful but also locally cartesian closed. Crucially, $\GlNv$
will also turn out to be a representable map functor.

\begin{con}[Nerve]\label{con:gl-nv}
  Let $X : \GlCat$ be a general computability structure; we may define a discrete fibration $\GlNv{X}
  : \DF{\CptGlCat}$ whose fiber at a compact computability structure
  $\Gamma:\CptGlCat$ is the hom set $\Hom[\GlCat]{\CptIncl{\Gamma}}{X}$. This
  assignment extends to a functor $\Mor[\GlNv]{\GlCat}{\DF{\CptGlCat}}$, which
  may be viewed either as a restriction of the Yoneda embedding of $\GlCat$, or a left Kan extension
  of the Yoneda embedding of $\CptGlCat$:
  \begin{equation*}
    \begin{tikzpicture}[diagram,baseline=(DF/Gk.base)]
      \node (G) {$\GlCat$};
      \node (DF/G) [right = of G] {$\DF{\GlCat}$};
      \node (DF/Gk) [below = of DF/G] {$\DF{\CptGlCat}$};
      \path[embedding] (G) edge node [above] {$\DFYo[\GlCat]$} (DF/G);
      \path[->,exists] (G) edge node [sloped,below] {$\GlNv$} (DF/Gk);
      \path[->] (DF/G) edge node [right] {$\CptIncl^*$} (DF/Gk);
    \end{tikzpicture}
    \qquad
    \begin{tikzpicture}[diagram,baseline=(G.base)]
      \node (Gk) {$\CptGlCat$};
      \node (DF/Gk) [right = of Gk] {$\DF{\CptGlCat}$};
      \node (G) [below = of Gk] {$\GlCat$};
      \path[embedding] (Gk) edge node [left] {$\CptIncl$} (G);
      \path[embedding] (Gk) edge node [above] {$\DFYo[\CptGlCat]$} (DF/Gk);
      \path[->,exists] (G) edge node [sloped,below] {$\GlNv$} (DF/Gk);
      \node (sw) [below = of DF/Gk] {};
      \node [between = Gk and sw, xshift = -.3cm,yshift = .3cm] {$\Downarrow$};
    \end{tikzpicture}
    \qedhere
  \end{equation*}
\end{con}

\begin{lem}[Realization]\label{lem:gl-rl}
  The nerve functor has a left adjoint $\Mor[\GlRl]{\DF{\CptGlCat}}{\GlCat}$,
  the \emph{realization} of a discrete fibration on compact computability
  structures; consequently, $\GlNv$ preserves small limits.
\end{lem}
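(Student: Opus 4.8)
The plan is to recognize $\GlNv$ as the \emph{nerve} functor attached to the
inclusion $\EmbMor[\CptIncl]{\CptGlCat}{\GlCat}$ and then invoke the standard
nerve--realization paradigm. Indeed, under the equivalence
$\DF{\CptGlCat}\simeq\Psh{\CptGlCat}$, Construction~\ref{con:gl-nv} presents
$\GlNv$ as the restricted Yoneda embedding $X\mapsto\Hom[\GlCat]{\CptIncl\prn{-}}{X}$.
The general fact I would appeal to is this: for any functor $i$ from a small
category $\mathcal{A}$ into a cocomplete category $\mathcal{B}$, the nerve
$\mathcal{B}\to\Psh{\mathcal{A}}$ sending $b$ to $\Hom[\mathcal{B}]{i\prn{-}}{b}$
has a left adjoint --- the left Kan extension of $i$ along the Yoneda embedding
of $\mathcal{A}$ --- which may be computed as the coend
$P\mapsto\int^{a}P\prn{a}\cdot i\prn{a}$.

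First I would verify the two hypotheses. On the one hand, $\CptGlCat$ is
essentially small: its objects are maps from a cubical set in $\cSET=\Psh{\CUBE}$
into the image under $\DimInclNv$ of an object of the essentially small category
$\CCat$ of contexts of $\Mod{\CCat}$, and every category in sight is locally
small. On the other hand, $\GlCat=\Comma{\cSET}{\F}$ is a Grothendieck logos
(as recorded at the end of Section~\ref{sec:computability}, following
\cite{sterling-angiuli:2020}), hence in particular small-cocomplete. I would
then simply \emph{define} the realization functor by the colimit
\[
  \GlRl{F} \;\defeq\; \operatorname{colim}\left(
    \Comma{\DFYo[\CptGlCat]}{\braces{F}}
    \xrightarrow{\pi} \CptGlCat
    \xrightarrow{\CptIncl} \GlCat
  \right),
\]
which is a colimit over an essentially small diagram and hence exists in
$\GlCat$; this is precisely the left Kan extension of $\CptIncl$ along
$\DFYo[\CptGlCat]$.

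The adjunction $\GlRl\dashv\GlNv$ is then the usual co-Yoneda computation: for
$F:\DF{\CptGlCat}$ and $X:\GlCat$, the set $\Hom[\GlCat]{\GlRl{F}}{X}$ is the
limit, over the category of elements $\Comma{\DFYo[\CptGlCat]}{\braces{F}}$, of
the sets $\Hom[\GlCat]{\CptIncl{\Gamma}}{X}=\GlNv{X}\prn{\Gamma}$, which by the
density presentation of $F$ as a colimit of representables is exactly the set of
natural transformations $\Mor{F}{\GlNv{X}}$, i.e.\
$\Hom[\DF{\CptGlCat}]{F}{\GlNv{X}}$; in particular
$\GlRl\circ\DFYo[\CptGlCat]\cong\CptIncl$. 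The closing clause is immediate: a
right adjoint preserves all limits that exist, and $\GlCat$ is small-complete,
so $\GlNv$ preserves small limits. The only point demanding genuine care --- and
hence the step I would treat most carefully --- is the size bookkeeping:
ensuring that $\CptGlCat$, and therefore the indexing category of the defining
colimit, is small relative to the Grothendieck universe for which $\GlCat$ is
cocomplete. This is guaranteed once $\CCat$ is taken to be the (essentially
small) category of contexts of the initial model; beyond that, the argument is
the textbook nerve--realization construction.
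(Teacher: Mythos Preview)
Your proposal is correct and follows essentially the same approach as the paper: the realization is defined as the left Kan extension of $\CptIncl$ along the Yoneda embedding $\DFYo[\CptGlCat]$, which exists because $\GlCat$ is cocomplete as a Grothendieck logos. The paper's proof is a one-line diagram invoking precisely this Kan extension (with the coend formula recorded immediately afterward), so your version simply unpacks the standard nerve--realization verification in more detail.
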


\begin{proof}
  The realization functor is obtained by left Kan extension:
  \begin{equation*}
    \begin{tikzpicture}[diagram,baseline = (sw.base)]
      \node (Gk) {$\CptGlCat$};
      \node (DF/Gk) [below = of Gk] {$\DF{\CptGlCat}$};
      \node (G) [right = of Gk] {$\GlCat$};
      \path[embedding] (Gk) edge node [above] {$\CptIncl$} (G);
      \path[embedding] (Gk) edge node [left] {$\DFYo[\CptGlCat]$} (DF/Gk);
      \path[->,exists] (DF/Gk) edge node [sloped,below] {$\GlRl$} (G);
      \node (sw) [below = of G] {};
      \node [between = Gk and sw, xshift = -.3cm,yshift = .3cm] {$\Downarrow$};
    \end{tikzpicture}
    \qedhere
  \end{equation*}
\end{proof}

The realization of a specific discrete fibration may be computed as a coend,
using the general formula for pointwise Kan extensions. Letting $F :
\DF{\CptGlCat}$, we calculate:
\begin{align*}
  \GlRl{F} &\cong \prn{\Lan{\DFYo[\CptGlCat]}{\CptIncl}}\prn{F}\\
  &\cong
  \Coend{\Gamma : \CptGlCat}{
    \Hom[\DF{\CptGlCat}]{\DFYo[\CptGlCat]{\Gamma}}{F} \cdot \CptIncl{\Gamma}
  }
  \\
  &\cong
  \Coend{\Gamma:\CptGlCat}{
    F_{\Gamma} \cdot \CptIncl{\Gamma}
  }
\end{align*}

Writing $\Mor[F_\bullet]{\OpCat{\CptGlCat}}{\SET}$ for the presheaf
corresponding to the discrete fibration $F$, we may package the computation of
$F$'s realization in terms of the tensor calculus of functors:
\begin{equation}
  \GlRl{F} \cong F_\bullet \otimes_{\CptGlCat} \CptIncl
\end{equation}

From Lemma~\ref{lem:gl-nv-ff} below, we can see that the nerve exhibits a true
``Yoneda embedding'' of general computability structures into discrete
fibrations.

\begin{lem}\label{lem:gl-nv-ff}
  The nerve functor $\Mor[\GlNv]{\GlCat}{\DF{\CptGlCat}}$ is fully faithful.
\end{lem}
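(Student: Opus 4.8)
The plan is to realize $\GlNv$ as a fully faithful right adjoint by checking that its counit is invertible, which will follow immediately from the density of $\CptIncl$ established in Theorem~\ref{thm:cpt-incl-dense}. Recall from Lemma~\ref{lem:gl-rl} that $\GlRl\dashv\GlNv$; by a standard fact about adjunctions, the right adjoint $\GlNv$ is fully faithful if and only if the counit $\epsilon$ of $\GlRl\dashv\GlNv$ is a natural isomorphism. It therefore suffices to show that each component $\Mor[\epsilon_X]{\GlRl{\GlNv{X}}}{X}$ is an isomorphism, and we are free throughout to pass between $\DF{\CptGlCat}$ and $\Psh{\CptGlCat}$ along the canonical equivalence.

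First I would unfold $\GlRl{\GlNv{X}}$ using the coend formula for realization computed above. Since the presheaf corresponding to $\GlNv{X}$ sends $\Gamma:\CptGlCat$ to the hom set $\Hom[\GlCat]{\CptIncl{\Gamma}}{X}$, we obtain
\[
  \GlRl{\GlNv{X}}
  \;\cong\;
  \Coend{\Gamma:\CptGlCat}{\Hom[\GlCat]{\CptIncl{\Gamma}}{X}\cdot\CptIncl{\Gamma}}.
\]
A routine manipulation identifies this coend with the colimit in $\GlCat$ of the canonical diagram $\Mor[\mathsf{D}_{\CptIncl}^X]{\Comma{\CptIncl}{\braces{X}}}{\GlCat}$ of compact computability families mapping into $X$ --- precisely the diagram whose colimit is the subject of Theorem~\ref{thm:cpt-incl-dense}. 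Under this identification, the component $\epsilon_X$ becomes the canonical comparison map out of this colimit determined by the legs $\Mor{\CptIncl{\Gamma}}{X}$ of the defining cocone; this follows from the description of $\GlRl$ as the pointwise left Kan extension of $\DFYo[\CptGlCat]$ along $\CptIncl$ together with the triangle identities for the adjunction.

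It then remains only to invoke Theorem~\ref{thm:cpt-incl-dense}, which asserts exactly that this canonical cocone exhibits $X$ as the colimit of $\mathsf{D}_{\CptIncl}^X$; hence $\epsilon_X$ is an isomorphism and $\GlNv$ is fully faithful. The one step that deserves genuine care is the identification of $\epsilon_X$ with the canonical comparison map out of the colimit --- but this is a general feature of nerve--realization adjunctions and is forced once the adjunction of Lemma~\ref{lem:gl-rl} and the coend formula are in hand; everything else is purely formal.
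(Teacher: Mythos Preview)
Your argument is correct and follows essentially the same approach as the paper: the paper's proof is the single sentence ``This is equivalent to the density of the inclusion $\CptIncl$,'' invoking the classical equivalence between density of a functor and full faithfulness of the associated nerve. Your proof simply unpacks that equivalence explicitly via the counit of the nerve--realization adjunction and the coend formula for $\GlRl$, which is exactly how one proves the density/fully-faithful-nerve correspondence in general.
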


\begin{proof}
  This is equivalent to the density of the inclusion $\CptIncl$.
\end{proof}

\begin{lem}[Equivalent subcategories]\label{lem:gl-nv-equiv}
  The nerve--realization adjunction $\GlRl\dashv \GlNv$ restricts to an equivalence (of categories) % chktex 8
  between the subcategories of generating objects (compact computability
  structures and representable presheaves respectively).
\end{lem}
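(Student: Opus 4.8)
The plan is to combine the full faithfulness of $\GlNv$ (Lemma~\ref{lem:gl-nv-ff}) with the two descriptions of $\GlNv$ and $\GlRl$ as Kan extensions along fully faithful functors. First I would record that $\GlNv$ sends compact families to representables. Since $\CptGlCat$ is a \emph{full} subcategory of $\GlCat$, the inclusion $\CptIncl$ is fully faithful, and Construction~\ref{con:gl-nv} presents $\GlNv$ as the left Kan extension $\Lan{\CptIncl}{\DFYo[\CptGlCat]}$; a left Kan extension along a fully faithful functor restricts to the functor it extends, so $\GlNv\circ\CptIncl\cong\DFYo[\CptGlCat]$. Concretely, for $\Gamma:\CptGlCat$ the fibre of $\GlNv\prn{\CptIncl{\Gamma}}$ at $\Delta:\CptGlCat$ is $\Hom[\GlCat]{\CptIncl{\Delta}}{\CptIncl{\Gamma}}\cong\Hom[\CptGlCat]{\Delta}{\Gamma}$, naturally in $\Delta$, which is exactly the fibre of $\DFYo[\CptGlCat]{\Gamma}$. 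Dually, the proof of Lemma~\ref{lem:gl-rl} presents $\GlRl$ as $\Lan{\DFYo[\CptGlCat]}{\CptIncl}$, and since the Yoneda embedding $\DFYo[\CptGlCat]$ is fully faithful we get $\GlRl\circ\DFYo[\CptGlCat]\cong\CptIncl$.

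These two identities already pin down the essential images. A presheaf on $\CptGlCat$ is representable precisely when it is isomorphic to some $\DFYo[\CptGlCat]{\Gamma}\cong\GlNv\prn{\CptIncl{\Gamma}}$, so every representable is $\GlNv$ of a compact family; conversely $\GlNv$ of any compact family $X\cong\CptIncl{\Gamma}$ is $\DFYo[\CptGlCat]{\Gamma}$, hence representable. Thus $\GlNv$ carries the full subcategory of compact computability families into the full subcategory of representable presheaves and does so essentially surjectively; being also fully faithful by Lemma~\ref{lem:gl-nv-ff}, its restriction to the compact objects is fully faithful and essentially surjective onto the representables, i.e.\ an equivalence of categories.

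It remains to see that the inverse equivalence is the restriction of $\GlRl$. Because $\GlRl\dashv\GlNv$ with the right adjoint $\GlNv$ fully faithful, the counit $\GlRl\circ\GlNv\Rightarrow\ArrId{\GlCat}$ is a natural isomorphism, so $\GlRl$ retracts $\GlNv$ on all of $\GlCat$, in particular on compact objects. In the other direction, the unit $\ArrId{\DF{\CptGlCat}}\Rightarrow\GlNv\circ\GlRl$ is invertible exactly on the essential image of $\GlNv$, which contains every representable presheaf (as $\GlNv\prn{\CptIncl{\Gamma}}$); hence the unit is an isomorphism on representables. Therefore $\GlNv$ restricted to compact families and $\GlRl$ restricted to representable presheaves are mutually quasi-inverse equivalences, and under them both subcategories are identified with $\CptGlCat$ itself --- via $\CptIncl$ on the one side and $\DFYo[\CptGlCat]$ on the other, which is the sense in which they are ``equal and opposite''.

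I do not anticipate a genuine obstacle; the only points to watch are (i) using that $\CptIncl$ is a \emph{full} subcategory inclusion, so that $\GlNv\circ\CptIncl$ is genuinely isomorphic to $\DFYo[\CptGlCat]$ rather than merely receiving a comparison map, and (ii) keeping straight which adjoint is fully faithful, so that it is the counit $\GlRl\circ\GlNv\Rightarrow\ArrId{\GlCat}$ that is inverted while the unit is inverted only on the (sufficiently large) essential image of $\GlNv$.
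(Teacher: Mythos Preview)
Your proposal is correct and follows essentially the same route as the paper: both arguments compute $\GlNv\circ\CptIncl\cong\DFYo[\CptGlCat]$ (you via the general fact about left Kan extensions along fully faithful functors, the paper by the identical hom-set calculation you also spell out), and both then use the full faithfulness of $\GlNv$ (invertibility of the counit) to handle the other direction. Your packaging via the Kan-extension restriction lemma is a mild abstraction of the paper's concrete computation, but the substance is the same.
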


\begin{proof}
  We first check that the nerve of any compact computability structure $X \cong
  \CptIncl{\Gamma}$ is a representable discrete fibration. It suffices to
  compute in terms of the corresponding presheaves,
  $\GlNv{\CptIncl{\Gamma}}_{\bullet}
    \cong \Hom[\GlCat]{\CptIncl{\bullet}}{\CptIncl{\Gamma}}
    \cong \Hom[\CptGlCat]{\bullet}{\Gamma}
    \cong \Yo[\CptGlCat]{\Gamma}
    \cong \prn{\DFYo[\CptGlCat]{\Gamma}}_\bullet
  $.
  Therefore, we have $\GlNv{\CptIncl{\Gamma}} \cong \DFYo[\CptGlCat]{\Gamma}$.
  Next, we check that the realization of any representable discrete fibration
  $F \cong \DFYo[\CptGlCat]{\Gamma}$ is compact; by the above, we have $\GlRl{\DFYo[\CptGlCat]{\Gamma}} \cong
  \GlRl{\GlNv{\CptIncl{\Gamma}}}$; because the nerve is fully faithful
  (Lemma~\ref{lem:gl-nv-ff}), the counit to the adjunction $\GlRl\dashv\GlNv$
  is an isomorphism, so we further have $\GlRl{\DFYo[\CptGlCat]{\Gamma}} \cong
  \GlRl{\GlNv{\CptIncl{\Gamma}}} \cong \CptIncl{\Gamma}$.

  We have shown that the nerve--realization adjunction restricts to a pair of % chktex 8
  functors between subcategories. It is easy to see that this is in fact an
  equivalence, since we have shown that $\GlRl{\GlNv{\CptIncl{\Gamma}}} \cong
  \CptIncl{\Gamma}$ and $\GlNv{\GlRl{\DFYo[\CptGlCat]{\Gamma}}} \cong \GlNv{\CptIncl{\Gamma}}
  \cong \DFYo[\CptGlCat]{\Gamma}$.
\end{proof}

To establish the behavior of the nerve on pushforwards, it will be useful to
choose a good dense subcategory of each slice $\Sl*{\DF{\CptGlCat}}{F}$.

\begin{rem}\label{rem:df-sl-generators}
  First we recall that in presheaves, each slice
  $\Sl{\Psh{\CptGlCat}}{F_\bullet}$ may be reconstructed equivalently as the
  category of presheaves $\Psh{\DelimMin{1}\int\!F_\bullet} = \Psh{F}$ on the
  total category of $F$; therefore, by the Yoneda lemma, each slice
  $\Sl*{\DF{\CptGlCat}}{F}$ is densely generated by the functor
  $\Mor{F}{\Sl*{\DF{\CptGlCat}}{F}}$ which sends every element $x \in F_\Gamma$
  to the corresponding map $\Mor[x]{\DFYo[\CptGlCat]{\Gamma}}{F}$.

  Furthermore, by Lemma~\ref{lem:gl-nv-equiv} each generating object
  $\Mor{\DFYo[\CptGlCat]{\Gamma}}{F}$ may be written equivalently as
  $\Mor{\GlNv{\CptIncl{\Gamma}}}{F}$. If $F = \GlNv{X}$, we may observe that
  the slice $\Sl*{\DF{\CptGlCat}}{F}$ is in fact densely generated by the comma
  category $\Comma{\CptIncl}{\brc{X}}$ under the functor which sends each
  $\Mor[x]{\CptIncl{\Gamma}}{X}$ to $\GlNv{x} : \Sl*{\DF{\CptGlCat}}{F}$, a
  direct consequence of the fully faithfulness of the nerve
  (Lemma~\ref{lem:gl-nv-ff}).
\end{rem}

\begin{lemC}[{\cite[Lemma~4.2]{sterling-angiuli:2020}}]\label{lem:nv-preserves-pushforward}
  The nerve functor $\Mor[\GlNv]{\GlCat}{\DF{\CptGlCat}}$  preserves all pushforwards $f_* :
  \Mor{\Sl{\GlCat}{X}}{\Sl{\GlCat}{Y}}$ for $\Mor[f]{X}{Y} : \GlCat$.
\end{lemC}

\begin{proof}
  Letting $g : \Sl{\GlCat}{X}$, we intend to check that $\GlNv\prn{f_*g}$ is the
  pushforward $\prn{\GlNv{f}_*}{\GlNv{g}}$.  By Remark~\ref{rem:df-sl-generators}, it
  suffices to check the universal property at the generators
  $\Mor[\GlNv{x}]{\GlNv{\CptIncl{\Gamma}}}{\GlNv{Y}}$ of the slice
  $\Sl*{\DF{\CptGlCat}}{\GlNv{Y}}$.
  \begin{align*}
    &\Hom{\GlNv{x}}{\prn{\GlNv{f}}_*\GlNv{g}}
    \\
    &\quad\cong
    \Hom{{\GlNv{f}}^*\GlNv{x}}{\GlNv{g}}
    \\
    &\quad\cong
    \Hom{\GlNv{f^*x}}{\GlNv{g}}
    \\
    &\quad\cong
    \Hom{f^*x}{g}
    \\
    &\quad\cong
    \Hom{x}{f_*g}
    \\
    &\quad\cong
    \Hom{\GlNv{x}}{\GlNv\prn{f_*g}}
    \qedhere
  \end{align*}
\end{proof}

\begin{cor}\label{cor:gl-nv-lcc}
  The nerve functor $\Mor[\GlNv]{\GlCat}{\DF{\CptGlCat}}$ is locally cartesian closed.
\end{cor}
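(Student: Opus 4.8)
The plan is to reduce the corollary to the two preservation properties of $\GlNv$ already in hand. Recall that a functor between locally cartesian closed categories is itself locally cartesian closed precisely when it preserves finite limits and, for every map $\Mor[h]{B}{Y}$ in its source, the canonical comparison from $\GlNv\circ h_*$ to $(\GlNv h)_*\circ\GlNv$ between the two right adjoints to pullback is invertible. The statement is meaningful, since both the source $\GlCat$ --- a Grothendieck logos --- and the target $\DF{\CptGlCat}\simeq\Psh{\CptGlCat}$ --- a presheaf logos --- are locally cartesian closed.

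First I would record that $\GlNv$ preserves finite limits: by Lemma~\ref{lem:gl-rl} the nerve is right adjoint to realization, hence preserves all small limits, in particular pullbacks and the terminal object. Consequently, for each $Y:\GlCat$, the nerve induces a functor on slices $\Sl{\GlCat}{Y}\to\Sl{\DF{\CptGlCat}}{\GlNv Y}$ that commutes up to canonical isomorphism with the pullback functors $h^*$, since $\GlNv$ sends each defining pullback square to a pullback square. Second, Lemma~\ref{lem:nv-preserves-pushforward} states that $\GlNv$ carries each pushforward $h_*:\Sl{\GlCat}{B}\to\Sl{\GlCat}{Y}$ to the pushforward along $\GlNv h$. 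Since the exponential of two objects in the slice $\Sl{\GlCat}{Y}$ is built by composing pullback $h^*$ with pushforward $h_*$ for the appropriate $\Mor[h]{B}{Y}$, combining the two preservation properties shows that the induced slice functor preserves the cartesian closed structure; as $Y$ was arbitrary, $\GlNv$ is locally cartesian closed.

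The only delicate point --- and the closest thing here to an obstacle --- is to verify that the canonical comparison from $\GlNv\circ h_*$ to $(\GlNv h)_*\circ\GlNv$ genuinely factors as the pasting of the invertible pullback comparison with the pushforward comparison supplied by Lemma~\ref{lem:nv-preserves-pushforward}, so that no mate or coherence condition is left unchecked. This is routine, since all the comparison cells in sight are canonical mates of identities, but it deserves to be spelled out rather than glossed over.
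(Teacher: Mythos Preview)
Your proposal is correct and follows essentially the same approach as the paper: the corollary is stated immediately after Lemma~\ref{lem:nv-preserves-pushforward} (preservation of pushforwards) and is meant to follow directly from that together with the nerve's preservation of finite limits (Lemma~\ref{lem:gl-rl}). Your additional care about the canonical comparison mates is more detail than the paper gives, but it is entirely in the spirit of the intended argument.
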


\begin{lem}\label{lem:gl-nv-repr-maps}
  The nerve functor  $\Mor[\GlNv]{\GlCat}{\DF{\CptGlCat}}$  preserves representable maps.
\end{lem}

\begin{proof}
  We fix a representable map $\Mor[f]{Y}{X}:\GlCat$, to check that its nerve
  $\Mor[\GlNv{f}]{\GlNv{Y}}{\GlNv{X}} : \DF{\CptGlCat}$ is representable; it
  will be simplest to check this condition formulated in the classical
  Grothendieck-style, considering fiber products with representable objects:
  \begin{equation}\label{diag:gl-nv-repr-maps:0}
    \DiagramSquare{
      nw/style = pullback,
      width = 2.8cm,
      nw = {\GlNv{Y}\times_{\GlNv{X}}\DFYo[\CptGlCat]{\Gamma}},
      ne = \GlNv{Y},
      se = \GlNv{X},
      east = \GlNv{f},
      sw = \DFYo[\CptGlCat]{\Gamma},
      south = x,
    }
  \end{equation}

  First of all, we may replace $\DFYo[\CptGlCat]{\Gamma}$ with the isomorphic
  $\GlNv{\CptIncl{\Gamma}}$; since $\GlNv$ is fully faithful and left exact, the entire square lies in the image of the nerve:
  \begin{equation}\label{diag:gl-nv-repr-maps:1}
    \DiagramSquare{
      width = 2.75cm,
      nw/style = pullback,
      ne = \GlNv{Y},
      se = \GlNv{X},
      east = \GlNv{f},
      sw = \GlNv{\CptIncl{\Gamma}},
      nw = \GlNv{Y\times_{X}\CptIncl{\Gamma}},
      south = \GlNv{\ulcorner x\urcorner},
    }
    \qquad
    \DiagramSquare{
      width = 2.5cm,
      nw/style = pullback,
      ne = {Y},
      se = {X},
      east = {f},
      sw = {\CptIncl{\Gamma}},
      nw = {Y\times_{X}\CptIncl{\Gamma}},
      south = {\ulcorner x\urcorner},
    }
  \end{equation}

  Therefore, it suffices to check that the fiber product
  $Y\times_{X}\CptIncl{\Gamma}$ is taken by the nerve to a representable
  object; by Lemma~\ref{lem:gl-nv-equiv}, this follows from the compactness of
  $Y\times_{X}\CptIncl{\Gamma}$ by virtue of the representability of $f$.
\end{proof}

\begin{cor}
  The nerve functor  $\Mor[\GlNv]{\GlCat}{\DF{\CptGlCat}}$  is a representable map functor.
\end{cor}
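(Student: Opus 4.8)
The plan is simply to assemble the facts already established about $\GlNv$ in the preceding lemmas. To exhibit $\GlNv$ as a representable map functor in the sense appropriate to Uemura's framework, I would verify three things: that $\GlNv$ is left exact, that it preserves pushforwards along representable maps, and that it carries representable maps to representable maps. The first two are subsumed by Corollary~\ref{cor:gl-nv-lcc}, since a locally cartesian closed functor preserves all finite limits and all pushforwards, a fortiori the pushforwards along representable maps; the third is precisely Lemma~\ref{lem:gl-nv-repr-maps}. Combining these observations gives the corollary, and there is no further computation to perform.

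I do not expect any obstacle at this stage: the statement is a packaging corollary whose substance has already been discharged. If one reads the definition of representable map functor in the narrowest way --- a functor taking representable maps to representable maps --- then the corollary is literally Lemma~\ref{lem:gl-nv-repr-maps}, and the appeal to Corollary~\ref{cor:gl-nv-lcc} is only to record that $\GlNv$ is moreover left exact and preserves pushforwards, which is what makes it usable for transporting a model of XTT through the nerve into $\DF{\CptGlCat}$. The only genuinely delicate inputs lie upstream: Lemma~\ref{lem:nv-preserves-pushforward}, which relies on the density of $\CptIncl$ and the description of dense generators of the slices $\Sl*{\DF{\CptGlCat}}{F}$ from Remark~\ref{rem:df-sl-generators}, and Lemma~\ref{lem:gl-nv-repr-maps}, which relies on Lemma~\ref{lem:gl-nv-equiv} together with the Grothendieck-style characterization of representability via change of base to representable objects.
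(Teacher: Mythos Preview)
Your proposal is correct and matches the paper's own proof exactly: the paper simply cites Corollary~\ref{cor:gl-nv-lcc} and Lemma~\ref{lem:gl-nv-repr-maps}. Your additional remarks unpacking why these two ingredients suffice, and tracing the delicate inputs back to density and Lemma~\ref{lem:gl-nv-equiv}, are accurate elaborations of what the paper leaves implicit.
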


\begin{proof}
  By Corollary~\ref{cor:gl-nv-lcc} and Lemma~\ref{lem:gl-nv-repr-maps}.
\end{proof}

\subsection{Universes in the gluing fibration}

As a technical matter, we will require suitable type theoretic universes in
$\GlCat$ that are sent to appropriate universes in $\DF{\CCat}$. While there
are a variety of ways to construct these by combining the existing
universes of $\cSET$ and $\DF{\CCat}$ (see Uemura~\cite{uemura:2017}), there is a
considerably simpler alternative that becomes available once we observe that
$\GlCat$ is (equivalent to) a presheaf logos.

\begin{lem}
  There is a small category $\DCat$ such that $\GlCat\simeq\Psh{\DCat}$.
\end{lem}

\begin{proof}
  This follows from the result of Carboni and
  Johnstone~\cite{carboni-johnstone:1995}, itself an explication and
  generalization of an exercise posed in SGA~4~\cite[Tome 1, Expos\'e iv,
  Exercise 9.5.10]{sga:4}. In particular, it suffices to observe that the
  gluing functor $\Mor[\F]{\DF{\CCat}}{\cSET}$ is the inverse image part of an
  \emph{essential} morphism between presheaf topoi and is hence continuous.
\end{proof}

Therefore the Hofmann--Streicher lifting of Grothendieck % chktex 8
universes~\cite{hofmann-streicher:1997} from $\SET$ into presheaf logoi
applies, yielding a cumulative hiearchy of universes $\GlUElMor$ in $\GlCat$.
These universes can be seen to lie over a corresponding universe hierarchy
$\CUElMor$ in $\DF{\CCat}$, defined simply by $\GlProj{\GlUElMor}$. A
\emph{small map} is then defined to be one that arises by pullback from a given
universe.

\begin{defi}
  Given a universe $\CUElMor$, we say that two codes $\Mor[A,B]{X}{\CUTY}$ are
  \emph{isomorphic} when they have the same extensions, \ie the pullbacks
  $\Pb{\CUEl}{A},\Pb{\CUEl}{B}$ are isomorphic. In other words, the two codes are characteristic for the same family.
\end{defi}

Our chosen universes satisfy a strict \emph{realignment} principle that will
play an important role in our development.

\begin{lem}[Realignment]\label{lem:fibered-universe-realignment}
  Let $\Mor[\frk{A}]{X}{\GlUTY}:\GlCat$ be a code for a small computability
  family in the following configuration:
  \begin{equation}
    \DiagramSquare{
      nw = X, ne = \GlUTY,
      height = 1.5cm,
      west/style = lies over, east/style = lies over,
      north = \frk{A},
      south = A,
      sw = \GlProj{X},
      se = \CUTY
    }
  \end{equation}

  Let $\Mor[B]{\GlProj{X}}{\CUTY}$ be a code isomorphic to
  $\Mor[A]{\GlProj{X}}{\CUTY}$; then, we have a code
  $\Mor[\frk{A}_B]{X}{\GlUTY}$ lying strictly over $B$ which is isomorphic to
  $\frk{A}$.
\end{lem}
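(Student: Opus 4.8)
The plan is to reduce realignment for the fibered universe $\GlUEl$ to the defining data of a code, as extracted from Lemma~\ref{lem:gl-u-generic}. Write $\Mor[p]{Y}{X}:\GlCat$ for the family classified by $\frk{A}$, i.e.\ the pullback $\Pb{\GlUEl}{\frk{A}}$; since $\GlProj$ preserves pullbacks and carries $\GlUEl$ to $\CUEl$ and $\frk{A}$ to $A$, the discrete fibration $\GlProj{Y}$ is the extension $\Pb{\CUEl}{A}$, and by Lemma~\ref{lem:gl-u-generic} the induced vertical map $\Mor[v]{Y}{\GlProj{p}^{*}X}$ in the fiber $\GlProj\brk{\Pb{\CUEl}{A}}$ is classified by some code $w$ against $\GlFibUElMor{\Pb{\CUEl}{A}}$. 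The first move is to upgrade Lemma~\ref{lem:gl-u-generic} to the statement that specifying a code $\Mor[\frk{A}]{X}{\GlUTY}$ is \emph{equivalent} to specifying a code $\Mor[A]{\GlProj{X}}{\CUTY}$ together with a fiberwise code $w$ for the vertical part of $\Pb{\CUEl}{A}$ against $\GlFibUElMor{\Pb{\CUEl}{A}}$, and moreover that this equivalence is compatible with $\GlProj$ (the left component of the data is exactly $\GlProj{\frk{A}}$) and sends isomorphic data to codes for isomorphic families over $X$. This is a routine unwinding of the construction of $\GlUTY$ as the pushforward $\CUEl_{*}\GlFibUTY{\CUEL}$.

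Granting this upgrade, the lemma follows. Since $B$ and $A$ have the same extension, choose an isomorphism $\Mor[\theta]{\Pb{\CUEl}{B}}{\Pb{\CUEl}{A}}$ over $\GlProj{X}$. Because $\theta$ commutes with the terminal maps to $\Mod{\CCat}{\EMP}$, change of base along $\theta$ is an equivalence of fibers $\GlProj\brk{\Pb{\CUEl}{A}}\simeq\GlProj\brk{\Pb{\CUEl}{B}}$ which canonically identifies $\GlFibUElMor{\Pb{\CUEl}{A}}$ with $\GlFibUElMor{\Pb{\CUEl}{B}}$ (both being the restriction of $\EUElMor$ along a terminal map) and carries the vertical part of $\Pb{\CUEl}{A}$ to that of $\Pb{\CUEl}{B}$. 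Transporting $w$ along this equivalence yields a fiberwise code $w'$ against $\GlFibUElMor{\Pb{\CUEl}{B}}$ for the vertical part of $\Pb{\CUEl}{B}$; the pair $(B,w')$ then determines, via the converse direction of the upgraded lemma, a code $\Mor[\frk{A}_B]{X}{\GlUTY}$ whose left component is \emph{strictly} $B$, as required. Finally, $(B,w')$ is obtained from $(A,w)$ purely by transport along the isomorphism $\theta$, so by the last clause of the upgraded lemma $\frk{A}_B$ and $\frk{A}$ classify isomorphic families over $X$.

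The main obstacle is precisely the ``upgrade'' in the first paragraph: one must verify that the pushforward presentation of $\GlUTY$ really does present codes $\Mor[\frk{A}]{X}{\GlUTY}$ as such pairs of data, bijectively and naturally in $X$, and track how $\GlProj$ and isomorphism of classified families act on this presentation. Some care is also needed with the coherence of reindexing: the identification $\GlFibUElMor{\Pb{\CUEl}{B}}\cong\theta^{*}\GlFibUElMor{\Pb{\CUEl}{A}}$ and the change of base $\GlProj{p}^{*}X$ are only pseudofunctorial, so one either fixes a strict cleavage on the gluing fibration or absorbs the canonical comparison isomorphisms; in the latter case one is left needing that $\GlFibUElMor{\Pb{\CUEl}{B}}$ itself admits realignment along monomorphisms, which is a standard property of Hofmann--Streicher universes in a Grothendieck logos. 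Once these points are set up, the remainder is bookkeeping.
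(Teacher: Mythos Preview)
The paper does not actually give a proof of this lemma: it is stated and closed with a bare \qed, in the spirit of the surrounding material that summarizes constructions from Uemura's work. So there is no proof in the paper to compare your proposal against.

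Your approach is correct and is essentially the canonical way to establish realignment for a pushforward universe. The ``upgrade'' you describe is nothing more than the adjunction $\CUEl^{*}\dashv\CUEl_{*}$ applied in the gluing fibration, together with Beck--Chevalley for the pullback square exhibiting $\Pb{\CUEl}{A}$: a map $X\to\GlUTY=\CUEl_{*}\GlFibUTY{\CUEL}$ lying over $A$ is, by cartesianness, a vertical map $X\to A^{*}\GlUTY$ in $\GlProj\brk{\GlProj{X}}$, and Beck--Chevalley rewrites the target as a pushforward of $\GlFibUTY{\Pb{\CUEl}{A}}$, so by adjunction one obtains exactly a fiberwise code $w:\GlProj{p}^{*}X\to\GlFibUTY{\Pb{\CUEl}{A}}$. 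This makes the bijection you want entirely formal, and the compatibility with $\GlProj$ and with extensions is immediate from the construction of $\GlUEL$ by pullback.

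Two small comments. First, you do not really need realignment for $\GlFibUElMor{\Pb{\CUEl}{B}}$ ``along monomorphisms'' in the second branch of your coherence discussion: what is actually used is only that the canonical comparison $\theta^{*}\GlFibUTY{\Pb{\CUEl}{A}}\cong\GlFibUTY{\Pb{\CUEl}{B}}$ exists (both being reindexings of $\EUTy$ along terminal maps), and composing $\theta^{*}w$ with this isomorphism already lands you in the right codomain on the nose. Second, since the gluing fibration has cartesian lifts computed by honest pullbacks in $\cSET$, one may simply fix that split cleavage from the outset and avoid the pseudofunctoriality bookkeeping entirely; this is the cleaner route and makes the remaining argument a two-line transport.
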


\begin{proof}
  This follows from the \emph{strict gluing} principle for topos-theoretic
  universes~\cite{awodey:2021:qms,orton-pitts:2016,bbcgsv:2016,shulman:2015:elegant,streicher:2014:simplicial,kapulkin-lumsdaine:2021},
  which is known to hold for Hofmann--Streicher universes. In the internal % chktex 8
  language of $\GlCat$, there exists a distinguished subterminal object $\P :=
  \GlUnProjL{\ObjTerm{\DF{\CCat}}}$ with the property that
  $\Sl{\GlCat}{\P}\simeq \DF{\CCat}$ and under this identification, the gluing
  fibration $\GlProj$ is the pullback functor
  $\Mor[\P^*]{\GlCat}{\Sl{\GlCat}{\P}}$.  In the internal language of $\GlCat$,
  we may speak of $\CUTY$ as $\GlUnProjR\GlProj{\GlUTY} \cong \P_*\P^*\GlUTY
  \cong \IHom{\P}{\GlUTY}$.
  Therefore, if $\GlUTY$ satisfies the strict gluing axiom, we may internally
  realign any total element $\frk{A}:\GlUTY$ to agree strictly with any partial
  element $A:\IHom{\P}{\GlUTY}$ equipped with a partial isomorphism
  $\frk{A}\prn{z}\cong A$ under $z:\P$.
\end{proof}

\section{XTT in computability structures}

The essence of the canonicity argument for XTT lies in constructing a
representable map functor $\Mor[\GlFun]{\ThCat}{\GlCat}$ together with a
natural isomorphism $\Mor[\GlCell]{\Mod{\CCat}}{\GlProj\circ\GlFun}$ in the
sense of Diagram~\ref{diag:gl-fun-cell} below:
\begin{equation}\label{diag:gl-fun-cell}
  \begin{tikzpicture}[diagram,baseline = (DF/C.base)]
    \node (T) {$\ThCat$};
    \node (G) [right = 2.5cm of T] {$\GlCat$};
    \node (DF/C) [between = T and G, yshift = -2cm] {$\DF{\CCat}$};
    \path[->] (T) edge node [sloped,below] {$\Mod{\CCat}$} (DF/C);
    \path[->] (T) edge node [above] {$\GlFun$} (G);
    \path[fibration] (G) edge node [sloped,below] {$\GlProj$} (DF/C);
    \node [above = 1.25cm of DF/C] {$\GlCell$};
  \end{tikzpicture}
\end{equation}

In keeping with the previous section, $\GlFun$ is constructed over an arbitrary model
$\Mor[\Mod{\CCat}]{\ThCat}{\DF{\CCat}}$. Eventually, we shall specialize to the initial model of XTT
and use this to derive canonicity.

Recall that the specification of $\ThCat$ in Section~\ref{sec:functorial-semantics}
is derived from the signature for XTT in Uemura's logical framework (see Section~\ref{sec:xtt:lf}).
Therefore, to see that the canonical natural isomorphism $\GlCell$ exists, it will suffice to choose
suitable isomorphisms $\GlCell{J}$ for just the \emph{generating} objects
$J:\ThCat$, the components of the signature; in all cases, the isomorphism $\GlCell{J}$ will be
canonical (or even the identity), because we will always define $\GlFun{\Mor{I}{J}}$ to lie
essentially over $\Mod{\CCat}{\prn{\Mor{I}{J}}}$.

\subsection{Glued cubical structure}

\begin{con}[The interval]
  We must construct a computability structure $\GlFun{\DIM}$ lying over
  $\Mod{\CCat}{\DIM}$; at the level of cubical sets, we will use the ``generic
  dimension'' $\Mor[\GlFun{\DIM}]{\Yo[\CUBE]{[1]}}{\F{\Mod{\CCat}{\DIM}}}$
  determined essentially by the functorial action of $\EmbMor[\DimIncl]{\CUBE}{\CCat}$.
  Fixing an element $\Mor[r]{[m]}{[1]}$ of $\Yo[\CUBE]{[1]}$, we have by
  functoriality an element $\Mor[\DimIncl{r}]{\DimIncl{m}}{\DimIncl{1}}$ of
  $\F{\Mod{\CCat}{\DIM}}$.
  We likewise have appropriate endpoints as follows:
  \begin{equation}\label{diag:glued-endpoints}
    \DiagramSquare{
      width = 3.75cm,
      height = 1.75cm,
      nw = \ObjTerm{\cSET},
      ne = \Yo[\CUBE]{[1]},
      sw = \F{\Mod{\CCat}{\EMP}},
      se = \F{\Mod{\CCat}{\DIM}},
      north = {\Dim0,\Dim1},
      west = \cong,
      east = \GlFun{\DIM},
      south = {\F{\Mod{\CCat}{\Dim0}},\F{\Mod{\CCat}{\Dim1}}}
    }
  \end{equation}

  Diagram~\ref{diag:glued-endpoints} can be seen to commute, considering the
  definition of $\DimIncl$ as the finite product preserving functor
  corresponding to the interval-algebra in $\CCat$.
\end{con}

\begin{lem}\label{lem:f-dim-repr}
  The syntactic interval $\Mod{\CCat}{\DIM}$ is taken by the base change
  $\Mor[\F]{\DF{\CCat}}{\cSET}$ to the actual interval $\Yo[\CUBE]{\brk{1}}$ of
  $\cSET$.
\end{lem}

\begin{proof}
  This follows by computation, recalling that $\Mod{\CCat}{\DIM}$ is
  represented in $\CCat$ by an interval context $\Repr{\Mod{\CCat}{\DIM}} \cong
  \DimIncl{\brk{1}}$, and using the fact that $\EmbMor[\DimIncl]{\CUBE}{\CCat}$
  is fully faithful:
  \[
    \F{\Mod{\CCat}{\DIM}}
    =
    \Mod{\CCat}{\DIM}\prn{\DimIncl}
    \cong
    \Hom[\CCat]{\DimIncl}{\DimIncl{\brk{1}}}
    \cong
    \Hom[\CUBE]{-}{\brk{1}}
    =
    \Yo[\CUBE]{\brk{1}}
    \qedhere
  \]
\end{proof}

\begin{lem}[Tininess of the interval]\label{lem:tininess}
  The interval object $\GlFun{\DIM}$ is \emph{tiny} in the sense that the
  exponential functor $\IHom{-}{\GlFun{\DIM}}$ has a further right adjoint.
\end{lem}

\begin{proof}
  $\GlCat$ is the Artin gluing the inverse image part of a morphism of topoi;
  under these circumstances, the universality of colimits ensures that it is
  enough for the restrictions of $\GlFun{\DIM}$ to both $\DF{\CCat}$ and
  $\cSET$ to be tiny (see~\cite[Lemma~32]{sterling-angiuli:2021} for a more
  detailed argument). These restrictions happen to be representable
  (Lemma~\ref{lem:f-dim-repr}), so tininess follows from the fact that both
  $\CCat$ and $\CUBE$ have finite products.
\end{proof}

\begin{con}[The face formula classifier]

  We recall that the gluing category $\GlCat$ is a logos~\cite{sga:4}, and
  moreover, the gluing fibration $\FibMor[\GlProj]{\GlCat}{\DF{\CCat}}$ is a
  \emph{logical morphism} and therefore preserves the subobject classifier and
  its first-order logic~\cite{johnstone:2002}. Therefore, we may obtain a glued
  face formula classifier in a conceptual way.

  Because $\Prop[\DF{\CCat}]:\DF{\CCat}$ classifies monomorphisms, we obtain a unique
  cartesian classifying square in $\DF{\CCat}$ for the generic face formula of
  $\Mod{\CCat}$:
  \begin{equation}\label{diag:cof-glue:0}
    \DiagramSquare{
      nw/style = pullback,
      east/style = {>->},
      west/style = {>->},
      nw = \Mod{\CCat}{\EMP},
      sw = \Mod{\CCat}{\COND},
      ne = \ObjTerm{\DF{\CCat}},
      se = \Prop[\DF{\CCat}],
      north = \cong,
      east = \PropTrue[\DF{\CCat}],
      south/style = exists,
      west = \Mod{\CCat}{\Prf},
      south = \DecodeCond
    }
  \end{equation}

  Therefore, we may construct a suitable glued face formula classifier by taking a
  cartesian lift in the gluing fibration of the subobject classifier
  $\Prop[\GlCat]$ along $\DecodeCond$ from
  Diagram~\ref{diag:cof-glue:0}, considering that $\Prop[\GlCat]$ lies over
  $\Prop[\DF{\CCat}]$:
  \begin{equation}\label{diag:cof-glue:1}
    \begin{tikzpicture}[diagram,baseline=(sw.base)]
      \SpliceDiagramSquare{
        west/style = lies over,
        east/style = lies over,
        width = 2.5cm,
        height = 1.5cm,
        sw = \Mod{\CCat}{\COND},
        se = \Prop[\DF{\CCat}],
        ne = \Prop[\GlCat],
        nw = \GlFun{\COND},
        north = \DecodeCond,
        south = \DecodeCond,
        north/style = exists
      }
      \node (tot) [right = of ne] {$\GlCat$};
      \node (base) [right = of se] {$\DF{\CCat}$};
      \path[fibration] (tot) edge (base);
    \end{tikzpicture}
  \end{equation}

  The generic face formula $\GlFun{\Prf}$ is then obtained by pullback in $\GlCat$:
  \begin{equation}\label{diag:cof-glue:2}
    \DiagramSquare{
      nw/style = pullback,
      nw = \GlFun{\EMP},
      ne = \ObjTerm{\GlCat},
      sw = \GlFun{\COND},
      se = \Prop[\GlCat],
      east = \PropTrue[\GlCat],
      west = \GlFun{\Prf},
      south = \DecodeCond,
      north/style = exists,
      east/style = {>->},
      west/style = {>->,exists},
    }
  \end{equation}

  Because $\GlProj$ preserves finite limits and pullback cones are unique up to
  unique isomorphism, we see that $\GlFun{\Prf}$ lies over
  $\Mod{\CCat}{\Prf}$ as required.
\end{con}

We may prove a Beck-Chevalley lemma for dimension equality in $\DF{\CCat}$:

\begin{lem}[Beck-Chevalley]\label{lem:bc-eq}
  The following diagram commutes in $\DF{\CCat}$.
  \begin{equation*}
    \begin{tikzpicture}[diagram,baseline = (W.base)]
      \node (I2) {$\Mod{\CCat}{\DIM}^2$};
      \node (F) [below = of I2] {$\Mod{\CCat}{\COND}$};
      \node (W) [right = of F] {$\Prop[\DF{\CCat}]$};
      \path[->] (I2) edge node [left] {$\Mod{\CCat}{\prn{=}}$} (F);
      \path[->] (F) edge node [below] {$\DecodeCond$} (W);
      \path[->] (I2) edge node [above,sloped] {$\prn{=}$} (W);
    \end{tikzpicture}
  \end{equation*}
\end{lem}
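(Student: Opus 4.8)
The plan is to reduce the asserted commutativity of the triangle in Lemma~\ref{lem:bc-eq} to an equality of subobjects of $\Mod{\CCat}{\DIM}^2$, exploiting that $\DF{\CCat}\simeq\Psh{\CCat}$ is a presheaf logos and hence carries a genuine subobject classifier $\Prop[\DF{\CCat}]$ for which classifying maps are \emph{unique}: two parallel maps $\Mod{\CCat}{\DIM}^2\to\Prop[\DF{\CCat}]$ coincide as soon as they classify the same subobject. So it suffices to check that $\DecodeCond\circ\Mod{\CCat}{\prn{=}}$ and $\prn{=}$ pick out the same subobject of $\Mod{\CCat}{\DIM}^2$.

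First I would record what each side of the triangle classifies. The map $\prn{=}\colon\Mod{\CCat}{\DIM}^2\to\Prop[\DF{\CCat}]$ is, by definition of the internal equality predicate in a logos, the classifying map of the diagonal $\delta\colon\Mod{\CCat}{\DIM}\hookrightarrow\Mod{\CCat}{\DIM}^2$. By Diagram~\ref{diag:cof-glue:0}, $\DecodeCond$ is the classifying map of the monomorphism $\Mod{\CCat}{\Prf}$ into $\Mod{\CCat}{\COND}$; hence, by pasting the defining pullback square of $\DecodeCond$ onto the pullback square of $\Mod{\CCat}{\prn{=}}$ against the universal $\PropTrue[\DF{\CCat}]$, the composite $\DecodeCond\circ\Mod{\CCat}{\prn{=}}$ is the classifying map of the pullback of $\Mod{\CCat}{\Prf}$ along $\Mod{\CCat}{\prn{=}}\colon\Mod{\CCat}{\DIM}^2\to\Mod{\CCat}{\COND}$.

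It then remains to identify that pullback with $\delta$. Specification~\ref{spec:dim-eq} introduces $\prn{=}\colon\DIM^2\to\COND$ precisely \emph{as} a cartesian square in $\ThCat$ witnessing that $\Prf$ pulls back along $\prn{=}$ to the diagonal $\DIM\hookrightarrow\DIM^2$. Since a model $\Mod{\CCat}$, being a representable map functor, preserves finite limits --- in particular binary products and pullbacks --- applying $\Mod{\CCat}$ to that square yields a cartesian square in $\DF{\CCat}$ exhibiting the pullback of $\Mod{\CCat}{\Prf}$ along $\Mod{\CCat}{\prn{=}}$ as the diagonal of $\Mod{\CCat}{\DIM}$, using that $\Mod{\CCat}$ sends $\DIM^2$ to $\Mod{\CCat}{\DIM}^2$ and the diagonal to the diagonal. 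Combining with the previous step, $\DecodeCond\circ\Mod{\CCat}{\prn{=}}$ and $\prn{=}$ classify the same subobject of $\Mod{\CCat}{\DIM}^2$, so they are equal.

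The pullback-pasting and the recognition of $\prn{=}$ as the classifier of the diagonal are routine; the one point that really needs care is that $\Mod{\CCat}$ preserves the cartesian square of Specification~\ref{spec:dim-eq}, i.e.\ that models are lex. This is part of the notion of a representable map functor in Uemura's framework and is used freely elsewhere in this development (for instance in the construction of the glued face formula classifier above), so I expect it to be only a very mild obstacle; granting it, uniqueness of classifying maps in the logos $\DF{\CCat}$ closes the argument.
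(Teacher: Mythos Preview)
Your proposal is correct and follows essentially the same approach as the paper's proof, which also argues that both composites are characteristic maps of the same subobject (the diagonal of $\Mod{\CCat}{\DIM}$) and hence equal by uniqueness of classifying maps. Your write-up simply spells out in more detail the pullback-pasting and the lex-ness of $\Mod{\CCat}$ that the paper leaves implicit.
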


\begin{proof}
  Recalling the diagram from
  Specification~\ref{spec:dim-eq}, we observe that both maps are
  characteristic of the same subobject, and thence equal.
\end{proof}

\begin{con}[Glued dimension equality]\label{con:glued-equality}
  Dimension equality is lifted from $\Mod{\CCat}$ to the gluing category by the
  universal property of the cartesian lift below, using
  the Beck-Chevalley triangle of Lemma~\ref{lem:bc-eq} and the fact
  that $\GlProj$ is a logical morphism.
  \begin{equation}\label{diag:glued-eq:0}
    \begin{tikzpicture}[diagram,baseline = (sw.base)]
      \SpliceDiagramSquare{
        west/style = lies over,
        east/style = lies over,
        width = 3cm,
        height = 1.5cm,
        nw = \GlFun{\COND},
        ne = \Prop[\GlCat],
        sw = \Mod{\CCat}{\COND},
        se = \Prop[\DF{\CCat}],
        north/node/style = upright desc,
        north = \DecodeCond,
        south = \DecodeCond
      }

      \node (G/Dim2) [above left = of nw] {$\GlFun{\DIM}^2$};
      \node (C/Dim2) [below = 1.5cm of G/Dim2] {$\Mod{\CCat}{\DIM}^2$};

      \path[->] (C/Dim2) edge node [below,sloped] {$\Mod{\CCat}{\prn{=}}$} (sw);
      \path[lies over] (G/Dim2) edge (C/Dim2);
      \path[->,exists] (G/Dim2) edge node [desc] {$\MathSmall{\GlFun{=}}$} (nw);
      \path[->,bend left = 30] (G/Dim2) edge node [above,sloped] {$\prn{=}$} (ne);

      \node (tot) [right = of ne] {$\GlCat$};
      \node (base) [right = of se] {$\DF{\CCat}$};
      \path[fibration] (tot) edge (base);
    \end{tikzpicture}
  \end{equation}

  We must check that the square below from
  Specification~\ref{spec:dim-eq} is cartesian:
  \begin{equation}\label{diag:glued-eq:1}
    \DiagramSquare{
      nw/style = dotted pullback,
      nw = \GlFun{\DIM},
      sw = \GlFun{\DIM}^2,
      west = \delta,
      south = \GlFun{=},
      ne = \GlFun{\EMP},
      se = \GlFun{\COND},
      east = \GlFun{\Prf}
    }
  \end{equation}

  By the pullback lemma, it would suffice to check that the outer
  square below is cartesian.
  \begin{equation}\label{diag:glued-eq:2}
    \begin{tikzpicture}[diagram,baseline = (l/sw.base)]
      \SpliceDiagramSquare<l/>{
        nw/style = dotted pullback,
        ne/style = pullback,
        nw = \GlFun{\DIM},
        sw = \GlFun{\DIM}^2,
        west = \delta,
        south = \GlFun{=},
        ne = \GlFun{\EMP},
        se = \GlFun{\COND},
        east = \GlFun{\Prf},
        east/node/style = upright desc,
        east/style = >->
      }
      \SpliceDiagramSquare<r/>{
        glue = west,
        glue target = l/,
        ne = \ObjTerm{\GlCat},
        se = \Prop[\GlCat],
        south = \DecodeCond,
        east = \PropTrue[\GlCat],
        east/style = >->
      }
    \end{tikzpicture}
  \end{equation}

  Using the upstairs triangle of Diagram~\ref{diag:glued-eq:0}, it
  suffices to observe that the following classification square is cartesian:
  \begin{equation*}
    \DiagramSquare{
      nw/style = pullback,
      nw = \GlFun{\DIM},
      sw = \GlFun{\DIM}^2,
      west = \delta,
      ne = \ObjTerm{\GlCat},
      se = \Prop[\GlCat],
      east = \PropTrue[\GlCat],
      south = \prn{=},
      east/style = >->,
      west/style = >->,
    }
    \qedhere
  \end{equation*}
\end{con}

We do not expect a Beck-Chevalley lemma for disjunction analogous to
Lemma~\ref{lem:bc-eq}, since $\phi\lor\psi$ is not (and cannot be) the
``true'' disjunction of $\DF{\CCat}$: instead, we imposed orthogonality
conditions in Specification~\ref{spec:disjunction} to ensure that certain
judgments of XTT (typehood, typing, and formula satisfaction) treat
$\phi\lor\psi$ as if it were a disjunction.

\begin{con}[Glued disjunction]
  We may test a pair of glued face conditions for truth of disjunction as follows:
  \begin{equation}\label{diag:glued-disj:0}
    \begin{tikzpicture}[diagram,baseline = (l/sw.base)]
      \SpliceDiagramSquare<l/>{
        west/style = lies over,
        east/style = lies over,
        width = 2.5cm,
        height = 1.5cm,
        nw = \GlFun{\COND}^2,
        sw = \Mod{\CCat}{\COND}^2,
        ne = \Prop[\GlCat]^2,
        se = \Prop[\DF{\CCat}]^2,
        north = \MathSmall{\DecodeCond^2},
        south = \MathSmall{\DecodeCond^2},
        south/node/style = upright desc,
        north/node/style = upright desc,
      }
      \SpliceDiagramSquare<r/>{
        glue = west,
        glue target = l/,
        width = 2.5cm,
        height = 1.5cm,
        east/style = lies over,
        ne = \Prop[\GlCat],
        se = \Prop[\DF{\CCat}],
        south = \MathSmall{\lor},
        north = \MathSmall{\lor},
        south/node/style = upright desc,
        north/node/style = upright desc,
      }
      \path[->,bend left=30] (l/nw) edge node [above] {$\Rst{\GlFun{\COND}}{\lor}$} (r/ne);
      \path[->,bend right=30] (l/sw) edge node [below] {$\Rst{\Mod{\CCat}{\COND}}{\lor}$} (r/se);
      \node (tot) [right = of r/ne] {$\GlCat$};
      \node (base) [right = of r/se] {$\DF{\CCat}$};
      \path[fibration] (tot) edge (base);
    \end{tikzpicture}
  \end{equation}

  Unfortunately, the subobject $\Pb{\PropTrue}{\Rst{\Mod{\CCat}{\COND}}{\lor}}\defeq
  \brc{\phi,\psi \mid
  \Mod{\CCat}{\Prf}\brk{\phi}\lor\Mod{\CCat}{\Prf}\brk{\psi}}$ corresponding to
  the downstairs map of Diagram~\ref{diag:glued-disj:0} is \emph{not}
  classified by $\Mod{\CCat}{\COND}$! This is because such a subobject must have
  representable fibers, and but the real disjunction is a colimit in
  $\DF{\CCat}$ and therefore not representable. We need something that lies
  instead over the following pullback:
  \begin{equation}\label{diag:glued-disj:1}
    \DiagramSquare{
      nw/style = pullback,
      west/style = >->,
      east/style = >->,
      nw = \Mod{\CCat}{\prn{\Pb{\Prf}{\lor}}},
      sw = \Mod{\CCat}{\COND}^2,
      se = \Mod{\CCat}{\COND},
      south = \Mod{\CCat}{\lor},
      east = \Mod{\CCat}{\Prf},
      ne = \Mod{\CCat}{\EMP},
    }
  \end{equation}

  Because the ``ideal'' disjunction $\Pb{\PropTrue}{\Rst{\Mod{\CCat}{\COND}}{\lor}}$ is more
  universal than the disjunction of $\Mod{\CCat}{\COND}$, we obtain a unique map
  $\Mor|>->|[i]{\Pb{\PropTrue}{\Rst{\Mod{\CCat}{\COND}}{\lor}}}{\Mod{\CCat}{\prn{\Pb{\Prf}{\lor}}}}$;
  taking an opcartesian lift, we may shift $\Pb{\PropTrue}{\Rst{\GlFun{\COND}}{\lor}}$ (the
  subobject corresponding to the upstairs map of
  Diagram~\ref{diag:glued-disj:0}) to lie over
  $\Mod{\CCat}{\prn{\Pb{\Prf}{\lor}}}$:
  \begin{equation}\label{diag:glued-disj:2}
    \DiagramSquare{
      west/style = op lies over,
      east/style = op lies over,
      north/style = {exists, >->},
      south/style = >->,
      width = 3cm,
      height = 1.5cm,
      nw = \Pb{\PropTrue}{\Rst{\GlFun{\COND}}{\lor}},
      sw = \Pb{\PropTrue}{\Rst{\Mod{\CCat}{\COND}}{\lor}},
      se = \Mod{\CCat}{\prn{\Pb{\Prf}{\lor}}},
      south = i,
      ne = i_!\Pb{\PropTrue}{\Rst{\GlFun{\COND}}{\lor}},
    }
  \end{equation}

  This lift can be seen to be a subobject of $\GlFun{\COND}^2$ using the
  universal property of the opcartesian lift:
  \begin{equation}\label{diag:glued-disj:3}
    \begin{tikzpicture}[diagram,baseline = (sw.base)]
      \SpliceDiagramSquare{
        west/style = op lies over,
        east/style = op lies over,
        north/style = >->,
        south/style = >->,
        width = 3cm,
        height = 1.5cm,
        nw = \Pb{\PropTrue}{\Rst{\GlFun{\COND}}{\lor}},
        sw = \Pb{\PropTrue}{\Rst{\Mod{\CCat}{\COND}}{\lor}},
        se = \Mod{\CCat}*{\Pb{\Prf}{\lor}},
        south = i,
        ne = i_!\Pb{\PropTrue}{\Rst{\GlFun{\COND}}{\lor}},
      }
      \node (nee) [right = 2.5cm of ne, yshift = 1.5cm] {$\GlFun{\COND}^2$};
      \node (see) [right = 2.5cm of se, yshift = 1.5cm] {$\Mod{\CCat}{\COND}^2$};
      \path[op lies over] (nee) edge (see);
      \path[>->] (se) edge (see);
      \path[>->,exists,] (ne) edge (nee);
      \path[>->,bend left=30] (nw) edge (nee);
    \end{tikzpicture}
  \end{equation}

  Consider the characteristic map of the dotted monomorphism from
  Diagram~\ref{diag:glued-disj:3}:
  \begin{equation}\label{diag:glued-disj:4}
    \DiagramSquare{
      nw/style = pullback,
      se = \Prop[\GlCat],
      sw = \GlFun{\COND}^2,
      west/style = >->,
      east/style = >->,
      ne = \ObjTerm{\GlCat},
      nw = i_!\Pb{\PropTrue}{\Rst{\GlFun{\COND}}{\lor}},
      width = 3cm,
      height = 1.5cm,
      south = \chi,
      south/style = exists,
    }
  \end{equation}

  Because $\GlProj$ is a logical functor and $\Prop[\DF{\CCat}]$ classifies
  subobjects strictly, Diagram~\ref{diag:glued-disj:4} must lie over the
  following square:
  \begin{equation}\label{diag:glued-disj:5}
    \begin{tikzpicture}[diagram,node distance=2cm,baseline = (sw.base)]
      \node [pullback] (nw) {$\Mod{\CCat}*{\Pb{\Prf}{\lor}}$};
      \node (sw) [below = 1.5cm of nw] {$\Mod{\CCat}{\COND}^2$};
      \node (se) [right = of sw] {$\Mod{\CCat}{\COND}$};
      \node (see) [right = of se] {$\Prop[\DF{\CCat}]$};
      \node (ne) [above = 1.5cm of see] {$\ObjTerm{\DF{\CCat}}$};
      \path[>->] (nw) edge (sw);
      \path[>->] (ne) edge (see);
      \path[->] (nw) edge (ne);
      \path[->] (sw) edge node [below] {$\Mod{\CCat}{\lor}$} (se);
      \path[->] (se) edge node [below] {$\DecodeCond$} (see);
    \end{tikzpicture}
  \end{equation}

  Therefore, we may use the universal property of the cartesian lift to obtain
  a code for disjunction of glued face conditions:
  \begin{equation*}
    \begin{tikzpicture}[diagram,baseline = (sw.base)]
      \SpliceDiagramSquare{
        west/style = lies over,
        east/style = lies over,
        width = 3cm,
        height = 1.5cm,
        nw = \GlFun{\COND},
        ne = \Prop[\GlCat],
        sw = \Mod{\CCat}{\COND},
        se = \Prop[\DF{\CCat}],
        south = \DecodeCond,
        north = \DecodeCond,
        north/node/style = upright desc,
      }
      \node (nww) [left = 2.5cm of nw, yshift = 1.5cm] {$\GlFun{\COND}^2$};
      \node (sww) [left = 2.5cm of sw, yshift = 1.5cm] {$\Mod{\CCat}{\COND}^2$};
      \path[lies over] (nww) edge (sww);
      \path[->,bend left = 30] (nww) edge node [sloped,above] {$\chi$} (ne);
      \path[->] (sww) edge node [sloped,below] {$\Mod{\CCat}{\lor}$} (sw);
      \path[->,exists] (nww) edge node [desc] {$\GlFun{\lor}$} (nw);
    \end{tikzpicture}
    \qedhere
  \end{equation*}
\end{con}

\begin{lem}[Disjunction elimination / truth]\label{lem:disj-elim-truth}
  The glued disjunction satisfies the internal orthogonality condition with
  respect to $\GlFun{\Prf}$ written in Specification~\ref{spec:disjunction}.
\end{lem}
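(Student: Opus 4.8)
The plan is to verify the internal orthogonality condition $\IntOrth{\GlFun{\Kwd{in}_\lor}}{\GlFun{\Prf}}$ of Specification~\ref{spec:disjunction} directly in the gluing logos, exploiting two facts: that $\GlFun{\Prf}$ is a monomorphism (it is the pullback of $\PropTrue[\GlCat]$ along $\DecodeCond$), and that the glued disjunction was assembled from an opcartesian lift. Since $\GlFun{\Prf}$ is monic, any lift of a square against it is automatically unique, so only existence of lifts must be produced. Fixing a test object $Z:\GlCat$, a lifting problem against $\GlFun{\Prf}$ is exactly a glued face formula $\chi$ over $W := Z\times\GlFun{\Pb{\Prf}{\lor}}$ whose extension (the subobject classified by $\DecodeCond\circ\chi$) contains the subobject $V := Z\times\Pb{\PropTrue}{\Rst{\GlFun{\COND}}{\lor}}$, and the lift exists precisely when that extension is all of $W$. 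Thus it suffices to show that any subobject $U\hookrightarrow W$ containing $V$ is already all of $W$. (Equivalently one could unfold this via Lemma~\ref{lem:colimit-lift} into the requirement that a certain cartesian gap square in $\GlCat$ be cartesian, but the argument reads more cleanly with subobjects.)

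I would then split the claim $U = W$ into a ``downstairs'' half and a ``cubical'' half, using that $\GlCat = \Comma{\cSET}{\F}$ and that a morphism of glued objects is invertible iff both of its components (in $\cSET$ and in $\DF{\CCat}$) are. For the downstairs half: since $\GlProj$ is a logical morphism it preserves finite limits and the subobject classifier, so applying $\GlProj$ transports the lifting problem to the corresponding one in $\DF{\CCat}$ for $\Mod{\CCat}$; as $\Mod{\CCat}$ is a model of XTT it satisfies Specification~\ref{spec:disjunction}, so $\GlProj(U) = \GlProj(W)$, i.e.\ $U\hookrightarrow W$ becomes invertible after $\GlProj$. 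For the cubical half: by the construction of the glued disjunction, $\GlFun{\Pb{\Prf}{\lor}}$ is the opcartesian lift of $\Pb{\PropTrue}{\Rst{\GlFun{\COND}}{\lor}}$ along the comparison map $i$ of $\DF{\CCat}$, and $\GlFun{\Kwd{in}_\lor}$ is the corresponding opcartesian lift map; since opcartesian lifts in $\FibMor[\GlProj]{\GlCat}{\DF{\CCat}}$ are computed by postcomposition with $\F$ of the base map, they are identities on underlying cubical sets. Hence $\GlFun{\Kwd{in}_\lor}$, and therefore $Z\times\GlFun{\Kwd{in}_\lor}:V\to W$, is an isomorphism on underlying cubical sets, so the inclusion $U\hookrightarrow W$, which factors this map, is too. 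Joint conservativity of the two projections out of $\GlCat$ then gives $U = W$, supplying the lift.

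I expect the main obstacle to be the precise bookkeeping of the cubical half: one must confirm that the glued introduction map $\GlFun{\Kwd{in}_\lor}$ really is, up to the canonical identifications, the opcartesian lift map appearing in the construction of $\GlFun{\lor}$, so that $\GlFun{\Pb{\Prf}{\lor}}$ carries the honest union $\Pb{\PropTrue}{\Rst{\GlFun{\COND}}{\lor}}$ as its cubical-set component and ``eliminating a disjunction'' is already validated at the semantic level. Once this is pinned down, the downstairs half is immediate from $\Mod{\CCat}$ being a model and the two halves glue by conservativity, so no computation with the syntax of face formulas is required.
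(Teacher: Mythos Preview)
Your proposal is correct and takes essentially the same approach as the paper: split the lifting problem into its $\DF{\CCat}$ and $\cSET$ components, discharge the former using that $\Mod{\CCat}$ is a model, and discharge the latter by observing that the opcartesian lift defining $\GlFun{\Pb{\Prf}{\lor}}$ is an identity on underlying cubical sets. The paper's proof is terser---it works directly with the lifting square rather than your subobject reformulation---but the content is the same.
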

\begin{proof}
  Fixing $\Mor[\prn{\phi,\psi}]{X}{\GlFun{\COND}^2}$, we must find a unique lift for
  the following square, lying over the corresponding unique lift in $\DF{\CCat}$:
  \begin{equation}\label{diag:disj-elim-truth:0}
    \begin{tikzpicture}[diagram,baseline=(sw.base)]
      \SpliceDiagramSquare{
        west/style = >->,
        east/style = >->,
        width = 3cm,
        nw = \prn{\phi,\psi}^*\Pb{\PropTrue}{\Rst{\GlFun{\COND}}{\lor}},
        sw = \prn{\phi,\psi}^*\GlFun{\Pb{\Prf}{\lor}},
        ne = \GlFun{\EMP},
        se = \GlFun{\COND},
      }
      \path[->,exists] (sw) edge (ne);
    \end{tikzpicture}
  \end{equation}

  Because the lift in $\DF{\CCat}$ is assumed to exist and is unique, it
  suffices to find a unique lift for the image of
  Diagram~\ref{diag:disj-elim-truth:0} under $\Mor[\GlEl]{\GlCat}{\cSET}$; but
  $\GlFun{\Pb{\Prf}{\lor}}$ is an opcartesian lift of
  $\Pb{\PropTrue}{\Rst{\GlFun{\COND}}{\lor}}$, so the left-hand map becomes an identity in
  $\cSET$.
\end{proof}

\subsection{Glued type structure}

We will now show the sense in which the semantic constructions of Uemura
summarized above suffice to develop the type structure of XTT in the gluing
fibration.

\begin{con}[Universe of glued types]\label{con:universe-of-glued-types}
  We will define a computability structure $\GlFun{\TY} : \GlCat$ lying over
  $\Mod{\CCat}{\TY}$ in the gluing fibration
  $\FibMor[\GlProj]{\GlCat}{\DF{\CCat}}$. Because we have assumed
  $\Mod{\CCat}{\prn{\ElMor}}$ is small for $\CUElMor$, we have a characteristic
  map:
  \begin{equation*}
    \DiagramSquare{
      nw/style = pullback,
      south/style = exists,
      north/style = exists,
      nw = \Mod{\CCat}{\EL},
      sw = \Mod{\CCat}{\TY},
      west = \Mod{\CCat}{\El},
      se = \CUTY,
      ne = \CUEL,
      east = \CUEl,
      south = \CodeOf{\Mod{\CCat}{\El}}
    }
  \end{equation*}

  We therefore obtain the base of a universe by cartesian lift:
  \begin{equation*}
    \begin{tikzpicture}[diagram]
      \SpliceDiagramSquare{
        west/style = lies over,
        east/style = lies over,
        width = 2.5cm,
        height = 1.5cm,
        sw = \Mod{\CCat}{\TY},
        se = \CUTY,
        south = \CodeOf{\Mod{\CCat}{\El}},
        ne = \GlUTY,
        nw = \GlFun{\TY},
        north/style = exists,
        north = \CodeOf{\GlFun{\El}},
        nw/style = dotted pullback,
      }
      \node (tot) [right = of ne] {$\GlCat$};
      \node (base) [right = of se] {$\DF{\CCat}$};
      \path[fibration] (tot) edge (base);
    \end{tikzpicture}
  \end{equation*}

  Then, the rest of the universe $\GlFun{\ElMor}$ is obtained by pullback:
  \begin{equation*}
    \DiagramSquare{
      nw/style = pullback,
      south = \CodeOf{\GlFun{\El}},
      ne = \GlUEL,
      east = \GlUEl,
      se = \GlUTY,
      sw = \GlFun{\TY},
      nw = \GlFun{\EL},
      west = \GlFun{\El},
      west/style = exists,
      north/style = exists,
    }
    \qedhere
  \end{equation*}
\end{con}

\begin{lem}[Disjunction elimination / elements]\label{lem:disj-elim-el}
  The glued disjunction satisfies the internal orthogonality condition with
  respect to $\GlFun{\El}$ written in Specification~\ref{spec:disjunction}.
\end{lem}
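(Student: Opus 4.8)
The plan is to reprise the proof of Lemma~\ref{lem:disj-elim-truth} almost verbatim, replacing the glued face-formula classifier $\GlFun{\Prf}$ by the glued family of elements $\GlFun{\El}$. Unfolding the internal orthogonality condition in the slice $\Sl{\GlCat}{\GlFun{\COND}^2}$, I would fix a generalized element $\Mor[\prn{\phi,\psi}]{X}{\GlFun{\COND}^2}$ and a lifting problem for the pullback of the glued introduction map $\Mor{\Pb{\PropTrue}{\Rst{\GlFun{\COND}}{\lor}}}{\GlFun{\Pb{\Prf}{\lor}}}$ against the pullback of $\GlFun{\El}$ into the slice over $\GlFun{\COND}^2$, and exhibit a \emph{unique} diagonal filler lying over the corresponding unique filler in $\DF{\CCat}$. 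That the downstairs filler exists and is unique is exactly the second internal orthogonality condition of Specification~\ref{spec:disjunction} applied to the model $\Mod{\CCat}$, once we observe that $\GlProj$ carries the glued disjunction data and $\GlFun{\El}$ onto the corresponding data of $\Mod{\CCat}$ (for $\GlFun{\El}$ this is immediate from Construction~\ref{con:universe-of-glued-types}, since $\GlProj$ preserves pullbacks and cartesian lifts).

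As in the proof of Lemma~\ref{lem:disj-elim-truth}, it then suffices to solve the image of the lifting square under the domain functor $\Mor[\GlEl]{\GlCat}{\cSET}$, uniquely. The decisive observation --- unchanged from the truth case --- is that $\GlFun{\Pb{\Prf}{\lor}}$ was produced, in the construction of the glued disjunction, as an \emph{opcartesian} lift of $\Pb{\PropTrue}{\Rst{\GlFun{\COND}}{\lor}}$ along the gluing fibration; since opcartesian lifts along $\GlProj$ act as the identity on cubical-set components, the glued introduction map, and hence its pullback along $\prn{\phi,\psi}$, becomes an isomorphism in $\cSET$. Any isomorphism is externally left orthogonal to every map, so there is a unique lift in $\cSET$; and because the cubical-set component of the pulled-back introduction map is invertible, this $\cSET$-lift assembles with the (unique) $\DF{\CCat}$-lift into an honest morphism of $\GlCat$ --- the comma coherence square commuting after cancelling the invertible component --- and uniqueness in $\GlCat$ is then forced componentwise.

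There is, frankly, no serious obstacle here: the argument is the same as for Lemma~\ref{lem:disj-elim-truth}, the only substantive difference being that the downstairs lifting problem is discharged by the hypothesis that $\Mod{\CCat}$ is a model of XTT (specifically the second clause of Specification~\ref{spec:disjunction}) rather than by the automatic monomorphicity of $\Prf$. The one point deserving explicit mention is the reduction of a $\GlCat$-lifting problem to the pair of its images under $\GlProj$ and $\GlEl$, which goes through precisely because the left-hand map is invertible on $\cSET$-components; everything else is formal bookkeeping inherited from the already-established behaviour of opcartesian lifts in the gluing fibration.
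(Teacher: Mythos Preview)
Your proposal is correct and follows essentially the same approach as the paper: the paper's proof literally reads ``The proof is identical to that of Lemma~\ref{lem:disj-elim-truth},'' and your elaboration faithfully unpacks that argument --- downstairs the unique lift comes from the model $\Mod{\CCat}$ satisfying Specification~\ref{spec:disjunction}, and upstairs one uses that the opcartesian lift defining $\GlFun{\Pb{\Prf}{\lor}}$ makes the left-hand map an identity in $\cSET$. One minor quibble: in the truth case the downstairs lift is also discharged by Specification~\ref{spec:disjunction} (its first clause), not by the monomorphicity of $\Prf$ as you suggest, so the two proofs really are identical in structure.
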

\begin{proof}
  The proof is identical to that of Lemma~\ref{lem:disj-elim-truth}.
\end{proof}

\begin{con}[Closure under dependent product]
  We must show that $\GlFun{\El}$ has a code for dependent products lying over the
  corresponding algebra $\Mor{\Mod{\CCat}{\PI{\El}}}{\Mod{\CCat}{\El}}$.
  First of all, we have a potential code in $\GlCat$ for the dependent product of
  $\GlFun{\El}$-families in $\GlUTY$, defined using functoriality and the closure
  of $\GlUElMor$ under dependent products, and the fact that $\GlProj$
  preserves dependent products:
  \begin{equation*}
    \begin{tikzpicture}[diagram,baseline = (l/sw.base)]
      \SpliceDiagramSquare<l/>{
        west/style = |->,
        east/style = |->,
        width = 2.5cm,
        height = 1.5cm,
        nw = \PI{\GlFun{\El}},
        ne = \PI{\GlUEl},
        north = \PI{\CodeOf{\GlFun{\El}}},
        sw = \PI{\Mod{\CCat}{\El}},
        se = \PI{\CUEl},
        south = \PI{\CodeOf{\Mod{\CCat}{\El}}},
      }
      \SpliceDiagramSquare<r/>{
        glue = west,
        glue target = l/,
        east/style = |->,
        height = 1.5cm,
        se = \CUEl,
        south = \Kwd{pi}_{\CUEl},
        north = \Kwd{pi}_{\GlUEl},
        ne = \GlUEl,
      }
      \node (tot) [right = of r/ne] {$\CartArr{\GlCat}$};
      \node (base) [right = of r/se] {$\CartArr{\DF{\CCat}}$};
      \path[->] (tot) edge (base);
    \end{tikzpicture}
  \end{equation*}

  By realignment (Lemma~\ref{lem:fibered-universe-realignment}), using the fact
  $\Kwd{pi}_{\CUTY}\circ\PI{\CodeOf{\Mod{\CCat}{\El}}}$ and
  $\CodeOf{\Mod{\CCat}{\El}}\circ\Mod{\CCat}{\TyPi}$ are (different) characteristic maps for
  the same family, we obtain a new code in $\GlUTY$ for the same family in the following
  configuration:
  \begin{equation*}
    \begin{tikzpicture}[diagram,baseline=(sw.base)]
      \SpliceDiagramSquare{
        west/style = lies over,
        east/style = lies over,
        nw = \PI{\GlFun{\TY}},
        sw = \PI{\Mod{\CCat}{\TY}},
        se = \CUTY,
        ne = \GlUTY,
        south = \CodeOf{\Mod{\CCat}{\El}}\circ \Mod{\CCat}{\TyPi},
        north/style = exists,
        width = 2.5cm,
        height = 1.5cm,
      }
      \node (tot) [right = of ne] {$\GlCat$};
      \node (base) [right = of se] {$\DF{\CCat}$};
      \path[fibration] (tot) edge (base);
    \end{tikzpicture}
  \end{equation*}

  Therefore, we are in a position to define the type code using the universal property of the cartesian lift:
  \begin{equation*}
    \begin{tikzpicture}[diagram,baseline = (sw.base)]
      \SpliceDiagramSquare{
        west/style = lies over,
        east/style = lies over,
        sw = \Mod{\CCat}{\TY},
        se = \CUTY,
        nw = \GlFun{\TY},
        ne = \GlUTY,
        north = \CodeOf{\GlFun{\El}},
        south = \CodeOf{\Mod{\CCat}{\El}},
        north/node/style = upright desc,
        height = 1.5cm,
        width = 3cm,
      }
      \node (Gl/T/Pi) [above left = 2.5cm of nw] {$\PI{\GlFun{\TY}}$};
      \node (C/T/Pi) [below = 1.5cm of Gl/T/Pi] {$\PI{\Mod{\CCat}{\TY}}$};
      \path[lies over] (Gl/T/Pi) edge (C/T/Pi);
      \path[->] (C/T/Pi) edge node [sloped,below] {$\Mod{\CCat}{\TyPi}$} (sw);
      \path[->,bend left=30] (Gl/T/Pi) edge (ne);
      \path[->,exists] (Gl/T/Pi) edge node [desc] {$\GlFun{\TyPi}$} (nw);
      \node (tot) [right = of ne] {$\GlCat$};
      \node (base) [right = of se] {$\DF{\CCat}$};
      \path[fibration] (tot) edge (base);
    \end{tikzpicture}
  \end{equation*}

  We have got the downstairs map aligned properly; to complete the algebra
  $\Mor{\Mod{\CCat}{\PI{\El}}}{\Mod{\CCat}{\El}}$ with a properly aligned
  upstairs map, we may use the universal property of the pullback and the fact
  that $\GlProj$ is lex.
\end{con}

The closure under dependent sum works identically. We will, however, illustrate the closure under path types.

\begin{con}[Closure under path types]
  First of all, the universe $\GlUTY$ is closed under path types because path
  types may be constructed (up to isomorphism) using the interval, dependent
  products, and subobject comprehension (all of which are small). Therefore, we
  have a cartesian map $\Mor{\PATH{\GlUEl}}{\GlUEl} : \CartArr{\GlCat}$. Using
  functoriality of $\PATH$ and the fact that path types are preserved by
  $\GlProj$, we have:
  \begin{equation*}
    \begin{tikzpicture}[diagram,baseline = (l/sw.base)]
      \SpliceDiagramSquare<l/>{
        west/style = |->,
        east/style = |->,
        nw = \PATH{\GlFun{\El}},
        ne = \PATH{\GlUEl},
        sw = \PATH{\Mod{\CCat}{\El}},
        se = \PATH{\CUEl},
        height = 1.5cm,
      }
      \SpliceDiagramSquare<r/>{
        glue = west,
        glue target = l/,
        east/style = |->,
        se = \CUEl,
        ne = \GlUEl,
        height = 1.5cm,
      }
      \node (tot) [right = of r/ne] {$\CartArr{\GlCat}$};
      \node (base) [right = of r/se] {$\CartArr{\DF{\CCat}}$};
      \path[->] (tot) edge (base);
    \end{tikzpicture}
  \end{equation*}

  By Lemma~\ref{lem:fibered-universe-realignment}, we may realign the upstairs
  map to lie over $\Mod{\CCat}{\TyPath}\circ\CodeOf{\Mod{\CCat}{\El}}$; therefore,
  we obtain a code for the glued path type lying over the original code by the
  universal property of the cartesian lift:
  \begin{equation*}
    \begin{tikzpicture}[diagram,baseline = (sw.base)]
      \SpliceDiagramSquare{
        west/style = lies over,
        east/style = lies over,
        width = 3cm,
        height = 1.5cm,
        sw = \Mod{\CCat}{\TY},
        se = \CUTY,
        nw = \GlFun{\TY},
        ne = \GlUTY,
        north = \CodeOf{\GlFun{\El}},
        south = \CodeOf{\Mod{\CCat}{\El}},
        north/node/style = upright desc,
      }
      \node (Gl/T/Path) [above left = 2.5cm of nw] {$\PATH{\GlFun{\TY}}$};
      \node (C/T/Path) [below = 1.5cm of Gl/T/Path] {$\PATH{\Mod{\CCat}{\TY}}$};
      \path[lies over] (Gl/T/Path) edge (C/T/Path);
      \path[->] (C/T/Path) edge node [sloped,below] {$\Mod{\CCat}{\TyPath}$} (sw);
      \path[->,bend left=30] (Gl/T/Path) edge (ne);
      \path[->,exists] (Gl/T/Path) edge node [desc] {$\MathSmall{\GlFun{\TyPath}}$} (nw);
      \node (tot) [right = of ne] {$\GlCat$};
      \node (base) [right = of se] {$\DF{\CCat}$};
      \path[fibration] (tot) edge (base);
    \end{tikzpicture}
    \qedhere
  \end{equation*}
\end{con}

\begin{con}\label{con:computable-booleans}
  We may form a computability structure over the booleans by \emph{opcartesian}
  lift, using the fact that the gluing (op)fibration preserves colimits:
  \begin{equation*}
    \begin{tikzpicture}[diagram,baseline = (sw.base)]
      \SpliceDiagramSquare{
        west/style = op lies over,
        east/style = op lies over,
        width = 3.5cm,
        height = 1.5cm,
        north/style = exists,
        nw = \Two{\GlCat},
        sw = \Two{\DF{\CCat}},
        se = \Mod{\CCat}*{\Pb{\El}{\TyBool}},
        ne = \GlFun{\Pb{\El}{\TyBool}},
        north = \MathSmall{\brk{\GlFun{\TmTt}\mid\GlFun{\TmFf}}},
        south = \MathSmall{\brk{\Mod{\CCat}{\TmTt}\mid\Mod{\CCat}{\TmFf}}},
      }
      \node (tot) [right = of ne] {$\GlCat$};
      \node (base) [right = of se] {$\DF{\CCat}$};
      \path[opfibration] (tot) edge (base);
    \end{tikzpicture}
    \qedhere
  \end{equation*}
\end{con}

\begin{lem}\label{lem:glued-booleans-small}
  The computability structure $\GlFun{\Pb{\El}{\TyBool}}$ from Construction~\ref{con:computable-booleans} is small.
\end{lem}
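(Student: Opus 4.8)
The plan is to verify smallness by means of the fibered-universe criterion of Lemma~\ref{lem:gl-u-generic}, applied to the canonical map $\Mor[f]{\GlFun{\Pb{\El}{\TyBool}}}{\GlFun{\EMP}}$, which lies over $\GlProj{f} = \Mod{\CCat}{\prn{\Mor{\Pb{\El}{\TyBool}}{\EMP}}}$. By that lemma, $f$ is small for $\GlUElMor$ as soon as (i) $\GlProj{f}$ is small for $\CUElMor$, and (ii) the induced vertical comparison map $\Mor{\GlFun{\Pb{\El}{\TyBool}}}{\GlProj{f}^*\GlFun{\EMP}}$ is small for the Hofmann--Streicher universe $\GlFibUElMor{\Mod{\CCat}{\Pb{\El}{\TyBool}}}$ when viewed in the fiber $\GlProj\brk{\Mod{\CCat}{\Pb{\El}{\TyBool}}} \simeq \Sl{\cSET}{\F{\Mod{\CCat}{\Pb{\El}{\TyBool}}}}$. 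Both conditions reduce to essentially trivial computations.

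For condition (i), observe that $\GlProj{f}$ is the discrete fibration $\Mor{\Mod{\CCat}{\Pb{\El}{\TyBool}}}{\Mod{\CCat}{\EMP}}$, obtained by pulling back the representable map $\Mod{\CCat}{\prn{\ElMor}}$ along the code $\Mor[\Mod{\CCat}{\TyBool}]{\Mod{\CCat}{\EMP}}{\Mod{\CCat}{\TY}}$. Since we arranged for $\CUEl$ to classify $\Mod{\CCat}{\prn{\ElMor}}$, and smallness relative to a Hofmann--Streicher universe is stable under pullback, $\GlProj{f}$ is small for $\CUElMor$.

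For condition (ii), the key point is to identify the underlying cubical set of $\GlFun{\Pb{\El}{\TyBool}}$. Recalling that $\GlProj\brk{W} \simeq \Sl{\cSET}{\F{W}}$ and that cartesian lifts in the gluing fibration are computed as pullbacks in $\cSET$ (Lemma~\ref{lem:gl-proj-adjoints}), reindexing along a map $u$ acts fiberwise as pullback along $\F{u}$; hence its left adjoint --- which computes the opcartesian lift of Construction~\ref{con:computable-booleans} --- is dependent sum along $\F{u}$, i.e.\ post-composition with $\F{u}$. Therefore the cubical-set part of $\GlFun{\Pb{\El}{\TyBool}}$ coincides with that of $\Two{\GlCat} = \ObjTerm{\GlCat}+\ObjTerm{\GlCat}$; since $\F$, being an algebraic morphism of logoi, is cocontinuous (so colimits in the comma category $\GlCat$ are computed componentwise) and left exact (so it preserves the terminal object), this cubical set is $\Two{\cSET} = \ObjTerm{\cSET}+\ObjTerm{\cSET}$, with structure map $\F{\brk{\Mod{\CCat}{\TmTt}\mid\Mod{\CCat}{\TmFf}}} : \Mor{\Two{\cSET}}{\F{\Mod{\CCat}{\Pb{\El}{\TyBool}}}}$. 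Because $\GlProj{f}^*\GlFun{\EMP}$ is the terminal object of the fiber $\Sl{\cSET}{\F{\Mod{\CCat}{\Pb{\El}{\TyBool}}}}$, the vertical comparison map \emph{is} this structure map, and its fibres are at most two-element subsets of the stages $\Two{\cSET}\prn{[n]}$, hence small for any sufficiently large Hofmann--Streicher universe, in particular for $\GlFibUElMor{\Mod{\CCat}{\Pb{\El}{\TyBool}}}$. Lemma~\ref{lem:gl-u-generic} then concludes that $\GlFun{\Pb{\El}{\TyBool}}$ is small.

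The only real content is the bookkeeping in the third paragraph: unwinding the opcartesian lift of Construction~\ref{con:computable-booleans} to see that the syntactic (cubical) part of $\GlFun{\Pb{\El}{\TyBool}}$ is merely $\ObjTerm{\cSET}+\ObjTerm{\cSET}$. This is routine once one recalls that the reindexing functors of $\GlProj$ are $\F$-pullbacks, so I expect no genuine obstacle. A mild variant would sidestep Lemma~\ref{lem:gl-u-generic} and argue directly that $\Two{\GlCat}$ is small --- a finite coproduct of the small terminal object --- and that opcartesian lift along $\brk{\Mod{\CCat}{\TmTt}\mid\Mod{\CCat}{\TmFf}}$, whose fibres are at most two-element, preserves smallness; but this amounts to the same two checks, so the stated route is cleanest.
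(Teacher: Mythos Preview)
Your proof is correct and follows essentially the same route as the paper: apply the characterization of smallness from Lemma~\ref{lem:gl-u-generic}, dispatch condition~(i) immediately, and verify condition~(ii) by computing that the vertical map in the fiber is the structure map $\Mor{\Two{\cSET}}{\F{\Mod{\CCat}{\Pb{\El}{\TyBool}}}}$, which is evidently small. You supply more detail than the paper on why the opcartesian lift leaves the underlying cubical set equal to $\Two{\cSET}$ (via the left adjoint to $\F$-pullback being postcomposition), but this only makes explicit what the paper takes for granted.
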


\begin{proof}
  By the characterization theorem, it suffices to check that it lies
  over a small object (obvious), and that vertical map it induces is small in
  $\GlProj\brk{\Mod{\CCat}*{\Pb{\El}{\TyBool}}}$. To see that this is the case,
  we compute this vertical map as follows:
  \begin{equation*}
    \begin{tikzpicture}[diagram]
      \SpliceDiagramSquare{
        west/style = lies over,
        east/style = lies over,
        width = 3cm,
        height = 1.5cm,
        nw = \GlUnProjR{\Mod{\CCat}*{\Pb{\El}{\TyBool}}},
        ne = \GlUnProjR{\Mod{\CCat}{\EMP}} \mathrlap{\cong \GlFun{\EMP}},
        sw = \Mod{\CCat}*{\Pb{\El}{\TyBool}},
        se = \Mod{\CCat}{\EMP},
        south = !_{\Mod{\CCat}*{\Pb{\El}{\TyBool}}},
      }
      \node (nww) [above left = of nw] {$\GlFun{\Pb{\El}{\TyBool}}$};
      \path[lies over,bend right=30] (nww) edge (sw);
      \path[->,exists] (nww) edge (nw);
      \path[->,bend left=30] (nww) edge node [above,sloped] {$!_{\GlFun{\Pb{\El}{\TyBool}}}$} (ne);
    \end{tikzpicture}
  \end{equation*}

  The vertical map above can be seen to be small using the fact that
  $\Mor{\Two{\cSET}}{\F{\Mod{\CCat}*{\Pb{\El}{\TyBool}}}}$ is small.
\end{proof}

\begin{con}[Booleans]
  By Lemma~\ref{lem:glued-booleans-small}, there exists some
  characteristic map $\Mor[\CodeOf{\Pb{\El}{\TyBool}}]{\GlFun{\EMP}}{\GlUTY}$ for
  $\GlFun{\Pb{\El}{\TyBool}}$ lying over a characteristic map for the object
  $\Mod{\CCat}*{\Pb{\El}{\TyBool}}$. Therefore, again using the realignment
  lemma, we may define a suitable code for the booleans using the universal
  property of the cartesian lift:
  \begin{equation*}
    \begin{tikzpicture}[diagram,baseline = (sw.base)]
      \SpliceDiagramSquare{
        west/style = lies over,
        east/style = lies over,
        width = 3cm,
        height = 1.5cm,
        sw = \Mod{\CCat}{\TY},
        se = \CUTY,
        nw = \GlFun{\TY},
        ne = \GlUTY,
        north = \CodeOf{\GlFun{\El}},
        south = \CodeOf{\Mod{\CCat}{\El}},
        north/node/style = upright desc,
      }
      \node (Gl/Emp) [above left = 2.5cm of nw] {$\GlFun{\EMP}$};
      \node (C/Emp) [below = 1.5cm of Gl/Emp] {$\Mod{\CCat}{\EMP}$};
      \path[lies over] (Gl/Emp) edge (C/Emp);
      \path[->] (C/Emp) edge node [sloped,below] {$\Mod{\CCat}{\TyBool}$} (sw);
      \path[->,bend left=30] (Gl/Emp) edge (ne);
      \path[->,exists] (Gl/Emp) edge node [desc] {$\MathSmall{\GlFun{\TyBool}}$} (nw);
      \node (tot) [right = of ne] {$\GlCat$};
      \node (base) [right = of se] {$\DF{\CCat}$};
      \path[fibration] (tot) edge (base);
    \end{tikzpicture}
  \end{equation*}

  To define the elimination form, we must exhibit a choice of
  lifts of the following form natural in $X : \GlCat$, lying over the
  corresponding lifts that we have fixed in $\DF{\CCat}$:
  \begin{equation}\label{diag:glued-bool-lift}
    \begin{tikzpicture}[diagram, baseline=(sw.base)]
      \SpliceDiagramSquare{
        width = 3cm,
        nw = X\times \Two{\GlCat},
        sw = X\times\GlFun{\Pb{\El}{\TyBool}},
        se = \GlFun{\TY},
        ne = \GlFun{\EL},
        east = \GlFun{\El},
        south = B,
        north = b,
        west = \MathSmall{\ArrId{X}\times \brk{\GlFun{\TmTt}\mid\GlFun{\TmFf}}}
      }
      \path[->,exists] (sw) edge node [desc] {$\widetilde{b}$} (ne);
    \end{tikzpicture}
    \qquad
    \begin{tikzpicture}[diagram,baseline = (sw.base)]
      \SpliceDiagramSquare{
        width = 3cm,
        nw = \MathSmall{\GlProj{X}\times \Two{\DF{\CCat}}},
        sw = \MathSmall{\GlProj{X}\times\GlFun{\Pb{\El}{\TyBool}}},
        se = \Mod{\CCat}{\TY},
        ne = \Mod{\CCat}{\EL},
        east = \Mod{\CCat}{\El},
        south = \GlProj{B},
        north = \GlProj{b},
      }
      \path[->] (sw) edge node [desc] {$\widetilde{\GlProj{b}}$} (ne);
    \end{tikzpicture}
  \end{equation}

  For each $X:\GlCat$, the existence of such a lift is guaranteed by the
  universal property of the induced opcartesian map
  $\Mor{X\times\Two{\GlCat}}{X\times\GlFun{\Pb{\El}{\TyBool}}}$:
  \begin{equation*}
    \begin{tikzpicture}[diagram]
      \SpliceDiagramSquare{
        west/style = op lies over,
        east/style = op lies over,
        width = 3.5cm,
        height = 1.5cm,
        nw = X\times\Two{\GlCat},
        sw = \GlProj{X}\times\Two{\DF{\CCat}},
        se = \MathSmall{\GlProj{X}\times\Mod{\CCat}*{\Pb{\El}{\TyBool}}},
        ne = \MathSmall{X\times\GlFun{\Pb{\El}{\TyBool}}},
      }
      \node (nee) [right = 3cm of ne, yshift = 1cm] {$\GlFun{\EL}$};
      \node (see) [right = 3cm of se, yshift = 1cm] {$\Mod{\CCat}{\EL}$};
      \path[op lies over] (nee) edge (see);
      \path[->] (se) edge node [sloped,below] {$\widetilde{\GlProj{b}}$} (see);
      \path[->,exists] (ne) edge node [desc] {$\widetilde{b}$} (nee);
      \path[->,bend left = 30] (nw) edge node [above] {$b$} (nee);
      \node (tot) [right = 5cm of ne] {$\GlCat$};
      \node (base) [right = 5cm of se] {$\DF{\CCat}$};
      \path[opfibration] (tot) edge (base);
    \end{tikzpicture}
  \end{equation*}

  To see that the choice of $\widetilde{b}$ is natural in $X$, we will observe
  the stronger property that it is the \emph{unique} lift lying over
  $\widetilde{\GlProj{b}}$. Computing the opcartesian lift explicitly, we see
  that $\GlEl{\GlFun{\Pb{\El}{\TyBool}}} = \GlEl{\Two{\GlCat}}$; moreover
  $\GlEl$ preserves colimits because $\F$ is left exact~\cite{taylor:1999}, so
  in fact $\GlEl{\GlFun{\Pb{\El}{\TyBool}}} = \Two{\cSET}$. Therefore, the
  (non-unique) lifting situation of Diagram~\ref{diag:glued-bool-lift} becomes
  a unique lifting situation in cubical sets.
\end{con}

\subsection{Universe of Bishop sets}

\begin{con}[Glued universe \`a la Tarski]\label{con:ir-universe}
  By induction-recursion, we may define a universe $\IRTY : \GlCat$
  simultaneously with a decoding function $\Mor[\IRDec]{\IRTY}{\GlFun{\TY}}$
  closed under dependent product, dependent sum, path, and boolean.
  \begin{mathflow}
    \inferrule{
    }{
      \mathtt{bool} : \IRTY &
      \IRDec{\mathtt{bool}} = \GlFun{\TyBool}
    }
    \split
    \inferrule{
      A : \IRTY &
      B : \GlFun{\El}\brk{\IRDec{A}}\Rightarrow \IRTY
    }{
      \mathtt{pi}\prn{A,B} : \IRTY
      &
      \IRDec{\mathtt{pi}\prn{A,B}} = \GlFun{\TyPi}\prn{\IRDec{A},\IRDec\circ B}
    }
    \split
    \inferrule{
      A : \IRTY &
      B : \GlFun{\El}\brk{\IRDec{A}}\Rightarrow \IRTY
    }{
      \mathtt{sg}\prn{A,B} : \IRTY
      &
      \IRDec{\mathtt{sg}\prn{A,B}} = \GlFun{\TySg}\prn{\IRDec{A},\IRDec\circ B}
    }
    \split
    \inferrule{
      A : \GlFun{\DIM}\Rightarrow \IRTY &
      a : \prn{i : \GlFun{\DIM}, \_ : \Bdry{i}}\Rightarrow \GlFun{\El}\brk{\IRDec{A\prn{i}}}
    }{
      \mathtt{path}\prn{A,a} : \IRTY
      &
      \IRDec{\mathtt{path}\prn{A,a}} = \GlFun{\TyPath}\prn{\IRDec\circ A,a}
    }
  \end{mathflow}
  \bigskip

  $\IRTY$ lies \emph{not} over the type-theoretic universe \`a la Tarski
  $\Mod{\CCat}{\TskUTY} : \DF{\CCat}$, but rather over a genuine
  inductive-recursive universe in $\DF{\CCat}$. Because this $\GlProj{\IRTY}$
  is the \emph{least} universe closed under the mentioned connectives, we
  obtain a universal map
  $\Mor[i_{\Mod{\CCat}{\TskUTY}}]{\GlProj{\IRTY}}{\Mod{\CCat}{\TskUTY}}$ which
  automatically commutes with all connectives. We may therefore shift $\IRTY$
  to lie over $\Mod{\CCat}{\TskUTY}$ by opcartesian lift against this
  universal map:
  \begin{equation}
    \DiagramSquare{
      west/style = op lies over,
      east/style = op lies over,
      width = 3cm,
      height = 1.5cm,
      nw = \IRTY,
      north/style = exists,
      sw = \GlProj{\IRTY},
      se = \Mod{\CCat}{\TskUTY},
      south = i_{\Mod{\CCat}{\TskUTY}},
      ne = \GlFun{\TskUTY},
      north = \MathSmall{\prn{i_{\Mod{\CCat}{\TskUTY}}}_\dagger\IRTY}
    }
  \end{equation}

  A decoding map $\Mor[\GlFun{\TskDec}]{\GlFun{\TskUTY}}{\GlFun{\TY}}$ is
  inherited using the universal property of the opcartesian lift:
  \begin{equation}\label{diag:ir-universe:1}
    \begin{tikzpicture}[diagram,baseline = (sw.base)]
      \SpliceDiagramSquare{
        west/style = op lies over,
        east/style = op lies over,
        width = 2.5cm,
        height = 1.5cm,
        nw = \IRTY,
        sw = \GlProj{\IRTY},
        se = \Mod{\CCat}{\TskUTY},
        south = i_{\Mod{\CCat}{\TskUTY}},
        ne = \GlFun{\TskUTY}
      }
      \node (nee) [right = 3cm of ne, yshift = 1cm] {$\GlFun{\TY}$};
      \node (see) [right = 3cm of se, yshift = 1cm] {$\Mod{\CCat}{\TY}$};
      \path[op lies over] (nee) edge (see);

      \path[->] (se) edge node [sloped,below] {$\Mod*{\CCat}{\TskDec}$} (see);
      \path[->,exists] (ne) edge node [desc] {$\MathSmall{\GlFun{\TskDec}}$} (nee);
      \path[->,bend left = 30] (nw) edge node [sloped,above] {$\IRDec$} (nee);
    \end{tikzpicture}
  \end{equation}

  The map $\IRDec$ can be seen to lie strictly over the downstairs composite in
  Diagram~\ref{diag:ir-universe:1} using the uniqueness of maps out of
  inductive-recursive universes. Moreover, the object $\GlFun{\TskUTY}$ is
  small, so we have a characteristic map
  $\Mor[\CodeOf{\GlFun{\TskUTY}}]{\GlFun{\EMP}}{\GlUTY}$.

  The constructions above may be used as the basis for a universe \`a la Tarski
  $\GlFun{\TskU} : \GlFun{\TY}$ lying over the type $\Mod{\CCat}{\TskU} :
  \Mod{\CCat}{\TY}$.
  By Lemma~\ref{lem:fibered-universe-realignment}, we may realign the
  characteristic map $\Mor[\CodeOf{\GlFun{\TskUTY}}]{\GlFun{\EMP}}{\GlUTY}$ to lie
  over the composite
  $\Mor[\CodeOf{\Mod{\CCat}{\El}}\circ\Mod{\CCat}{\TskU}]{\Mod{\CCat}{\EMP}}{\CUTY}$;
  in this way, we obtain an appropriate code for the universe \`a la Tarski by means of
  the universal property of the cartesian lift below:
  \begin{equation*}
    \begin{tikzpicture}[diagram,baseline = (sw.base)]
      \SpliceDiagramSquare{
        west/style = lies over,
        east/style = lies over,
        width = 3cm,
        height = 1.5cm,
        sw = \Mod{\CCat}{\TY},
        se = \CUTY,
        south = \CodeOf{\Mod{\CCat}{\El}},
        ne = \GlUTY,
        nw = \GlFun{\TY},
        north = \CodeOf{\GlFun{\El}},
        north/node/style = upright desc,
      }
      \node (nww) [left = 2.5cm of nw, yshift = 1.5cm] {$\GlFun{\EMP}$};
      \node (sww) [left = 2.5cm of sw, yshift = 1.5cm] {$\Mod{\CCat}{\EMP}$};
      \path[->] (sww) edge node [sloped,below] {$\Mod{\CCat}{\TskU}$} (sw);
      \path[->,exists] (nww) edge node [desc] {$\MathSmall{\GlFun{\TskU}}$} (nw);
      \path[lies over] (nww) edge (sww);
      \path[->,bend left=30] (nww) edge (ne);
    \end{tikzpicture}
    \qedhere
  \end{equation*}
\end{con}

We have \emph{not} shown that the universe \`a la Tarski $\GlFun{\TskUTY}$ is closed under the
appropriate connectives --- we only know that $\IRTY$ is closed under those connectives. Prior to
demonstrating this, however, we must record a few facts about the behavior of $\F$ on pushfowards.
\begin{lemC}[{\cite[Lemma~3.5]{sterling-angiuli:2020}}]\label{lem:pshfwd-comparison}
  Let $\Mor[F]{\XCat}{\ECat}$ be any left exact functor, and let $f_*g :
  \Sl{\XCat}{Z}$ be the pushforward of $\Mor[g]{X}{Y}$ along $\Mor[f]{Y}{Z}$.
  Then we have a canonical (not usually invertible) comparison map
  $\Mor|->,exists|{F\prn{f_*g}}{F(f)_*F(g)}$.
  \qed
\end{lemC}

\begin{cor}
  We have a canonical comparison map commuting the generic dependent product family
  past $\F\circ\Mod{\CCat}$ in the following sense:
  \[
    \Mor|->,exists|[\pprn{-}]{\F{\Mod{\CCat}{\PI{\El}}}}{\PI{\prn{\F{\Mod{\CCat}{\El}}}}}
  \]
\end{cor}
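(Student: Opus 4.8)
The plan is to obtain this immediately from Lemma~\ref{lem:pshfwd-comparison} once we observe that $\F$ is left exact and that the generic $\PI$-family is built out of a single pushforward. Recall from Specification~\ref{spec:closure-under-connectives} that, writing $f \defeq \Mod{\CCat}{\El}$, the discrete fibration $\Mod{\CCat}{\PI{\El}}$ lies over $\Mod{\CCat}{\PI{\TY}} = \brk{A : \TY; B : f\brk{A}\Rightarrow\TY}$ and is given there by $\PI{f}\brk{A;B} = \prn{a : f\brk{A}}\Rightarrow f\brk{B\prn{a}}$, where $\Rightarrow$ is the pushforward along a representable map (Notation~\ref{notation:exponential}) and $f\brk{B\prn{a}}$ is a pullback of $f$. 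So, over its base, $\Mod{\CCat}{\PI{\El}}$ is nothing but the pushforward of a pullback of $\Mod{\CCat}{\El}$ along a representable map.

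First I would record that $\F \simeq \DimIncl^*$ is left exact: as a reindexing functor it has both adjoints by Kan extension (as already used in Lemma~\ref{lem:cubical-nerve-is-flat}), hence preserves all limits, and in particular finite limits and pullbacks, strictly.

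I would then apply Lemma~\ref{lem:pshfwd-comparison} with the left exact functor $F \defeq \F$ to the pushforward presenting $\PI{f}$. Since $\F$ preserves the pullback used to form $f\brk{B\prn{a}}$, and since $\F{\Mod{\CCat}{\prn{-}}}$ agrees up to canonical isomorphism with the structure on $\DIM$, $\EL$, $\TY$ used to name the representable map along which we push forward, the lemma produces a canonical---generally non-invertible---comparison morphism $\pparens{-} : \Mor{\F{\Mod{\CCat}{\PI{\El}}}}{\PI{\prn{\F{\Mod{\CCat}{\El}}}}}$ of cartesian arrows lying over $\F{\Mod{\CCat}{\PI{\TY}}} = \PI{\prn{\F{\Mod{\CCat}{\TY}}}}$ in $\cSET$. (The same argument, \emph{mutatis mutandis}, produces comparison maps for $\PATH$, while $\SG$ is preserved strictly since it involves only finite limits.)

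The main obstacle is purely bookkeeping: one must verify that the comparison map of Lemma~\ref{lem:pshfwd-comparison} is compatible with the base changes concealed in the definition of $\PI{f}$, so that $\pparens{-}$ is a morphism in $\CartArr{\cSET}$ over the correct (non-representable) base $\PI{\prn{\F{\Mod{\CCat}{\TY}}}}$ rather than merely a fiberwise comparison. This is routine, given that $\F$ is cocontinuous as well as lex and that the relevant pushforward is indexed over $\Mod{\CCat}{\PI{\TY}}$.
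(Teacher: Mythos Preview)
Your proposal is correct and takes essentially the same approach as the paper: invoke Lemma~\ref{lem:pshfwd-comparison} using the fact that $\F$ is left exact. The paper's proof is the one-liner ``By Lemma~\ref{lem:pshfwd-comparison}, using the fact that $\F$ is left exact,'' and your elaboration simply unpacks this; one small slip is that $\F{\Mod{\CCat}{\PI{\TY}}}$ need not \emph{equal} $\PI{\prn{\F{\Mod{\CCat}{\TY}}}}$ since the base $\PI{\TY}$ itself contains a pushforward, so the comparison $\pparens{-}$ is a morphism of arrows with a (generally non-invertible) component at the base as well.
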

\begin{proof}
  By Lemma~\ref{lem:pshfwd-comparison}, using the fact that $\F$ is left exact.
\end{proof}

\begin{con}[Constructors for the universe \`a la Tarski]
  We will now show that $\GlFun{\TskUTY}$ is closed under the necessary connectives; we consider
  only the case of dependent products, since the remaining constructors work identically.

  We must construct a morphism
  $\Mor[\GlFun{\TmCodePi}]{\GlFun{\PI{\TskUTY}}}{\GlFun{\TskUTY}}$ that lies
  over
  $\Mor[\Mod{\CCat}{\TmCodePi}]{\Mod{\CCat}{\PI{\TskUTY}}}{\Mod{\CCat}{\TskUTY}}$.
  Unfolding this situation, we wish to construct the following dotted map in
  $\cSET$:
  \begin{equation*}
    \DiagramSquare{
      width = 5cm,
      height = 1.5cm,
      sw = \F{\PI{\Mod{\CCat}{\TskUTY}}},
      se = \F{\Mod{\CCat}{\TskUTY}},
      south = \F{\Mod{\CCat}{\TmCodePi}},
      nw = \GlEl{\PI{\GlFun{\TskUTY}}},
      ne = \GlEl{\GlFun{\TskUTY}},
      north = \GlEl{\GlFun{\TmCodePi}},
      north/style = exists,
      west = \PI{\GlFun{\TskUTY}},
      east = \GlFun{\TskUTY},
    }
  \end{equation*}

  We will define this map in the language of cubical sets. Accordingly, we begin by computing the fibers of
  $\GlFun{\TskUTY}$ and $\GlFun{\PI{\TskUTY}}$:
  \begin{align*}
    A : \F{\Mod{\CCat}{\TskUTY}} \mid \GlFun{\TskUTY}[A] &= \{\mathfrak{A} : \IRTY \mid \F{i_{\Mod{\CCat}{\TskUTY}}(\GlProj{\mathfrak{A}})} = A \}\\
    G : \F{\Mod{\CCat}{\PI{\TskUTY}}} \mid \GlFun{\PI{\TskUTY}}[G] &=
    \MathSmall{
      (\mathfrak{A} : \GlFun{\TskUTY}[\pprn{G}_0]) \times
      (a : \F{\Mod{\CCat}{\El}}[\pprn{G}_0])(\mathfrak{a} : \prn{\GlFun{\El}[\IRDec{\mathfrak{A}}]}[a]) \Rightarrow \GlFun{\TskUTY}[\pprn{G}_1~a]
    }
  \end{align*}

  With these fibers in hand, we may define $\GlEl{\GlFun{\TmCodePi}}$:
  \[
    \GlEl{\GlFun{\TmCodePi}}[G](\mathfrak{A}, \mathfrak{B}) = \mathtt{pi}(\mathfrak{A}, \lambda \brk{a,\mathfrak{a}}.~\mathfrak{B}~a~\mathfrak{a})
    \qedhere
  \]
\end{con}

\begin{lem}
  The foregoing construction of $\GlFun{\TskUMor}$ is boundary separated.
\end{lem}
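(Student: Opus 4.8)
The plan is to unfold boundary separation of $\GlFun{\TskUMor}$ into a fibrewise statement about the glued inductive--recursive universe $\IRTY$ and then dispatch it by the induction principle of $\IRTY$. Unfolding --- i.e., separation of $\Pb{\GlFun{\TskUEl}}{\GlFun{\DIM}}$, in the sense of Definition~\ref{def:separation}, with respect to the boundary inclusion over $\GlFun{\DIM}$ --- the condition amounts, in the internal language of the logos $\GlCat$, to: for $i : \GlFun{\DIM}$, a code $\mathfrak{A} : \GlFun{\TskUTY}$, and elements $a, b$ of the fibre of $\GlFun{\TskUEl}$ over $\mathfrak{A}$, the hypothesis that $\lambda\alpha.a$ and $\lambda\alpha.b$ agree as partial elements on $\Bdry{i}$ forces $a = b$. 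Since $\GlFun{\TskUTY}$ was obtained from $\IRTY$ by an opcartesian realignment over $\Mod{\CCat}{\TskUTY}$ --- which acts by postcomposition on the structure map, hence is the identity on cubical--set components --- and since $\GlFun{\TskDec}$ was arranged so that the fibre of $\GlFun{\TskUEl}$ over (the image of) $\mathfrak{A}$ is $\GlFun{\El}\brk{\IRDec{\mathfrak{A}}}$, the lemma is equivalent to the assertion that $\GlFun{\El}\brk{\IRDec{\mathfrak{A}}}$ is boundary separated for every code $\mathfrak{A} : \IRTY$.

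This I would prove by induction on $\IRTY$, taking as motive the boundary-separation predicate on $\GlFun{\TY}$ (a subobject, cut out by subobject comprehension), which leaves the four constructors. The cases $\mathtt{pi}(\mathfrak{A},\mathfrak{B})$ and $\mathtt{sg}(\mathfrak{A},\mathfrak{B})$ go through entirely in the internal language of $\GlCat$: using the defining equations of $\IRDec$ and the cartesian closure squares, $\GlFun{\El}\brk{\IRDec{\mathtt{pi}(\mathfrak{A},\mathfrak{B})}}$ is a dependent function space and $\GlFun{\El}\brk{\IRDec{\mathtt{sg}(\mathfrak{A},\mathfrak{B})}}$ a dependent pair type, \emph{together with their $\eta$-laws}; hence two boundary-agreeing elements also agree after being applied to an argument (resp.\ after both projections are taken), so the inductive hypothesis applied to the smaller codes $\mathfrak{A}$ and $\mathfrak{B}(x)$ gives componentwise equality, and the $\eta$-law gives equality. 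The case $\mathtt{path}(\mathfrak{A},c)$ uses path-$\eta$: if $p$ and $q$ in $\GlFun{\El}\brk{\IRDec{\mathtt{path}(\mathfrak{A},c)}}$ agree on $\Bdry{i}$, then $p(k)$ and $q(k)$ agree on $\Bdry{i}$ for each $k : \GlFun{\DIM}$, so the inductive hypothesis for $\mathfrak{A}(k)$ gives $p(k) = q(k)$ and path-$\eta$ gives $p = q$. The one case that does not close internally is $\mathtt{bool}$, where $\GlFun{\El}\brk{\GlFun{\TyBool}}$ is the computable boolean family of Construction~\ref{con:computable-booleans}; since a morphism of $\GlCat$ is determined by its images under $\GlProj$ and under the cubical--set projection $\GlEl$, it suffices to check boundary separation componentwise. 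The $\GlProj$-component is a pullback of $\Mod{\CCat}{\TskUMor}$ along the code $\Mod{\CCat}{\TmCodeBool}$, hence boundary separated because $\Mod{\CCat}$ validates the corresponding specification and boundary separation is stable under pullback; the $\GlEl$-component is $\Two{\cSET}$, and since $\GlEl$ is cocontinuous the relevant cofibration is the genuine endpoint inclusion $\ObjTerm{\cSET} + \ObjTerm{\cSET} \hookrightarrow \Yo[\CUBE]{[1]}$, against which $\Two{\cSET}$ is easily seen to be separated, as the two endpoints already exhaust the $0$-cells of $\Yo[\CUBE]{[1]}$.

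I expect the main obstacle to lie in the $\mathtt{bool}$ base case, specifically in tracking the glued cofibration structure through the two projections: one must unwind the construction of the glued face-formula classifier as a cartesian lift of $\Prop[\GlCat]$ and use the cocontinuity of $\GlEl$ to confirm that the glued $\Bdry{i}$, disjunction, and falsehood become in $\cSET$ the evident two-point boundary, binary union, and empty cofibration, and one must verify that the opcartesian realignment of $\IRTY$ over $\Mod{\CCat}{\TskUTY}$ leaves both the cubical--set component and the fibrewise decoding $\GlFun{\El}\brk{\IRDec{\mathfrak{A}}}$ intact, as the reduction above assumes. With those points in hand, the inductive cases are routine applications of the induction principle of $\IRTY$ and the $\eta$-laws for the glued connectives established earlier.
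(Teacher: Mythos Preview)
Your proposal is correct and takes essentially the same approach as the paper: the paper's proof reads in its entirety ``By induction, using the fact that $\Mod{\CCat}{\prn{\TskUMor}}$ is boundary separated.'' You have unpacked precisely this one line into the full inductive argument on the glued inductive--recursive universe, with the downstairs component handled by boundary separation of $\Mod{\CCat}{\TskUMor}$ and the cubical-set component by the case analysis you describe.
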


\begin{proof}
  By induction, using the fact that $\Mod{\CCat}{\prn{\TskUMor}}$ is boundary separated.
\end{proof}

\begin{con}[Lifting structure]
  The type-case lifting structure of Specification~\ref{spec:type-case} may be
  constructed using the induction-recursion principle of $\GlFun{\TskUTY}$, and
  using the corresponding lifting structure for $\Mod{\CCat}{\TskUTY}$.
\end{con}

\begin{lem}[Coercion and composition]
  There are coercion and compositions operation defined on glued lines of sets
  $\GlFun{\DIM}\Rightarrow \GlFun{\TskUTY}$ which have the types given in
  Remark~\ref{rem:com-coe}, satisfying the regularity law
  (Specification~\ref{spec:regularity}) as well as the
  connective-specific equations of Specification~\ref{spec:coe-at-connectives}.
\end{lem}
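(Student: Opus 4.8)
The plan is to construct the glued coercion and composition operations by structural recursion on \emph{codes}, following verbatim the defining clauses of Specification~\ref{spec:coe-at-connectives} and Lemma~\ref{lem:hcom-at-connectives}, and then to discharge the two equational obligations by a secondary induction. The crucial point is that, by Construction~\ref{con:ir-universe}, the glued universe $\GlFun{\TskUTY}$ is---up to the opcartesian lift along $i_{\Mod{\CCat}{\TskUTY}}$---the inductive--recursive universe $\IRTY$ built inside the Grothendieck logos $\GlCat$; accordingly a section of $\GlFun{\DIM}\Rightarrow\GlFun{\TskUTY}$ is the data of a line of syntactic codes in $\Mod{\CCat}{\TskUTY}$ together with, fiberwise, an element of $\IRTY$ lying strictly over it. Because the inductive--recursive universe is constructed pointwise over the category of compact computability families, its constructors act compatibly with restriction, and by their disjointness a glued line of codes over the (representable) generic dimension $\GlFun{\DIM}$ is \emph{uniformly} headed by exactly one of $\mathtt{bool},\mathtt{pi},\mathtt{sg},\mathtt{path}$---never, say, $\mathtt{bool}$ at one endpoint and $\mathtt{pi}$ at the other. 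Establishing precisely this normal-form analysis of glued lines of codes, and checking the well-foundedness of the ensuing mutual recursion between coercion and (homogeneous) composition, is where I expect the real work to lie.

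Granting the analysis, I would first define coercion by recursion on the head constructor of the glued line: the $\mathtt{bool}$ clause returns its argument unchanged, while the $\mathtt{pi}$, $\mathtt{sg}$, and $\mathtt{path}$ clauses are \emph{taken to be} the right-hand sides of the corresponding equations of Specification~\ref{spec:coe-at-connectives}, with the nested occurrences of coercion---and, in the $\mathtt{path}$ case, of composition---read as recursive calls. This recursion is well-founded because in the presentation of Construction~\ref{con:ir-universe} the premises $B:\GlFun{\El}\brk{\IRDec{A}}\Rightarrow\IRTY$ of $\mathtt{pi}$ and $\mathtt{sg}$, and $A:\GlFun{\DIM}\Rightarrow\IRTY$ of $\mathtt{path}$, exhibit every applied subcode as a structural predecessor. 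I would then define homogeneous composition analogously by recursion following the clauses of Lemma~\ref{lem:hcom-at-connectives} (in particular the $\mathtt{bool}$ clause is $p\prn{r,*}$), treating the already-constructed coercion as a black box, and finally obtain general composition as the composite of homogeneous composition with a coercion filler exactly as in Lemma~\ref{lem:hcom-com-compatibility}. The typing requirements of Remark~\ref{rem:com-coe}---that the output restrict correctly on $i=r$, respectively on $i=r\lor\Bdry{s}$---then follow clause by clause, using the boundary separation of $\GlFun{\TskUMor}$ established above to reconcile the overlapping cases; and since an operation with this typing is precisely the datum of a section of the relevant internal pullback hom, these operations equip the glued model with the left lifting structures $\Kwd{coe}$ and $\Kwd{comp}$ of Specifications~\ref{spec:coercion} and~\ref{spec:composition}.

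It remains to verify the equational laws. The connective-specific equations of Specification~\ref{spec:coe-at-connectives} hold on the nose, being the defining clauses. The regularity laws of Specification~\ref{spec:regularity} I would prove by a separate induction on the code of a constant line: the $\mathtt{bool}$ clause yields the identity outright, and for a constant line headed by $\mathtt{pi}$, $\mathtt{sg}$, or $\mathtt{path}$ one unfolds the defining clause, applies the inductive hypothesis to collapse the inner coercions to identities, and concludes by $\eta$ together with (in the $\mathtt{path}$ case) the derived regularity of composition; regularity of homogeneous composition is handled likewise via the last clause of Lemma~\ref{lem:hcom-at-connectives}. Finally, that the glued operations lie over those of $\Mod{\CCat}$---which is what is needed for the glued interpretation to remain a functor over $\Mod{\CCat}$---follows because $\Mod{\CCat}$, being a model of $\ThCat$, validates the very same equations and regularity laws; since on constructor-headed lines these laws determine coercion and composition uniquely, transporting the recursion through the logical morphism $\GlProj$ recovers, clause by clause, the operations of $\Mod{\CCat}$ on precisely those lines over which the glued operations are defined.
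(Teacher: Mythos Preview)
Your proposal is correct and follows essentially the same approach as the paper: define coercion and homogeneous composition by the induction principle of the inductive--recursive universe $\IRTY$, taking the clauses of Specification~\ref{spec:coe-at-connectives} and Lemma~\ref{lem:hcom-at-connectives} as the defining equations, and recover general composition via the standard decomposition of Lemma~\ref{lem:hcom-com-compatibility}. Your additional care about the uniform-headedness of glued lines, the well-foundedness of the mutual recursion, and the verification that the glued operations lie strictly over those of $\Mod{\CCat}$ is elaboration the paper leaves implicit, not a departure from its strategy.
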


\begin{proof}
  Coercion and \emph{homogeneous} composition operations are first defined
  using the induction principle of the inductive recursive universe of sets,
  taking the equations of Specification~\ref{spec:coe-at-connectives} and
  Lemma~\ref{lem:hcom-at-connectives} respectively as definitions; general
  composition is defined by the standard reduction to coercion and homogeneous
  composition
  (Lemma~\ref{lem:hcom-com-compatibility})~\cite{abcfhl:2021,angiuli-favonia-harper:2017}.
  The case-wise definition of coercion makes crucial use of the tininess of the
  computability interval $\GlFun{\DIM}$ (Lemma~\ref{lem:tininess}).
\end{proof}

\subsection{Summary}

By means of the foregoing constructions, we have defined a functorial
interpretation $\GlFun$ of XTT into a glued logos $\GlCat$ lying over
$\DF{\CCat}$ for an arbitrary XTT-model $\Mod{\CCat}$.  In
Section~\ref{sec:canonicity}, we will reassamble this data into a gluing model
$\Mod{\CptGlCat}$ equipped with a structure map
$\Mor{\Mod{\CptGlCat}}{\Mod{\CCat}}$ that can be used to prove canonicity.

\section{Canonicity for XTT}%
\label{sec:canonicity}

In Section~\ref{sec:computability}, we introduced the logos $\GlCat$ of \emph{computability structures},
along with a representable map functor $\Mor[\GlFun]{\ThCat}{\GlCat}$. These constructions contain
the essence of the proof of canonicity for XTT, but in order to complete the proof we must assemble
$\GlCat$ and $\GlFun$ into a model of $\ThCat$ equipped with a morphism to $\Mod{\CCat}$. In this
section, we construct this \emph{gluing model} of XTT and prove the following canonicity theorem:
\begin{restatable*}[Canonicity]{thm}{CanonicityThm}
  Given a closed term $M:\TyBool$, either $M = \TmTt$ or $M = \TmFf$.
\end{restatable*}

\subsection{The canonicity model of XTT}

Recall from Section~\ref{sec:functorial-semantics} that a model of $\ThCat$ is a category with a
terminal object $\CCat$ paired with a representable map functor
$\Mor[\Mod{\CCat}]{\ThCat}{\DF{\CCat}}$. The category of computability structures $\GlCat$ is not a
category of discrete fibrations, and therefore we cannot directly take $\GlCat,\GlFun$ as the gluing
model.
Instead, following~\cite{sterling-angiuli:2020} we will shift to working with discrete fibrations on
\emph{compact} computability structures $\DF{\CptGlCat}$ and use the representable map functor
$\Mor[\GlNv]{\GlCat}{\DF{\CptGlCat}}$ to uniformly transfer the computability structures from
Section~\ref{sec:computability} from $\GlCat$ to $\DF{\CptGlCat}$.

\begin{con}[Gluing model]
  The \emph{gluing model} $\Mor[\Mod{\CptGlCat}]{\ThCat}{\DF{\CptGlCat}}$ is defined as the
  composition $\GlNv \circ \GlFun$. Diagrammatically:
  \begin{equation*}
    \begin{tikzpicture}[diagram, node distance = 2.5cm]
      \node (T) {$\ThCat$};
      \node (G) [right = of T] {$\GlCat$};
      \node (DF) [right = of G] {$\DF{\CptGlCat}$};
      \path[exists, bend left, ->] (T) edge node [upright desc] {$\Mod{\CptGlCat}$} (DF);
      \path[->] (T) edge node [below] {$\GlFun$} (G);
      \path[->] (G) edge node [below] {$\GlNv$} (DF);
    \end{tikzpicture}
    \qedhere%
    \label{diag:gluing-model}
  \end{equation*}
\end{con}

It remains to construct the morphism
$\Mor{\Mod{\CptGlCat}}{\Mod{\CCat}}:\ModCat{\ThCat}$. We will begin by
constructing the data of this morphism, checking the Beck-Chevalley condition
afterward.

\begin{con}[Gluing homomorphism data]\label{con:gluing-homomorphism-data}
  We expect a morphism of models tracked by the gluing fibration
  $\FibMor[\CptGlProj]{\CptGlCat}{\CCat}$ at the level of contexts; it remains
  to make a choice of functors $\Mor{\Mod{\CptGlCat}{J}}{\Mod{\CCat}{J}} :
  \DF{}$ lying over $\CptGlProj$ in $\FibMor{\DF{}}{\CAT}$ (natural in
  $J:\ThCat$) to exhibit the action of the homomorphism on judgments.
  Fixing a judgment $J:\ThCat$, we construct each of the components as follows:
  \begin{equation}\label{diag:gluing-homomorphism}
    \begin{tikzpicture}[diagram, node distance=2.5cm,baseline = (base.base)]
      \node (0) {$\Mod{\CptGlCat}{J}$};
      \node (1) [right = of 0] {$\GlNv{\GlFun{J}}$};
      \node (2) [right = 3.5cm of 1] {$\GlProj{\GlFun{J}}$};
      \node (3) [right = of 2] {$\Mod{\CCat}{J}$};

      \node (0') [below = 1.5cm of 0] {$\CptGlCat$};
      \node (1') [below = 1.5cm of 1] {$\CptGlCat$};
      \node (2') [below = 1.5cm of 2] {$\CCat$};
      \node (3') [below = 1.5cm of 3] {$\CCat$};

      \path[->] (0) edge node [above] {$\cong$} (1);
      \path[->] (1) edge node [above] {$\ArrCat{\GlProj}$} (2);
      \path[->] (2) edge node [above] {$\GlCell{J}$} (3);

      \path[->] (0') edge node [below] {$\ArrId{\CptGlCat}$} (1');
      \path[->] (1') edge node [below] {$\CptGlProj$} (2');
      \path[->] (2') edge node [below] {$\ArrId{\CCat}$} (3');

      \path[lies over] (0) edge (0');
      \path[lies over] (1) edge (1');
      \path[lies over] (2) edge (2');
      \path[lies over] (3) edge (3');

      \node (tot) [right = of 3] {$\DF{}$};
      \node (base) [below = 1.5cm of tot] {$\CAT$};
      \path[fibration] (tot) edge (base);
    \end{tikzpicture}
  \end{equation}

  Diagram~\ref{diag:gluing-homomorphism} above exhibits a natural
  transformation, because $\GlCell{J}$ is a component of the natural
  transformation $\Mor[\GlCell]{\GlProj\circ\GlFun}{\Mod{\CCat}}$.
\end{con}

It remains to check that the naturality squares induced by
Construction~\ref{con:gluing-homomorphism-data} at representable maps satisfy
the Beck-Chevalley condition.  First, we record a simple characterization of
the right adjoint $\ReprAdj{f}$ of a representable map in $\DF{\CCat}$:
\begin{lemC}[{\cite[Corollary~3.11]{uemura:2019}}]%
  \label{lem:repr-adj-pb}
  \pushQED{\qed}%
  If $\Mor[f]{J}{I} : \DF{\CCat}$ is a representable map, the right adjoint $\ReprAdj{f}$ sends an
  element $\Mor[y]{\DFYo[\CCat]{C}}{I}$ to the upstairs element $x$ determined by the following pullback square:
  \[
    \DiagramSquare[node distance=2.5cm]{
      nw = \DFYo[\CCat]\prn{C.y},
      nw/style = pullback,
      ne = J,
      sw = \DFYo[\CCat]{C},
      se = I,
      east = f,
      south = y,
      north = x,
      north/style = exists,
    }
    \qedhere
  \]
  \popQED
\end{lemC}

\begin{lem}%
  \label{lem:cptglproj-beck-chevalley}
  If $\Mor[f]{J}{I} : \ThCat$ is a representable map, the following naturality square satisfies
  the Beck-Chevalley condition:
  \[
    \DiagramSquare[node distance=2.5cm]{
      nw = \Mod{\CptGlCat}{J},
      ne = \Mod{\CCat}{J},
      se = \Mod{\CCat}{I},
      sw = \Mod{\CptGlCat}{I},
      east = \Mod{\CCat}{f},
      west = \Mod{\CptGlCat}{f},
      north = \Mod{\CptGlProj}{J},
      south = \Mod{\CptGlProj}{I}
    }
  \]
\end{lem}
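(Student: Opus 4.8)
The plan is to factor the homomorphism of Construction~\ref{con:gluing-homomorphism-data} so that the Beck--Chevalley condition becomes a statement about the nerve alone. Each component $\Mod{\CptGlProj}{J}$ is, by construction, the comparison map $\GlNv{\GlFun{J}}\to\GlProj{\GlFun{J}}$ — the morphism of discrete fibrations over $\CptGlProj$ obtained by applying $\GlProj$ on hom-sets, using the identity $\GlProj\circ\CptIncl=\DFYo[\CCat]\circ\CptGlProj$ of Diagram~\ref{diag:gluing} together with the Yoneda lemma — followed by the component $\GlCell{J}$ of the natural isomorphism $\GlCell$. I would first record that this exhibits $\Mod{\CptGlProj}$ as a composite of two morphisms of models: the fibration comparison, from $\Mod{\CptGlCat}=\GlNv\circ\GlFun$ to the model $\GlProj\circ\GlFun$, lying over $\CptGlProj$ (which preserves the terminal object because it has both adjoints, Lemma~\ref{lem:gl-proj-adjoints}); and the isomorphism $\GlCell$, lying over $\ArrId{\CCat}$. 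Since the Beck--Chevalley $2$-cell of a composite of morphisms of models is the pasting of those of the factors — so that invertibility is inherited — and since $\GlCell$, being a natural isomorphism, trivially satisfies Beck--Chevalley, it will suffice to check the condition for the fibration comparison at the representable map $\GlFun{f}$. As $\GlFun$ is a representable map functor, $\GlFun{f}$ is a compact map of $\GlCat$, so the whole problem reduces to the claim: for every compact map $\Mor[g]{Y}{X}:\GlCat$, the square relating $\GlNv{g}$ and $\GlProj{g}$ through the fibration comparison satisfies Beck--Chevalley.

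To prove the claim I would compute both right adjoints by pullback via Lemma~\ref{lem:repr-adj-pb}; this applies because $\GlNv{g}$ and $\GlProj{g}$ are representable, by Lemmas~\ref{lem:gl-nv-repr-maps} and~\ref{lem:gl-proj-repr-map-fun}. An element of the fiber of $\GlNv{X}$ over $\Gamma:\CptGlCat$ is, by full faithfulness of $\GlNv$ (Lemma~\ref{lem:gl-nv-ff}), of the form $\GlNv{x}$ for a unique $\Mor[x]{\CptIncl{\Gamma}}{X}$. Because $\GlNv$ is left exact (Corollary~\ref{cor:gl-nv-lcc}) and full, $\ReprAdj{\GlNv{g}}(\GlNv{x})$ is $\GlNv$ applied to the projection $\Mor[v]{\CptIncl{W}}{Y}$ of the pullback $Y\times_X\CptIncl{\Gamma}$ computed in $\GlCat$, which is a compact object $\CptIncl{W}$ since $g$ is a compact family and $\CptIncl{\Gamma}$ is compact. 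Downstairs, $\GlProj$ preserves this pullback and sends $\CptIncl{W}$ to the representable $\DFYo[\CCat]{\CptGlProj{W}}$, so $\ReprAdj{\GlProj{g}}$ applied to the image $\GlProj{x}$ of $\GlNv{x}$ under the comparison is the projection $\GlProj{v}$ for the same $v$. Since the fibration comparison carries $\GlNv{v}$ to $\GlProj{v}$, the two composites occurring in the Beck--Chevalley square agree on the nose, so the comparison $2$-cell is the identity, and in particular invertible.

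The main obstacle will be coherence bookkeeping rather than any genuine difficulty: one must verify that the pullback square supplied by Lemma~\ref{lem:repr-adj-pb} on the $\DF{\CptGlCat}$ side becomes, after transport along the isomorphisms $\DFYo[\CCat]{\Gamma}\cong\GlNv{\CptIncl{\Gamma}}$ of Lemma~\ref{lem:gl-nv-equiv}, exactly the nerve of the pullback used in the proof of Lemma~\ref{lem:gl-nv-repr-maps}, and likewise that the preservation of pullbacks by $\GlProj$ is compatible with the recipe of Lemma~\ref{lem:repr-adj-pb} downstairs. Once these identifications are pinned down, the invertibility — indeed triviality — of the $2$-cell is immediate, since Beck--Chevalley at $f$ is precisely the assertion that the homomorphism preserves the context extension encoded by $\ReprAdj{f}$, and both sides compute that extension as one and the same pullback, merely viewed through $\GlNv$ in one case and through $\GlProj$ in the other.
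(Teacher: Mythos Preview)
Your proposal is correct and rests on the same computational core as the paper's proof: both use Lemma~\ref{lem:repr-adj-pb} to realize the right adjoints $\ReprAdj{(-)}$ as pullbacks, and both exploit that $\GlNv$ and $\GlProj$ are left exact so that these pullbacks are preserved. The difference is organizational. You first factor $\Mod{\CptGlProj}$ as the ``nerve comparison'' followed by the natural isomorphism $\GlCell$, invoke a general pasting principle for Beck--Chevalley $2$-cells, dispatch the $\GlCell$ factor by appeal to its invertibility, and then verify Beck--Chevalley for the nerve comparison alone. The paper instead computes both composites around the Beck--Chevalley square for the undivided morphism $\Mod{\CptGlProj}$ in a single pass: it writes each side as the top edge of a pullback square and observes that the two squares differ only by pasting with the naturality square of $\GlCell$, which is cartesian because $\GlCell$ is invertible. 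Your route is more modular and makes the role of $\GlCell$ explicit as a separate (trivial) factor, at the cost of appealing to the composability of Beck--Chevalley mates and the fact that natural isomorphisms satisfy Beck--Chevalley---both standard, but not recorded in the paper. The paper's route avoids this scaffolding by absorbing the $\GlCell$ step into a single pullback-pasting argument.
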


\begin{proof}
  We must show that the following square commutes up to canonical isomorphism:
  \begin{equation}
    \DiagramSquare[node distance=2.5cm]{
      sw = \Mod{\CptGlCat}{J},
      se = \Mod{\CCat}{J},
      ne = \Mod{\CCat}{I},
      nw = \Mod{\CptGlCat}{I},
      east = \ReprAdj{\Mod{\CCat}{f}},
      west = \ReprAdj{\Mod{\CptGlCat}{f}},
      south = \Mod{\CptGlProj}{J},
      north = \Mod{\CptGlProj}{I}
    }
  \end{equation}

  To begin with, we fix an element $\Mor[y]{\CptIncl{\Gamma}}{\GlFun{J}} : \Mod{\CptGlCat}{J}$. Inspecting the
  definition of $\Mod{\CptGlProj}{I}$ from Construction~\ref{con:gluing-homomorphism-data}, we see that
  $\Mod{\CptGlProj}{I}\prn{y} = \GlCell{I} \circ \GlProj{y}$. Therefore, by
  Lemma~\ref{lem:repr-adj-pb},
  $\ReprAdj{\Mod{\CCat}{f}}(\Mod{\CptGlProj}{I}\prn{y}) = \ReprAdj{\Mod{\CCat}{f}}(\GlCell{I} \circ
  \GlProj{y})$ is the top map of the following pullback square:
  \begin{equation}
    \DiagramSquare{
      width = 3cm,
      ne = \Mod{\CCat}{J},
      nw/style = pullback,
      se = \Mod{\CCat}{I},
      nw = \DFYo[\CCat]{\cdots},
      sw = \DFYo[\CCat]{\CptGlProj{\Gamma}},
      east = \Mod{\CCat}{f},
      north/style = exists,
      south = {\GlCell{I} \circ \GlProj{y}},
    }%
    \label{diag:bc-map-one}
  \end{equation}

  Similarly, we may compute $\ReprAdj{\Mod{\CCat}{f}}(\Mod{\CptGlProj}{I}(y))$ as the top of the
  following composite square:
  \begin{equation}
    \begin{tikzpicture}[diagram, baseline = (sw.base)]
      \SpliceDiagramSquare<l/>{
        width = 2.5cm,
        ne = \GlProj{\GlFun{J}},
        nw/style = pullback,
        se = \GlProj{\GlFun{I}},
        nw = \DFYo[\CCat]{\CptGlProj{\cdots}},
        sw = \DFYo[\CCat]{\CptGlProj{\Gamma}},
        east = \GlProj{\GlFun{f}},
        east/node/style = {upright desc},
        south = {\GlProj{y}},
      }
      \SpliceDiagramSquare<r/>{
        glue = west,
        glue target = l/,
        ne = \Mod{\CCat}{J},
        se = \Mod{\CCat}{I},
        east = \Mod{\CCat}{f},
        south = \GlCell{I},
        north = \GlCell{J}
      }
    \end{tikzpicture}%
    \label{diag:bc-map-two}
  \end{equation}

  To show that the top of Diagram~\ref{diag:bc-map-one} is isomorphic to
  Diagram~\ref{diag:bc-map-two}, it suffices to check that the outer square of
  Diagram~\ref{diag:bc-map-two} is cartesian. Both $\GlCell{J}$ and
  $\GlCell{I}$ are isomorphisms so the right-hand square of
  Diagram~\ref{diag:bc-map-two} is cartesian, so the result follows immediately
  from the pullback pasting lemma.
\end{proof}

\begin{cor}%
  \label{cor:cptglproj-mor}
  The natural transformation $\Mod{\CptGlProj}$ from Construction~\ref{con:gluing-homomorphism-data} is a morphism of models.
  \qed
\end{cor}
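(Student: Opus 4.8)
The plan is to verify, one by one, the three conditions in the definition of a morphism of models against the data already assembled in Construction~\ref{con:gluing-homomorphism-data}. First I would note that the functor on categories of contexts is the gluing fibration $\FibMor[\CptGlProj]{\CptGlCat}{\CCat}$; by Lemma~\ref{lem:gl-proj-adjoints} this functor admits a left adjoint, hence preserves all limits, and in particular the terminal object --- so the requirement that the underlying functor preserve the terminal object is met. Second, a morphism of models requires an assignment $J \mapsto \prn{\Mor[\Mod{\CptGlProj}{J}]{\Mod{\CptGlCat}{J}}{\Mod{\CCat}{J}}}$ of functors lying over $\CptGlProj$, natural in $J : \ThCat$; this is precisely what Diagram~\ref{diag:gluing-homomorphism} exhibits, its naturality in $J$ being inherited from the naturality of the cell $\GlCell$. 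No further work is required for these two points.

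The only condition with any content is the third --- that each representable map $\Mor[f]{J}{I} : \ThCat$ satisfies the Beck-Chevalley condition --- and this is exactly the statement of Lemma~\ref{lem:cptglproj-beck-chevalley}. Combining these three observations yields the corollary. I therefore do not expect any genuine obstacle: the substantive computation has already been carried out in Lemma~\ref{lem:cptglproj-beck-chevalley}, where the Beck-Chevalley square is discharged by using Lemma~\ref{lem:repr-adj-pb} to exhibit both legs $\ReprAdj{\Mod{\CCat}{f}} \circ \Mod{\CptGlProj}{I}$ and $\Mod{\CptGlProj}{J} \circ \ReprAdj{\Mod{\CptGlCat}{f}}$ as pullbacks, and then observing that the outer rectangle of Diagram~\ref{diag:bc-map-two} is cartesian --- because the cells $\GlCell{J}$ and $\GlCell{I}$ are invertible --- so that the identification follows from the pullback pasting lemma.
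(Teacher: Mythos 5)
Your proposal is correct and follows essentially the same route as the paper, which states the corollary with an immediate \qed precisely because the data and naturality are supplied by Construction~\ref{con:gluing-homomorphism-data} and the only substantive condition, Beck-Chevalley at representable maps, is Lemma~\ref{lem:cptglproj-beck-chevalley}. Your added remark that terminal-object preservation follows from $\CptGlProj$ having a left adjoint (Lemma~\ref{lem:gl-proj-adjoints}) is a correct way to discharge the one condition the paper leaves implicit.
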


\subsection{The canonicity theorem}

Having constructed the gluing model and the projection onto $\Mod{\CCat}$, we are now in a position
to prove canonicity. First, we stop considering an arbitrary model $\Mod{\CCat}$ and work
exclusively with $\Mod{\ICat}$, the bi-initial model of $\ThCat$
(Theorem~\ref{thm:bi-initial-model}). Bi-initiality ensures that there is a morphism
$\Mor[\IMor{\CptGlCat}]{\Mod{\ICat}}{\Mod{\CptGlCat}}$ and, because $\ArrId{\Mod{\ICat}}$ and
$\Mod{\CptGlProj} \circ \Mod{\IMor{\CptGlCat}}$ are both objects of
$\Hom[\ModCat{\ThCat}]{\Mod{\ICat}}{\Mod{\ICat}}$, there is a unique invertible 2-morphism
$\Mor[\ICell]{\ArrId{\Mod{\ICat}}}{\Mod{\CptGlProj} \circ \Mod{\IMor{\CptGlCat}}}$.

In prior presentations of gluing with strict
homomorphisms~\cite{sterling-angiuli-gratzer:2019,kaposi-huber-sattler:2019,coquand-huber-sattler:2019,gratzer-kavvos-nuyts-birkedal:2020},
$\Mod{\ICat}$ was initial in the \emph{1-categorical} sense, so
$\ArrId{\Mod{\ICat}}$ and $\Mod{\CptGlProj} \circ \Mod{\IMor{\CptGlCat}}$ were
equal on this nose. This in turn implied that for every map
$\Mor[f]{\Mod{\ICat}{\EMP}}{\Mod{\ICat}*{\Pb{\El}{\TyBool}}}$, there existed a map
$\Mor[\bbrk{f}]{\Mod{\CptGlCat}{\EMP}}{\Mod{\CptGlCat}*{\Pb{\El}{\TyBool}}}$
such that $\CptGlProj{\bbrk{f}} = f$. Canonicity followed more-or-less
immediately by inspection of $\bbrk{f}$.

In this work, we have used a weaker notion of morphism and as a consequence, $\Mod{\ICat}$ is merely
\emph{bi}-initial. Accordingly, we cannot immediately conclude that $f$ is in the image of
$\Mod{\CptGlProj}$. In fact, while it is not generally the case that
$f = \Mod{\CptGlProj \circ \IMor{\CptGlCat}}{f}$, it is possible to \emph{realign}
$\Mod{\IMor{\CptGlCat}}{x}$ to an isomorphic arrow which does lie strictly over $f$.

\begin{lem}[Realignment]%
  \label{lem:alignment}
  Given a morphism $\Mor[x]{\DFYo[\ICat]{\Gamma}}{\Mod{\ICat}{X}} : \DF{\ICat}$, there exists an object
  $\bbrk{\Gamma} : \CptGlCat$ and a morphism
  $\Mor[\bbrk{x}]{\DFYo[\CptGlCat]{\bbrk{\Gamma}}}{\Mod{\CptGlCat}{X}} : \DF{\CptGlCat}$ such that
  $\CptGlProj{\bbrk{x}} = x$ and $\IMor{\CptGlCat}{\Gamma} \cong \bbrk{\Gamma}$.
\end{lem}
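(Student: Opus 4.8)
The plan is to \emph{realign} the element $\IMor{\CptGlCat}{x}$ along the canonical invertible $2$-cell $\ICell$ so that it comes to lie strictly over $x$. Applying the homomorphism $\Mod{\IMor{\CptGlCat}}$ to $x$ produces an element $\IMor{\CptGlCat}{x}$ of $\Mod{\CptGlCat}{X}$ at the compact computability family $\IMor{\CptGlCat}{\Gamma}$, lying over the element $\CptGlProj{\IMor{\CptGlCat}{x}}$ of $\Mod{\ICat}{X}$ at $\CptGlProj{\IMor{\CptGlCat}{\Gamma}}$; and the $2$-cell $\ICell$ supplies an isomorphism $\Mor[\ICell_\Gamma]{\Gamma}{\CptGlProj{\IMor{\CptGlCat}{\Gamma}}}$ in $\ICat$. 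Since $\FibMor[\CptGlProj]{\CptGlCat}{\ICat}$ is a fibration, I would take a cartesian lift $\Mor[\theta]{\bbrackets{\Gamma}}{\IMor{\CptGlCat}{\Gamma}}$ of $\ICell_\Gamma$ at $\IMor{\CptGlCat}{\Gamma}$, writing $\bbrackets{\Gamma} \defeq \ICell_\Gamma^*\IMor{\CptGlCat}{\Gamma}$; then $\CptGlProj{\bbrackets{\Gamma}} = \Gamma$ strictly and $\CptGlProj{\theta} = \ICell_\Gamma$. Because $\ICell_\Gamma$ is invertible and the cartesian lift of an isomorphism is an isomorphism, $\theta$ exhibits $\bbrackets{\Gamma} \cong \IMor{\CptGlCat}{\Gamma}$, which discharges the second conclusion.

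For the first conclusion, I would define $\bbrackets{x} \defeq \theta^*\IMor{\CptGlCat}{x}$, the reindexing of $\IMor{\CptGlCat}{x}$ along $\theta$ in the discrete fibration $\Mod{\CptGlCat}{X}$, or equivalently the composite $\IMor{\CptGlCat}{x}\circ\DFYo{\theta}$ viewed as a morphism $\Mor{\DFYo{\bbrackets{\Gamma}}}{\Mod{\CptGlCat}{X}}$. It then remains to check $\CptGlProj{\bbrackets{x}} = x$. As $\Mod{\CptGlProj}{X}$ is a morphism of discrete fibrations lying over $\CptGlProj$, the induced map of presheaves commutes with reindexing, so $\CptGlProj{\bbrackets{x}} = \Mod{\CptGlProj}{X}\prn{\theta^*\IMor{\CptGlCat}{x}} = \prn{\CptGlProj{\theta}}^*\prn{\CptGlProj{\IMor{\CptGlCat}{x}}} = \ICell_\Gamma^*\prn{\CptGlProj{\IMor{\CptGlCat}{x}}}$. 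The defining coherence condition of the $2$-morphism $\Mor[\ICell]{\ArrId{\Mod{\ICat}}}{\Mod{\CptGlProj}\circ\Mod{\IMor{\CptGlCat}}}$ of models asserts exactly that $\ICell_\Gamma^*$ of $\prn{\Mod{\CptGlProj}\circ\Mod{\IMor{\CptGlCat}}}\prn{x}$, i.e.\ $\ICell_\Gamma^*\prn{\CptGlProj{\IMor{\CptGlCat}{x}}}$, equals $\ArrId{\Mod{\ICat}}\prn{x} = x$; hence $\CptGlProj{\bbrackets{x}} = x$, completing the proof.

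The argument is largely bookkeeping, and I expect no serious obstacle; the two points demanding care are that the cartesian lift of the isomorphism $\ICell_\Gamma$ is genuinely an isomorphism (so that $\bbrackets{\Gamma}$ is isomorphic, not merely spanned, to $\IMor{\CptGlCat}{\Gamma}$) and that the abbreviation $\CptGlProj\prn{-}$ for $\Mod{\CptGlProj}{X}\prn{-}$ is compatible with the presheaf action, so that the required identity reduces precisely to the $2$-cell coherence. The latter is the conceptual heart of why bi-initiality of $\Mod{\ICat}$ (Theorem~\ref{thm:bi-initial-model}) suffices in place of strict initiality: the $2$-cell $\ICell$ is furnished by the universal property of $\Mod{\ICat}$, and realignment along the gluing fibration converts its coherence into the strict equality $\CptGlProj{\bbrackets{x}} = x$ on which the remainder of the canonicity proof depends.
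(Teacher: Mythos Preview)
Your proof is correct and follows essentially the same approach as the paper: take a cartesian lift of the isomorphism $\ICell_\Gamma$ in the gluing fibration to obtain $\bbrackets{\Gamma}$, then define $\bbrackets{x}$ as the reindexing (equivalently, precomposition with $\DFYo{\theta}$) of $\IMor{\CptGlCat}{x}$ and verify it lies over $x$ using the defining coherence of the 2-morphism $\ICell$. The only cosmetic difference is that the paper phrases the last step as the equality $\CptGlProj{\IMor{\CptGlCat}{x}} \circ \DFYo{\ICell} = x$ rather than in reindexing notation, but these are the same computation.
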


\begin{proof}
  First we note that $\IMor{\CptGlCat}{x}$ lies over $\CptGlProj{\IMor{\CptGlCat}{x}}$
  by definition. Moreover, because
  $\Mor|fibration|[\CptGlProj]{\CptGlCat}{\CCat}$ is a fibration, we may construct a cartesian lift
  of $\Mor[\ICell]{\Gamma}{\CptGlProj{\IMor{\CptGlCat}{\Gamma}}}$. Diagrammatically, there exists
  two squares:
  \begin{equation*}
    \DiagramSquare{
      height = 1.5cm,
      width = 3cm,
      ne = \IMor{\CptGlCat}{\Gamma},
      se = \CptGlProj{\IMor{\CptGlCat}{\Gamma}},
      nw = \ICell^*{\IMor{\CptGlCat}{\Gamma}},
      sw = \Gamma,
      north/style = exists,
      north = {\ICell^{\dagger}},
      south = {\ICell},
      east/style = lies over,
      west/style = lies over,
    }
    \qquad
    \DiagramSquare{
      height = 1.5cm,
      width = 3.5cm,
      nw = \DFYo[\CptGlCat]{\IMor{\CptGlCat}(\Gamma)},
      sw = \DFYo[\ICat]{\CptGlProj{\IMor{\CptGlCat}{\Gamma}}},
      ne = \Mod{\CptGlCat}{X},
      se = \Mod{\ICat}{X},
      north = \IMor{\CptGlCat}{x},
      south = \CptGlProj{\IMor{\CptGlCat}{x}},
      east/style = lies over,
      west/style = lies over,
    }
  \end{equation*}

  Observe that $\ICell^\dagger$ is an isomorphism because it is cartesian over an isomorphism.
  Unfolding the definition of a 2-morphism, we see that
  $\CptGlProj\prn{\IMor{\CptGlCat}{x} \circ \DFYo[\CptGlCat]{\ICell}} = x$. Accordingly, we may
  paste together these two diagrams to obtain the following:
  \[
    \begin{tikzpicture}[diagram]
      \SpliceDiagramSquare{
        width = 6cm,
        height = 1.5cm,
        nw = \DFYo[\CptGlCat]{\ICell^*{\IMor{\CptGlCat}{\Gamma}}},
        sw = \DFYo[\ICat]{\Gamma},
        ne = \Mod{\CptGlCat}{X},
        se = \Mod{\ICat}{X},
        north = \IMor{\CptGlCat}{x} \circ \DFYo[\CptGlCat]{\ICell^{\dagger}},
        south = \CptGlProj\prn{\IMor{\CptGlCat}{x} \circ \DFYo[\CptGlCat]{\ICell}},
        south/node/style = upright desc,
        east/style = lies over,
        west/style = lies over,
      },
      \path[->, bend right] (sw) edge node[below] {$x$} (se);
    \end{tikzpicture}
  \]

  Therefore, $\bbrk{x} \defeq\IMor{\CptGlCat}{x} \circ \DFYo[\CptGlCat]{\ICell^\dagger}$ is a morphism lying
  strictly over $x$. The final condition is immediate because
  $\bbrk{\Gamma} \cong \IMor{\CptGlCat}(\Gamma)$ by definition.
\end{proof}

\begin{rem}
  One might hope that Lemma~\ref{lem:alignment} implies the existence of a morphism
  $\Mor[\Mod{j}]{\Mod{\ICat}}{\Mod{\CptGlCat}}$ which satisfies the identity
  $\Mod{\CptGlProj} \circ \Mod{j} = \ArrId{\ICat}$ on the nose. Simply applying the realignment
  procedure to every element of $\Mod{\ICat}$ does not result in a morphism, however, because it is
  not functorial, merely pseudo-functorial. This should not be surprising: realignment relies on a
  choice of cartesian lift, which is only pseudo-functorial in general.
\end{rem}

\CanonicityThm
\begin{proof}
  By such a closed term, we mean a morphism
  $\Mor[M]{\DFYo[\ICat]{\ObjTerm{\ICat}}}{\Mod{\ICat}*{\Pb{\El}{\TyBool}}} : \DF{\ICat}$; it
  suffices to prove that $M = \Mod{\ICat}{\TmTt}$ or $M = \Mod{\ICat}{\TmFf}$.
  First, by Lemma~\ref{lem:alignment} we obtain a morphism
  $\Mor[\bbrk{M}]{\DFYo[\ICat]{\bbrk{\EMP}}}{\Mod{\CptGlCat}*{\Pb{\El}{\TyBool}}}
  : \DF{\CptGlCat}$ lying over $M$.
  Next, by definition we have $\Mod{\CptGlCat}*{\Pb{\El}{\TyBool}} =
  \GlNv{\GlFun{\Pb{\El}{\TyBool}}}$ and so $\bbrk{M}$ is uniquely determined by
  morphism $\Mor[\widetilde{\bbrk{M}}]{\ObjTerm{\GlCat}}{\GlFun{\Pb{\El}{\TyBool}}} : \GlCat$.
  Unfolding further, any morphism
  $\Mor[\widetilde{\bbrk{M}}]{\ObjTerm{\GlCat}}{\GlFun{\Pb{\El}{\TyBool}}}$ must be a commuting square
  of the following shape in $\cSET$:
  \begin{equation}\label{diag:canonicity:0}
    \DiagramSquare{
      width = 3.5cm,
      height = 1.5cm,
      nw = \ObjTerm{\cSET},
      sw = \F{\Mod{\CCat}{\EMP}},
      se = \F{\Mod{\CCat}*{\Pb{\El}{\TyBool}}},
      ne = \Two{\cSET},
      south = \F{\GlProj{M}},
      north = \GlEl{\widetilde{\bbrk{M}}},
      west = \cong,
    }
  \end{equation}

  It is immediate that $\GlEl{\widetilde{\bbrk{M}}} = \mathbf{inl}$ or $\GlEl{\widetilde{\bbrk{M}}} = \mathbf{inr}$;
  the fact that Diagram~\ref{diag:canonicity:0} commutes ensures that $\widetilde{\bbrk{M}}$ lies over either
  $\Mod{\ICat}{\TmTt}$ or $\Mod{\ICat}{\TmFf}$.
\end{proof}

\section{Perspective and outlook}\label{sec:outlook}

For decades now, the puzzle of Martin-L\"of's intensional identity type has
remained at the center of type theorists' minds. In 1994, Streicher showed
that intensional type theory was independent of seemingly sensible principles
(like function extensionality) by constructing extremely intensional
counter-models~\cite{streicher:1994}; in 1998, Hofmann and Streicher went a
step further and demonstrated that the same type theory was independent of the
uniqueness of identity proofs (UIP) principle by constructing a model of type
theory in groupoids~\cite{hofmann-streicher:1998}.

Hofmann and Streicher's contribution showed that it was possible for identity
in a universe of sets to ``mean'' bijection, a precursor to the univalence
principle of Voevodsky~\cite{voevodsky:2006,voevodsky:2010:cmu}, later codified
in the language of homotopy type theory (HoTT)~\cite{hottbook}. Later, it was
discovered that the identity type conferred an infinite-dimensional structure
already familiar in the context of homotopy
theory~\cite{awodey-warren:2009,van-den-berg-garner:2011,lumsdaine:2010}.

Cubical type theories were invented in order to repair several semantic and
syntactic anomalies of the new homotopy type theory; homotopy type theory lacks
\emph{canonicity}, a property closely related to but distinct from the
existence of a computational ``proofs-as-programs'' interpretation of the
language. On the other hand, the standard model of homotopy type theory in
\emph{simplicial sets}~\cite{kapulkin-lumsdaine:2021} must be formulated in a
boolean metatheory~\cite{bezem-coquand:2015}.  The discovery of a constructive
model for univalent type theory in cubical sets~\cite{bch:2014} sparked a
flurry of work on explicitly cubical type
theories~\cite{cchm:2017,angiuli-favonia-harper:2017,abcfhl:2021} which resolved
both the matters of canonicity and computational
interpretation~\cite{huber:2018,angiuli-favonia-harper:2017}.

While the benefits of cubical ideas for solving problems in
infinite-dimensional type theory are clear, we believed that it might be
possible to bring the cubical perspective to bear on the problems of
\emph{traditional} one-dimensional type theory, in which the intensional
identity type is augmented with enough uniqueness and extensionality principles
for it to behave like classical mathematical equality. In the context of the
strongest possible such uniqueness principle, \emph{equality reflection}, it
remains an open question whether it is possible to implement a usable proof
assistant; on the other hand, extending type theories with axioms for function
extensionality destroys canonicity and has significant usability problems.

Inspired by the work of Altenkirch, McBride, and Swierstra on
\emph{Observational Type
Theory}~\cite{altenkirch-mcbride:2006,altenkirch-mcbride-swierstra:2007}, which
internalized aspects of the setoid model of type theory, we sought to
internalize Coquand's semantic universe of Bishop sets~\cite{coquand:2017:bish}
as a type theory in its own right, XTT\@.
We believe that XTT is an ideal language for \emph{dependently typed
programming}, in which it is very important that coercions may be erased prior
to execution (a procedure that cannot be applied to coercions arising
from the univalence principle).
Unfortunately, there are several obstacles rendering both OTT and XTT
unsuitable for use as languages for formalizing general mathematics,
disadvantages not shared by homotopy type theory or its cubical variants.

\subsection{Trade-offs with universes}\label{sec:bad-universes}

In mathematics, a universe is a ``family of (some) families'', an object from
which every family in some class arises by pullback; we cannot have a universe
of all families for general reasons, but there are several restrictions of this
na\"{\i}ve idea that make sense, such as a universe of all monomorphisms (a
subobject classifier), or a universe of $\kappa$-compact families for some
regular or inaccessible cardinal $\kappa$.

Universes in mathematics are important for two reasons: first, they tame subtle
but essentially bureaucratic questions of size~\cite[Expos\'e I, Ch.~0]{sga:4},
and second, they provide a (stronger) alternative to the set-theoretical
\emph{axiom of replacement} that enables concepts to be formulated in the
convenient fiber-wise style familiar from dependent type
theory~\cite{streicher:2005}. Here, it is very important to ensure that the
universe imposes no spurious structures on the maps it classifies; for
instance, if $U$ is the universe of $\kappa$-compact maps, we may develop the
theory of $\kappa$-compact groups in terms of $U$. Then, a predicate defined over
$\kappa$-compact groups should have the same meaning as a predicate defined over
$U$-groups.

\NewDocumentCommand\TTObs{}{\textsf{TT}\textsuperscript{obs}}

\subsubsection{Induction-recursion and type-case}

The published models of both
OTT~\cite{altenkirch-mcbride:2006,altenkirch-mcbride-swierstra:2007} and XTT
(both~\cite{sterling-angiuli-gratzer:2019} and the present paper) share an
infelicitous interpretation of universes from the perspective of mathematics:
universes are modeled by \emph{closed} inductive-recursive types, hence in the
semantics there is a corresponding \emph{elimination principle} for the
universe that proceeds by cases on whether a given code is (\eg) a dependent
product type, a dependent sum type, the boolean type, \etc.  This defect
appears in both the syntax and semantics of OTT and XTT:\@ in the former case,
because OTT is not a type theory \emph{per se} but rather a syntactical model
construction, and in the latter, because we have explicitly added a type-case
operator to the theory of XTT\@.

The reason we view the closed universe semantics as infelicitous is that it
changes the meaning of quantification over (\eg) algebraic structures in a way
that is not really compatible with ordinary mathematical usage.  The existence
of a code $\hat{A}$ for a type $A$ in these universes expresses not only the
smallness of $A$, but also the fact that $A$ is either a dependent product, a
dependent sum, an equation, or it is the booleans (\etc), and the same for all
of $A$'s subterms. Hence a mathematical statement quantifying over the elements
of $\hat{A}$ does not have the right meaning when interpreted into such a
model, considering these additional assumptions on the form of $A$. At the very
least, a universe that is suitable for use by mathematicians should be
characterized (up to equivalence) by some intrinsic property such as smallness
or compactness, rather than by the syntactical form of the classified types.

As we have discussed in Section~\ref{sec:typecase}, the reason for
internalizing the type-case principle in the syntax of XTT is to ensure that
type constructors are \emph{internally injective}, a prerequisite for deciding
type checking in the presence of boundary separation in OTT/XTT-style
theories.\footnote{Note that we are referring to injectivity with respect to
propositional equality, not judgmental equality; in any well-adapted type
theory, one expects to have admissible injectivity up to judgmental equality.}
Although it is obvious in \emph{syntax} that one could choose a less
heavy-handed implementation of internal injectivity than type-case, \eg by
adding constants witnessing the internal injectivity of each individual type
constructor, we did not at the time believe that such a generalization would
meaningfully enlarge the space of semantic models of XTT\@.

\subsubsection{Syntax and semantics of internal injectivity}

Subsequent to the introduction of XTT by the present authors~~\cite{sterling-angiuli-gratzer:2019}, Pujet and
Tabareau~\cite{pujet-tabareau:2022} have introduced a new non-cubical
reconstruction of observational type theory called \TTObs{} that has a number
of attractive properties going beyond the ones proved in the present paper,
including normalization and decidability of type checking. Just as XTT,
\TTObs{} improves on OTT by defining a true type theory that is distinguished
from any particular syntactic model or translation; \TTObs{} deviates from XTT,
however, by witnessing the internal injectivity of type constructors directly
rather than by means of a type-case primitive.

Gratzer~\cite{gratzer:2022:universe} has recently demonstrated a construction
of \emph{open} inductive-recursive universes that are generic for a suitable
class of families (\eg relatively $\kappa$-small or relatively $\kappa$-compact
families for some a strongly inaccessible cardinal $\kappa$) and yet validate
the injectivity assumptions needed by XTT and \TTObs. Gratzer's construction
refutes our previous assumption that weakening type-case to a direct account of
internal injectivity would not lead to new and more useful models of XTT, and
provides some vindication to the choice of Pujet and Tabareau to treat internal
injectivity directly.

Therefore we conclude in hindsight that a type theory for boundary-separated sets,
whether treated cubically or not, need not internalize a closed universe but
can instead be usefully equipped with explicit laws governing the internal
injectivity of type constructors. We do maintain, however, that there remains a
contradiction between the intended meaning of universes in mathematical
practice and the need for injectivity in the implementation of boundary
separation in OTT/XTT/\TTObs: the statement ``If $A\to B$ is equal to $C\to D$, then $A$ is equal
to $C$'' is arguably a \emph{junk theorem} akin to $1\in \pi$ or $4\subseteq 9$.

\NewDocumentCommand\Squash{d<>m}{
  \verts{#2}{\IfValueT{#1}{_{#1}}}
}

\subsection{What is a proposition?}

A considerably more subtle obstacle for using either OTT or XTT in the
formalization of mathematics is to be found when choosing a suitable
notion of \emph{proposition} or \emph{relation}. In type theory, there are
\emph{a priori} two ways to formalize propositions:

\begin{enumerate}
  \item A \emph{strict} proposition is a type whose elements are all judgmentally equal.
  \item A \emph{weak} proposition is a type $X$ together with a function $\prn{x,y : X}\to \CTyPath{X}{x}{y}$.
\end{enumerate}

\noindent
In OTT/XTT, the strict and weak notions of proposition do not agree,
though they could be forced to agree by adding equality reflection.
Unfortunately, when investigating the interplay between the indispensable
principles of function comprehension and effectivity of equivalence relations,
we will find that this mismatch cannot be resolved by favoring either the
strict or the weak notion.

As soon as one has chosen a notion of proposition, one may consider the
corresponding ``squash type'', the reflection of types into propositions:
\begin{enumerate}

  \item Given a type $A$, the strict squash type $\Squash<s>{A}$ is a strict
    proposition; a function $\Squash<s>{A}\to B$ is a function $f : A\to B$ such
    that every $f(x)$ is judgmentally equal to $f(y)$.
    The strict squash type
    was investigated by Awodey and Bauer~\cite{awodey-bauer:2004}, and appears
    in recent versions of Coq and Agda~\cite{gilbert-cockx-sozeau-tabareau:2019}.

  \item Given a type $A$, the weak squash type $\Squash<w>{A}$ is a weak
    proposition; a function $\Squash<w>{A}\to B$ is a function $f : A\to B$
    together with a function assigning to each $x,y:A$ an element of
    $\CTyPath{B}{f(x)}{f(y)}$. The weak squash type appears in homotopy type
    theory as \emph{propositional truncation}~\cite{hottbook}.

\end{enumerate}

\noindent
Dependent product and binary product preserve the property of being a (strict,
weak) proposition, and may therefore be used as universal quantification and
conjunction in a logic of propositions. Dependent sum and binary sum do not
preserve this property, but they can be squashed in order to define existential
quantification and disjunction. The logic of (strict, weak) propositions is
summarized below:
\begin{align*}
  \forall x : A.P(x) &\defeq (x : A) \to P(x)\\
  P \supset Q &\defeq P \to Q\\
  P\land Q &\defeq P \times Q\\
  \exists^{s/w} x:A.P(x) &\defeq \Squash<s/w>{(x:A)\times P(x)}\\
  P\lor^{s/w} Q &\defeq \Squash<s/w>{(x : \CTyBool)\times \CTmIf{}{x}{P}{Q}}\\
  \prn{M =_{A} N} &\defeq \CTyPath{A}{M}{N}
\end{align*}

\subsubsection{Function comprehension}

What is the meaning of ``function''? There is only one possible answer: it is
an element of an exponential object. In some categories, however, these
exponentials can be reconstructed as representing objects for collections of
\emph{functional relations}. This isomorphism between the collection of
functions $A\to B$ and subobjects of $A\times B$ satisfying a unique existence
property is traditionally referred to as the ``axiom of unique choice'', though
it is perhaps better to refer to it as \emph{function comprehension}.  Function
comprehension is a crucial feature of both classical \emph{and} constructive
mathematics, and life becomes very difficult in categories where function
comprehension fails.

A (strict, weak) functional relation from $A$ to $B$ is a family of (strict,
weak) propositions $x:A,y:B\vdash R(x,y)$ together with a proof of the
following (strict, weak) proposition:
\[
  \forall x : A.\
  \exists^{s/w} y:B.\
  R(x,y) \land
  \forall y' : B.\
  R(x,y') \supset
  y =_{B} y'
  \tag{functionality}
\]

The proposition above may be rendered into the language of types as follows:
\[
  (x : A) \to \Squash<s/w>{
    \prn{y : B}
    \times
    R(x,y)
    \times
    \prn{
      \prn{y' : B}\to
      R(x,y') \to
      \CTyPath{B}{y}{y'}
    }
  }
\]

The function comprehension principle is immediate for weak propositions in OTT
and XTT, but fails for strict propositions.

\begin{enumerate}

  \item To exhibit a function $A\to B$ from a weak functional relation, we may
    extract the $y : B$ using the universal property of the weak squash type,
    fulfilling the auxiliary obligation using the weak uniqueness of $y$ with
    $R(x,y)$.

  \item In doing the same with a strict functional relation, we run into a
    problem: to make a function out of an element of the strict squash type, we end up needing
    that $y$ is unique with $R(x,y)$ \emph{up to judgmental equality}, but we
    have only an element of $\CTyPath{B}{y}{y'}$ for each $y'$ such that
    $R(x,y')$.

\end{enumerate}

\noindent
Therefore, short of adding equality reflection, we must conclude that the weak
notion of proposition is the ``correct'' one, and the strict one is not
particularly useful for mathematics in an environment without equality
reflection. Unfortunately, we will see that another indispensable reasoning
principle in constructive and classical mathematics, the effectivity of
equivalence relations, appears to be compatible only with the strict notion in
a boundary separated environment lacking equality reflection. (In contrast,
full cubical type theory satisfies a vastly stronger exactness condition called
\emph{descent}, generalizing both the disjointness of coproducts and the
effectivity of equivalence relations.)

\subsubsection{Effectivity of equivalence relations}

It is possible to add quotient types to both XTT and OTT (an extension
implemented, for instance, as part of the experimental Epigram~2 proof
assistant~\cite{epigram:blog:quotients}); likewise, Nuprl has supported a
version of quotient types for decades.  Unfortunately, these quotients can be
made to have good properties only for \emph{certain} equivalence relations:
\begin{enumerate}

  \item In Nuprl, only equivalence relations valued in ``strong propositions''
    (types having at most one element up to the intensional \emph{untyped}
    equivalence of Howe~\cite{howe:1989}) have good quotients. Equivalence relations valued in general
    propositions (types having at most one element up to extensional equality) do
    \emph{not} necessarily have good quotients, a serious problem alluded to in
    the work of Nogin~\cite{nogin:2002}.

  \item In OTT and XTT, only equivalence relations valued in strict
    propositions can have good quotients.

\end{enumerate}

\noindent
Writing $[-] : A \to A/R$ for the quotient map, the quotient $A/R$ is ``good''
when each type $\CTyPath{A/R}{[x]}{[y]}$ is equivalent to $R(x,y)$; this
property, called the effectivity of $R$, does not follow from the rules of
quotient types alone and in fact fails in many categories (such as categories of
partial equivalence relations). The effectivity of \emph{all} equivalence
relations is, however, indispensable for practical use of quotients in
mathematics.

In type theory, the effectivity of equivalence relations follows from
propositional extensionality, a restricted version of the univalence principle
that places bi-implications $\prn{f,g} : P\leftrightarrow Q$ into
correspondence with proofs of equality $\CKwd{pua}\prn{f,g} :
\CTyPath{}{P}{Q}$; in topos theory, this corresponds to the existence of a
subobject classifier. We will see, however, that it is not possible to extend
either XTT or OTT (or any type theory satisfying boundary separation or
definitional UIP) with a univalence principle for weak propositions without
some fundamentally new ideas.

In cubical type theories, univalence is supported by means of a special
connective taking an equivalence of types and returning a path between the
corresponding types~\cite{angiuli:2019}; we might attempt to extend XTT by a
version of this connective restricted to weak propositions:

\medskip
\begin{mathflow}
  \inferrule{
    \Gamma\vdash r : \DIM\qquad
    \Gamma, r = 0\vdash P\ \mathit{prop}\qquad
    \Gamma \vdash Q\ \mathit{prop} \\
    \Gamma, r = 0\vdash f : P \to Q \qquad
    \Gamma, r = 0\vdash g : Q \to P
  }{
    \Gamma\vdash \CKwd{V}_r\prn{P,Q,f,g}\ \mathit{prop}\ \brk{
      \Bdry{r}\to
      \brk{r=\CDim0\to P\mid r=\CDim1\to Q}
    }
  }
\end{mathflow}
\medskip

Unfortunately, we can show that a univalent universe of weak propositions $\CKwd{Prop}$ cannot be boundary separated.

\begin{lem}\label{lem:prop:equivalences-unique}
  Suppose that we have a boundary separated universe of propositions closed
  under $\CKwd{V}$-types (thence univalent); then, all equivalences between two
  propositions are judgmentally equal.
\end{lem}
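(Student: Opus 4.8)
The plan is to exploit the two defining features in tension: on one hand, the $\CKwd{V}$-type guarantees that every equivalence $\prn{f,g} : P\leftrightarrow Q$ between propositions is \emph{reified} as an actual path $\CKwd{V}_i\prn{P,Q,f,g} : \CTyPath{\_.\CKwd{Prop}}{P}{Q}$ whose restriction to $i=\CDim0$ is $P$ and to $i=\CDim1$ is $Q$; on the other hand, boundary separation forces any two elements of $\CKwd{Prop}$ that agree on the boundary $\Bdry{i}$ of a dimension to be judgmentally equal. So the first step is: given two equivalences $\prn{f,g}$ and $\prn{f',g'}$ between $P$ and $Q$, form the two lines $\CKwd{V}_i\prn{P,Q,f,g}$ and $\CKwd{V}_i\prn{P,Q,f',g'}$ in context $i:\DIM$. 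Both of these have the same boundary, namely $\brk{i=\CDim0\to P\mid i=\CDim1\to Q}$, because that is exactly what the boundary clause of the $\CKwd{V}$-rule dictates. Hence by boundary separation applied in the universe $\CKwd{Prop}$ (which we are assuming is boundary separated), the two lines are judgmentally equal in context $i:\DIM$: $\CKwd{V}_i\prn{P,Q,f,g} = \CKwd{V}_i\prn{P,Q,f',g'}$.

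Second, I would transport along this equality of lines. Since the two $\DIM$-indexed families of propositions are definitionally equal, their decodings $\CEl{\CKwd{V}_i(\cdots)}$ are definitionally equal types over $i:\DIM$, and therefore the associated coercion operations agree: $\CTmCoe{\CDim0}{\CDim1}{i.\CEl{\CKwd{V}_i\prn{P,Q,f,g}}}{-}$ and $\CTmCoe{\CDim0}{\CDim1}{i.\CEl{\CKwd{V}_i\prn{P,Q,f',g'}}}{-}$ are the same function $\CEl{P}\to\CEl{Q}$. The key remaining input is the standard computation rule for coercion across a $\CKwd{V}$-type: coercing from the $\CDim0$-endpoint to the $\CDim1$-endpoint along $\CKwd{V}_i\prn{P,Q,f,g}$ computes (up to the relevant compositions, which are trivial here since $P$ and $Q$ are propositions and XTT enjoys boundary separation) to the underlying function $f$; similarly the other coercion computes to $f'$. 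Chaining these, we get $f = f'$ definitionally; and then $g = g'$ follows because an inverse to a given map between propositions is unique up to judgmental equality (any two retractions agree pointwise, and propositions have definitionally unique elements, so they agree on the nose), or alternatively by running the symmetric coercion argument from $\CDim1$ to $\CDim0$. This yields the claim that $\prn{f,g} = \prn{f',g'}$ as equivalences.

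The main obstacle I anticipate is pinning down precisely the coercion computation rule for $\CKwd{V}$-types in this hypothetical extension and checking that the ``correction'' terms (the composites that appear in the general Cartesian-cubical $\CKwd{coe}$-at-$\CKwd{V}$ rule, see e.g.\ the Glue-type coercion of \cite{cchm:2017,abcfhl:2019}) really do degenerate to nothing in the presence of boundary separation together with the fact that we are working with \emph{propositions}. Concretely, one must verify that the partial elements and homogeneous compositions that ordinarily intervene all collapse because their boundaries determine them and because any two elements of a weak proposition are connected by a path (hence, after coercion, identified). A secondary subtlety is that I have only assumed $\CKwd{Prop}$ is \emph{boundary separated}, not that its elements satisfy definitional UIP as types; but boundary separation of the universe is exactly what is needed to equate the two $\CKwd{V}$-lines, and that is all the argument uses at the universe level — the pointwise argument for $f=f'$ then happens one level down, among elements of the propositions $P$ and $Q$, where being a weak proposition already gives path-uniqueness of elements and boundary separation upgrades the relevant paths to judgmental equalities. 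So I expect the proof to be short modulo carefully invoking the $\CKwd{V}$-type coercion rule; the bulk of the care goes into that one computation.
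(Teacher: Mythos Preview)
Your proposal is correct and follows essentially the same route as the paper: form the two $\CKwd{V}$-lines, equate them by boundary separation in the universe, and then read off $f=f'$ from the coercion operation along the line. Two small sharpenings are worth noting. First, the paper justifies the vanishing of the ``correction'' in the $\CKwd{V}$-coercion rule via \emph{regularity}: the rule gives $\CTmCoe{\CDim0}{\CDim1}{i.\CKwd{V}_i\prn{P,Q,f,g}}{x} = \CTmCoe{\CDim0}{\CDim1}{\_.Q}{f(x)}$, and the latter equals $f(x)$ because $Q$ is constant in $i$; your appeal to proposition-ness and boundary separation is not needed here and would be harder to make precise. Second, your first route to $g=g'$ (``propositions have definitionally unique elements'') is circular, since that statement is a \emph{corollary} of the present lemma; the paper uses exactly your second route, repeating the coercion argument with the inverse equivalences.
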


\begin{proof}
  Let $P,Q$ be two propositions classified by the univalent universe of
  propositions, and let $(f,g)$ and $(f',g')$ be two equivalences between them.
  Abstracting a dimension $i:\DIM$, we therefore have two $\CKwd{V}$-types
  $\CKwd{V}_i\prn{P,Q,f,g}$ and $\CKwd{V}_i\prn{P,Q,f',g'}$; by boundary separation,
  we in fact have $\CKwd{V}_i\prn{P,Q,f,g} = \CKwd{V}_i\prn{P,Q,f',g'}$.

  We will show that $f = f'$ judgmentally by using coercion in the
  $\CKwd{V}$-type, following the computation rules described by
  Angiuli~\cite{angiuli:2019}.
  \begin{align*}
    \lambda x.f(x) &= \lambda x.\CTmCoe{0}{1}{\_.Q}{f(x)} && \text{regularity}
    \\
    &= \lambda x.\CTmCoe{\CDim0}{\CDim1}{i.\CKwd{V}_i\prn{P,Q,f,g}}{x} && \text{\cite[p.163]{angiuli:2019}}
    \\
    &= \lambda x.\CTmCoe{\CDim0}{\CDim1}{i.\CKwd{V}_i\prn{P,Q,f',g'}}{x} && \text{boundary separation}
    \\
    &= \lambda x.\CTmCoe{\CDim0}{\CDim1}{\_.Q}{f'(x)} && \text{\cite[p.163]{angiuli:2019}}
    \\
    &= \lambda x.f'(x) && \text{regularity}
  \end{align*}

  To see that $g=g'$, simply repeat the procedure with the inverse equivalences.
\end{proof}

\begin{rem}
  Lemma~\ref{lem:prop:equivalences-unique} can likewise be replayed when
  $\CKwd{glue}$-types \`a la~\cite{abcfhl:2021} are used instead of
  $\CKwd{V}$-types \`a la~\cite{angiuli-favonia-harper:2017}. In either case,
  coercion can be used (modulo regularity) to recover the equivalence from the
  line of types.
\end{rem}

The assumptions of Lemma~\ref{lem:prop:equivalences-unique}
imply some intensional type theoretic taboos.

\begin{cor}\label{cor:prop:all-strict}
  Under the assumptions of Lemma~\ref{lem:prop:equivalences-unique}, all propositions are strict propositions.
\end{cor}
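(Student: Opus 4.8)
The plan is to deduce definitional uniqueness of the \emph{elements} of a weak proposition from the definitional uniqueness of the \emph{equivalences} between propositions already furnished by Lemma~\ref{lem:prop:equivalences-unique}. Fix a proposition $P$ classified by the univalent universe of propositions, and let $x, y : P$; since strictness is precisely the assertion that any two elements of $P$ are judgmentally equal, it suffices to establish $\EqTm{x}{y}{P}$. Because $x$ inhabits $P$ and $P$ is a weak proposition, $P$ is contractible, and because the universe is boundary separated its identity types enjoy UIP and are therefore themselves contractible whenever inhabited. Consequently the constant maps $\CTmLam{w}{x}$ and $\CTmLam{w}{y}$, both of type $P \to P$, are equivalences --- for instance with $\CTmLam{z}{z}$ as a two-sided inverse, every homotopy and coherence demanded being inhabited because $P$ is a weak proposition.

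Now apply Lemma~\ref{lem:prop:equivalences-unique} to the two equivalences $\prn{\CTmLam{w}{x},\CTmLam{z}{z}}$ and $\prn{\CTmLam{w}{y},\CTmLam{z}{z}}$ between the proposition $P$ and itself: it yields $\EqTm{\CTmLam{w}{x}}{\CTmLam{w}{y}}{P \to P}$. Applying both sides to $x$, using the congruence of judgmental equality under application together with the definitional $\beta$-law for functions, gives $\EqTm{x}{\prn{\CTmLam{w}{x}}\prn{x}}{P}$ and $\EqTm{\prn{\CTmLam{w}{y}}\prn{x}}{y}{P}$, whence $\EqTm{x}{y}{P}$ by transitivity. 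As $x$ and $y$ were arbitrary, every element of $P$ is judgmentally equal to every other, i.e.\ $P$ is a strict proposition.

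The step I expect to demand the most care is the verification that the constant maps are equivalences in precisely the sense required by the $\CKwd{V}$-type former, and hence by Lemma~\ref{lem:prop:equivalences-unique}; this is where both standing hypotheses do real work. Weak-propositionality of $P$ supplies the homotopies exhibiting $\CTmLam{w}{x}$ and $\CTmLam{z}{z}$ as mutually inverse, while boundary separation is what guarantees that the relevant fibres --- equivalently, the relevant identity types of $P$ --- are contractible rather than merely inhabited, so that one obtains a coherent isomorphism and not merely a quasi-inverse. A conceptually cleaner variant runs the same argument with the two equivalences $\mathbf{1}\simeq P$ selecting $x$ and $y$ respectively, but that would additionally require the unit type to inhabit the universe of propositions; the endo-equivalence formulation above avoids that side condition. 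Everything else --- that an inhabited weak proposition is contractible, the congruence of conversion, and the $\beta$-law --- belongs to the basic judgmental apparatus recalled earlier in the paper.
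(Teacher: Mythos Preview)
Your proof is correct and follows essentially the same approach as the paper: form constant endomaps of $P$, invoke Lemma~\ref{lem:prop:equivalences-unique} to identify them judgmentally, and then extract $x = y$ by application and $\beta$. Your caveat about the precise notion of equivalence is more cautious than necessary, since the $\CKwd{V}$-type former for propositions as stated in the paper takes only a bare pair of maps $(f,g)$ with no coherence data; any bi-implication suffices, so the weak-propositionality of $P$ alone already makes the constant maps admissible.
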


\begin{proof}
  Let $P$ be a proposition, and let $x,y$ be proofs of $P$. The constant
  functions $\lambda\_.x$ and $\lambda\_.y$ are both equivalences
  $P\leftrightarrow P$; by Lemma~\ref{lem:prop:equivalences-unique}, they are
  judgmentally equal. Therefore, $x$ and $y$ are judgmentally equal.
\end{proof}

\begin{cor}
  Under the assumptions of Lemma~\ref{lem:prop:equivalences-unique}, equality reflection holds.
\end{cor}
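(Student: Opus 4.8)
The plan is to run the standard ``singleton contractibility'' argument, using Corollary~\ref{cor:prop:all-strict} to cash a path out as a definitional equality. Concretely, fix a Bishop set $\hat{A} : \CU$, elements $a, a' \in \hat{A}$, and a path $p : \CTyPath{\_.\CEl{\hat{A}}}{a}{a'}$; the goal is to derive $\EqTm{a}{a'}{\CEl{\hat{A}}}$, which is precisely the equality reflection rule (read through the identification of XTT's path type with the identity type).

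First I would form the based path space $S \defeq \CTySg{x}{\CEl{\hat{A}}}{\CTyPath{\_.\CEl{\hat{A}}}{a}{x}}$, which again carries a code in $\CU$ by closure of the universe under dependent sums and path types (Section~\ref{sec:informal:connectives}). Using the $\CKwd{J}$ combinator of Example~\ref{ex:identity-type} --- available precisely because $\hat{A}$ is a Bishop set --- one shows in the usual way that $S$ is contractible, with centre $\CTmPair{a}{\lambda\_.a}$; in particular $S$ is a weak proposition, i.e.\ it comes equipped with a function of type $\prn{s,t:S}\to\CTyPath{\_.S}{s}{t}$, and is therefore classified by the univalent universe of propositions $\CKwd{Prop}$.

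Now Corollary~\ref{cor:prop:all-strict} applies to $S$: being a proposition classified by $\CKwd{Prop}$, it is a \emph{strict} proposition, so any two of its elements are judgmentally equal. In particular $\EqTm{\CTmPair{a}{\lambda\_.a}}{\CTmPair{a'}{p}}{S}$. Applying the first projection and the $\Sigma$ computation rule $\CTmFst{\CTmPair{u}{v}} = u$, congruence of judgmental equality delivers $\EqTm{a}{a'}{\CEl{\hat{A}}}$, completing the argument.

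The step I expect to be the main obstacle is the claim that the singleton $S$ is genuinely classified by $\CKwd{Prop}$. When $\CKwd{Prop}$ is understood as the universe of weak propositions --- the reading under which ``univalent universe of propositions closed under $\CKwd{V}$-types'' makes sense --- the weak-proposition witness for $S$ extracted from contractibility is exactly the data needed, so nothing further is required; but with a more parsimonious choice of $\CKwd{Prop}$ one would have to check that it is closed under the comprehension used to reflect $S$ into it. A secondary caveat is that $\CKwd{J}$, and hence the contractibility of $S$, is only available for Bishop sets, so the argument as stated yields equality reflection for all of $\CU$; this already collapses the intended metatheory (type checking becomes undecidable, $\beta$-reduction of open terms diverges, and so on), and a separate argument would be needed to extend reflection to any non-Bishop types present in the hypothetical extension.
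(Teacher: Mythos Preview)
Your proof is correct and follows essentially the same route as the paper's: form the singleton $S_A(a)\defeq\prn{x\in A}\times\CTyPath{\CEl{A}}{a}{x}$, observe it is a weak proposition (the paper asserts this without detail; you spell out the contractibility via $\CKwd{J}$), invoke Corollary~\ref{cor:prop:all-strict} to make it strict, and then project out the judgmental equality $a=a'$ from $\CTmPair{a}{\lambda\_.a}=\CTmPair{a'}{p}$. Your caveat about $S$ needing to be classified by $\CKwd{Prop}$ is a fair point that the paper leaves implicit.
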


\begin{proof}
  Let $A:\CU$ and let $a\in A$; then $S_A(a) \defeq \prn{x\in
  A}\times\CTyPath{\CEl{A}}{a}{x}$ is a weak proposition. By
  Corollary~\ref{cor:prop:all-strict}, $S_A(a)$ is moreover a strict
  proposition. Let $a' \in A$ and $p : \CTyPath{\CEl{A}}{a}{a'}$; therefore we have
  $\CTmPair{a}{\lambda \_.a} = \CTmPair{a'}{p} : S_A(a)$, whence $a = a'$ judgmentally.
\end{proof}

\section*{Acknowledgment}

\noindent We thank Thorsten Altenkirch, Mathieu Anel, Steve Awodey, Lars
Birkedal, Evan Cavallo, David Thrane Christiansen, Thierry Coquand, Kuen-Bang
Hou (Favonia), Marcelo Fiore, Jonas Frey, Ambrus Kaposi, Krzysztof Kapulkin,
Alex Kavvos, Andr\'as Kov\'acs, Dan Licata, Conor McBride, Darin Morrison,
Anders M\"ortberg, Michael Shulman, Bas Spitters, and Thomas Streicher for
helpful conversations about extensional equality, algebraic type theory, and
categorical gluing.  We thank our anonymous reviewers for their insightful
comments, and especially thank Robert Harper for valuable conversations
throughout the development of this work. We thank Tristan Nguyen at AFOSR for
his support.

The authors gratefully acknowledge the support of the Air Force Office of
Scientific Research through MURI grants FA9550--15--1--0053 and FA9550--21--8. Any
opinions, findings and conclusions or recommendations expressed in this material
are those of the authors and do not necessarily reflect the views of the AFOSR\@.

\bibliographystyle{alphaurl}
\bibliography{refs,temp-refs}

\end{document}